\documentclass[11pt]{article}

\usepackage{amssymb,amsfonts,amsmath,amsthm,amscd}
\usepackage[margin=1in]{geometry}
\usepackage{graphicx,float,epsfig}
\usepackage{wrapfig}
\usepackage{hyperref}
\usepackage{multirow}
\usepackage{breakcites}
\usepackage{bbm}
\usepackage{authblk}
\usepackage{enumitem}

\usepackage{color}
\usepackage{xcolor}

\DeclareMathAlphabet{\mathpzc}{OT1}{pzc}{m}{it}



\newtheorem{propo}{Proposition}
\newtheorem{prop}[propo]{Proposition}
\newtheorem{lemma}[propo]{Lemma}
\newtheorem{definition}[propo]{Definition}
\newtheorem{coro}[propo]{Corollary}

\newtheorem{theorem}[propo]{Theorem}

\newtheorem{remark}[propo]{Remark}


\newcommand{\st}{\sigma}

\newcommand{\1}{\mathbf{1}}

\newcommand{\E}{\mathbb{E}}
\newcommand{\N}{\mathbb{N}}

\newcommand{\R}{\mathbb{R}}
\renewcommand{\P}{\mathbb{P}}

\newcommand{\wt}{\widetilde}

\def\P{\mathbb{P}}
\def\E{\mathbb{E}}

\def\cS{\mathcal{S}}

\def\cP{\mathcal{P}}

\def\cL{\mathcal{L}}

\def\cI{\mathcal{I}}

\def\cG{\mathcal{G}}
\def\cF{\mathcal{F}}
\def\cE{\mathcal{E}}

\def\cA{\mathcal{A}}

\newcommand{\eps}{\epsilon}

\newcommand{\Tasp}{T_{\op{aspirant}}}

\newcommand{\Tinf}{T^1_{\op{informed}}}
\newcommand{\Tinff}{T^2_{\op{informed}}}

\newcommand{\Tterm}{T^1_{\op{terminal}}}
\newcommand{\init}{1}

\newcommand{\CCC}{5}
\newcommand{\DeltaT}{T}

\def\@rst #1 #2other{#1}
\newcommand\MR[1]{\relax\ifhmode\unskip\spacefactor3000 \space\fi
	\MRhref{\expandafter\@rst #1 other}{#1}}\newcommand{\MRhref}[2]{\href{http://www.ams.org/mathscinet-getitem?mr=#1}{MR#2}}

\def\MR#1{\href{http://www.ams.org/mathscinet-getitem?mr=#1}{MR#1}}
\newcommand{\aryb}{\begin{eqnarray*}}
	\newcommand{\arye}{\end{eqnarray*}}
\def\alb#1\ale{\begin{align*}#1\end{align*}}
\newcommand{\eqb}{\begin{equation}}
\newcommand{\eqe}{\end{equation}}
\newcommand{\eqbn}{\begin{equation*}}
\newcommand{\eqen}{\end{equation*}}

\newcommand{\op}{\operatorname}

\newcommand{\frk}{\mathfrak}

\newcommand{\ep}{\epsilon}
\newcommand{\rta}{\rightarrow}

\newcommand{\wh}{\widehat}

\newcommand{\old}[1]{}


\begin{document}


\title{Communication cost of consensus for nodes with limited memory}
\author{
	Giulia Fanti\thanks{Carnegie Mellon University; gfanti@andrew.cmu.edu}\qquad 
	Nina Holden\thanks{ETH Z\"urich; holdenn@eth-its.ethz.ch}\qquad 
	Yuval Peres\thanks{yperes@gmail.com}\qquad 
	Gireeja Ranade\thanks{UC Berkeley; gireeja@eecs.berkeley.edu. }}

\date{}
\maketitle

\begin{abstract}
	Motivated by applications in blockchains and sensor networks, we consider a model of $n$ nodes trying to reach consensus on their majority bit. Each node $i$ is assigned a bit at time zero, and is a finite automaton with $m$ bits of memory (i.e., $2^m$ states) and a Poisson clock. When the clock of $i$ rings, $i$ can choose to communicate, and is then matched to a uniformly chosen node $j$. The nodes $j$ and $i$ may update their states based on the state of the other node. Previous work has focused on minimizing the time to consensus and the probability of error, while our goal is minimizing the number of communications. We show that when $m>3 \log\log\log(n)$, consensus can be reached at linear communication cost, but this is impossible if $m<\log\log\log(n)$. We also study a synchronous variant of the model, where our upper and lower bounds on $m$ for achieving linear communication cost are $2\log\log\log(n)$ and $\log\log\log(n)$, respectively. A key step is to distinguish when nodes can become aware of knowing the majority bit and stop communicating. We show that this is impossible if their memory is too low.
\end{abstract}




\newpage

\section{Introduction}
Consensus algorithms are useful in distributed systems that require coordination, such as cryptocurrencies and filesharing systems. 
Many distributed systems today are run on resource-constrained networks with limited bandwidth, computation, power, or storage. 
Despite this, consensus algorithms are often designed for resource-rich environments. That is, they minimize time to consensus without considering other costs such as communication and storage. 
Some algorithms do optimize communication costs, but typically under the assumption that nodes always communicate whenever they are allowed to. 
This is not representative of resource-constrained networks, because distributed systems are increasingly being deployed on wireless networks of battery-powered devices (e.g., the Internet of Things). On such devices, the high power demands of communication can quickly drain battery life, thus incentivizing nodes to remain silent whenever possible. 
Low-power wireless devices are also more likely to have limited storage than traditional computers.

In this work, we consider a communication model that is motivated by a wireless network of resource-constrained devices. 
We make three primary modeling assumptions: (1) nodes are storage-constrained, 
(2) nodes refrain from communicating whenever possible, and
(3) the dominant cost of communication is setting up the connection.\footnote{For example, when two mobile devices exchange a message of less than 1 kB in a line-of-sight setting, the initial TLS handshake comprises over 85\% of the power overhead \cite{miranda2011tls}. As such, our model penalizes the establishment of a communication channel, but not the number of bits sent over that channel. 
Further, although we do not explicitly charge the number of bits sent in our protocol, our protocols transmit well under 1 kB for reasonable network sizes, so we are operating in a regime where establishing a connection is the energy bottleneck.
}
Our goal is to design consensus protocols that obey memory constraints while simultaneously minimizing the total communication cost over all nodes. 

\vspace{0.05in}
\noindent \textbf{Model}
We summarize our model, which is fully specified in Section \ref{sec:model}.
Consider a set of $n$ nodes in a complete graph topology, each of which can be in one of $s$ possible states.\footnote{Note that a node needs $\lceil \log_{2} s \rceil$ bits of memory to store its state.} At the beginning of the protocol, each node $i$ is assigned a bit $\frk b_i\in\{0,1 \}$ which is stored in its memory. Let $\frk b$ be the majority bit, and let $p\in(1/2,1)$ 
be the fraction of the nodes for which $\frk b_i=\frk b$. We assume $p \in [\frac{1}{2} + \eps, 1 - \eps]$, where $\eps \in (0, \frac{1}{4})$ is known to the protocol. We call $p-\frac{1}{2}$ the \emph{initial advantage}.

In the \emph{asynchronous} variant of the model each node $i$ has an independent, unit rate Poisson clock. When $i$'s clock rings, $i$ may either do nothing (which costs 0) or initiate a communication (which costs 1).
If $i$ chooses to communicate it will be connected with another node $j$ chosen uniformly at random, and the two nodes update their states based on the state of the other node. We also study a \emph{synchronous} variant of the model where the nodes are allowed to communicate at every integer time. Note that we do not use the word ``asynchronous" in the sense of unbounded communication delays, but simply to describe a continuous-time communication model.

At any time $t\geq 0$ each node $i$ has an estimate for $\frk b$, which we call the \emph{belief bit} of $i$. We have reached \emph{consensus} when all nodes have belief bit equal to $\frk b$. We say that a node is in a \emph{terminal state} if nodes in this state will never change state and never initiate further communications. We say that we have reached \emph{terminal consensus} if all nodes are in a terminal state and have belief bit equal to $\frk b$. The goal is to reach consensus or terminal consensus with high probability (w.h.p.), meaning with probability $1-o(1)$, while minimizing communication cost. 

We say that a state is \emph{aware} if a node in this state will never change its belief bit.  Notice that when we reach terminal consensus all nodes are in aware states, while this is not necessarily the case when we reach consensus.

\subsection{Main results}
It is immediate that any protocol, regardless of the memory constraint $s$, must incur a communication cost of $\Omega(n)$. Our main results provide upper and lower bounds for the \emph{threshold} on $s$ above which $\Theta(n)$ communications are sufficient. Earlier literature has studied consensus protocols for the asynchronous model with $\Theta(n\log n)$ communications and $O(1)$ (e.g.\ $s=3$) states of memory \cite{angluin2008simple,perron2009using,cruise2014probabilistic}. Synchronous variants of such protocols achieve consensus with $\Theta(n\log\log n)$ communications and $O(1)$ states of memory.  We obtain lower bounds on the number of communications needed under arbitrary memory constraints which, in particular, show that these earlier studied protocols are optimal (up to multiplication by a constant) for the case where $s=O(1)$.
Our results for the asynchronous model are summarized in Figure \ref{fig:mainresult}.
\begin{theorem}[Upper bound, asynchronous model]\label{prop:upperasync2}
	For any $\eps\in(0,1/4)$ there exists a constant $C>0$ and an asynchronous consensus protocol such that w.h.p., terminal consensus is achieved with $Cn$ communications using $s = \lceil C(\log \log n)^3\rceil$ states of memory per node if $p$ is in $[1/2+\eps,1-\eps]$.
\end{theorem}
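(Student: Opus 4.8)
The plan is to organize the protocol into an \emph{amplification} stage, an \emph{awareness} stage in which a sublinear-cost subprocess makes a large set of nodes certain of the majority bit, and a \emph{broadcast-and-freeze} stage, stitching the stages together in the clockless asynchronous model by having each node use its own Poisson clock as a coarse timer and by tagging each outgoing bit with the sender's current phase. The feature of the model we exploit throughout is that a node may \emph{refrain} from communicating, so that although the protocol runs for a superconstant amount of time, all but a sublinear set of nodes communicate only $O(1)$ times each. In Stage~1 I would run a majority-type population dynamics — say the three-sample majority rule, in which over two consecutive clock rings a node reads the bits of two uniformly random nodes and replaces its own bit by the majority of the three — for a number of rounds depending only on $\eps$. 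One round costs $O(n)$ communications, and since $\eps$ is a fixed constant a constant number of rounds drives the minority fraction from $\tfrac12-\eps$ below some absolute constant $\delta_0$ w.h.p. So Stage~1 costs $O(n)$ and ends with all but a $\delta_0$-fraction of nodes holding the majority bit.

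Stage~2 is the technical heart and exactly the phenomenon the abstract flags: a node with only $\mathrm{poly}(\log\log n)$ states cannot make itself \emph{aware} — certain it will never have to change its belief — from local samples alone, since being correct with probability $1-o(1/n)$ from a constant bias (roughly what terminal consensus needs) requires aggregating $\Theta(\log n)$ samples, hence a counter with $\Theta(\log n)$ values, which does not fit. What makes awareness attainable is that majority-type dynamics contract the minority fraction \emph{doubly exponentially}, $q\mapsto 3q^2-2q^3$, so that after $\Theta(\log\log n)$ further rounds the minority fraction drops below $1/\mathrm{poly}(n)$; a node that has merely \emph{counted} off those $\Theta(\log\log n)$ rounds — a counter with $\Theta(\log\log n)$ values, comfortably inside $(\log\log n)^3$ — then knows its bit is the majority bit. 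The difficulty is cost: having all $n$ nodes carry out $\Theta(\log\log n)$ rounds would spend $\Theta(n\log\log n)$ communications, a $\log\log n$ factor too much, and any scheme that confines the deep work to a sublinear set must contend both with the overhead of those nodes locating one another and with the fact that any node that stops early is ``dead weight'' whose bit must be corrected later. I would resolve this by letting the set still doing amplification shrink geometrically over the $\Theta(\log\log n)$ rounds, so the per-round cost forms a convergent geometric series summing to $O(n)$, while arranging that the surviving nodes sample from — and push corrections into — a pool whose own bias is being driven down, so that after $\Theta(\log\log n)$ rounds a set of $\approx n/\mathrm{polylog}(n)$ nodes is simultaneously correct and, having counted its rounds, aware; a concentration argument over the $\Theta(\log\log n)$ rounds shows this holds w.h.p.\ and that this committee's majority remains the global majority throughout.

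In Stage~3 the $\approx n/\mathrm{polylog}(n)$ aware nodes propagate the bit by rumor spreading: each aware node, on a bounded number of its next clock rings, sends its bit to a uniformly random node, which adopts it and becomes aware. Because the aware set roughly doubles each round until it is a constant fraction of $n$, the total number of such pushes telescopes to $O(n)$, and since the initial aware set is only a $\mathrm{polylog}(n)$ factor smaller than $n$ this takes only $\Theta(\log\log n)$ time. Each node that retired early runs its timer for $\Theta(\log\log n)$ clock rings before it starts querying random nodes, by which point a constant fraction of the population is already aware, so it becomes aware after $O(1)$ queries in expectation — a concluding sweep of cost $O(n)$ w.h.p. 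After a node has been aware for a further $\Theta(\log\log n)$ clock rings it enters a terminal state: it may still be queried but never changes state or initiates a communication. Since aware nodes never alter their correct bit and every retired node is overwritten before it could freeze on a wrong one, this gives terminal consensus with $O(n)$ total communications w.h.p.; the budget $(\log\log n)^3$ (equivalently $3\log\log\log n$ bits) is what is needed to hold simultaneously the amplification-round counter, the rumor-spreading bookkeeping, and the terminal timer, each bounded by $\Theta(\log\log n)$.

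The main obstacle is Stage~2 — reconciling ``a node must participate in $\Theta(\log\log n)$ rounds of collective amplification before it can honestly become aware'' with ``the entire protocol may use only $O(n)$ communications.'' This forces a design in which the deep computation is confined to a geometrically shrinking but always-representative set of nodes that never wastes communications searching for one another, and it forces a careful accounting of the early-retiring nodes and of the slack introduced by coordinating phases with a coarse Poisson-clock timer of only $\mathrm{poly}(\log\log n)$ states, so that the final rumor-spreading provably reaches and freezes every node; the rest of the argument (Stages~1 and~3, and the various concentration bounds) is comparatively routine.
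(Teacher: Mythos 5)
Your overall architecture — confine the $\Theta(\log\log n)$ rounds of three-sample-majority amplification (the $q\mapsto 3q^2-2q^3$ contraction) to a committee that is a $\mathrm{polylog}(n)$ factor smaller than $n$, then spread the result by push–pull rumor spreading and freeze into terminal states, with $\mathrm{poly}(\log\log n)$-valued clock-ring counters for phase-keeping — is the same as the paper's (expert selection, estimation phase with expert levels, pushing phase, pulling phase). Stage 1 as a whole-population preamble is harmless but unnecessary: since the contraction map moves from $\tfrac12-\eps$ to a small constant in $O_\eps(1)$ rounds, the paper simply lets the committee start from the raw bits.

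The genuine gap is the committee-maintenance mechanism in your Stage 2, which is exactly the technical heart of the paper and is left as an assertion ("arranging that the surviving nodes sample from --- and push corrections into --- a pool whose own bias is being driven down"). In this model a node that initiates a communication reaches a \emph{uniform} node of $[n]$; it cannot address the active committee. So if the active set at round $t$ has size $n\gamma^t$ and its members \emph{pull} their three samples, each attempt hits a committee member only with probability $\approx\gamma^t$, and keeping the contraction valid costs $\approx n$ communications per round, i.e.\ $\Theta(n\log\log n)$ in total — precisely the overhead you set out to avoid. If instead the previous committee \emph{pushes}, the set of nodes hit three times has expected size $\approx n\gamma^{3t}$, so a "geometrically shrinking" committee in fact collapses triply exponentially and dies after $O(1)$ rounds; moreover in the asynchronous setting the three pushes arrive at widely separated times and from senders whose round numbers have drifted. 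The paper's resolution is a two-step round: first a tiny core of $\approx n2^{-3K}$ level-$m$ experts is formed by the three-contact rule, made sound by von Neumann–assigned expert types $1,2,3$ (so the three bits are independent) and by expert candidates that wait $\Theta((\log\log n)^2)$ clock rings for their three bits; then that core is rumor-doubled $2K$ times to restore the committee to size $\approx n2^{-K}$ before the next level, with careful control of premature and expired candidates and of collision terms because the relative error in the committee size triples per round. This regeneration step also changes your time and memory accounting: the estimation stage lasts $\Theta((\log\log n)^2)$ (not $\Theta(\log\log n)$), so retired nodes must pull only every $\Theta((\log\log n)^2)$ clock rings (your $\Theta(\log\log n)$ retire-timer would have them polling long before any terminal node exists), and the $(\log\log n)^3$ memory arises from the candidates' $\Theta((\log\log n)^2)$-valued countdown times the $\Theta(\log\log n)$-valued level counter rather than from three $\Theta(\log\log n)$-valued timers. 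Without some such mechanism and its concentration analysis, Stage 2 as proposed either blows the $O(n)$ budget or loses the committee, so the proof is incomplete at its central step.
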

\begin{theorem}[Upper bound, synchronous model]\label{prop:uppersync}
	For any $\eps\in(0,1/4)$ there exists a constant $C>0$ and a synchronous consensus protocol such that w.h.p., terminal consensus is achieved with $Cn$ communications using $s = \lceil C(\log \log n)^2\rceil$ states of memory per node if $p$ is in $[1/2+\eps,1-\eps]$.
	\label{thm:upper_sync2}
\end{theorem}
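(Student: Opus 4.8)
The plan is to exhibit an explicit synchronous protocol, organized into a short sequence of phases, and analyze it; the construction will closely parallel the asynchronous protocol behind Theorem~\ref{prop:upperasync2}, the key simplification being that with a global clock a node can read off the current round and phase for free, whereas the asynchronous protocol must additionally store a $\Theta(\log\log n)$-valued progress counter (about $\log\log\log n$ extra bits), which is exactly the factor $\log\log n$ separating $(\log\log n)^3$ from $(\log\log n)^2$ states. Each node's state would record a belief bit; a phase label drawn from \emph{aspirant} / \emph{informed} / \emph{terminal} (with sub-phases whose \emph{deterministic} boundaries $\Taspz,\Tasp,\Tinf,\Tinff,\Tinfff,\Tterm$ are functions of $n$ and the round number); a \emph{confidence level} $\ell\in\{0,\dots,L\}$ with $L=\Theta(\log\log n)$; and an auxiliary $O(\log\log n)$-valued counter. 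One then checks this all fits in $\lceil C(\log\log n)^2\rceil$ states, reusing the two counters across disjoint phases so that the exponent stays $2$.

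The heart of the argument is an \emph{amplification} step run during the aspirant phase, in which essentially all nodes participate. I would design the level-update rule so that, writing $q_\ell=\P[\text{belief}\neq\frk b\mid\text{level }\ell]$, it obeys a recursion of the form $q_{\ell+1}\le C q_\ell^{2}$: informally, a node is promoted from level $\ell$ to $\ell+1$ only upon seeing two (conditionally) independent level-$\ell$ opinions agreeing with its own. Since $\eps$ is a fixed constant, the global advantage climbs from $\eps$ to a universal constant in $O(1)$ rounds; thereafter the minority fraction roughly squares each round, so after $L=\Theta(\log\log n)$ rounds $q_L<n^{-\omega(1)}$ and all but an $o(n)$ fraction of nodes have reached level $L$. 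A node keeps polling only while ``unsettled'' and throttles once it has been consistent for a few rounds; since the unsettled fraction at round $t$ is of order the minority fraction, which squares, $\sum_t(\text{unsettled fraction at }t)=O(1)$, so the aspirant phase costs $O(n)$ communications. A union bound over the $\le n$ nodes that ever reach level $L$ then shows that w.h.p.\ every such node holds belief $\frk b$.

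Next I would handle dissemination and termination. In the informed phases the belief $\frk b$ propagates epidemic-style: an unsettled node polled by (or meeting) a level-$L$ node adopts its belief, and after $O(1)$ confirmations becomes level $L$ itself; since a constant fraction of nodes is already at level $L$ when this begins and the uninformed fraction squares each round, the remaining nodes are absorbed within $O(\log\log n)$ rounds at a further cost of $O(n)$ communications. An error-correction routine of $\Tec=100(\log\log n)^2$ rounds lets the few nodes that remain in conflict (one that throttled or became informed against the majority, or lagged behind) poll repeatedly and flip; if the unsettled set entering this phase has size $o(n/(\log\log n)^2)$ w.h.p., this too costs $O(n)$. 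Finally, in the terminal phase every level-$L$ node, having confirmed consistency for the full phase, moves to a terminal state. Combining the four estimates --- correctness ($q_L$ small plus union bound), coverage (everyone reaches level $L$), budget ($O(n)$ total), termination (all nodes terminal with belief $\frk b$) --- yields the theorem for some constant $C=C(\eps)$.

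The main obstacle I expect is reconciling the $O(n)$ budget, which forces every node to commit on the basis of $O(1)$ interactions first to stop polling and eventually to go terminal (hence silent) forever, with an $o(1)$ \emph{global} failure probability. This has two hard parts. First, keeping the recursion $q_{\ell+1}\le Cq_\ell^{2}$ honest: the two level-$\ell$ opinions being combined must be genuinely conditionally independent, and enough nodes must remain at each level for the sampling to behave, which pins down the throttling rule and must be propagated through $\Theta(\log\log n)$ rounds of dependent dynamics. Second, calibrating $L$, the throttling schedule, and $\Tec$ so that the ``unsettled'' set entering error correction is small enough for that phase to be simultaneously reliable and within budget, yet so that error correction is not so aggressive that it itself destabilizes the consensus. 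Getting all four estimates to hold for a single choice of the $O(1)$ tuning parameters is where the real work lies.
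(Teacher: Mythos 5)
There is a genuine gap, and it sits exactly where your proposal puts its weight: the claim that the all-nodes amplification phase costs $O(n)$ communications. For the recursion $q_{\ell+1}\le Cq_\ell^2$ to drive every node's error probability below $n^{-\omega(1)}$, every node must acquire fresh level-$\ell$ samples at each of the $L=\Theta(\log\log n)$ levels; since each unit-cost initiation informs at most two nodes, this costs $\Omega(n\log\log n)$ initiations, not $O(n)$. Your escape hatch --- a node ``throttles once it has been consistent for a few rounds'' --- does not rescue the budget: a node that stops after $O(1)$ consistent observations has confidence only $1-(1-p)^{O(1)}$, so a \emph{constant} fraction of nodes (probability roughly $(1-p)^{O(1)}=\Theta(1)$ each, e.g.\ minority-belief nodes that happen to sample like-minded peers) go silent with the wrong bit and only constant-level confidence. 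That set is $\Theta(n)$, not the $o\big(n/(\log\log n)^2\big)$ you need for the error-correction phase to stay within budget, and waking $\Theta(n)$ silent wrong nodes by polling is itself where the coupon-collector-type lower-bound pressure comes from. The ``unsettled fraction squares'' heuristic conflates the population-level minority fraction under full-participation 3-majority dynamics (which indeed squares, but at cost $\Theta(n)$ \emph{per round}) with the per-node cost of reaching high individual confidence; you cannot have both the squaring recursion applied to essentially all nodes and a total of $O(n)$ initiations.

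The paper's protocol avoids this by making the amplification radically asymmetric: a von Neumann unbiasing step selects only about $n\,2^{-K}$ experts with $K=\Theta(\log\log n)$ (so roughly $n/\mathrm{polylog}(n)$ nodes), and only these run the $M=\Theta(\log\log n)$ rounds of majority-of-three amplification plus rumor-spreading that cubes and then restores the expert count each round; the estimation phase therefore costs $o(n)$ even though it lasts $\Theta((\log\log n)^2)$ steps. All remaining nodes stay essentially silent (one pull every $\Theta(M K)$ steps) and pay only $O(1)$ expected communications each in the push/pull phases, becoming terminal by \emph{inheriting} an aware state rather than by building their own confidence. Your dissemination and memory-accounting sketches are broadly compatible with this, but without the expert-selection asymmetry (or an equivalent mechanism that concentrates the $\Theta(\log\log n)$-round amplification on $o(n/\log\log n)$ nodes), the correctness and the $O(n)$ budget cannot be met simultaneously, so the proof as proposed does not go through.
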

These upper bounds are proved by describing and analyzing explicit consensus protocols. See Sections \ref{sec:intro-upper}, \ref{app:uppersync}, and \ref{app:upperasync2}. 
Although it is not our goal to minimize running time, we remark that the asynchronous protocol terminates in time $\wt O(\log n)$ w.h.p., while the synchronous protocol terminate in time $O( (\log\log n)^3 )$ w.h.p.
We also present a simpler protocol for the asynchronous model.
\begin{prop}[Simpler upper bound, asynchronous model]
	\label{prop:upperasync1}
	For any $\eps\in(0,1/4)$ there exists a constant $C>0$ and an asynchronous consensus protocol such that w.h.p., terminal consensus is achieved with $Cn$ communications using $s = \lceil C(\log n)^2\rceil$ states of memory per node if the p is in $[1/2+\eps,1-\eps]$.
\end{prop}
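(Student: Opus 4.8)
The plan is to build a protocol in three conceptual phases — a \emph{counting} phase, a \emph{decision} phase, and a \emph{broadcast} phase — using the $O((\log n)^2)$ states to run a bounded-precision estimator of the population imbalance. The first thing I would do is fix a target accuracy. With $s = \lceil C(\log n)^2 \rceil$ states we can afford to hold an integer counter of size up to $\sim C (\log n)^2$, or, more usefully, a pair consisting of a belief bit together with a $\Theta(\log\log n)$-bit confidence level (so $2^{\Theta(\log\log n)} = \operatorname{poly}(\log n)$ levels). The idea of the counting phase is a standard ``population-protocol''-style cancellation/averaging dynamic: nodes carry a signed opinion weight, and when two nodes interact they move their weights toward the average (rounding to the grid allowed by the memory). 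After $\Theta(n)$ total interactions — i.e., $O(1)$ expected interactions per node — a Chernoff/martingale argument shows the empirical average opinion concentrates around the true advantage $p - \tfrac12 \ge \eps$, so a constant fraction of nodes can correctly read off $\frk b$ with a confidence level that is bounded below by a constant.

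Next I would handle \emph{amplification to awareness}. A single pass of averaging only gives each node the right belief bit with constant probability, not w.h.p., and crucially it does not let a node \emph{know} it is right. The mechanism I would use is a confidence ratchet: a node only increments its confidence level when it meets a node of the same belief bit at an equal-or-higher confidence level, and it resets (or decrements sharply) on meeting a high-confidence node of the opposite belief. Because the true-majority camp retains a constant advantage throughout (this must be maintained as an invariant — the total signed weight is conserved up to rounding error, which is where the $(\log n)^2$ rather than $\log n$ budget buys slack), a node in the majority camp performs a biased random walk upward on the $\Theta(\log\log n)$ confidence levels, while minority-camp nodes cannot climb. Reaching the top level takes $O(\log\log n / \log(1/\text{bias}))$ successful interactions in expectation, and a union bound over $n$ nodes — using that the top level is $\Omega(\log\log n)$ and hence the failure probability per node is $\operatorname{poly}(1/\log n)^{\Omega(\log\log n)} \cdot (\cdots)$ — needs to be tuned so it is $o(1/n)$; this is the delicate counting. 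Once a node is at the top confidence level it declares itself \emph{aware} and enters a terminal-broadcast state.

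For the \emph{broadcast / termination} phase, aware terminal nodes act as sources that overwrite any node they meet (or are met by) with the correct belief bit and push them toward awareness; since a constant fraction of nodes become aware in the previous phase, a standard epidemic/rumor-spreading bound gives that all $n$ nodes are converted within $O(n)$ further total interactions w.h.p. The terminal states are designed to never initiate communication and never change, so once converted a node stays converted, giving \emph{terminal} consensus. Summing the three phases gives $O(n)$ total communications. The main obstacle I anticipate is the joint tuning of (a) the rounding granularity in the averaging phase so that the conserved-imbalance invariant survives with only $(\log n)^2$ states, and (b) the number of confidence levels and the per-step bias, so that the per-node probability of a spurious ``aware'' declaration by a minority node — or a premature termination before the broadcast completes — is $o(1/n)$ simultaneously with the whole process finishing in linearly many interactions. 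Everything else (Chernoff for the averaging, hitting-time estimates for the ratchet walk, epidemic bounds for the broadcast) is routine once these parameters are pinned down; I would isolate each as a lemma and then assemble them.
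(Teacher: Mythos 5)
There is a genuine gap, and it sits exactly where you flagged ``the delicate counting'': the confidence ratchet cannot simultaneously give per-node error $o(1/n)$ and linear total communication. Since a node that declares itself aware becomes terminal and can never be corrected, and since it then acts as a source in your broadcast phase, you need the probability that \emph{any} minority-belief node spuriously reaches the top level to be $o(1)$, i.e.\ per-node spurious-awareness probability $o(1/n)$. But with a constant per-step bias (bounded by a function of $\eps$ only) the probability that a minority node reaches level $h$ of the ladder is only $e^{-\Theta(h)}$, so you need $h=\Omega(\log n)$ levels; your $\Theta(\log\log n)$ levels give error $(\log n)^{-\Theta(1)}$ per node, and w.h.p.\ polynomially many wrong terminal nodes appear, killing terminal consensus. (The claim that minority-camp nodes ``cannot climb'' is not something the rule enforces: two minority nodes at equal confidence meeting each other satisfy your increment condition, so minority nodes do climb, just with a less favorable drift; also the conserved-imbalance invariant is not obviously preserved once the ratchet resets and the broadcast overwrites states.) On the other hand, if you repair this by taking $h=\Theta(\log n)$, each node that climbs needs $\Omega(\log n)$ interactions, so having a constant fraction of nodes climb costs $\Omega(n\log n)$ communications — already your $\Theta(\log\log n)$-level version costs $\Omega(n\log\log n)$ if a constant fraction of nodes climb, exceeding the $Cn$ budget. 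So the scheme as described either fails correctness or fails the communication bound; no tuning of the two parameters escapes this, because the product (number of aware-declaring nodes) $\times$ (samples each must effectively accumulate, which must be $\Omega(\log n)$) must stay $O(n)$.

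The way out is asymmetry, which is what the paper's protocol does and your symmetric ratchet lacks: only $\Theta(n/\log n)$ nodes (selected by von Neumann unbiasing) do the $\Theta(\log n)$-sample work, so each such ``expert'' errs with probability $n^{-2}$ by a Chernoff bound while the total polling cost is $O(n)$; each expert then pushes its verdict to $\log n$ nodes, creating a constant fraction of terminal nodes, and everyone else pulls from a terminal node in $O(1)$ expected attempts. Your broadcast/pull phase is essentially the same as the paper's and is fine; the counting-plus-ratchet front end is the part that would need to be replaced by (or reshaped into) such a few-experts-poll-many structure before the rest of your argument can be assembled.
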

\begin{figure}
	\centering
	\includegraphics[]{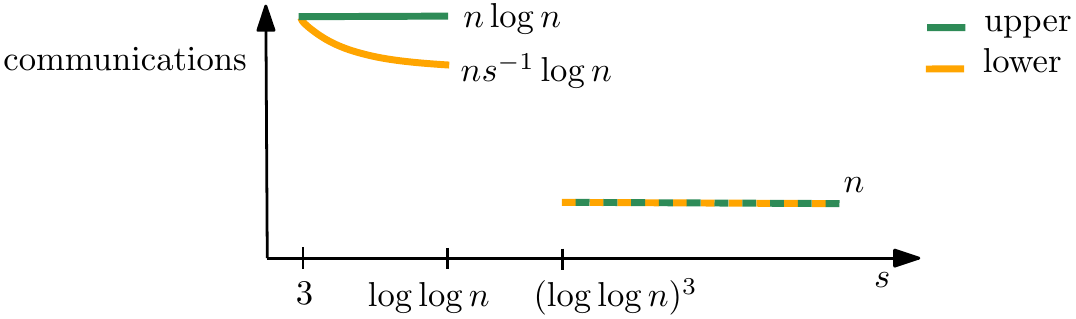}
	\caption{The figure gives an overview of our upper and lower bounds for the number of required communications in the asynchronous model, given the number of states of memory per node $s$.}
	\label{fig:mainresult}
\end{figure}

The following theorems provide lower bounds on the communication cost for nodes with a given memory constraint $s$. In particular, the theorems imply that consensus among nodes with $o(\log\log n)$ states of memory cannot  be achieved with $\Theta(n)$ communication cost.
\begin{theorem}[Lower bound, asynchronous model]\label{prop:lower-async}
	For any $\eps\in(0,1/4)$ consider an arbitrary asynchronous consensus protocol which achieves consensus on the correct bit with probability greater than $1/2$ for any $n\in\N$, $\frk b\in\{0,1 \}$, and $p\in[1/2+\eps,1-\eps]$. There is a constant $c>0$ depending only on $\eps$ such that w.h.p.\ and for $s<\log\log n-c^{-1}$, the protocol incurs communication cost at least $cns^{-1}\log n$. Furthermore, for $s<\log\log n-c^{-1}$ it holds w.h.p.\ that no node is ever in an aware state.
\end{theorem}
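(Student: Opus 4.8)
The plan is to argue by contradiction. Fix $\eps\in(0,1/4)$ and suppose some asynchronous protocol with $s<\log\log n-c^{-1}$ states reaches consensus on the correct bit with probability $>1/2$ for all admissible inputs. I will compare two executions of this protocol that differ only through the majority bit, and show that a typical node cannot tell them apart for a long time, so that its belief bit carries essentially no information about $\mathfrak{b}$. Summing over nodes then contradicts consensus, and the same comparison will show that no node can ever enter an aware state.

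\smallskip
\noindent\emph{Step 1: reduction to low-activity nodes.}
Suppose first, toward the communication lower bound, that with non-negligible probability the total cost is below $\tfrac{c}{10}\,n s^{-1}\log n$. On that event, Markov's inequality gives that at most $n/10$ nodes initiate more than $c s^{-1}\log n$ communications; since the partners of all initiated communications are i.i.d.\ uniform, a Chernoff and union bound over the (at most $\tfrac{c}{10}ns^{-1}\log n$) partner slots shows that w.h.p.\ every node is chosen as a partner at most $O(s^{-1}\log n+\log n/\log\log n)$ times. Hence w.h.p.\ at least $8n/10$ nodes undergo at most $L:=\Theta(s^{-1}\log n)$ state updates over the entire (infinite) execution. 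It therefore suffices to prove that a node undergoing at most $L$ updates has, at \emph{every} time, belief bit equal to $\mathfrak{b}$ with conditional probability at most $\tfrac12+o(1)$: then w.h.p.\ a constant fraction of the $\Omega(n)$ low-activity nodes are wrong at the (random) consensus time, and consensus fails. (That ``consensus'' is a momentary event is handled by running the estimate uniformly in time, i.e.\ showing that w.h.p.\ there is always some node with the wrong belief bit; here one uses that a low-activity node has few updates \emph{in total}, so its belief is ``stuck''.)

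\smallskip
\noindent\emph{Step 2: the indistinguishability coupling (the crux).}
Build two coupled executions $\mathcal{E}^{+}$ ($\mathfrak{b}=1$, $p=\tfrac12+\eps$) and $\mathcal{E}^{-}$ ($\mathfrak{b}=0$, $p=\tfrac12+\eps$) on one probability space: the same Poisson clocks, the same sequence of uniform partners, the same internal coins, and initial configurations agreeing except on a uniformly random set $R$ of $2\eps n$ nodes (flipped from $1$ to $0$). For a node $i$ let $w_i\in[0,1]$ be the probability that $i$'s current state differs between the two executions under the optimal coupling of the remaining randomness, so $w_i=\one_{i\in R}$ at time $0$ and $\lvert\P_{\mathcal E^+}(\mathrm{belief}_i=1)-\P_{\mathcal E^-}(\mathrm{belief}_i=1)\rvert\le w_i$ at all times. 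What is needed is a quantitative bound of the shape: after a node has undergone $t$ updates, $w_i = O(t s/\log n)$, hence $w_i=o(1)$ throughout its first $\Theta(s^{-1}\log n)$ updates. The naive, memoryless estimate — an update between $i$ and a uniform partner increases $\sum_j w_j$ by a factor $1+O(1/n)$, so the discrepancy reaches $\Theta(n)$ after only $\Theta(n)$ communications and hits a low-activity node after $O(1)$ of its updates — is far too weak. The real work is to exploit that each relay node has only $s$ states and so can forward at most $O(\log s)$ bits per update about ``where the discrepancy is'', which damps the propagation of $\mathfrak{b}$-information; this is exactly where the threshold $s<\log\log n$ enters, being the regime in which the damped rate forces $\gtrsim s^{-1}\log n$ updates before a node can correlate its belief with $\mathfrak{b}$ at all, let alone with the $1-o(1/n)$ reliability that a w.h.p.\ consensus guarantee demands of each of the $n$ nodes. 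Combining with Step 1 yields the communication lower bound.

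\smallskip
\noindent\emph{Step 3: no node is ever aware.}
A state $\sigma$ is aware iff the transition rules never let a node in $\sigma$ change its belief bit, so $\sigma$ carries a fixed value $\beta(\sigma)\in\{0,1\}$; for a correct protocol no node may ever reach an aware state with $\beta(\sigma)=0$ when $\mathfrak{b}=1$, nor with $\beta(\sigma)=1$ when $\mathfrak{b}=0$, since either precludes consensus. Run the coupling of Step 2 until the first time some node enters an aware state. On the event that the node in question has by then undergone at most $L$ of its own updates, the coupling still succeeds with probability bounded below by Step 2, so that node sits in the \emph{same} aware state in $\mathcal{E}^{-}$ — hence with a \emph{wrong}, frozen belief bit — with comparable probability, contradicting correctness. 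The complementary case, a node first becoming aware only after $\omega(s^{-1}\log n)$ of its own updates, is handled by pushing the discrepancy estimate of Step 2 to longer times and using that there are only $n$ nodes: if $\Omega(n)$ of them ever became aware we would again manufacture $\omega(1)$ nodes with a frozen, wrong belief in one of the two executions. Hence w.h.p.\ no node is ever in an aware state.

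\smallskip
\noindent\emph{Expected main obstacle.}
Everything rests on the quantitative discrepancy bound of Step 2: formalizing that a bounded-memory relay throttles the flow of information about $\mathfrak{b}$, so that the $2\eps n$ initially flipped bits reach a typical node only after $\Omega(s^{-1}\log n)$ of its updates rather than after $O(1)$ of them (the memoryless answer). Pinning down the correct dependence — in particular the factor that becomes $\ge\log n$ exactly when $s\ge\log\log n$ — is the delicate point; by comparison the clock/partner concentration of Step 1 and the awareness bookkeeping of Step 3 are routine.
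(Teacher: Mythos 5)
There is a genuine gap, and it sits exactly where you flag it: Step 2 is not a proof but a wish. The quantitative claim you need --- that a node which has undergone only $O(s^{-1}\log n)$ updates has belief bit equal to $\frk b$ with conditional probability at most $\tfrac12+o(1)$ --- is false as stated. A node that polls a single uniformly chosen peer and copies its initial bit already agrees with $\frk b$ with probability $p\ge \tfrac12+\eps$, i.e.\ constant correlation after one update; bounded memory does not ``throttle'' the flow of information about $\frk b$, because a single belief bit suffices to carry it (this is precisely why the $3$-state protocols of Angluin et al.\ work). So the proposed mechanism (``a relay can forward at most $O(\log s)$ bits per update, which is where $s<\log\log n$ enters'') cannot be the source of the threshold, and Step 1's conclusion (a constant fraction of low-activity nodes wrong at consensus time) and Step 3 both inherit the unproved crux. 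Your framework also cannot see the other half of the true bound: if the protocol has \emph{passive} states, nodes in them may be perfectly correlated with $\frk b$ or not, but the cost comes from having to \emph{contact} each of them, a coupon-collector effect that an indistinguishability coupling of a typical node's view does not capture.

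The paper's route is structurally different. It defines the sets $A(k)$ of states reachable after $k$ steps, notes $A(s)=\bigcup_k A(k)$ since $\#\cS=s$, and proves (Lemma \ref{prop36}) that w.h.p.\ \emph{every} state of $A(k)$ is occupied by at least $a^{2^k}n$ nodes at time $k$, by a doubling induction (two states of frequency $\ge a_{k-1}$ meet with probability $\gtrsim a_{k-1}^2$, with concentration via McDiarmid and the influence bound of Lemma \ref{prop37}); the condition $s<\log\log n-c^{-1}$ is exactly what keeps $a^{2^s}n\ge n^{0.9}$. Since $A(s)$ does not depend on $\frk b$, no state in $A(s)$ can be aware (Lemma \ref{prop39}), which gives the second assertion directly. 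The communication bound then follows from a dichotomy: if $A(s)$ contains a passive state, $n^{0.9}$ nodes sit in it at time $s$ for the ``wrong'' choice of $\frk b$ and must each be contacted, giving $\Omega(n\log n)$ communications by coupon collecting; if not, every node communicates at rate at least $1/s$, while Lemma \ref{prop70} shows consensus cannot occur before time $0.1\log n$ (w.h.p.\ some wrong-bit node has not communicated at all), giving $\Omega(ns^{-1}\log n)$. If you want to salvage your plan, you would have to replace Step 2 by something playing the role of this density-plus-dichotomy argument; as written, the proposal does not establish the theorem.
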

\begin{theorem}[Lower bound, synchronous model]\label{prop:lower}
	For any $\eps\in(0,1/4)$ consider an arbitrary synchronous consensus protocol which achieves consensus on the correct bit with probability greater than $1/2$ for any $n\in\N$, $\frk b\in\{0,1 \}$, and $p\in[1/2+\eps,1-\eps]$. There is a constant $c>0$ depending only on $\eps$ such that w.h.p., the protocol incurs communication cost at least $c(ns^{-1}\log\log n \vee  n)$. Furthermore, for $s<\log\log\log n-c^{-1}$ it holds w.h.p.\ that no node is ever in an aware state.
\end{theorem}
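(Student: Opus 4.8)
The plan is to prove the three assertions — the $\Omega(n)$ bound, the $\Omega(n s^{-1}\log\log n)$ refinement, and the ``no aware state'' statement — in that order, all exploiting that an $s$-state automaton communicating little can neither gather nor retain the evidence a correct majority-consensus protocol needs. The $\Omega(n)$ term is easy: a node changes its state, hence its belief bit, only in a communication in which it participates, and before its first communication its ``communicate-or-not'' decisions are a (possibly randomized) function of its input bit alone; if some bit value $b$ let a node stay silent forever with positive probability, then on the input in which $b$ is the minority bit some of the $\ge\eps n$ bit-$b$ nodes never communicate w.h.p., keep their wrong belief, and block consensus, contradicting correctness. So up to the time consensus is reached every minority-bit node communicates at least once, and since each communication has at most two participants, the total communication cost $M$ is $\ge\eps n/2$.

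For the $\Omega(n s^{-1}\log\log n)$ bound, work in synchronous rounds $t=0,1,2,\dots$, write $k_t$ for the number of communications in round $t$ (so $M=\sum_t k_t$), and attach to the round-$t$ configuration a potential $\mu_t\in[0,1]$ — a function of the empirical distribution over the $s$ states — measuring distance from consensus, normalised so that $\mu_0\asymp\tfrac12$ and $\mu_t\le 1/n$ is necessary for consensus at round $t$ to hold w.h.p.\ (morally, $\mu_t$ is the mass on states from which the node's eventual belief is still wrong with probability $\ge\tfrac12$). A bounded-differences estimate over the independent matchings of round $t$ shows $\mu_{t+1}$ is within $\wt O(n^{-1/2})$ of its conditional mean, so $\mu_t$ evolves essentially deterministically, and the heart of the argument is an \emph{amplification-rate lemma}:
\[
\log\log\frac1{\mu_{t+1}}\;\le\;\log\log\frac1{\mu_t}\;+\;O\!\Big(\tfrac{s\,k_t}{n}\Big).
\]
The mechanism is that a ``wrong-leaning'' node updates only when it communicates, and with $c$ communications in a round it observes $c$ partner states; its wrong-probability can drop below $\mu_t^{r}$ only if its rule effectively insists that $\gtrsim r$ of those partners be ``right-leaning'', but an update that is a function of $s$-valued states cannot realise a threshold finer than $s$ levels permit, which caps the achievable exponent in terms of $s$ and $c$. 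Summing the displayed inequality, with $\mu_0\asymp\tfrac12$ and the requirement $\mu_T\le 1/n$, forces $\sum_t s\,k_t/n\ge c\log\log n$, i.e.\ $M\ge c\,n\,s^{-1}\log\log n$; combined with the first part this gives $c\,(n s^{-1}\log\log n\vee n)$.

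The ``no aware state'' claim for $s<\log\log\log n-c^{-1}$ crucially exploits that all nodes run the same automaton: if a uniformly random node reaches an aware state with probability bounded away from $0$, then by symmetry and concentration $\Omega(n)$ nodes do. Since even one node sitting in an aware state with the wrong (frozen) belief blocks consensus, correctness forces, for every allowed input, that the probability some aware node has the wrong belief is $\le\tfrac12$. The key input is a \emph{capacity bound}: an $s$-state automaton, even using the whole network to aggregate evidence and with unbounded communication, can reach a belief-committed state only with probability at least $\beta(s)$ of committing to the wrong majority bit, where $\beta(s)$ is bounded below by a fixed, rapidly decreasing function of $s$ that is nonetheless $\gg 1/n$ once $s<\log\log\log n-c^{-1}$. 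Combining $\Omega(n)$ aware nodes, each wrong with probability $\ge\beta(s)$, with an anticorrelation estimate (their commitments rest on light cones that are too small — again because $s$ is small — hence mostly disjoint) shows that on a suitably chosen allowed input the probability that some aware node is wrong exceeds $\tfrac12$, contradicting correctness. Hence w.h.p.\ no node is ever in an aware state.

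The main obstacle, in both the communication bound and the ``no aware'' statement, is to make the $s$-dependence genuinely hold against \emph{arbitrary} (randomized, possibly time-varying) $s$-state protocols. For the amplification-rate lemma one must bound, uniformly over the current configuration, how sharply one synchronous round can concentrate the population's belief, and the delicate choice is to define $\mu_t$ as a function of states — not of nodes' full (possibly long) communication histories — while still capturing distance to consensus precisely enough that the $\log\log$-additive inequality holds and is essentially tight. For the capacity bound the difficulty is symmetric: proving that $s<\log\log\log n-c^{-1}$ states cannot support a confident belief-commitment even when the whole network cooperates to aggregate evidence, together with the accompanying light-cone disjointness estimate that closes the union bound.
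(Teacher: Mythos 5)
Your proposal is a program rather than a proof, and the two places where it hand-waves are exactly where the theorem's content lies. First, the communication bound rests entirely on the ``amplification-rate lemma'' $\log\log(1/\mu_{t+1})\le\log\log(1/\mu_t)+O(s\,k_t/n)$, but the potential $\mu_t$ is never defined and the lemma is never proved; the sketched mechanism (an $s$-valued update ``cannot realise a threshold finer than $s$ levels'') is a single-round, single-interaction statement and does not explain why the admissible amplification \emph{per communication} should grow with $s$ at all -- in one synchronous round a node sees at most one other state, so naive pairwise reasoning would give an $s$-free increment $O(k_t/n)$, which would yield $\Omega(n\log\log n)$ communications for every $s$ and contradict the paper's own synchronous upper bound ($O(n)$ communications with $s=\Theta((\log\log n)^2)$). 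So the $s$-dependence must come from how much evidence a node can accumulate \emph{across} rounds within $s$ states, which your per-round inequality does not engage with. The paper obtains the $s$-factor from a much more elementary source: any node not in a passive state must participate in a communication at least once every $s$ rounds (pigeonhole on its $s$ states under the deterministic no-communication update), consensus cannot occur before time $\Theta(\log\log n)$ because every reachable state is present with frequency at least $a^{2^k}\ge n^{-0.1}$ regardless of the majority bit, and the case where a passive state is reachable is handled separately by a coupon-collector argument giving $\Omega(n\log n)$. Your route omits this dichotomy and replaces the quantitative core by an unproven inequality whose very form is in doubt.

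Second, the ``no aware state'' part has both a missing ingredient and a logical gap. The ``capacity bound'' $\beta(s)$ (an $s$-state automaton must commit wrongly with probability $\gg 1/n$ when $s<\log\log\log n-c^{-1}$, even with unlimited network help) is asserted, not proved, and proving it is essentially the whole argument of the paper: one shows that the sets $A(k)$ of reachable states are eventually periodic with period at most $2^s\le\log\log n-c^{-1}$, hence every state the protocol can ever attain is attained by time $\log\log n-c^{-1}$, and by the frequency lemma it is attained by $\ge n^{0.9}$ nodes w.h.p.\ under \emph{both} choices of the majority bit; an aware reachable state would therefore force failure with probability near $1$ for one choice of $\mathfrak b$, so no reachable state is aware. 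Moreover, even granting your $\beta(s)$ and the light-cone anticorrelation estimate, your argument only excludes the case that a uniformly random node becomes aware with probability bounded away from $0$; from this you cannot conclude ``w.h.p.\ no node is ever in an aware state'' -- e.g.\ a per-node awareness probability of $n^{-1/2}$ produces aware nodes w.h.p.\ while escaping your symmetry-plus-union-bound contradiction, and aware states with the \emph{correct} belief are not touched by correctness at all. The paper's state-level argument (no aware state is reachable) covers these cases automatically; yours does not. The opening $\Omega(n)$ step is fine in spirit (the paper treats it as immediate), provided you phrase ``silent'' as never participating in any communication and account for the fact that a node's belief can change via the no-communication update, which the two-scenario deterministic-trajectory argument handles.
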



\subsection{Related work}
\label{sec:related}
The cost of majority consensus has been widely studied, 
and can be categorized by communication/timing model, consensus problem formulation, and cost metrics. 
We do not discuss related (more difficult) problems like leader election \cite{berenbrink2018simple} and plurality consensus  \cite{becchetti2015plurality,GP16}. 
We study  two main communication/timing models: synchronous (discrete-time) and asynchronous (continuous-time). 
Synchronous models may allow nodes to communicate with multiple nodes per time step\footnote{Our model differs  in that it allows only one communication per node per discrete time step.}, whereas asynchronous communication models generally assume gossip communication where each node can contact at most one other node per communication event. 
Metrics of interest typically include the probability of consensus, the communication cost, and the time to consensus, while constraints on communication and storage capacity are common.
We summarize relevant results in Table \ref{tab:related}, with a more detailed comparison of proof techniques and algorithms  in Section \ref{sec:related2}.
Table \ref{tab:related} uses \emph{wall-clock time} to refer to the global convergence time (expected or w.h.p., depending on the paper). 
In population protocols, this is often called \emph{parallel convergence time}, 
defined as the expected number of interactions needed for consensus, divided by $n$.
Since interactions happen concurrently in most population protocols, parallel time is related to wall-clock time by a constant factor w.h.p. 
However our protocols do not require nodes to communicate at each clock tick; as such, parallel time and  wall-clock time are not necessarily proportional in our protocols. 

Much of the relevant work is related to \emph{population protocols} \cite{angluin2006computation}, in which nodes (finite-state automata), engage in random pairwise interactions determined by a random scheduler, and update their states according to the state machine. 
Majority consensus is widely studied under the population protocol model, in two variants: 
\emph{exact majority} refers to protocols that converge to the majority bit with probability 1, whereas \emph{approximate majority} protocols can converge to the incorrect answer with positive (possibly vanishing) probability. 
In this work, we focus on approximate majority, which has received  less attention. 
Table \ref{tab:related} lists various exact consensus protocols aiming to optimize convergence time and/or storage complexity \cite{draief2012convergence,mertzios2014determining,alistarh2015fast,alistarh2017time,bilke2017population,alistarh2018space,berenbrink2018simple}. 
To date, the sharpest such result that holds for \emph{any} initial advantage is due to Berenbrink, Kaaser, Kling, and Otterbach \cite{berenbrink2018simple}, which has an optimal storage cost of $O(\log n)$ states (optimal for exact consensus) and $O(\log^{5/3}n)$ time complexity. 

In parallel, researchers have studied approximate majority protocols, mainly in the asynchronous setting, which is a more natural model for population protocols.
Angluin \emph{et al.} proposed a protocol requiring only 3 states and converging in logarithmic time \cite{angluin2008simple}, but this protocol requires the initial majority advantage to be $\Omega(\sqrt{n \log n})$.
More recently, \cite{kosowski2018population} proposed a protocol that achieves approximate majority consensus for any nonzero initial advantage, incurring constant storage cost, polylogarithmic convergence time, and $O(n \log^3 n)$ communication cost.
As these protocols were designed to optimize the time-storage tradeoff, they incur unnecessary communication cost. 
In this paper, we propose a protocol that instead achieves $O(n)$ communication cost while using $O((\log \log n)^2)$ memory states in the synchronous setting, and $O((\log \log n)^3)$ in the asynchronous setting.
Compared to \cite{kosowski2018population}, this incurs a polyloglog penalty in storage, in exchange for polylogarithmic savings in communication. 

To the best of our knowledge, relevant lower bounds have been proved only for exact consensus. 
In particular, a series of papers \cite{alistarh2015fast,alistarh2017time,bilke2017population} culminate in a result by Alistarh, Aspnes, and Gelashvili \cite{alistarh2018space} showing that to achieve exact consensus in $O(n^{1-c})$ parallel time for some $c>0$, the memory needed is $\Omega(\log n)$ states. 
We show that this is not true for approximate consensus; indeed, in a comparable asynchronous model, one can achieve consensus with $\wt O(\log n)$ parallel time using only $O((\log \log n)^3)$ states of memory and $O(n)$ messages. 
We compare the proof techniques (and protocols) of these papers more carefully in Section \ref{sec:related2}.

\begin{table}[]
\small
\begin{tabular}{|c|c|c|c|c|c|}
\hline
\multicolumn{2}{|c|}{\begin{tabular}[c]{@{}c@{}}Result\\ Type\end{tabular}} & \begin{tabular}[c]{@{}c@{}}Memory\\ ($s$ states)\end{tabular}                                                           & \begin{tabular}[c]{@{}c@{}}Communication\\ (Message)\\ Complexity\end{tabular}                                                                                                     & \begin{tabular}[c]{@{}c@{}}Time\\ Complexity\\ (Wall-clock)\end{tabular}                                                                                                    & Reference                                                                                                                                                                                                                              \\ \hline
\multirow{2}{*}{Exact}                                                         & Upper  & \begin{tabular}[c]{@{}c@{}}4\\ $O(n)$\\ $O(\log^2n)$\\ $O(\log^2 n)$\\ $O(\log n)$\\ $O(\log n)$\end{tabular} & \begin{tabular}[c]{@{}c@{}}$O(n\log n /\eps_{\star})$\\ $O(n\log n(\frac{1}{s\eps_{\star}} + \log s))$\\ $O(n\log^3n)$\\ $O(n\log^2 n)$\\ $O(n\log^2 n)$\\ $O(n\log^{5/3} n)$\end{tabular} & \begin{tabular}[c]{@{}c@{}}$O(\log n/\eps_{\star})$\\ $O(\log n(\frac{1}{s\eps_{\star}} + \log s))$\\ $O(\log^3n)$\\ $O(\log^2 n)$\\ $O(\log^2 n)$\\ $O(\log^{5/3} n)$\end{tabular} & \begin{tabular}[c]{@{}c@{}}\cite{draief2012convergence,mertzios2014determining}\\ \cite{alistarh2015fast}\\ \cite{alistarh2017time}\\ \cite{bilke2017population}\\ \cite{alistarh2018space}\\ \cite{berenbrink2018simple}\end{tabular} \\ \cline{2-6} 
                                                                               & Lower  & \begin{tabular}[c]{@{}c@{}}$\leq 4$\\ any $s$\\ $O(\log \log n)$\\ $\Omega(\log n)$\end{tabular}                        & \begin{tabular}[c]{@{}c@{}}$\Omega(n/\eps_{\star})$\\ $\Omega(n\log n)$\\ $\Omega(\frac{n^2}{(K^s+\eps_{\star} n)^2})$\\ $O(n^{2-c}),~c>0$\end{tabular}                                    & \begin{tabular}[c]{@{}c@{}}$\Omega(1/\eps_{\star})$\\ $\Omega(\log n)$\\ $\Omega(\frac{n}{(K^s+\eps_{\star} n)^2})$\\ $O(n^{1-c}),~c>0$\end{tabular}                                & \begin{tabular}[c]{@{}c@{}}\cite{alistarh2015fast}\\ \cite{alistarh2015fast}\\ \cite{alistarh2017time}\\ \cite{alistarh2018space}\end{tabular}                                                                                         \\ \hline
\multirow{2}{*}{\begin{tabular}[c]{@{}c@{}}Approx.\\ (sync)\end{tabular}}      & Upper  & $O((\log \log n)^2)$                                                                                                    & $O(n)$                                                                                                                                                                             & $O((\log \log n)^3)$                                                                                                                                                        & \multirow{2}{*}{This paper}                                                                                                                                                                                                            \\ \cline{2-5}
                                                                               & Lower  & any $s$                                                                                                                 & $\Omega \left(\frac{n \log \log n \vee n)}{s} \right)$                                                                                                                                          & ---                                                                                                                                                                         &                                                                                                                                                                                                                                        \\ \hline
\multirow{2}{*}{\begin{tabular}[c]{@{}c@{}}Approx.  \\ (async)\end{tabular}}   & Upper  & \begin{tabular}[c]{@{}c@{}}$O(1)$\\ $O(1)$\\ $O((\log \log n)^3)$\end{tabular}                                          & \begin{tabular}[c]{@{}c@{}}$O(n\log n)$\\ $O(n\log^3 n)$\\ $O(n)$\end{tabular}                                                                                                     & \begin{tabular}[c]{@{}c@{}}$O(\log n)$\\ $O(\log ^3 n)$\\ $\wt O(\log n)$\end{tabular}                                                                                   & \begin{tabular}[c]{@{}c@{}}\cite{angluin2008simple,perron2009using,cruise2014probabilistic}\\ \cite{kosowski2018population}\\ This paper\end{tabular}                                                                                  \\ \cline{2-6} 
                                                                               & Lower  & $O(\log \log n)$                                                                                                        & $\Omega \left( \frac{n\log n}{s} \right )$                                                                                                                                                          & ---                                                                                                                                                                         & This paper                                                                                                                                                                                                                             \\ \hline
\end{tabular}
\caption{Comparison of related work on majority consensus. We study approximate majority consensus (upper and lower bounds), under a synchronous (sync) and asynchronous (async) communication model. The number of nodes is denoted by $n$, the initial advantage is $\eps_{\star} = p- \frac{1}{2}$, and $K, c$ are constants. 
Lower bounds should be interpreted as follows: any protocol consuming $O(\cdot)$ of one resource (e.g., storage) requires $\Omega(\cdot)$ of another (e.g., time); upper bounds instead imply the existence of a protocol that achieves resource costs in complexity class $O(\cdot)$.
}
\label{tab:related}
\end{table}

 \vspace{0.05in}
\noindent \textbf{Outline}
 We  precisely define our model in Section \ref{sec:model}, and give brief proof outlines for our main results in Section \ref{sec:intro-upper}. We prove our upper bounds Proposition \ref{prop:upperasync1}, Theorem \ref{prop:uppersync}, and Theorem \ref{prop:upperasync2} in Sections \ref{app:upperasync1}, \ref{app:uppersync}, and \ref{app:upperasync2}, respectively. Our lower bounds (Theorems \ref{prop:lower-async} and \ref{prop:lower}) are proved in Section \ref{sec:lower}.


\section{The model}
\label{sec:model}

Consider a set of $n$ nodes connected in a complete graph topology, enumerated by \mbox{$[n]=\{1,2,\ldots, n\}$}. These indices are only for our own bookkeeping, and cannot be used by nodes during the protocol.
At any point in time a node $i\in[n]$ has a state chosen from a set $\cS$ of cardinality $s\in\{2,3,\dots \}$. 
We may assume each state is a binary string of $\lceil\log_2(s)\rceil$ bits. 
For a node $i\in[n]$ and a time $t\geq 0$, let $\st(i,t)\in\cS$ denote the state of node $i$ at time $t$. 
All logarithms we consider throughout the paper will be in base $2$, i.e., $\log x=\log_2 x$ for any $x>0$.

At the beginning of the protocol, each node $i$ is assigned a bit $\frk b_i\in\{0,1 \}$ which is stored in its memory. The state of $i$ at time $t=0$ can for example be represented as a single bit $\frk b_i$ followed by  $\lceil\log_2(s)\rceil-1$ bits 0. Let $\frk b$ be the \emph{majority bit}, i.e., $\frk b=0$ if and only if\footnote{Note that to resolve draws, we define $\frk b=0$ if there are equally many nodes for which $\frk b_i=0$ and $\frk b_i=1$.} 
\eqbn
\#\{i\in[n]\,:\,b_i=0 \}\geq \#\{i\in[n]\,:\,b_i=1 \},
\eqen
where  $\# A\in\N\cup\{0,\infty\}$ denotes the cardinality of a  set $A$ and $\N=\{1,2,\dots \}$.
Let $p\in[1/2,1]$ 
be the fraction of nodes for which $\frk b_i=\frk b$, i.e.,
$
p = n^{-1} \cdot\#\{ i\in[n]\,:\,\frk b_i=\frk b \}.
$

Each node $i$ has an independent unit rate Poisson clock $\cP_i$. We identify $\cP_i\subset\R_+$ with the set of times that the clock rings. 
Whenever $i$'s clock rings (i.e., at every time $t\geq 0$ such that $t\in\cP_i$) the node is allowed to communicate with another node. The node chooses based on its current state whether to initiate a communication with another node.
 In other words, there is a set of states $\cS'\subset\cS$ such that a node $i\in[n]$ initiates a communication with another node $j$ at time $t\in\cP_i$ if and only if $\sigma(i,t^-)\in\cS'$, where $\sigma(i,t^-)\in\cS'$ is the state of $i$ infinitesimally before time $t$. The node $j$ is always chosen uniformly at random from $[n]\setminus\{i \}$, independently of all other randomness. For each $i\in[n]$ and $t\in\cP_i$ let $\frk r(i,t)\in[n]$ denote the node which $i$ would contact at time $t$ if $\sigma(i,t^-)\in\cS'$. The process of initiating a communication  has unit cost.

When a connection is established between nodes $i$ and $j$, each node observes the state of the other node and the nodes update their states to reflect any new information gained during the interaction. The new states of the nodes are a deterministic function of the state of each node before the communication, i.e., there is a function $\Lambda:\cS'\times \cS\to\cS^2$ such that if $i$ was the initiator of the communication,
\eqbn
(\st(i,t),\st(j,t)) = \Lambda(\st(i,t^-),\st(j,t^-)).
\eqen
Let $\Lambda_1:\cS'\times \cS\to\cS$ and $\Lambda_2:\cS'\times \cS\to\cS$ denote the coordinate functions of $\Lambda$, such that $\Lambda(\sigma_1,\sigma_2)=(\Lambda_1(\sigma_1,\sigma_2),\Lambda_2(\sigma_1,\sigma_2))$ for all $\sigma_1\in\cS'$ and $\sigma_2\in\cS$. Let $\Theta_i\subset\N$ denote the set of times at which node $i$ initiates a communication, i.e.,
$
\Theta_i = \{ t\in\cP_i\,:\, \sigma(i,t^-)\in\cS' \}.
$
A node $i$ that does not initiate a communication at time $t\in\cP_i$ may also update its state. 
More precisely, there is a function\footnote{Note that for the asynchronous model defined here it is sufficient to define $\Lambda'|_{\cS\setminus\cS'}$. However, we choose to let the domain of $\Lambda'$ be $\cS$ since we use the same function for the synchronous model, which is defined later in this section.} $\Lambda':\cS\to\cS$ such that if $\sigma(i,t^-)\not\in\cS'$ (so $i$ does not communicate with any other node at time $t$),
$
\st(i,t) = \Lambda'(\st(i,t^-)).
$

At any time $t\geq 0$, each node $i$ has an estimate for $\frk b$, which we call the \emph{belief bit} of $i$ and denote by $\wh\sigma(i,t)\in\{0,1 \}$. We have reached \emph{consensus} 
when 
all nodes have belief bit equal to $\frk b$ for the remainder of the protocol, i.e., consensus is reached at the time $\tau_{\op{consensus}}$ defined by
\eqbn
\tau_{\op{consensus}} = \inf\{ t\geq 0\,:\,\wh\sigma(i,t')=\frk b,\,\forall i\in[n],\,t'\geq t \},
\eqen 
where the infimum of an empty set is $\infty$. For $t\geq 0$ let $N(t)$ denote the number of communications initiated before or at time $t$, i.e., 
$
N(t)=\sum_{i\in[n]}\#(\Theta_i\cap [0,t] ).
$
The cost until consensus is the random variable $N_{\op{consensus}}$ defined by
$
N_{\op{consensus}} = N(\tau_{\op{consensus}}),
$
i.e., $N_{\op{consensus}}$ is the number of communications required to reach consensus.

\emph{Terminal consensus} is a stronger notion of consensus. To define this, we first need to introduce the notion of a \emph{terminal state}. A state $\sigma\in \cS$ is a terminal state if a node in this state will never change state and never initiate further communications, i.e.,
\eqbn
\sigma\not\in \cS'
\qquad\text{and}\qquad
\Lambda_2(\sigma',\sigma)=\sigma,\,\,\forall \sigma'\in\cS'.
\eqen
Let $\cS_\infty\subset\cS$ denote the (possibly empty) set of terminal states. We say that we have reached terminal consensus if all nodes are in a terminal state and have belief bit equal to $\frk b$, i.e., terminal consensus is reached at the time $\tau_{\op{terminal}}$ defined by
\eqbn
\tau_{\op{terminal}} = \inf\{ t\geq 0\,:\, 
\sigma(i,t)\in\cS_\infty 
\text{\,\,and\,\,}
\wh\sigma(i,t)=\frk b,
\,\forall i\in[n] \},
\eqen
where the infimum of an empty set is $\infty$. The cost until terminal consensus is the random variable $N_{\op{consensus}}$ defined by
$N_{\op{terminal}} = N(\tau_{\op{terminal}})$.

We say that an event happens with high probability (w.h.p.) if it happens with probability $1-o(1)$, i.e., with probability converging to 1 as $n\rta\infty$. Our goal is to find a protocol which achieves consensus or terminal consensus w.h.p.\ while minimizing communication cost (i.e., minimizing $N_{\op{consensus}}$ or $N_{\op{terminal}}$).
Note that nodes have no perception of time beside the information stored in their memory. Nodes can obtain an estimate for the time by counting their own clock rings or by receiving such estimates from other nodes.

\vspace{0.05in}
\noindent \textbf{Synchronous model}
The synchronous model is defined just as the asynchronous model, except that nodes are allowed to communicate at each time in $\N$. 
However, in this model multiple nodes may try to communicate with the same node simultaneously, which leads to collisions.
Collisions are handled as follows: if there are nodes $i_1,\dots,i_\ell$ for $\ell\in\N$ which initiate a communication with a node $j$ at time $t\in\N$ then one of two possibilities occurs:
(a) If $\sigma(j,t^-)\in\cS'$, so that $j$ initiates a communication with another node at time $t$, then $j$ will not communicate with any of the nodes $i_1,\dots,i_\ell$ at time $t$. Still, each of the communications initiated by the nodes  $i_1,\dots,i_\ell$ will have unit cost.
(b) If $\sigma(j,t^-)\not\in\cS'$, so that $j$ does not initiate a communication with another node at time $t$, then $j$ establishes a connection with a uniformly chosen node $i'\in\{ i_1,\dots,i_\ell\}$. The other $\ell-1$ nodes that initiated a communication with $j$ do not exchange any information with $i$, but each of their initiated communications still have unit cost. 
Note that under these rules, any node communicates with at most one other node at a time. The nodes update their state as specified by the functions $\Gamma$ and $\Gamma'$ above, and again the goal is to minimize  $N_{\op{consensus}}$ or $N_{\op{terminal}}$.

\vspace{0.05in}
\noindent \textbf{Awareness}
We say that a state is \emph{aware} if a node in this state will always keep its belief bit for the remainder of the protocol. 
In other words, a state $\sigma\in\cS$ is aware if a node $i$ in this state at time $t$  satisfies $\wh\sigma(i,s) = \wh\sigma(i,t)$ for all $s\geq t$, no matter which other nodes it communicates with at times $>t$.
When we reach consensus (as defined by $\tau_{\op{consensus}}$) all nodes have belief bit equal to the majority bit, but the nodes are not necessarily aware that they have identified the majority bit. 
A node in a terminal state, on the other hand, never updates its belief bit and is therefore aware.
Notice that when we reach terminal consensus, all nodes are in aware states, but this is not necessarily the case when we reach consensus. Not all aware states are terminal states, since nodes in aware states may change their state (only the belief bit must stay fixed) and they may initiate communications with other nodes.



\section{Proof outlines}
\label{sec:intro-upper}
In Sections \ref{sec:intro-async1}, \ref{sec:intro-sync}, and \ref{sec:intro-async2} we present the consensus protocols used in Proposition \ref{prop:upperasync1}, Theorem \ref{prop:uppersync}, and Theorem \ref{prop:upperasync2}, respectively. The precise descriptions and analysis of the protocols are deferred  to Sections \ref{app:upperasync1}, \ref{app:uppersync}, and \ref{app:upperasync2}, respectively. Section \ref{sec:intro-lower} gives a brief proof outline for our lower bounds.

\subsection{First asynchronous upper bound for $s = C(\log n)^2$}
\label{sec:intro-async1}
All of the nodes are assigned types that describe their behavior: aspirant, expert, regular, or terminal. Aspirants aspire to be experts, and experts are the knowledgeable nodes that spread information about the correct bit. We describe below the four phases of the protocol and the behavior of each type of node. The phases are partly overlapping in time due to the asynchronous nature of the communications. See Figure \ref{fig:pollpush} for an illustration of the phases.

\vspace{0.05in}
\noindent \textbf{Expert selection phase} At time $t=0$ all the nodes are aspirants. Each aspirant $i$ repeatedly obtains an ordered tuple of bits $(b',b'')$ by asking two other uniformly chosen nodes for their belief bit in consecutive clock rings. If it observes $\log\log n$ tuples $(0,1)$ before the first tuple $(1,0)$ then it becomes an expert; otherwise it becomes a regular node.

Note that each time a node obtains a tuple $(b',b'')$ it is equally likely that $(b',b'')=(0,1)$ and that $(b',b'')=(1,0)$ (see von Neumann's unbiasing \cite{vN51}). Therefore an aspirant turns into an expert with probability $0.5^{\lceil\log\log n\rceil}\approx 1/\log n$, so we create approximately $n/\log n$ experts w.h.p.

\vspace{0.05in}
\noindent \textbf{Estimation phase} Each expert $i$ contacts a uniformly chosen node $j$ at each of $C\log n$ consecutive clock rings for sufficiently large $C$, and stores the initial  bit $\frk b_j$ of each node $j$. At the end of the estimation phase, the expert $i$ calculates the majority bit among the $\frk b_j$'s, and this becomes the new belief bit of $i$.
By a Chernoff bound and a union bound, w.h.p.\ all the experts estimate the majority bit correctly in the estimation phase if $C$ is chosen sufficiently large (depending only on $\ep$, where $\ep$ is as defined in Theorem \ref{prop:upperasync2}).

\vspace{0.05in}
\noindent \textbf{Pushing phase} Each expert $i$ initiates a communication with a uniformly sampled node $j$ at each of $\log n$ consecutive clock rings. The expert $i$ sends its estimate of the majority bit to $j$, and $j$ adopts this estimate and becomes a terminal node. Terminal nodes do not initiate any communications and do not change their state if other nodes initiate communications with them. After the $\log n$ clock rings, $i$ also becomes a terminal node.
Since there are $\Theta(n/\log n)$ experts and each expert contacts $\log n$ nodes, one can argue that w.h.p.\ a constant fraction of the nodes become a terminal node in this phase.

\vspace{0.05in}
\noindent \textbf{Pulling phase} Each regular node $i$ initiates a communication with another node every $\log n$ clock rings\footnote{In fact, regular nodes initiate a communication with another node every $\log n$ clock rings throughout the full protocol. However, only in the pulling phase and the latter part of the estimation phase they are likely to encounter a terminal node.} until it encounters a terminal node $j$. When $i$ succeeds it adopts the estimate of $j$ for the majority bit and becomes a terminal node. The protocol ends when all the nodes are terminal. 

The communication cost in this phase is $O(n)$ since a uniformly positive fraction of the nodes are terminal nodes at the beginning of the phase, so the number of trials of each regular node is stochastically dominated by a geometric random variable with uniformly positive success probability, which has expectation $O(1)$.

\begin{figure}
	\centering
	\includegraphics[scale=1]{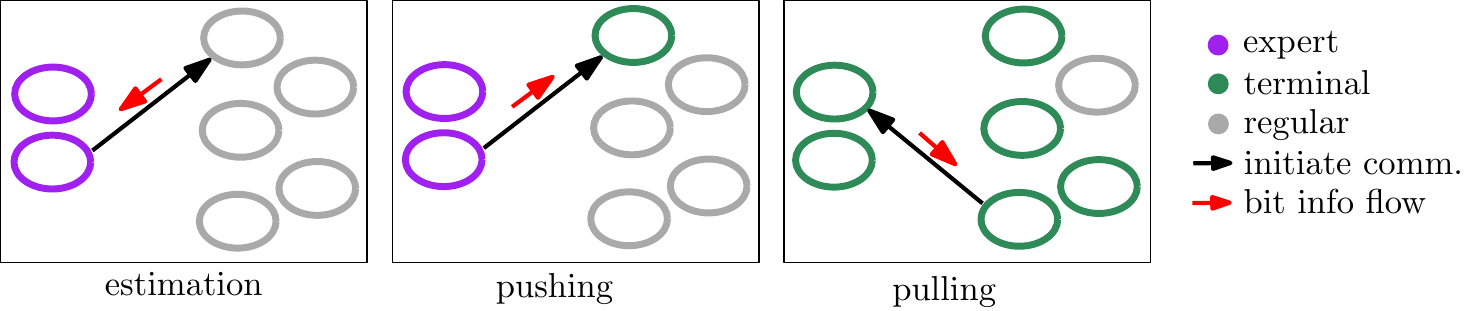}
	\caption{The figure illustrates three of the phases of the protocol described in Section \ref{sec:intro-async1}: the estimation phase, the pushing phase, and the pulling phase. In the estimation phase each expert asks $C\log n$ nodes for their bit, and each expert calculates the majority bit among the asked nodes. In the pushing phase each expert informs $\log n$ nodes about the bit calculated in the estimation phase, and these nodes become terminal nodes. In the pulling phase uninformed nodes initiate communications until they encounter a terminal node.}
	\label{fig:pollpush}
\end{figure}

\begin{figure}
	\centering
	\includegraphics{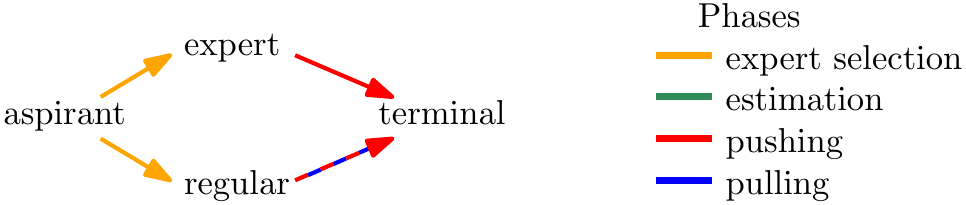}
	\caption{The figure shows the four types of nodes considered in the proof of Proposition \ref{prop:upperasync1}, and which types the nodes can move between in the various phases.}
\end{figure}

\subsection{Lower bounds proof outline}
\label{sec:intro-lower}
We first outline the lower bound for the asynchronous model (Theorem \ref{prop:lower-async}), and then we explain which changes are needed to adapt it to the synchronous case (Theorem \ref{prop:lower}).
The notion of \emph{passive} and \emph{active} states play an essential role in both  proofs. A state $\sigma\in\cS$ is \emph{passive} if a node in this state will not initiate communication until another node has contacted it. A state called \emph{active} if it is not passive. 
Since $\#\cS=s$,  active nodes must be involved in a communication (either as initiator or recipient) at least every $s$ clock rings. 
Passive states are essential for reducing the number of communications in the protocols described in Sections \ref{sec:intro-async1}, \ref{sec:intro-sync}, and \ref{sec:intro-async2}. On the other hand, as we discuss below, it is costly to have many nodes in passive states unless they have a correct estimate for the majority bit. 

Let $\cS_0\subset\cS$ be set of all states that are attained with positive probability, i.e., $$\cS_0=\{\sigma_0\in\cS\,:\,\exists t\geq 0,i\in[n]\text{\,\,such\,\,that\,\,}\P[\sigma(i,t)=\sigma_0]>0 \}.$$
Consider two cases: (i) all nodes are active at all times a.s., i.e., all states in $\cS_0$ are active, and (ii)~nodes in passive states arise with positive probability, i.e., $\cS_0$ contains at least one passive state.


For case (i) we know that even if all nodes were to initiate a communication every time their clock rings, w.h.p.\ there are nodes that do not communicate a single time before time $t=\Omega(\log n)$. Therefore $\tau_{\op{consensus}}=\Omega(\log n)$ w.h.p. This immediately implies the theorem in case (i), since we have $n$ nodes which communicate for time $\tau_{\op{consensus}}=\Omega(\log n)$ at rate at least $1/s$, so the total number of communications is $\Omega(ns^{-1}\log n)$.

In case (ii) we show that if $\sigma_0\in\cS_0$ is passive, then w.h.p.\ there are $n^{0.9}$ nodes\footnote{The exponent 0.9 is somewhat arbitrary; we can obtain any fixed power of $n$ by adjusting the constant $c$ in the statement of the theorem.} in state $\sigma_0$ at time $s$, independently of whether the true majority bit $\frk b=0$ or $\frk b=1$. 
We first explain how to conclude the proof once we have established this result. If $\frk b\neq\wh\sigma(i,t)$ for a node $i$ in state $\sigma_0$ at time $t$, then,  to reach consensus, all the $n^{0.9}$ nodes with state $\sigma_0$ must be reached by other nodes  to reach consensus. By a coupon collector argument,  $\Theta(n^{0.1}n^{0.9}\log n^{0.9}) = \Theta(n \log n)$ 
communications are necessary to reach the $n^{0.9}$ nodes. 

To prove that there are $n^{0.9}$ nodes in state $\sigma_0$ at time $s$, we show that w.h.p., for all $\sigma\in\cS_0$, there are at least $n^{0.9}$ nodes in state $\sigma$ at time $s$. 
Let $A(0)\subset\cS$ be the set of the two initial states that the nodes can take at time $t=0$. We define $A(k)\subset\cS$ inductively as the set of states that may be attained from states in $A(k-1)$, i.e., the set of all possible states that may arise from a set of nodes with states in $A(k-1)$ after one clock ring. We note that $A(k)$ is obtained deterministically from $A(k-1)$ and does not depend on the actual clock rings/communications that happen or the majority bit. Also, the number of elements in $A(k)$ are increasing with $k$, since it is always possible that a node does not change state after a clock rings. We use this and the bound on the total number of states, $\#\cS_0\leq s$, to show that in fact $A(k)=\cS_0$ for all $k\geq s$. 

We see that all states in $\cS_0$ can be present at time $s$, regardless of whether the majority bit $\frk b=0$ or $\frk b=1$. As a result we cannot have any states that are aware in $\cS_0$, i.e., states that never change their belief bit. Thus, states with the incorrect belief bit that are passive must be contacted to achieve consensus.

\begin{figure}
	\centering
	\includegraphics{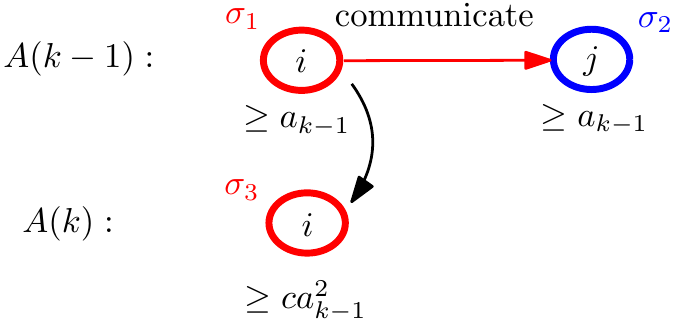}
	\caption{All states $\sigma_{1}$ and $\sigma_{2}$ in $A(k-1)$ have frequency at least $a_{k-1}$ w.h.p. Thus, all states in $A(k)$ have frequency at least $c a_{k-1}^{2}$ w.h.p.}
	\label{fig:lower}
\end{figure}

Now consider the deterministic set $A(k-1)$ at time $k-1$. Suppose that each of the states in $A(k-1)$ occurs with frequency at least $a_{k-1}$ (i.e., in at least $n a_{k-1}$ nodes) at time $k-1$ as illustrated in Fig.~\ref{fig:lower}. Then w.h.p.\ all states in $A(k)$ are found in the protocol at time $k$ with frequency  at least $c_0a_{k-1}^2$ for some constant $c_0>0$. To see why this is true, we consider all possible interactions between pairs of states in $A(k-1)$ in the unit time interval between $k-1$ and $k$. Let $\sigma_{1} \in A(k-1)$. Then, if the node in state $\sigma_{1}$ initiates communication, the probability that it interacts with some state in $A(k-1)$ is at least $a_{k-1}$. Therefore, the frequency of states in $A(k)$ that will be present at time $k$ is at least $c n a_{k-1}^{2}$ w.h.p., where the constant $c$ depends on the various probabilities of communications happening during that unit time interval from $k$ to $k+1$. 

Applying this bound on the frequency of states from $A(k)$ iteratively and using $s\leq \log\log n-c^{-1}$ we get that all states in $\cS_0=A(s)$ are found with frequency at least $(c')^{2^{s}}>n^{-1}\cdot n^{0.9}$ at time $s$ for a constant $c'>0$ w.h.p.

The proof for the synchronous model has many similar ideas: Again we define sets $A(k)$ inductively; now $A(k)$ describes the set of states which occur at time $k$ with positive probability. By a similar argument as before, for all $\sigma\in A(s)$ there are at least $n^{0.9}$ nodes in state $\sigma$ at time $\log\log n-c^{-1}$. Furthermore, we may assume no states in $\wh A:=\bigcup_{k\leq \log\log n-c^{-1}}A(k)$ are passive, since this would give $\Omega(n\log n)$ communications by a coupon collector argument. Therefore we have $n$ nodes communicating at rate at least $1/s$ for time $\log\log n-c^{-1}$, which gives a total of $\Theta(ns^{-1}\log\log n)$ communications. However, there are  some differences between the synchronous and asynchronous case: In the synchronous case the sets $\wh A$ and $\cS_0$ are not necessarily the same, since the sets $A(k)$ may not be increasing. Furthermore, consensus may be reached in only time $\Theta(\log\log n)$ (rather than $\Theta(\log n)$) in the synchronous model.

\subsection{Synchronous upper bound for $s=C (\log \log n)^2$}
\label{sec:intro-sync}
In this section we describe the protocol used in the proof of Theorem \ref{prop:uppersync}. As in the description of the first asynchronous protocol in Section~\ref{sec:intro-async1}, we rely on node types to describe the behavior of the nodes; we use aspirant, expert (at different levels), expert candidate, regular, informed, or terminal nodes. \textbf{Define} 
$M=\lceil 2\log\log n\rceil$ and $K=\lceil 5\log\log n \rceil$.

\vspace{0.05in}
\noindent \textbf{Expert selection phase} At $t=0$ all nodes are aspirants, and are differentiated to be either experts or regular nodes by the end of the expert selection phase. Approximately $n0.5^K=\Theta(n/(\log n)^5)$ nodes become level 0 experts, and the remaining nodes become regular nodes. The selection is done by a variant of von Neumann unbiasing as in Section \ref{sec:intro-async1}. However, we have to introduce some new tricks because no information is exchanged if all the nodes initiate a communication simultaneously. The protocol is described in detail in Section~\ref{app:uppersync}. 

\vspace{0.05in}
\noindent \textbf{Estimation phase} The estimation phase is divided into $M$ rounds as described below, where each round lasts for time $2K+3$. At the beginning of round $m$ there are approximately $n0.5^K$ level $m-1$ experts, while the remaining nodes are regular nodes. 
\begin{enumerate}
	\item In the first three steps of round $m$, each level $m-1$ expert $i$ initiates a communication with a uniformly chosen node $j$. A node $j$ which is contacted by a level $m-1$ expert in all three time steps becomes a level $m$ expert. Letting $b,b',b''$ denote the belief bits of the three experts contacting $j$, the node $j$ updates its belief bit to be the majority bit in $\{b,b',b'' \}$. A node which receives a bit from a level $m-1$ expert in the first step waiting to receive two more bits is called a level $m$ expert candidate.
	\item At time step 4 the level $m-1$ experts and level $m$ expert candidates change their type to regular nodes. Now all nodes are either level $m$ experts or regular nodes.
	\item At time steps 4 to $2K+3$ each level $m$ expert $i$ initiates a communication with a uniformly chosen node $j$. The node $j$ also becomes a level $m$ expert and sets its belief bit equal to the belief bit $\wh\sigma(i,t)$ of $i$.
\end{enumerate}
One can show that w.h.p.\ there are approximately $n0.5^K$ level $M$ experts, and that all level $M$ experts have a correct estimate for the majority bit (see the end of this subsection for an analysis). At the end of round $M$, the experts change their type to informed. Now all nodes are either informed or regular.


\vspace{0.05in}
\noindent \textbf{Pushing phase} In the pushing phase each informed node $i$ initiates a communication with a uniformly chosen node $j$ every time its clock rings. If $j$ is a regular node then $j$ becomes informed with the same state as $i$. If $j$ is terminal then $i$ becomes terminal and $j$ does not change its state. 
If the communication with $j$ is rejected (e.g., due to $j$ initiating its own communication), then $i$ becomes terminal, and $j$ does not change its state. The pushing phase has duration of order $\log\log n$, and at the end of the pushing phase a uniformly positive fraction of the nodes are terminal nodes.

\vspace{0.05in}
\noindent \textbf{Pulling phase} Regular nodes initiate a communication every $3MK=\Theta( (\log\log n)^2 )$ time steps throughout the protocol. The first time a regular node $i$ encounters a terminal node $j$, it adapts the majority bit estimate of $j$ and becomes a terminal node itself. 
W.h.p.\ no regular node initiates a communication until the pulling phase.
Any fixed regular node $i$ typically needs $O(1)$ trials before it succeeds in contacting a terminal node, since it contacts a terminal node with uniformly positive probability every time it initiates a communication.

\begin{figure}
	\centering
	\includegraphics{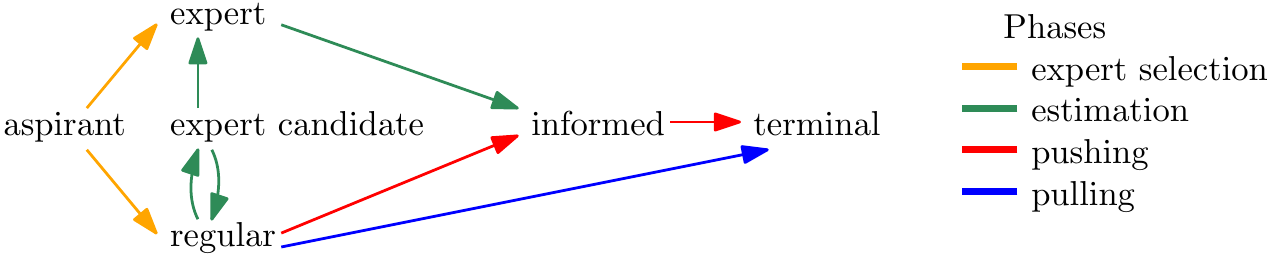}
	\caption{The figure shows the six types of nodes considered in the proof of Theorems \ref{prop:upperasync2} and \ref{prop:uppersync}, and the types the nodes can move between in the various phases.}
\end{figure}

\vspace{0.05in}
\noindent \textbf{Analysis of estimation phase.} If a fraction $\delta\ll 1$ of the level $m-1$ experts have a wrong estimate for the majority bit, then the fraction of wrong level $m$ experts will be approximately $3\delta^2$. Therefore, a level $M$ expert is wrong with probability approximately $(1-p)^{2^M}\ll n^{-1}$, so w.h.p.\ all level $M$ experts will have a correct estimate for the majority bit. 

Recall that at the beginning of round $m$ there are approximately $n0.5^K$ level $m-1$ experts. We will have approximately $n0.5^{3K}$ level $m$ experts after the first three time steps of the round, since the probability that any given node $i$ is contacted by an expert in all three steps is approximately $0.5^{3K}$. In each of the later time steps of round $m$ the number of experts approximately doubles, which gives that the number of experts at the end of the round is approximately $n0.5^{3K}\cdot 2^{2K}=n0.5^K$. 

Notice that if the number of level $m-1$ experts is $(1+\eps)n0.5^K$ for some $\eps\in(-0.5,0.5)$ then the number of level $m$ experts after the first three time steps is typically about $(1+\eps)^3n0.5^{3K} = (1+3\eps+o(\eps)) n0.5^{3K}$. In particular, the percentage-wise error $\eps$ \emph{triples}, so it grows exponentially in the round number. Therefore we need very good concentration estimates when we make rigorous the heuristic estimates of the preceding paragraph. In particular, we show that the number of collisions (which happen when an expert contacts a node which is already an expert or is contacted by another expert at the same time) is sufficiently small to be ignored.

\vspace{0.05in}
\noindent \textbf{Memory usage} Aspirants, experts, and regular nodes all require $\Omega( (\log\log n)^2 )$ states of memory (see Lemma \ref{prop42}).

\subsection{Asynchronous upper bound for $s = C (\log \log n)^3$}
\label{sec:intro-async2}
The protocol for the asynchronous model (which is used to prove Theorem \ref{prop:upperasync2}) has the same overall structure as the protocol for the synchronous case in Theorem \ref{prop:uppersync}. First there is an expert selection phase, followed by an estimation phase, a pushing phase, and a pulling phase, respectively. Due to the asynchronous nature of the Poisson clocks, the phases are partly overlapping in time. At any point in time each node is one of the following types: aspirant, expert, expert candidate, regular, informed, or terminal.

\vspace{0.05in}
\noindent \textbf{Expert selection phase} All nodes are aspirants in the beginning of the expert selection phase. The purpose of this phase is to select approximately $n2^{-K}$ level 0 experts for $K=\Theta(\log\log n)$. Nodes which do not become experts become regular nodes. The selection of experts is done by von Neumann unbiasing. 

\vspace{0.05in}
\noindent \textbf{Estimation phase} The estimation phase consists of $M=2\log\log n$ rounds. Level $m$ is associated with a set of approximately $n2^{-K}$ nodes that we call \emph{level $m$ experts}. As before, a node may become a level $m$ expert upon being contacted by at least three level $m-1$ experts, or upon being contacted by one level $m$ expert. There are approximately $n2^{-3K}$ level $m$ experts of the former kind, and their belief  bit is obtained by calculating the majority bit among the three belief bits received from level $m-1$ experts. Each of these level $m$ experts create approximately $2^{2K}$ new level $m$ experts by ``rumor-spreading'' their belief bit for $2K$ clock rings. As in the synchronous case, w.h.p.\ all level $M$ experts will identify the majority bit $\frk b$. At the end of the estimation phase all level $M$ experts become informed nodes.

One substantial challenge in the asynchronous case as compared to the synchronous case is the creation of level $m$ experts from level $m-1$ experts. In the protocol for the synchronous model a level $m$ expert is created when three level $m-1$ experts contact a node during a time interval of three clock rings, but this event is too unlikely in the asynchronous case since the levels are not synchronized; a node will be contacted by three level $m-1$ experts with about the same probability as before, but the time between each contact will typically be much larger. 
A node must remain an expert candidate for a sufficient amount of time to allow other level $m-1$ experts to contact it. However, it should not remain an expert candidate indefinitely. We let an expert candidate convert to a regular node after $\Theta((\log\log n)^2)$ clock rings, which gives sufficient time to be contacted by three level $m-1$ experts, since this is the duration of the estimation phase for most nodes. 

\vspace{0.05in}
\noindent \textbf{Pushing phase} Informed nodes spread the bit $\frk b$ until a constant fraction of the nodes are terminal nodes with the bit $\frk b$. More precisely, every time the clock of an informed node rings it contacts a uniformly chosen node, and if this node is a regular node it transforms into an informed node. Similarly as in the synchronous model, the spreading slows down when a sufficiently high fraction of the nodes are terminal nodes, since an informed node transforms into a terminal node when it contacts a terminal node.

\vspace{0.05in}
\noindent \textbf{Pulling phase} Throughout the protocol each regular node initiates a communication every $\Theta((\log\log n)^2)$ clock rings until it encounters a terminal node, upon which it also becomes a terminal node. By comparison with geometric random variables as before, we get that the number of communications in this phase is $O(n)$.

\vspace{0.05in}
\noindent \textbf{Memory usage}
Among the six types of nodes we have introduced, the expert candidates require the most memory: They use $\Theta((\log\log n)^2)$ states to count down time to the conversion to a regular node, and they use $\Theta(\log\log n)$ states to store the level number, which gives a total of $\Theta((\log\log n)^3)$ states of memory.

\section{Comparison of algorithms and proof techniques}
\label{sec:related2}

In this section, we provide a more detailed comparison of our algorithms and proof techniques with those of prior related papers, particularly those highlighted in Table \ref{tab:related}. 

\subsection{Achievable results (upper bounds)} 
Our protocols rely  on a dominant primitive from the probabilistic consensus literature:  \emph{polling}.
That is, nodes should request the opinions of their peers and adopt the majority opinion. 
Many papers  analyze polling-based protocols. 
In these protocols, nodes are arranged in a graph, and at each clock ring (synchronous or not), each node contacts a random subset of its neighbors on the graph and adopts the majority opinion among them. 
Such models have been studied on complete graphs \cite{cruise2014probabilistic}, infinite trees \cite{kanoria2011majority}, graphs of fixed degree sequence \cite{abdullah2015global}, and social networks \cite{mossel2014majority}.
Results often characterize the probability of consensus and convergence time for different graph structures and/or initial majority advantages \cite{kanoria2011majority,abdullah2015global,mossel2014majority,cruise2014probabilistic}.
We have avoided these complexities by assuming a complete graph.

None of the above papers constrains the storage available at each node explicitly. 
Among models that explicitly constrain node memory, many papers (particularly in the population protcol literature) initially considered \emph{constant} memory constraints \cite{nakata1999probabilistic,hassin2001distributed,aldous2002reversible,angluin2006computation,benezit2009interval,benezit2011distributed,draief2012convergence,shang2013upper,angluin2008simple,perron2009using,cooper2016discordant,mossel2017opinion}. 
In the constant-memory regime, many results have focused on demonstrating that consensus can be achieved, either w.h.p.\ or exactly \cite{benezit2009interval,benezit2011distributed,angluin2008simple,perron2009using,aldous2002reversible,cooper2009multiple},
as well as (upper) bounding the time to consensus \cite{draief2012convergence,shang2013upper,angluin2008simple,perron2009using}.
Despite considering slightly different models and/or problem formulations, these upper bounds tend to show that when the initial advantage is large (i.e., bounded away from $1/2$), consensus on a complete graph is achievable within $O(\log n)$ wall-clock time and/or $O(n \log n)$ interactions, as seen in Table \ref{tab:related} \cite{angluin2008simple,perron2009using,draief2012convergence,shang2013upper,cruise2014probabilistic}. 

A natural question is whether this upper bound can be tightened by giving nodes more memory;
this is the topic of our paper. 
Several papers have proposed exact majority protocols with memory constraints that grow with $n$, including \cite{alistarh2015fast,alistarh2017time,bilke2017population,alistarh2018space,berenbrink2018simple}.
For exact majority, results have focused on the setting where the initial advantage is small (e.g., as small as $1/n$), with the goal of achieving (exact) consensus  without incurring a linear time complexity. 
As such, many of the achievable schemes have a superlogarithmic time complexity; it is important to note that this arises because they are addressing a harder problem.

Related studies have considered the more difficult problem of plurality consensus \cite{berenbrink2016efficient,ghaffari2016polylogarithmic}.
These papers show that plurality consensus is possible with polylogarithmic storage in polylogarithmic time, w.h.p. 
Again, this body of work optimizes the time to consensus, rather than the communication cost.
Other papers have optimized communication costs among protocols that can withstand robustness to Byzantine faults.
However, there is typically a \emph{storage} cost associated with such robustness; for instance, \cite{GK10} requires each node to store $\Omega(\sqrt{n} \log^2 n)$ bits per node---significantly higher than our proposed protocols, which require as few as $O((\log \log n)^2)$ states.
We do not consider Byzantine fault-tolerant protocols in this work, though it is an interesting direction for future work.

\subsubsection{Comparison of Techniques}
Our protocols use some algorithmic tools that are also used in other majority consensus protocols. We summarize those tools here, while also highlighting the differences with our own protocols. 

\paragraph{Role assignment}
Notice that several of our protocols assign nodes to distinct \emph{roles} (e.g., expert), and define different state transition rules for different roles.
This is useful because it allows system designers to introduce \emph{asymmetry} into the protocol; some nodes can work harder than others. 
Role assignment is becoming a common theme in recent papers, and is used in \cite{AG15,ghaffari2016polylogarithmic,alistarh2018space}.
A key question is how to assign nodes to roles without access to a source of randomness (other than the communication scheduling mechanism). 
This is commonly handled with protocols that use interactions between nodes to infer roles. 
For example, \cite{AG15} uses a protocol where interacting nodes have associated numeric states;  a node can be a ``leader" or a ``minion", and this role is determined by comparing the magnitude of a node's own state with the state of its communication partner.
Other protocols have different ways of using node interactions to determine a node's role;
we use von Neumann unbiasing \cite{vN51}, appealing both for its simplicity and its unbiased outputs.

\paragraph{Push-pull protocols}
A well-known idea in this  literature is the fact that when spreading a rumor, it is more efficient (from a message complexity standpoint) to \emph{push} information in the beginning of the protocol, when most nodes are uninformed, and \emph{pull} information towards the end, when most nodes are already informed. 
This follows from a coupon-collection argument, and is formally analyzed in \cite{karp2000randomized}.
Although consensus is a harder problem than rumor spreading, this idea has been widely used in many consensus algorithms, including ours. 
In particular, the protocol we use to prove Proposition \ref{prop:upperasync1} first elects expert nodes, who inform themselves through polling. 
Those experts then conduct a push-pull protocol to spread their expert opinions to the rest of the network using total communication linear in $n$. 
We adapt these ideas into round-based versions for our other upper bounds, relating to Theorems \ref{prop:upperasync2} and \ref{prop:uppersync}. 

\paragraph{Timekeeping}
The ability of nodes to keep track of time is limited by their memory constraints. 
This problem is especially pronounced in the asynchronous setting, and is the main cause of the higher communication cost of asynchronous consensus compared to synchronous consensus in our results.
 
Recently,  papers have tackled the problem of timekeeping with the notion  of \emph{phase clocks}, a protocol that allows agents to (approximately) synchronize their clocks within rounds, which are defined by a given number of interactions (e.g.\ $n$ interactions).
A key innovation of \cite{alistarh2018space} is to develop a \emph{leaderless} phase clock that is able to maintain this synchronization without electing a leader, which is expensive.
The key idea is to have pairs of nodes alter their local time estimate whenever they meet, using ideas related to the power of two choices.

We do not use a leaderless phase clock to keep time; 
our protocols instead allocate a portion of each node's memory for timekeeping, which is tracked by counting rings of the node's internal clock.
Since clocks can drift apart in the asynchronous setting, the phases of our protocols have some overlap; dealing with this drift is one of the main challenges of moving from the synchronous setting to the asynchronous one.

\subsection{Converse results (lower bounds)} 
Three converse results in particular relate to our work. 
The first two lower bound the time complexity of exact consensus. 
The third bounds the communication complexity of a related problem: randomized rumor spreading.

\paragraph{Alistarh, Gelashvili, and Vojnovic \cite{alistarh2015fast}} This paper considers \emph{exact majority consensus} over a complete graph in an asynchronous setting.
Recall that exact consensus requires consensus on the correct majority bit  with probability 1. 
The authors show a lower bound of $\Omega(\log n)$ parallel convergence time for \emph{any} memory constraint $s$, as well as a scheme called average-and-conquer that achieves this bound.
Here parallel convergence time refers to the wall-clock time to convergence; in a discrete-time setting where all nodes communicate at each clock ring, it is the number of total communication instances divided by $n$.

The lower bound in \cite{alistarh2015fast} follows from a coupon-collecting argument. 
Since each node's clock rings according to a Poisson process, we must wait $\Omega(\log n)$ time before every node's clock has rung at least once w.h.p.
Since we need every node to communicate in order to achieve exact consensus (or approximate consensus, for that matter), this lower bounds the (parallel) time to consensus.

At first glance, the lower bound of \cite{alistarh2015fast}  suggests a necessary communication cost of $\Omega(n\log n)$ for population protocols, since parallel time is defined as the number of interactions divided by $n$ and interchangeable (modulo some constant factor) with the wall-clock time.
However, under our model,  all nodes need not communicate at every time step, leading to a reduced lower bound on  communication costs. 
Note that declining the opportunity to communicate can only increase the time complexity of a protocol;
indeed, the protocols we propose complete in $\tilde O(\log n)$ wall-clock time. 

\paragraph{Alistarh, Aspnes, Eisenstat, Gelashvili, and Rivest \cite{alistarh2017time}}
This paper shows that any exact majority protocol achieving consensus using $O(\log \log n)$ states requires $\Omega(n/\text{polylog}(n))$ expected convergence time.
It builds on the technical building blocks of \cite{doty2018stable}, and is the starting point for the subsequent converse bounds of \cite{alistarh2018space}. 
Although the bounds of \cite{alistarh2018space} are tighter than those in \cite{alistarh2017time}, we focus on \cite{alistarh2017time} here because its proof techniques are similar those in our lower bounds. Also, \cite{alistarh2018space} assumes protocol monotonicity and output dominance; our lower bound requires neither assumption.

The proof of \cite{alistarh2017time} has three main steps. 
The first is to show that for any initial allocation of nodes to states, any consensus protocol must eventually reach a \emph{dense} configuration, in which each state has at least a certain number of nodes in that state. 
The second step is to show a \emph{transition ordering lemma} as in \cite{ccds15}, which shows conditions that a sequence of state transitions must satisfy to eliminate incorrect states quickly. 
For example, the authors define the notion of a \emph{bottleneck} transition, which  is (roughly) a transition that has a low probability of occurring. 
They then show that if a protocol converges quickly, it cannot include any bottleneck transitions. 
Finally, using the transition ordering arguments, the authors show that if a protocol converges too quickly, there must be executions under which it converges to the wrong answer.


As summarized in Section \ref{sec:intro-lower}, our own proof has similarities to \cite{alistarh2017time}. 
First, we show that w.h.p., any protocol must end up in a dense configuration. 
Next, we argue that from such a dense configuration, one cannot reach a correct configuration without incurring the communication lower bounds in Theorems \ref{prop:lower-async} and \ref{prop:lower}. 
However, the second step of our proof is different from that of \cite{alistarh2017time}. 
Rather than invoking a transition ordering lemma, we instead use a coupon-collecting argument to show that
the number of communications needed to eliminate each incorrect state from a dense configuration is $\Theta(n\log n)$.
Such an argument was not possible in \cite{alistarh2017time} because coupon-collecting arguments give high-probability statements, which are not sufficient to reach exact consensus.


\paragraph{Karp, Schindelhauer, Shenker, and Vocking \cite{karp2000randomized}} 
This converse is the oldest of the three, and also applies to a different problem from ours. 
The bound nonetheless has implications on majority consensus.
In \cite{karp2000randomized}, a single node starts with a message; the goal is for every node to obtain the message. 
The authors consider a synchronous model in which nodes can choose not to communicate; 
each node has unlimited memory, but nodes cannot keep track of which nodes have already seen the message.
This is an easier problem than majority consensus because the final result does not depend on local knowledge of other nodes. 
The authors show a lower bound on the communication cost of any such protocol of $\Omega(n\log \log n)$ transmissions. 
This lower bound on an easier problem would seem to contradict our achievable protocol of communication cost $O(n)$.
The discrepancy stems from slight differences in communication models; \cite{karp2000randomized} requires nodes to connect to a peer in every timestep, 
at which point one or both parties can decide to communicate.  
This model can only increase the amount of communication that occurs compared to our model, in which nodes can choose not to connect at all. 

Some aspects of the  proof techniques used in \cite{karp2000randomized} are widely used in the analysis of population protocols. 
In particular, \cite{karp2000randomized} structures their proof by tracking the fraction of nodes that have received the rumor in each \emph{round} of communications, defined as a sequence of $n$ consecutive communications. 
They show that the fraction of uninformed nodes cannot decay too quickly between rounds, which thereby lower bounds the amount of communication needed to reach a fully-informed state.
Although we do not use this structure to show the full lower bound, we use a similar approach to show that the number of nodes in each state is large enough at time $s$, which is the starting point for our coupon-collector argument.

\section{First asynchronous upper bound for $s = C(\log n)^2$}
\label{app:upperasync1}
In this section we first define precisely the protocol introduced in Section \ref{sec:intro-async1}, and then we give a detailed analysis of the protocol, which proves Proposition \ref{prop:upperasync1}.

\subsection{The protocol}
\label{sec:async1-1}

We advise the reader to read the informal presentation of the protocol in  Section \ref{sec:intro-async1} before reading the more formal description here. We specify the protocol precisely by describing exactly the behavior of the nodes of the various types. 

Define the following constant
\eqbn
C_0=\frac{10}{\eps^2}.
\eqen
For each node $i\in[n]$ and time $t\geq 0$ we write the state $\sigma(i,t)\in\cS$ of $i$ at time $t$ as a tuple of integers such that the first element of the tuple indicates the type, 
the last element of the tuple is the belief bit $\wh\sigma(i,t)$, and 
the form of the tuple depends on the type. For an aspirant (resp., expert, regular node, terminal node) the first element of the tuple equals 1 (resp., 2, 3, 4). Let $\sigma_1(i,t)\in[4]$ denote the type of node $i$ at time $t$.

\paragraph{Aspirant} The state of an aspirant $i\in[n]$ at time $t\geq 0$ is of the form $\sigma(i,t)=(1,d,b',b)$, where $d\in[\lceil\log\log n\rceil]$ is the success counter,  $b'\in\{-1,0,1 \}$ is the test bit, and $b=\wh\sigma(i,t)\in\{0,1 \}$ is the belief bit. At $t=0$ each node $i\in[n]$ has state given by $\sigma(i,t)=(1,1,-1,\frk b_i)$. When the clock of an aspirant rings it initiates a communication with a uniformly chosen node $j$ and then it executes the following actions for as long as $d<\lceil\log\log n\rceil$, where $b''$ denotes the belief bit of $j$.
\begin{enumerate}[leftmargin=0.5cm]
	\item If $b'=b''$ then $i$ sets $b'=-1$. 
	\item If $b'=-1$ then $i$ sets $b'=b''$.
	\item If $b'=0$ and $b''=1$ then $d$ is increased by 1 and $b'$ is set to $-1$.
	\item If $b'=1$ and $b''=0$ then $i$ becomes a regular node with state $\sigma(i,t)=(3,1,b)$ and the process described here terminates.
\end{enumerate}
If the above process terminates because $d=\lceil\log\log n\rceil$, then $i$ becomes an expert with state $\sigma(i,t)=(2,1,1,1,b)$. 

\noindent Furthermore, if another node initiates a communication with an aspirant, then the aspirant will not change its state.

\paragraph{Expert} The state of an expert $i\in[n]$ at time $t\geq 0$ is of the form $\sigma(i,t)=(2,\xi,d,d',b)$, where $\xi\in\{1,2 \}$ is the phase, $d\in[\lceil C_0\log n\rceil]$ is the time counter, $d'\in[\lceil C_0\log n\rceil]$ is the 1-counter, and $b=\wh\sigma(i,t)\in\{0,1 \}$ is the belief bit. We say that the expert is in the \emph{estimation} phase (resp.\ \emph{pushing} phase) if $\xi=1$ (resp.\ $\xi=2$). In the analysis section, if $i\in[n]$ is an expert at time $t\geq 0$ let $\xi(i,t)\in\{1,2 \}$ denote the phase of $i$ at time $t$. When the clock of an expert rings then the expert executes the following actions.
\begin{enumerate}[leftmargin=0.5cm]
	\item If $\xi=1$ and $d<\lceil C_0\log n\rceil$ then $i$ initiates a communication with a uniformly chosen node $j$, the time counter $d$ increases by 1, and $d'$ increases by 1 if and only if $\wh\sigma(j,t)=1$.
	\item If $\xi=1$ and $d=\lceil C_0\log n\rceil$ then $i$ sets $\xi=2$ and $d=1$. Furthermore, it sets $b=1$ (resp.\ $b=0$) if $d'> C_0\log n/2$ (resp.\ $d'\leq C_0\log n/2$). 
	\item If $\xi=2$ then $i$ initiates a communication with a uniformly chosen node $j$. 
	\begin{itemize}
		\item If $d<\lceil\log n\rceil$ then $d$ increases by 1.
		\item If $d=\lceil\log n\rceil$ then $i$ becomes a terminal node with state $\sigma(i,t)=(4,b)$.
	\end{itemize}
\end{enumerate}

\noindent If another node initiates a communication with an expert, then the expert will not change its state.

\paragraph{Regular node} The state of a regular node $i\in[n]$ at time $t\geq 0$ is of the form $\sigma(i,t)=(3,d,b)$, where $d\in[\lceil\log n\rceil]$ is the time counter, and $b=\wh\sigma(i,t)\in\{ 0,1\}$ is the belief bit.
\begin{enumerate}[leftmargin=0.5cm]
	\item When the clock of $i$ rings it will initiate a communication with another node $j$ if and only if $d=\lceil\log n\rceil$. If $j$ is a terminal node then $i$ becomes a terminal node with state $\sigma(j,t)$. Otherwise $i$ will not update its state, except that the time counter $d$ increases by 1 (modulo $\lceil\log n\rceil$).
	\item If a node $j$ initiates a communication with $i$ at time $t$ then $i$ will update its state if and only if $j$ is an expert with $\xi=2$. In this case $i$ will become a terminal node with state $(4,\wh\sigma(j,t^-))$.
\end{enumerate}

\paragraph{Terminal nodes} The state of a terminal node $i\in[n]$ at time $t\geq 0$ is of the form $\sigma(i,t)=(4,b)$, where $b=\wh\sigma(i,t)\in\{0,1\}$ is the belief bit. A terminal node does not initiate communications, and does not update its state when contacted by other nodes.

\subsection{Analysis}
We prove that the protocol defined right above satisfies all conditions of Proposition \ref{prop:upperasync1}, which follows by combining Lemmas \ref{prop42}, \ref{prop43}, and \ref{prop44}. 

\begin{lemma}
	In the protocol described in Section \ref{sec:async1-1} it is sufficient that each node has $16 \lceil C_0\log n\rceil^2$ states of memory.
	\label{prop42}
\end{lemma}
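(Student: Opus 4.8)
The plan is to count the number of states required by each of the four node types separately, and then take the maximum.

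\medskip

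\textbf{Step 1: Count states per type.} An aspirant's state is a tuple $(1,d,b',b)$ with $d\in[\lceil\log\log n\rceil]$, $b'\in\{-1,0,1\}$, and $b\in\{0,1\}$; this gives at most $3\cdot 2\cdot\lceil\log\log n\rceil = 6\lceil\log\log n\rceil$ possibilities. An expert's state is $(2,\xi,d,d',b)$ with $\xi\in\{1,2\}$, $d,d'\in[\lceil C_0\log n\rceil]$, and $b\in\{0,1\}$; this gives at most $2\cdot 2\cdot\lceil C_0\log n\rceil^2 = 4\lceil C_0\log n\rceil^2$ possibilities. A regular node's state is $(3,d,b)$ with $d\in[\lceil\log n\rceil]$ and $b\in\{0,1\}$, giving $2\lceil\log n\rceil$ possibilities. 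A terminal node's state is $(4,b)$, giving $2$ possibilities.

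\medskip

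\textbf{Step 2: Sum (or bound by the max times the number of types).} The total number of distinct states is the sum of the four counts above. Since $C_0 = 10/\eps^2 \geq 1$ and $n$ is large, the expert count $4\lceil C_0\log n\rceil^2$ dominates; crude bounding gives
\[
6\lceil\log\log n\rceil + 4\lceil C_0\log n\rceil^2 + 2\lceil\log n\rceil + 2 \;\leq\; 16\lceil C_0\log n\rceil^2
\]
for all sufficiently large $n$ (indeed $\lceil\log\log n\rceil, \lceil\log n\rceil, 1$ are all at most $\lceil C_0\log n\rceil^2$, so the left side is at most $(6+4+2+2)\lceil C_0\log n\rceil^2 = 14\lceil C_0\log n\rceil^2 \leq 16\lceil C_0\log n\rceil^2$). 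One should also observe that one can reuse the same state-space slot for different types since the type field distinguishes them, but the sum bound already suffices.

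\medskip

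\textbf{Main obstacle.} There is essentially no real obstacle here — this is a bookkeeping lemma. The only mild care needed is to make sure the ranges I read off from the protocol description in Section \ref{sec:async1-1} are exactly right (in particular the counters $d$ range over $[\lceil C_0\log n\rceil]$ for experts and that both $d$ and $d'$ use this range), and to handle the small-$n$ edge cases or simply assert the bound holds for $n$ large enough, which is all that is needed since all claims are asymptotic (w.h.p.\ statements). The constant $16$ is generous, so no tight accounting is required.
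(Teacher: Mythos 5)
Your proof is correct and follows essentially the same route as the paper: read off the component ranges for each of the four node types, observe that experts dominate with $4\lceil C_0\log n\rceil^2$ states, and absorb the other types into the generous constant $16$ (the paper does this by multiplying the expert component count by $4$ for the type field rather than summing, but this is the same bookkeeping). No gaps.
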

\begin{proof}
	By considering each of the four types of nodes separately  we see that the experts have the largest need for memory. By considering the number of possible values that can be taken by any of the components in $\sigma(i,t)=(2,\xi,d,d',b)$, we see that the following number of states of memory is necessary
	\eqbn
	4\cdot 2\cdot \lceil C_0\log n\rceil \cdot \lceil C_{0}\log n\rceil\cdot 2 = 16 \lceil C_0\log n\rceil^2.
	\eqen
\end{proof}

The phases of the protocol described in Section \ref{sec:intro-async1} are partly overlapping in time. However, our next lemma says that w.h.p.\ there is no overlap between the expert selection phase and the pushing phase. Let $\sigma_1$ be the time the expert selection phase ends, i.e., it is the first time that there are no aspirants
\eqbn
\sigma_1 = \inf\{ t\geq 0\,:\,\sigma_1(i,t)\neq 1,\,\forall i\in[n] \}.
\eqen
Let $\sigma_2$ be the time that the pushing phase starts, i.e., it is the first time that we get an expert in the pushing phase
\eqbn 
\sigma_2 = \inf\{ t\geq 0\,:\, \exists i\in[n],\, \sigma_1(i,t)= 2,\, \xi(i,t)=2 \}.
\eqen

\begin{lemma}
	W.h.p.\ $\sigma_1<0.5C_0\log n<\sigma_2$.
	\label{prop45}
\end{lemma}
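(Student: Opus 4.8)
The plan is to bound the two quantities $\sigma_1$ and $\sigma_2$ separately and show each comparison with $0.5C_0\log n$ holds w.h.p.

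For the upper bound $\sigma_1 < 0.5 C_0 \log n$, I would argue that w.h.p.\ every aspirant has resolved its type (become an expert or a regular node) before time $0.5 C_0 \log n$. Fix a node $i$. Each time the clock of $i$ rings while $i$ is still an aspirant, $i$ polls one node and updates its test bit $b'$; the key observation (from von Neumann unbiasing, as noted in Section \ref{sec:intro-async1}) is that the aspirant-resolution process consumes clock rings at a steady rate, and within any fixed number of clock rings the process terminates unless a long unlikely sequence occurs. More concretely, one pass of ``collect an ordered pair of bits $(b',b'')$'' uses at most a bounded (in expectation $O(1)$, and with exponential tails) number of clock rings, and the aspirant becomes a regular node the first time it sees a pair $(1,0)$; since each completed pair is $(1,0)$ with probability $\tfrac14$ (being equally likely $(0,1)$ or $(1,0)$ and each occurring with probability exactly the same positive constant), the number of completed pairs before termination is stochastically dominated by a geometric random variable. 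Hence the number of clock rings used by any aspirant before resolving is stochastically dominated by a sum of $O(1)$-mean geometric-type random variables, so it is $O(\log n)$ with probability $1 - n^{-10}$, say; choosing $C_0 = 10/\eps^2$ large enough (and possibly enlarging the constant hidden in the protocol if needed — but here $C_0$ is already $\gg 1$), a Chernoff/union bound over the $n$ nodes gives $\sigma_1 < 0.5 C_0\log n$ w.h.p. I would also invoke the fact that during time $[0, 0.5C_0\log n]$ each node's clock rings $\Theta(\log n)$ times w.h.p.\ (Poisson concentration plus a union bound), so ``enough clock rings have occurred'' is not the bottleneck.

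For the lower bound $0.5 C_0 \log n < \sigma_2$, I would show that w.h.p.\ no expert has completed its estimation phase by time $0.5 C_0 \log n$. An expert needs $\lceil C_0 \log n\rceil$ of its own clock rings \emph{after becoming an expert} to move to phase $\xi = 2$. Even the earliest-created expert is created no earlier than time $0$, and by the Poisson concentration bound the number of clock rings of a fixed node in a time interval of length $0.5 C_0 \log n$ is at most, say, $0.6 C_0 \log n < C_0 \log n$ with probability $1 - n^{-10}$; a union bound over all $n$ nodes shows that w.h.p.\ no node has had $\lceil C_0\log n\rceil$ clock rings by time $0.5 C_0\log n$, hence no expert has advanced to the pushing phase, so $\sigma_2 > 0.5 C_0 \log n$.

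The main obstacle is the upper-bound half: controlling the number of clock rings an aspirant needs before resolving. The subtlety is that a single ``ordered pair'' is not collected in a fixed number of rings — steps 1 and 2 of the aspirant routine can cause the test bit to be reset (when $b' = b''$), so a pair is only completed when two \emph{differing} bits are seen in succession after a reset. I would handle this by noting that each clock ring independently yields a fresh uniform-ish bit (the belief bits of the polled nodes are $b_j \in \{0,1\}$, and since no node has changed its belief bit yet during the expert selection phase — which I should check, or at least that the fraction of each bit stays bounded away from $0$ and $1$, true since $p \in [1/2+\eps, 1-\eps]$), so the probability of completing a pair at each ring is bounded below by a constant, and conditioned on completing, the pair is $(1,0)$ with probability $\tfrac14$. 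This gives the geometric-type domination. A clean way to phrase it: the number of rings until an aspirant resolves is dominated by a negative-binomial-type sum with constant-order parameters, and the tail bound $\P[\text{more than } C \log n \text{ rings}] \le n^{-10}$ follows for $C$ a suitable constant (absorbed into $C_0 = 10/\eps^2$, using $\eps < 1/4$ so $C_0 > 160$). I would be slightly careful that the relevant constant truly fits under $0.5 C_0$, adjusting the Chernoff exponent if necessary, but this is routine.
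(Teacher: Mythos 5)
Your proposal is correct and follows essentially the same route as the paper: a Poisson tail bound plus a union bound shows no node can accumulate $\lceil C_0\log n\rceil$ estimation-phase clock rings by time $0.5C_0\log n$ (so $\sigma_2>0.5C_0\log n$), and a geometric-domination argument on the von Neumann pairs, combined with Poisson concentration of the number of clock rings, shows every aspirant resolves before that time (so $\sigma_1<0.5C_0\log n$). The one point you flagged ("no node has changed its belief bit yet") is exactly why the paper proves the $\sigma_2$ bound \emph{first} and then works on the event $\{0.5\lceil C_0\log n\rceil<\sigma_2\}$, on which all polled belief bits are still initial bits; with that ordering your argument closes, and note the per-pair probability of $(1,0)$ is $p(1-p)(1+O(1/n))\geq\eps/2$ rather than $1/4$, though only a constant lower bound is needed.
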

\begin{proof}
	An expert finishes its estimation phase in $\lceil C_0\log n\rceil$ clock rings. By \cite[Theorem A.1.15]{alon2004probabilistic}, the probability that it takes less than $0.5\lceil C_0\log n\rceil$ units of time to finish the estimation phase is given by the following, where $X\sim\op{Pois}(\lambda)$ for $\lambda=0.5\lceil C_0\log n\rceil$
	\eqbn
	\P[ X > 2\lambda ] = (e/4)^\lambda<n^{-0.25C_0}<n^{-2}. 
	\eqen
	It follows by a union bound over all $i\in[n]$ that $0.5\lceil C_0\log n\rceil<\sigma_2$ w.h.p.
	
	To conclude the proof we need to show that $\sigma_1<0.5 \lceil C_0\log n\rceil$ w.h.p. For any $i\in[n]$ let $A_1(i)$ denote the event that the clock of $i$ rings at least $0.4 \lceil C_0\log n\rceil$ times during $[0,0.5 \lceil C_0\log n\rceil]$.
	Then, with $X$ as in the previous paragraph, $\P[A_1(i)^c]\leq \P[X<0.8\lambda]<e^{0.02 \lambda}$. We conclude with a union bound that w.h.p.\ all the events $A_1(i)$ occur. 
	
	An aspirant repeatedly collects pairs of bits $(b',b'')$. It transforms into a regular node if it observes a pair of bits $(b',b'')=(1,0)$. Let $A_2(i)$ be the event that among the first $0.2 \lceil C_0\log n\rceil$ pairs of bits collected there is at least one pair $(1,0)$. Note that on the event $A_1(i)\cap \{ 0.5 \lceil C_0\log n\rceil<\sigma_2 \}$, all the bits $b',b''$ will have the law of initial bits of uniformly sampled nodes. Therefore, on this event, $(b',b'')=(1,0)$ with probability $p(1-p)(1+O(1/n))>\ep/2$, independently for each pair $(b',b'')$. By this observation, with $R$ denoting a geometric random variable with success probability $\eps/2$, it holds for all sufficiently large $n$ that 
	\eqbn
	\begin{split}
	\P[ A_2(i)^c; \,&A_1(i)\cap \{ 0.5 \lceil C_0\log n\rceil<\sigma_2 \}]\leq \P[R>0.2 \lceil C_0\log n\rceil]\\
	&<(1-\eps/2)^{0.2 \lceil C_0\log n\rceil}<\exp(-0.09\eps \lceil C_0\log n)<n^{-1.1}.
	\end{split}
	\eqen  
	We conclude by a union bound that w.h.p.\ all the events $A_1(i)$, $A_2(i)$, and $0.5 \lceil C_0\log n\rceil<\sigma_2$ occur. On this event we also have $\sigma_1<0.5 \lceil C_0\log n\rceil$, which concludes the proof of the lemma.
\end{proof}

Let $\cE\subset[n]$ denote the set of experts, i.e.,  
\eqbn
\cE = \{ i\in[n]\,:\,\exists t\geq 0
\text{\,\,such\,\,that\,\,}\sigma_1(i,t)=2 \}.
\eqen
\begin{lemma}
	\eqbn
	\P[ |\#\cE-\E[\#\cE]| > n^{-0.55} ] \leq 2\exp(-n^{0.1}/2)
	\qquad\text{and}\qquad
	 \E[\#\cE]\in [n/(2\log n),n/\log n).
	\eqen
	\label{prop38}
\end{lemma}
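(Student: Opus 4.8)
The statement bundles two claims: the concentration bound for $\#\cE$ and the estimate $\E[\#\cE]\in[n/(2\log n),n/\log n)$. Both rest on the same two facts: that an aspirant's rules 1--4 implement von Neumann unbiasing, and that throughout the expert selection phase every belief bit that a node reports equals its own \emph{initial} bit. The plan is to isolate a high‑probability event $\cG$ on which the latter holds, compute $\P[i\in\cE\mid\cG]$ exactly by a symmetry argument, sum over $i$ for the expectation, and then exploit independence across nodes to get concentration from Hoeffding's inequality.

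\emph{The clean event and the expectation.} By Lemma~\ref{prop45}, w.h.p.\ the expert selection phase is over by time $0.5 C_0\log n$. On the other hand, an aspirant needs at least $2\lceil\log\log n\rceil$ clock rings to accumulate $\lceil\log\log n\rceil$ decisive pairs and become an expert, and once it is an expert it updates its belief bit only at the \emph{end} of its estimation phase, i.e.\ after a further $\lceil C_0\log n\rceil$ clock rings; by the Poisson tail bound used in the proof of Lemma~\ref{prop45} this takes more than $0.5C_0\log n$ units of time with probability $1-n^{-\Omega(C_0)}$. Hence, letting $\cG$ be the event that no node changes its belief bit before time $\sigma_1$, a union bound gives $\P[\cG^c]=O(n^{-c})$ for a constant $c=c(\eps)>0$ that can be taken large because $C_0=10/\eps^2$; on $\cG$ every bit an aspirant ever reads is the initial bit of a uniform node in $[n]\setminus\{i\}$. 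Now rules 1--4 process the queried bits $b_1,b_2,\dots$ in consecutive pairs $(b_{2k-1},b_{2k})$: a pair with $b_{2k-1}=b_{2k}$ is discarded, the counter $d$ increases by $1$ on a pair equal to $(0,1)$, and the node turns regular on the first pair equal to $(1,0)$; it becomes an expert exactly when the first $\lceil\log\log n\rceil$ non‑discarded pairs are all $(0,1)$. On $\cG$ the two bits of a pair are i.i.d., so each non‑discarded pair is $(0,1)$ or $(1,0)$ with probability exactly $1/2$; thus $\P[i\in\cE\mid\cG]=2^{-\lceil\log\log n\rceil}\in(\tfrac1{2\log n},\tfrac1{\log n}]$. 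Summing over $i$ and absorbing the correction $|\E[\#\cE]-n\,2^{-\lceil\log\log n\rceil}\P[\cG]|\le n\P[\cG^c]=o(n/\log n)$ gives $\E[\#\cE]\in[n/(2\log n),n/\log n)$.

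\emph{Concentration.} Introduce the idealized process $\widetilde\cE$ run by the identical protocol except that \emph{every} node always reports its initial bit. Then $\{i\in\widetilde\cE\}$ depends only on node $i$'s own Poisson clock and query sequence, so these events are mutually independent, and by the (now unconditional) von Neumann symmetry each is $\mathrm{Bernoulli}(2^{-\lceil\log\log n\rceil})$; hence $\#\widetilde\cE\sim\mathrm{Bin}(n,2^{-\lceil\log\log n\rceil})$. Moreover on $\cG$ each aspirant's trajectory coincides with its trajectory in the idealized process, so $\#\cE=\#\widetilde\cE$ on $\cG$. Hoeffding's inequality for a sum of $n$ independent $[0,1]$‑valued variables gives $\P[\,|\#\widetilde\cE-\E\#\widetilde\cE|>t\,]\le 2\exp(-2t^2/n)$, which for $t=n^{0.55}$ is at most $2\exp(-n^{0.1}/2)$. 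Combining this with $\P[\#\cE\ne\#\widetilde\cE]\le\P[\cG^c]$ and $|\E\#\cE-\E\#\widetilde\cE|=O(n\P[\cG^c])$ yields the asserted bound on $\P[\,|\#\cE-\E\#\cE|>n^{0.55}\,]$.

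\emph{Main obstacle.} The delicate point is the reduction to the i.i.d.\ process $\widetilde\cE$: one must establish $\cG$ and verify that the exact $1/2$ symmetry of the decisive pairs survives. This is a separation of two timescales — the duration of the expert selection phase versus the time any node needs to alter its belief bit — and it is precisely here that the large constant $C_0=10/\eps^2$ and the Poisson concentration of Lemma~\ref{prop45} are used; keeping the resulting error $\P[\cG^c]$ small enough to be negligible against the right‑hand side is the step that requires the most care.
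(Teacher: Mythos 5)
Your argument is essentially the paper's: Lemma~\ref{prop45} guarantees that every bit read during the expert selection phase is an initial bit, von Neumann symmetry then makes each node an expert with the exact probability $2^{-\Theta(\log\log n)}$ independently across nodes, and Hoeffding's inequality gives the concentration; your coupling with the idealized process $\widetilde\cE$ is just a more explicit rendering of the paper's one-line conditioning. The only (shared) caveat is that the additive error $\P[\cG^c]$, which is only polynomially small, formally keeps you from recovering the literal bound $2\exp(-n^{0.1}/2)$ --- the paper's terser proof glosses over the same point --- and your exponent $\lceil\log\log n\rceil$ versus the paper's $\lceil\log\log n\rceil-1$ is an off-by-one on which the paper itself is inconsistent, with your version matching the stated range for $\E[\#\cE]$.
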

\begin{proof}
	By Lemma \ref{prop45}, w.h.p.\ any node $i$ has belief bit equal to its initial bit $\frk b_i$ throughout the aspirant phase (i.e., for times $\leq\sigma_1$). The two events 3.\ and 4.\ in the definition of an aspirant are equally likely throughout the expert selection phase. A node becomes an expert if and only if  the event in 3.\ happens $\lceil\log\log n\rceil-1$ times before the event in 4.\ happens for the first time. Therefore the probability that a node becomes an expert is exactly $0.5^{\lceil\log\log n\rceil-1}\in [1/(2\log n),1/\log n)$. Furthermore, this happens independently for each node, so we obtain the lemma by Hoeffding's inequality.
\end{proof}

Let $\cL_t$ denote the union of the terminal nodes and the experts that are in the pushing phase at time $t$. Note that the sets $\cL_t$ are increasing in $t$, i.e.,   $\cL_{t'}\subseteq \cL_t$ for $t'<t$. 
\begin{lemma}
	The protocol terminates in finite time w.h.p., and on this event it holds w.h.p.\ that all nodes have belief bit equal to $\frk b$ when the protocol terminates. In other words, $\tau_{\op{terminal}}<\infty$ w.h.p.
	\label{prop43}
\end{lemma}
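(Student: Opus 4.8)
The statement combines two claims: (1) the protocol terminates in finite time w.h.p., i.e.\ $\tau_{\op{terminal}}<\infty$ w.h.p., and (2) on the event of termination, all nodes have belief bit $\frk b$ w.h.p. The plan is to establish a chain of high-probability events tracking the protocol through its four phases, and then combine them by a union bound.

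\textbf{Step 1 (Experts estimate correctly).} First I would show that w.h.p.\ every expert sets its belief bit to $\frk b$ at the transition $\xi=1\to\xi=2$. By Lemma \ref{prop45} we may work on the event $\sigma_1<0.5C_0\log n<\sigma_2$, so all bits queried during the aspirant phase and the estimation phase are distributed as initial bits of uniformly sampled nodes, i.e.\ each equals $\frk b$ with probability $p\geq 1/2+\eps$. An expert samples $\lceil C_0\log n\rceil$ such bits; since $C_0=10/\eps^2$, a Chernoff bound gives that the empirical fraction of bits equal to $\frk b$ exceeds $1/2$ except with probability $n^{-2}$ or so, so the expert's computed majority bit equals $\frk b$. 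By Lemma \ref{prop38} there are at most $n/\log n$ experts, so a union bound shows all experts compute $\frk b$ w.h.p. (Here I would also note that experts, once they enter the pushing phase, never change their belief bit, and terminal nodes created by experts in the pushing phase inherit the correct bit; and a regular node only becomes terminal by copying a terminal node's state, so by induction all terminal nodes carry belief bit $\frk b$.) This will handle claim (2) once claim (1) is in place.

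\textbf{Step 2 (A positive fraction becomes terminal in the pushing phase).} Next I would show there is a constant $\delta>0$ such that w.h.p., at the end of the pushing phase, at least $\delta n$ nodes lie in $\cL_t$ (terminal nodes plus pushing-phase experts). By Lemma \ref{prop38} there are $\Theta(n/\log n)$ experts, and each, upon entering the pushing phase, contacts $\lceil\log n\rceil$ uniformly chosen nodes, converting each regular target into a terminal node. The total number of push-contacts is $\Theta(n)$; the only losses come from contacts landing on aspirants, already-terminal nodes, or other experts. Since there are only $o(n)$ experts and aspirants finish w.h.p.\ before the pushing phase begins (Lemma \ref{prop45}), a standard occupancy/second-moment argument shows that a uniformly positive fraction of the $\Theta(n)$ contacts hit distinct regular nodes, so $\#\cL_t\geq\delta n$ w.h.p.\ once all experts have finished pushing. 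The experts finish pushing in finite time w.h.p.\ (each needs $\lceil\log n\rceil$ further clock rings, and the sum of finitely many a.s.-finite waiting times is a.s.\ finite).

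\textbf{Step 3 (Pulling phase: every regular node becomes terminal quickly).} Finally, condition on the event from Step 2 that $\#\cL_{t_0}\geq\delta n$ at some finite time $t_0$, and recall $\cL_t$ is increasing in $t$. Each surviving regular node initiates a communication every $\lceil\log n\rceil$ clock rings, and each such attempt independently hits a node in $\cL_{t_0}\subseteq\cL_t$ with probability $\geq\delta$; the first success makes it terminal. Hence the number of attempts each regular node needs is stochastically dominated by a geometric variable with success probability $\delta$, which is a.s.\ finite. Since there are at most $n$ regular nodes and each needs a.s.-finitely many attempts spaced by a.s.-finite inter-ring times, w.h.p.\ (indeed a.s.) all of them become terminal in finite time. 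At that moment all nodes are terminal, so $\tau_{\op{terminal}}<\infty$; combined with Step 1 all belief bits equal $\frk b$. A union bound over the events from Lemmas \ref{prop45}, \ref{prop38}, Step 1, and Step 2 gives the claim.

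\textbf{Main obstacle.} The delicate point is Step 2: one must rule out that the $\Theta(n)$ push-contacts are wasted on nodes that are already terminal or are experts. This requires a genuine occupancy argument — e.g.\ bounding the number of contacts that collide (hit a node already in $\cL$) via a first/second-moment computation — rather than the softer stochastic-domination arguments that suffice for Step 3. I expect the authors will instead exploit a self-correcting feature: even if only, say, $\varepsilon/\log n$ fraction of nodes become terminal directly from experts, the pulling phase would already cost only $O(n)$, so the quantitative value of $\delta$ in Step 2 is what requires care, not the qualitative finiteness needed here. For the present lemma (finiteness of $\tau_{\op{terminal}}$ and correctness), the weaker statement "$\#\cL_t\geq c n/\log n$ for some $c>0$ w.h.p." would already suffice, since a geometric number of attempts with success probability $\geq c/\log n$ is still a.s.\ finite.
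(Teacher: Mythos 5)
There is a genuine gap in your Step 1. Lemma \ref{prop45} only separates the \emph{aspirant} phase from the start of the \emph{pushing} phase: $\sigma_2$ is the first time \emph{some} expert enters the pushing phase, which is exactly when the earliest expert finishes estimating. All other experts finish their estimation phases \emph{after} $\sigma_2$, so the claim that ``all bits queried during the aspirant phase and the estimation phase are distributed as initial bits of uniformly sampled nodes'' is false: a late-estimating expert may poll nodes that have already been converted to terminal by earlier experts' pushing, and such nodes carry the pusher's belief bit, not their initial bit. By the time the last experts finish estimating, a constant fraction of nodes may already be terminal (cf.\ Lemma \ref{lem:stragglers}), so this is not a negligible perturbation. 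Moreover you cannot simply say ``terminal nodes are correct anyway'', because their correctness rests on the correctness of earlier experts, whose correctness you are in the middle of proving --- your parenthetical induction propagates correctness from experts to terminal nodes, but not back from terminal nodes into the polls of later experts, so the argument as organized is circular.

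The paper resolves exactly this point by inducting along the order in which nodes enter $\cL_t$ (terminal nodes and pushing-phase experts): letting $I_j$ be the $j$th node to enter and $\rho_j$ the time it does so, one shows $\P[\wh\sigma(I_j,\rho_j)\neq \frk b]\le n^{-2}$ by induction on $j$. The point is that an expert finishing estimation at time $\rho_j$ polled nodes each of which is either not yet in $\cL$ (belief bit equals its initial bit, wrong with probability $1-p$) or already in $\cL$ (wrong with probability at most $n^{-2}$ by the induction hypothesis); in either case each polled bit is wrong with probability at most $1-p$ for large $n$, and Hoeffding with $C_0=10/\eps^2$ gives the $n^{-2}$ bound, after which a union bound over $j\in[n]$ finishes the correctness claim. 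Your Steps 2--3 for finiteness of $\tau_{\op{terminal}}$ are sound but heavier than needed (the paper dispatches termination in one line from $\cE\neq\emptyset$ w.h.p.); the substantive missing ingredient is the induction just described.
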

\begin{proof}
Lemma \ref{prop38} implies that $\cE\neq\emptyset$ w.h.p., and on the event that $\cE\neq\emptyset$ the protocol terminates in finite time a.s.
	
Consider a sequence of pairs of random variables $\{(I_j,\rho_j)\}_{j=1}^n$, where $I_j \in [n]$ denotes the $j$th node that becomes either a terminal node or an expert in the pushing phase,
and $\rho_j \in \R_+$ denotes the time at which this happens:
$$
\rho_j = \inf \{t\geq 0 \,:\, I_j \in \cL_t\}
=\inf\{ t\geq 0\,:\,\#\cL_t\geq j \}.
$$
We show  that $\P [\wh\sigma(I_j,\rho_j)\neq \frk b]\leq n^{-2}$ for all $j \in [n]$ by induction. This implies the lemma by a union bound.

First observe that $I_1$ must be an expert which starts the pushing phase at time $\rho_1$. Furthermore, all nodes contacted by $I_1$ are nodes for which the belief bit is equal to the initial bit. For any fixed node $i$, the probability that this node encounters a node $j$ with initial bit $\frk b_j=\frk b$ if it initiates a communication is at least $(np-1)/(n-1)$ (more precisely, this bound is sharp if $\frk b_i=\frk b$, and the considered event has probability $np/(n-1)$ otherwise). The expert polls $\lceil C_0\log n \rceil-1$ nodes, and sets its belief bit equal to $\frk b$ if at least $\lceil C_0\log n \rceil/2$ of these nodes have belief bit equal to $\frk b$. Therefore, by Hoeffding's inequality and the definition of $C_0$, for all sufficiently large $n$,
\eqbn
\P\big[\wh\sigma(I_1,\rho_1)\neq \frk b\big] 
\leq
\exp \left \{ -2(p-1/2+O(1/\log n))^2 C_0\log(n)  \right \} \leq n^{-2}.
\eqen
Now suppose that for all $i \in [j-1]$, $\P [\wh\sigma(I_i,\rho_i)\neq \frk b]\leq n^{-2}$.
We want to argue that $\P [\wh\sigma(I_{j},\rho_{j})\neq \frk b]\leq n^{-2}$.
Node $I_j$  is either an expert that finishes the estimation phase at time $\rho_j$, or it is a regular node which becomes a terminal node. 
If the latter, then the claim follows from the induction hypothesis. 
If the former, then $I_j$ must have polled $\lceil C_0 \log n\rceil-1$ nodes $k$ at times $t_k$, each of which is either in 
$[n]\setminus\cL_{t_k}$ (and hence incorrect with probability $1-p$, since its belief bit $\wh\sigma(k,t_k)$ is equal to its initial bit $\frk b_k$) 
or in $\cL_{t_k}$ (and hence incorrect with probability at most $n^{-2}$). 
In both cases, $j$ is incorrect with probability at most $1-p$ for sufficiently large $n$. Using the same argument as for $j=1$, we get $\P[\wh\sigma(I_j,\rho_j)\neq \frk b] \leq n^{-2}$.
\end{proof}

Let $\DeltaT = 2(C_0+1)\log n$. The following lemma will help us to bound the number of communications initiated by regular nodes during $[\DeltaT,\infty)$.

\begin{lemma}
	There exists a constant $q>0$ depending only on $\ep$ such that w.h.p.\ $\# \cL_\DeltaT \geq qn$.
\label{lem:stragglers}
\end{lemma}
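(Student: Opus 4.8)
The plan is to show that w.h.p.\ every expert finishes its pushing phase by time $\DeltaT$, turning itself terminal and, in the process, turning terminal each of the $\lceil\log n\rceil$ nodes it contacts during that phase. Since w.h.p.\ there are $\Theta(n/\log n)$ experts (Lemma~\ref{prop38}), this amounts to $\Theta(n)$ pushing-phase contacts, each landing on an essentially uniform random node, and a balls-in-bins estimate then shows a constant fraction of the $n$ nodes are hit; every hit node lies in $\cL_\DeltaT$, which is the claim.

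First I would fix the relevant w.h.p.\ events. By Lemma~\ref{prop45}, $\sigma_1<0.5C_0\log n<\sigma_2$, so that by time $0.5C_0\log n$ there are no aspirants and no expert has yet entered its pushing phase. By Lemma~\ref{prop38}, $\#\cE\ge(1-o(1))n/(2\log n)$. Finally, an expert needs $\lceil C_0\log n\rceil$ of its own clock rings to finish estimation and a further $\lceil\log n\rceil$ to finish pushing, i.e.\ at most $(C_0+1)\log n+2$ rings after it is created, and on $\{\sigma_1<0.5C_0\log n\}$ it is created before time $0.5C_0\log n$; since a rate-one Poisson clock accrues $(C_0+1)\log n+2$ rings within an additional time $(1.5C_0+2)\log n$ except with probability at most $n^{-2}$ (the same Poisson tail as in the proof of Lemma~\ref{prop45}, using that $C_0=10/\eps^2$ is large), a union bound over the at most $n$ experts shows that w.h.p.\ every expert is terminal by time $0.5C_0\log n+(1.5C_0+2)\log n=\DeltaT$. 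Let $G$ be the intersection of these events.

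Now work on $G$. Every pushing-phase contact occurs at a time $>0.5C_0\log n>\sigma_1$, so the contacted node is then regular, terminal, or an expert; in each case it is terminal by time $\DeltaT$ — a regular node is turned terminal when a pushing expert contacts it, terminal is absorbing, and on $G$ every expert is terminal by $\DeltaT$. Hence $\cL_\DeltaT$ contains the set of targets of all $m:=\#\cE\lceil\log n\rceil\ge(1-o(1))n/2$ pushing-phase contacts. The key point is that these targets are independent and near-uniform: condition on the $\sigma$-algebra $\cI$ generated by all clock-ring times together with everything revealed before time $\sigma_1$. Then $\cE$, the time each expert enters and leaves its pushing phase, the time-ordered list $(i_1,t_1),\dots,(i_m,t_m)$ of pushing contacts, and the event $G$ are all $\cI$-measurable, while the partner $\frk r(i_\ell,t_\ell)$ of the $\ell$-th pushing contact — the partner of a ring at a time $t_\ell>\sigma_1$ — is independent of $\cI$ and of the other partners and uniform on $[n]\setminus\{i_\ell\}$. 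So conditionally on $\cI$, on $G$, the $m$ targets are independent and near-uniform.

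It then remains to apply a routine occupancy concentration: conditionally on $\cI$, each of the $(1-o(1))n$ nodes that is not one of the $i_\ell$'s is missed by all $m$ targets with probability at most $(1-\tfrac1{n-1})^m\le e^{-(1-o(1))/2}$, so the conditional expected number of distinct targets is at least $(1-o(1))(1-e^{-1/2})n$; since altering one target changes the number of distinct targets by at most $1$, McDiarmid's inequality gives concentration around this conditional mean up to $o(n)$ fluctuations w.h.p. Choosing any constant $q<1-e^{-1/2}$ (say $q=0.3$) and integrating over $G$ yields $\P[\#\cL_\DeltaT<qn]=o(1)$. The step I expect to be the main obstacle is the independence claim above: one must ensure that pinning down \emph{which} rings are pushing-phase contacts — needed so that $m$ and the absorbing-state bookkeeping become deterministic — does not bias their targets. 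The resolution is that the expert set and the entire pushing schedule depend only on the clock-ring times and the aspirant-phase dynamics, never on the partner of any ring after time $\sigma_1$, so conditioning on $\cI$ leaves the pushing-phase targets uniform and independent; everything else (the Poisson tail, the balls-in-bins estimate) is standard.
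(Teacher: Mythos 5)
Your proof is correct and follows essentially the same route as the paper: control the timing via Lemma~\ref{prop45} and a Poisson tail so that enough pushing-phase contacts occur by time $\DeltaT$, use Lemma~\ref{prop38} to get $\Theta(n/\log n)$ experts and hence $\Theta(n)$ near-uniform pushing targets, and conclude with an occupancy estimate plus McDiarmid. The only (harmless) difference is that you insist w.h.p.\ \emph{every} expert finishes pushing by $\DeltaT$ via a union bound, whereas the paper settles for a constant fraction (at least $n/(4\log n)$ experts) using a constant-probability per-expert bound and Hoeffding, which just changes the constants.
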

\begin{proof}
	Let $E_1$ denote the event that at least $n/(4\log n)$ experts have completed both the estimation phase and the pushing phase by time $\DeltaT$. Let $E_2=\{ |\#\cE-\E[\#\cE]| > n^{-0.55}\}$. Then $E_2$ occurs w.h.p.\ by Lemma \ref{prop38}. 
	
	We will argue that $E_1$ occurs w.h.p. By Lemma \ref{prop45}, it holds w.h.p.\ that the aspirant phase finishes before time $0.5 \lceil C_0\log n\rceil$. Therefore it is sufficient to prove that for at least $n/(4\log n)$ experts the estimation phase and the pushing phase combined take less time than $(1.5C_0+2)\log n$. It takes $a:=\lceil C_0\log n \rceil+\lceil\log n\rceil$ clock rings for an expert to finish both the estimation phase and the pushing phase. Therefore, for a node $i$ sampled uniformly at random from $\cE$, the probability that these two phases take more time than $\lambda=(1.5C_0+2)\log n$ is equal to the following, where $X$ is a Poisson random variable of parameter $\lambda$ 
\eqbn
\P[ X \leq a] \leq (1+a)\P[X=a] = (1+a)\frac{\lambda^a e^{-\lambda}}{a!}.
\eqen
Notice that the right side is smaller than $0.01$ for all sufficiently large $n$. By Lemma \ref{prop38} there are at least $n/(2.1\log n)$ experts w.h.p., and by Hoeffding's inequality and independence of the Poisson clocks we conclude that w.h.p.\ $E_1$ occurs.

Let $\cF$ be the $\sigma$-algebra generated by $\cE$ and by the set of experts which finish both the estimation phase and the pushing phase before time $T$. Note that $E_1$ and $E_2$ are measurable with respect to $\cF$.
On the event $E_2$, the $\geq\frac{n}{4\log n}$ nodes which become experts before time $T$ send their belief bit to $\lceil \log n \rceil$ nodes. Let $Y$ be the number of non-expert nodes which are contacted by at least one of these experts (which means that this node becomes informed before time $T$). Then we clearly have $Y\geq\#\cL_T$. The probability that a non-expert node is contacted by at least one expert (which means that this node becomes informed) is at least $1-(1-1/(n-1))^{ \frac{n}{4\log n}\cdot \lceil \log n \rceil }>0.22$ for all sufficiently large $n$. Therefore, for all sufficiently large $n$,
\eqbn
\E[Y\,|\,\cF]\1_{E_1\cap E_2}>0.22\cdot (n-n/\log n-n^{0.55})>0.21n.
\label{eq136}
\eqen

Let $i_1,i_2,\dots,i_\ell$ be an enumeration of the nodes which are contacted by an expert which finishes both the estimation phase and the pushing phase before time $T$. Conditioned on $\cF$, the random variable $Y$ is a function of $i_1,i_2,\dots,i_\ell$. Furthermore, $\ell<\lceil\log n \rceil\cdot (n/\log n+n^{0.55})$ on $E_2$. It follows from \eqref{eq136} and McDiarmid's inequality that $Y>0.2n$ w.h.p. This concludes the proof since $Y\geq\#\cL_T$.
\end{proof}

\begin{lemma}
	For the protocol described in Section \ref{sec:async1-1} there is a $C>0$ depending only on $\eps$ such that w.h.p.\ the communication cost is smaller than $Cn$.
	\label{prop44}
\end{lemma}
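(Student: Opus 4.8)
The plan is to partition the communications into those initiated by aspirants, those initiated by experts, and those initiated by regular nodes, and bound each contribution by $O(n)$ separately. The aspirant contribution is easiest: each node $i$ is an aspirant only until it becomes an expert or a regular node, and by the definition of the aspirant transition rules, the number of clock rings $i$ spends as an aspirant is stochastically dominated by a sum of at most $\lceil\log\log n\rceil$ geometric random variables with success probability bounded below by a constant (each ``round'' of collecting a pair $(b',b'')$ terminates the aspirant phase with probability $p(1-p)(1+O(1/n))\geq\eps/2$ via rule 4, so the total number of rings before absorption has expectation $O(\log\log n)$, hence is $O(\log\log n)$ w.h.p.\ uniformly in $i$ by a union bound over the $n$ nodes and a standard tail bound). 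Summing over the $n$ nodes gives $O(n\log\log n)$ aspirant communications; but in fact only a $\Theta(1/\log n)$ fraction ever reach an expert, so more carefully one sees the dominant count is still $O(n\log\log n)=o(n\log n)$ — and since we only need $O(n)$, I should instead note that an aspirant initiates exactly one communication per clock ring during its aspirant life and that the expected aspirant lifetime (in number of rings) is $O(1)$: each collected pair uses $2$ rings and the number of pairs is geometric with constant success probability, so the expected total is $O(1)$ per node, giving $O(n)$ aspirant communications in expectation and w.h.p.\ by concentration (McDiarmid or a Chernoff bound on the sum of $n$ independent $O(1)$-mean, exponentially-tailed variables).

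Next, the expert contribution. By Lemma \ref{prop38}, w.h.p.\ $\#\cE\leq n/\log n + n^{-0.55}\leq 2n/\log n$. Each expert initiates exactly $\lceil C_0\log n\rceil$ communications in the estimation phase and $\lceil\log n\rceil$ in the pushing phase, for a total of $(C_0+1)\log n + O(1)$ communications per expert. Hence the total number of expert-initiated communications is at most $\#\cE\cdot((C_0+1)\log n+O(1)) \leq \frac{2n}{\log n}\cdot(C_0+2)\log n = O(n)$ w.h.p.

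The main work is the regular-node contribution, and this is where Lemma \ref{lem:stragglers} is used. A regular node $i$ initiates a communication only every $\lceil\log n\rceil$ clock rings, and it stops forever once it contacts a node in $\cL_t$ (i.e.\ a terminal node or a pushing-phase expert) and becomes terminal. Over the time interval $[0,T]$ with $T=2(C_0+1)\log n$, the number of times a fixed regular node's clock rings is $O(\log n)$ w.h.p.\ (Poisson with mean $T$), so it initiates $O(1)$ communications in $[0,T]$, contributing $O(n)$ total over all regular nodes. For $t>T$: by Lemma \ref{lem:stragglers}, w.h.p.\ $\#\cL_T\geq qn$, and since $\cL_t$ is increasing, at every time $t\geq T$ at least $qn$ nodes lie in $\cL_t$. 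Thus every communication initiated by a still-regular node after time $T$ succeeds (causing that node to become terminal) with probability at least $(qn-1)/(n-1)\geq q/2$, independently of the past. Consequently the number of post-$T$ communications initiated by any fixed regular node is stochastically dominated by a geometric random variable with success parameter $q/2$, which has mean $2/q=O(1)$. Summing over the at most $n$ regular nodes and applying a concentration bound (the per-node counts are conditionally independent geometrics given $\cF$ and the event $\{\#\cL_T\geq qn\}$, so a Chernoff bound on their sum, plus a union with the w.h.p.\ events from Lemmas \ref{prop38}, \ref{prop45}, \ref{lem:stragglers}) shows the total number of post-$T$ regular communications is $O(n)$ w.h.p.

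Adding the three bounds, the total communication cost is $O(n)$ w.h.p., with the implied constant $C$ depending only on $\eps$ through $C_0=10/\eps^2$ and the constant $q$ from Lemma \ref{lem:stragglers}. The main obstacle is the regular-node analysis after time $T$: one must be careful that the ``constant fraction in $\cL_t$'' guarantee persists for all $t\geq T$ (handled by monotonicity of $\cL_t$) and that the geometric domination is legitimate — i.e.\ that conditioning on the relevant $\sigma$-algebra $\cF$ does not destroy the independence/uniformity of the node a regular node contacts. This is where one should check that the choice $\frk r(i,t)$ of contacted node is uniform and independent of $\cF$, so that the success probability lower bound $q/2$ applies conditionally and the geometric tail bound, hence the Chernoff sum bound, goes through.
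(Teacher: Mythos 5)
Your proposal is correct and follows essentially the same route as the paper's proof: the same three-way split into aspirant-, expert-, and regular-initiated communications, geometric domination for the aspirant pairs, Lemma \ref{prop38} to bound the expert contribution by $\#\cE\cdot O(\log n)=O(n)$, and the same split of the regular-node cost at $\DeltaT=2(C_0+1)\log n$, using Poisson clock-ring counts before $\DeltaT$ and Lemma \ref{lem:stragglers} plus geometric domination (with concentration of sums of independent geometrics) after $\DeltaT$. The only differences are cosmetic: the paper bounds the pre-$\DeltaT$ regular cost by Chebyshev on $\sum_i\#(\cP_i\cap[0,\DeltaT])/\lceil\log n\rceil$ rather than a per-node union bound, and uses Chebyshev where you invoke Chernoff/McDiarmid.
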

\begin{proof}
The communication cost can be split into three parts, depending on whether the communication was initiated by an aspirant, an expert, or a regular node.

An aspirant repeatedly collects two bits $b'$ and $b''$ by initiating two communications. It transforms into a regular node if it observes a pair of bits $(b',b'')=(1,0)$. Therefore the number of communications of each aspirant (divided by 2) is stochastically dominated by a geometric random variable with success probability $p(1-p)\cdot (1+O(n^{-1}))$ (where the correction term $(1+O(n^{-1}))$ is added since a node cannot initiate a communication with itself). Furthermore, the number of communications is independent for the different nodes. Letting $R_i$ for $i\in[n]$ denote independent geometric random variables with success probability $p(1-p)/2$, we see that the number of communications initiated by aspirants is smaller than $4n/(p(1-p))$, except on an event of probability at most
\eqb
\P\left[ \sum_{i=1}^{n} R_i\geq \frac{4n}{p(1-p)} \right].
\label{eq106}
\eqe
The probability in \eqref{eq106} converges to 0 as $n$ goes to infinity by e.g.\ Chebyshev's inequality, uniformly for all $p\in[1/2+\ep,1-\ep]$.

Each expert initiates $\lceil C_0 \log n\rceil-1$ communications during the estimation phase and $\lceil\log n\rceil$ communications during the pushing phase, so the total number of communication is $\leq \#\cE\cdot ((C_0+1)\log n+2)$. It follows from Lemma \ref{prop38} that w.h.p.\ the total number of communications initiated by experts is smaller than $(C_0+2)n$.

We separate the communication accounting into two time intervals: $[0,\DeltaT]$ and $[\DeltaT,\infty)$, and begin by considering the latter interval.

Lemma \ref{lem:stragglers} implies that the communication required for each regular node, starting at time $\DeltaT$, to reach a node in $\cL_\DeltaT$ is stochastically dominated by a geometric random variable with probability of success $q$. If a regular node $i$ initiates a communication with a node in $\cL_\DeltaT$ at some time $t>\DeltaT$ then $i$ becomes a terminal node.
We upper bound the number of communications initiated by regular nodes during $[\DeltaT,\infty)$ by considering the sum of $n$ independent geometric random variables $\wh R_i,\dots,\wh R_n$ with success probability $q$. By Chebyshev's inequality,
\begin{eqnarray*}
	\P \left [ \sum_{i=1}^n \wh R_i \geq \frac{2n}{q}\right ] 
	\leq
	\frac{\op{Var}[\wh R_i]q^2}{n} = \frac{(1-q)}{n}.
\end{eqnarray*}
We conclude that the regular node communication cost after time $\DeltaT$ is at most  $2n/q$ w.h.p.

Next, we consider the time interval $[0,\DeltaT]$. For each $i\in [n]$ the number of times in $[0,\DeltaT]$ at which $i$ is a regular node and initiates a communication is bounded above by $\#(\cP_i\cap[0,\DeltaT])/\lceil\log n\rceil$. Since $\#(\cP_i\cap[0,\DeltaT])$ has the law of a Poisson random variable of parameter $\DeltaT$, an application of Chebyshev's inequality gives
\eqbn
\P\left[ \sum_{i\in [n]} \frac{\#(\cP_i\cap[0,\DeltaT])}{\lceil\log n\rceil} \geq 2(C_0+2) n   \right ] 
\leq
\frac{\op{Var}[ \#(\cP_i\cap[0,\DeltaT]) ] }{(2(C_0+2)\lceil\log n\rceil)^2n}
=\frac{1}{4(C_0+2)\lceil\log n\rceil n}
.
\eqen
It follows that the regular node communication cost during $[0,\DeltaT]$ is at most  $2(C_0+2)n$ w.h.p.
\end{proof}

\section{Lower bounds}
\label{sec:lower}
In Section \ref{sec:lower-async} we prove Theorem \ref{prop:lower-async}.
In Section \ref{sec:lower-sync} we explain which modifications are needed to prove Theorem \ref{prop:lower}.

\subsection{Asynchronous model}
\label{sec:lower-async}
Let $A(0)\subset\cS$ be the set of size two containing the states that may be attained at time $k=0$. For $k\in\N$ define $A(k)\subset\cS$ inductively by letting $A(t)$ be the set of states that can be obtained via one Poisson clock ring from a group of nodes with states in $A(k-1)$, i.e., with $\Lambda,\Lambda'$ as in Section \ref{sec:model},
\eqbn
A(k) = A(k-1)
\cup\{ \Lambda(\sigma,\sigma')\,:\,\sigma,\sigma'\in A(k-1),\sigma\in\cS' \}
\cup\{ \Lambda'(\sigma)\,:\, \sigma\in A(k-1)\setminus\cS' \}.
\eqen

Observe that the size of sets $A(k)$ is increasing in $k$ and that if $A(T+1)=A(T)$ for some $T\in\N$ then $A(k)=A(T)$ for all $k\in\{T,T+1,T+2,\dots \}$. Since $\#\cS=s$ this implies 
\eqbn
A(k)=A(s),\qquad \text{for}\,\,\,k=s,s+1,s+2,\dots,
\eqen
and further
\eqb
A(s)=\bigcup_{k=0}^\infty A(k).
\label{eq45}
\eqe
The following lemma is immediate by the definition of the sets $A(k)$.
\begin{lemma}
	With probability 1, $\{ \sigma(i,t)\,:\, t\geq 0,i\in[n] \} \subset A(s)$.
	\label{prop35}
\end{lemma}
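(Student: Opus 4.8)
The plan is to prove the inclusion by induction over the successive clock rings, exploiting that the state of every node is piecewise constant in time and changes only at these (countably many) times. First I would record the standard fact that, since $\cP_1,\dots,\cP_n$ are independent Poisson processes, with probability $1$ the superposition $\bigcup_{i\in[n]}\cP_i$ is locally finite and all of its points are distinct. On this probability-$1$ event we may list the global sequence of clock rings as $0<t_1<t_2<\cdots$, with exactly one node's clock ringing at each $t_\ell$, and with no state change on any interval $[t_{\ell-1},t_\ell)$ (set $t_0:=0$). Hence it suffices to show $\sigma(i,t_\ell)\in A(s)$ for all $i\in[n]$ and all $\ell\geq 0$.

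For the base case $\ell=0$, by the definition of the model in Section~\ref{sec:model} each $\sigma(i,0)$ is one of the two initial states, so $\sigma(i,0)\in A(0)\subseteq A(s)$ by~\eqref{eq45}. For the inductive step I would assume $\sigma(i,t_\ell^-)\in A(s)$ for every $i\in[n]$, and let $i$ be the unique node whose clock rings at $t_\ell$. If $\sigma(i,t_\ell^-)\notin\cS'$, then $i$ does not communicate, $\sigma(i,t_\ell)=\Lambda'(\sigma(i,t_\ell^-))$, and no other node changes state; since $\sigma(i,t_\ell^-)\in A(s)\setminus\cS'$ and $A(s)$ is closed under $\Lambda'$ on $\cS\setminus\cS'$ (which is precisely the content of $A(s+1)=A(s)$, valid because $\#\cS=s$ forces the increasing chain $A(\cdot)$ to stabilise by step $s$), we get $\sigma(i,t_\ell)\in A(s)$. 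If instead $\sigma(i,t_\ell^-)\in\cS'$, then $i$ contacts $j=\frk r(i,t_\ell)$ and $(\sigma(i,t_\ell),\sigma(j,t_\ell))=\Lambda(\sigma(i,t_\ell^-),\sigma(j,t_\ell^-))$ while all other nodes keep their states; since $\sigma(i,t_\ell^-)\in\cS'\cap A(s)$ and $\sigma(j,t_\ell^-)\in A(s)$, closure of $A(s)$ under $\Lambda$ (again the content of $A(s+1)=A(s)$) places both new states in $A(s)$. In every case every node's state at time $t_\ell$ lies in $A(s)$, closing the induction.

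Finally, since every $t\geq 0$ lies in some interval $[t_{\ell-1},t_\ell)$ on which states are constant, $\sigma(i,t)=\sigma(i,t_{\ell-1})\in A(s)$ for all $i\in[n]$, which is the asserted inclusion, valid on a probability-$1$ event. I do not expect a genuine obstacle here: the argument is bookkeeping built on the recursive definition of $A(k)$ and its stabilisation $A(s)=\bigcup_{k\geq 0}A(k)$. The only step that deserves a sentence of care is the probabilistic preliminary (simplicity and local finiteness of the superposed Poisson clocks), which is what lets us treat the dynamics as a discrete sequence of single-initiator events; this is a routine property of independent Poisson processes rather than a real difficulty.
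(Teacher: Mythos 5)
Your proof is correct and matches the paper's intent: the paper simply declares this lemma ``immediate by the definition of the sets $A(k)$,'' and your induction over the (a.s.\ simple, locally finite) sequence of clock rings, combined with the stabilisation $A(s)=\bigcup_{k\geq 0}A(k)$ and the resulting closure of $A(s)$ under $\Lambda$ and $\Lambda'$, is exactly the routine bookkeeping being left implicit. No gaps; the only care needed is the probability-one simplicity of the superposed Poisson clocks, which you address.
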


To study the evolution of the states, we need to understand how some nodes may \emph{influence} the state of other nodes. Recall that $\frk r(i, t) \in[n]$ denotes the node that $i$ contacts at time $t$ on the event that $i$ initiates a communication at time $t$.
\begin{definition}
	Node $i$ \emph{influences} node $i'$ during an interval $J\subset[0,\infty)$ if we can find an increasing sequence of times $t_1<\dots<t_\ell$ in $J$ and a sequence of nodes $i_0,i_1,\dots,i_{\ell-1},i_\ell\in[n]$ such that 
	\begin{itemize}
		\item $i_0=i$, $i_\ell=i'$, and
		\item for $j=1,\dots,\ell$, either $t_j\in\cP_{i_j}$ and $\frk r(i_j,t_j)=i_{j-1}$ or 
		$t_{j}\in\cP_{i_{j-1}}$ and $\frk r(i_{j-1},t_{j})=i_{j}$. 
	\end{itemize}
A node $i$ influences itself during any interval of time.
	
Let $T(i,J)\subset[n]$ denote the set of nodes influenced by node $i$ during $J$.
\label{def:influence}
\end{definition}
Note that some node $i$ may influence some node $j$ by the above definition although the state of $i$ has no actual impact on the state of $j$. The above definition gives an upper bound on the set of nodes whose state could potentially be impacted by $i$, given the set of Poisson clock rings and the random variables $\frk r(i,t)$. If $i$ does not influence $j$ during $J$ according to the definition, then the state of $i$ at the beginning of $J$ has no impact on the state of $j$ at the end of $J$.

Let $E(J)$ be the event that no nodes influence $n^{0.05}$ or more nodes during $J$, i.e.,
\eqbn
E(J) = \{ \#T(i,J)<n^{0.05},\,\,\forall i\in[n] \}.
\eqen

\begin{lemma}
	There is a universal constant $c>0$ such that for any fixed interval $J$ of length $1$, $\P[E(J)^c] \leq \exp(-cn^{0.05})$.
	\label{prop37}
\end{lemma}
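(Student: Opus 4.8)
The plan is to bound the growth of the "influence set" $T(i,J)$ by a branching/exploration process and show that w.h.p. no such set reaches size $n^{0.05}$ within a unit time interval. The key observation is that $i$ influences $i'$ only through a chain of communications, each of which is a Poisson clock ring of one of the two endpoints landing on the other. So starting from $\{i\}$, the influence set can only grow when (a) a node already in the set has its clock ring and contacts some node (adding at most one node), or (b) a node outside the set has its clock ring and happens to contact a node inside the set (adding the outside node). During an interval of length $1$, the total number of clock rings of any fixed collection of $m$ nodes is $\mathrm{Pois}(m)$-distributed, and each "type (b)" event has probability at most $m/(n-1)$ per clock ring of an outside node, so while $|T|\le n^{0.05}$ the rate of type-(b) additions is tiny.

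First I would set up a domination argument: fix $i$, and let $Z_t = \#T(i, J\cap[0,t])$ where I reparametrize $J=[a,a+1]$ as $[0,1]$. I would bound the number of type-(a) additions during $[0,1]$ by $N_a := \#(\cP_i' \cap [0,1])$ where $\cP_i'$ is the union of the Poisson clocks of all nodes that ever enter $T(i,J)$; conditionally on the clocks, each type-(a) ring adds at most one node. For type-(b) additions, as long as $Z_t < n^{0.05}$, each clock ring of a node outside the set contacts the set with probability $\le n^{0.05}/(n-1)$, independently; the total number of outside clock rings in $[0,1]$ is at most $\mathrm{Pois}(n)$-distributed, so the number of type-(b) hits is stochastically dominated by $\mathrm{Bin}(\mathrm{Pois}(n), n^{0.05}/(n-1))$, which has mean $O(n^{0.05})$ — comparable to the threshold, so this needs a slightly more careful two-stage argument (see below).

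The cleaner route, which I would actually follow, is: let $\tau$ be the first time $Z_t$ reaches $n^{0.04}$ (a smaller intermediate threshold). Up to time $\tau$, the set has size $< n^{0.04}$, so the expected number of type-(b) hits in all of $[0,1]$ is $O(n^{0.04}\cdot n / n) = O(n^{0.04})$, hence by a Chernoff bound the number of type-(b) hits is $O(n^{0.04})$ except with probability $\exp(-cn^{0.04})$. The number of type-(a) additions before $\tau$ is at most $N_a$, the number of clock rings among the $\le n^{0.04}$ nodes in the set during $[0,1]$, which is stochastically dominated by $\mathrm{Pois}(n^{0.04})$; again a Chernoff bound gives $N_a \le 2n^{0.04}$ except with probability $\exp(-cn^{0.04})$. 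On the intersection of these events $Z_1 \le n^{0.04} + O(n^{0.04}) < n^{0.05}$, so $\tau$ is never actually reached and $\#T(i,J) < n^{0.05}$. A union bound over the $n$ choices of $i$ costs a factor $n$, absorbed into the exponential, giving $\P[E(J)^c] \le n\exp(-cn^{0.04}) \le \exp(-c'n^{0.04})$; one then tightens the exponents by using an intermediate threshold closer to $n^{0.05}$ — e.g. starting the exploration from threshold $n^{0.05}/2$ directly works and yields the stated $\exp(-cn^{0.05})$ since the mean of both contributions stays $\le n^{0.05}/4$ while the set is below $n^{0.05}/2$.

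\textbf{Main obstacle.} The delicate point is making the stochastic domination rigorous despite the fact that the set $T(i,J)$ is defined in terms of \emph{all} the Poisson clocks and \emph{all} the contact variables $\frk r(i,t)$ simultaneously, not via a sequential process — so there is circularity in "the clocks of nodes that enter the set." I would resolve this by an explicit exploration: process clock rings in increasing time order, maintaining a growing set $S_t \supseteq$ (current influence set), and observe that $\frk r(\cdot,\cdot)$ for a node's first relevant ring is independent of the history of which nodes are in $S_t$ (since the $\frk r$'s are i.i.d. uniform and independent of everything). This makes the type-(b) indicator at each outside ring a genuine independent $\mathrm{Bernoulli}(|S_{t^-}|/(n-1))$ given the past, so Chernoff applies. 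The reparametrization to a fixed unit interval and checking that the bound is uniform in the position of $J$ is routine. I expect the entire argument fits in well under a page.
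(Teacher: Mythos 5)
Your setup is fine: the decomposition of growth of $T(i,J)$ into (a) rings of nodes already in the set and (b) outside rings that hit the set, together with the time-ordered exploration to break the apparent circularity, is exactly the domination the paper needs (it phrases it as stochastic domination of $\#T(i,J)$ by a rate-$2$ Yule--Furry process). The gap is in the concluding quantitative step. While the set has size below a threshold $\theta$, additions of \emph{each} type occur at rate up to $\theta$ (in-set nodes ring at total rate equal to the set size; outside rings hit the set at rate at most $(n-\theta)\theta/(n-1)\le\theta$), so over a unit time interval the expected number of additions before your stopping time $\tau$ is bounded only by about $2\theta$ --- not by $\theta/4$ per type as you assert for $\theta=n^{0.05}/2$, and not by $o(\theta)$ in the $n^{0.04}$ variant. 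Since reaching the threshold requires only $\theta-1$ additions and your mean bound is $\approx 2\theta$, a first-moment/Chernoff argument cannot rule it out: $\P[\mathrm{Pois}(2\theta)\ge\theta-1]$ is close to $1$, not exponentially small. Moreover, in the $n^{0.04}$ version the inference ``$Z_1\le n^{0.04}+O(n^{0.04})<n^{0.05}$, so $\tau$ is never reached'' is circular: your events only control additions \emph{before} $\tau$, a bound $O(n^{0.04})$ does not preclude the $\approx n^{0.04}$ additions needed to hit the threshold, and if $\tau\le 1$ nothing controls the growth afterwards. No choice of a single threshold repairs this, because a pure-birth process with per-capita rate $\Theta(1)$ multiplies its size by a constant factor per unit time, so the expected growth while capped at $\theta$ is always comparable to $\theta$ itself.

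The exponential tail must come from the multiplicative structure rather than a one-shot mean bound. The paper's route: $\#T(i,J)$ is stochastically dominated by a Yule--Furry process of rate $2$ run for time $1$, whose law is exactly geometric, $\P[\wh L=\ell]=e^{-2}(1-e^{-2})^{\ell-1}$, giving $\P[\#T(i,J)\ge n^{0.05}]\le\exp(-(n^{0.05}-2)e^{-2})$ and then the union bound over $i\in[n]$ that you already have. Equivalently, within your exploration you could note that to reach size $m$ the process needs $m-1$ successive additions whose inter-arrival times dominate independent $\mathrm{Exp}(2k)$ variables, $k=1,\dots,m-1$, and bound the probability that their sum is at most $1$ (e.g.\ via the moment generating function); either of these replaces your threshold step and yields the stated $\exp(-cn^{0.05})$.
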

\begin{proof}
	We assume $J=[0,1]$ to simplify notation, but the general case can be done similarly. For any fixed $i\in\N$ define $\wh L=\#T(i,J)$. The random variable $\wh L$ is stochastically dominated by a Yule-Furry process with rate $2$ at time $1$, since $i$ initiates a communication at rate 1 and is contacted by another node at rate 1. (Note that $\wh L$ is not exactly equal in law to a Yule-Furry process since the set of nodes is finite and the rate at which a new node is added to $T(i,[0,t])$ is equal to $2\cdot \#T(i,[0,t])(n-\#T(i,[0,t]))/(n-1)$ (not $2\cdot \#T(i,[0,t])$).) By \cite[page 180]{karlin66},
	\eqbn
	\P[\wh L=\ell] = \exp(-2)(1-\exp( -2 ))^{\ell-1}.
	\eqen
	Integrating this,
	\eqb
	\P[ \wh L\geq n^{0.05} ] \leq \exp( -(n^{0.05}-2)e^{-2} ),
	\label{eq99}
	\eqe
	and by taking a union bound over all $i\in[n]$ we obtain the lemma.
\end{proof}

\begin{lemma}
	For a universal constant $c>0$, $\P[\tau_{\op{consensus}}<0.1\log n]<\exp(-cn^{0.1})$.
	\label{prop70}
\end{lemma}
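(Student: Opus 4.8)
Set $t^\star := 0.1\log n$. The idea is to exhibit, with probability $1-\exp(-cn^{0.1})$, a set of $\Omega(n^{0.1})$ nodes that take no part in \emph{any} communication during $[0,t^\star]$; such a node still carries its initial bit, so if consensus had been reached before $t^\star$ all of them would have the majority bit $\frk b$. Since this set of nodes is, conditionally on its size, a uniformly random subset of $[n]$ that is independent of the bit assignment, the event that it avoids every minority node is exponentially unlikely.

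Concretely, I would call $i\in[n]$ \emph{isolated} if $\cP_i\cap[0,t^\star]=\emptyset$ and $i\neq \frk r(j,s)$ for every $j\in[n]$ and every $s\in\cP_j\cap[0,t^\star]$, and let $I$ be the set of isolated nodes. An isolated node never initiates a communication (its clock is silent) and is never the recipient of one (it is never a target of a clock ring), so $\sigma(i,t)=\sigma(i,0)$ and hence $\wh\sigma(i,t)=\frk b_i$ for all $t\in[0,t^\star]$. Therefore, on $\{\tau_{\op{consensus}}<t^\star\}$ there is a time $t'<t^\star$ at which all belief bits equal $\frk b$, which forces $\frk b_i=\frk b$ for every $i\in I$. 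The structural point I would emphasize is that $I$ is a deterministic function of the Poisson clocks $(\cP_j)_j$ and the targets $(\frk r(j,s))$, which are generated independently of the initial bits and with a law invariant under permutations of $[n]$; hence, conditionally on $|I|=L$, the set $I$ is uniform among $L$-subsets of $[n]$ and independent of which nodes carry which bit.

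Next I would show $|I|\geq n^{0.1}$ with probability $1-\exp(-c_1 n^{0.1})$ in three routine steps. (i) The set $C$ of nodes with silent clock on $[0,t^\star]$ is a union of $n$ independent events each of probability $e^{-t^\star}=n^{-0.1/\ln 2}$, so a Chernoff bound gives $|C|\geq n^{0.8}$ w.h.p. (ii) The number $N$ of clock rings on $[0,t^\star]$ is Poisson with mean $nt^\star$, so $N\leq 0.2\,n\log n$ w.h.p. (iii) Conditioning on the clock realization (hence on $C$ and $N$), each of the $N$ rings picks its target uniformly, and a node of $C$ lies in $I$ iff none of them picks it, which has probability at least $(1-\tfrac1{n-1})^N\geq n^{-1/2}$ for large $n$; the indicators ``$j\in C$ is never targeted'' are empty‑bin indicators of a balls‑in‑bins process and hence negatively associated, so a Chernoff bound for negatively associated variables gives $|I|\geq \tfrac12|C|\,n^{-1/2}\geq n^{0.1}$ off an event of probability $\exp(-c_1 n^{0.1})$.

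Finally I would combine the pieces. Since $\#\{i:\frk b_i\neq \frk b\}=(1-p)n\geq \eps n$ and, conditionally on $|I|=L$, $I$ is a uniform $L$-subset independent of the bits,
\[
\P\big[\frk b_i=\frk b\ \forall i\in I \,\big|\, |I|=L\big]=\prod_{k=0}^{L-1}\frac{pn-k}{n-k}\leq p^{L}\leq (1-\eps)^{L},
\]
so that
\[
\P[\tau_{\op{consensus}}<0.1\log n]\ \leq\ \P[|I|<n^{0.1}]\ +\ \sum_{L\geq n^{0.1}}\P[|I|=L]\,(1-\eps)^{L}\ \leq\ \exp(-c_1 n^{0.1})+(1-\eps)^{n^{0.1}},
\]
which is $\leq \exp(-cn^{0.1})$ for suitable $c>0$ and all large $n$ (using $|I|\geq n^{0.2}$, say, the term $(1-\eps)^{|I|}$ can be made $\leq e^{-n^{0.1}}$, so $c$ can be taken universal). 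The main obstacle is step (iii): the events ``node $j$ is untargeted'' are not independent across $j$, and a second-moment bound only gives polynomially small failure probability, so one genuinely needs negative association (or an equivalent occupancy-process concentration inequality) to reach the $\exp(-cn^{0.1})$ tail; one must also take care to define ``isolated'' purely through the clock/target randomness so that the independence-from-the-bits claim is airtight.
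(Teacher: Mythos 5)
Your proposal is correct, and it rests on the same central observation as the paper's proof — a node whose clock never rings and which is never the target of any $\frk r(j,t)$ before time $0.1\log n$ keeps its initial bit, so early consensus forces every such node to carry $\frk b$ — but the way you turn this into an exponential tail is genuinely different. The paper works directly with $Y$, the number of \emph{minority-bit} isolated nodes: it lower-bounds $\E[Y]$ by a power of $n$ and then shows $\P[Y=0]\le\exp(-cn^{0.1})$ via the Kutin--McDiarmid inequality for differences bounded only with high probability, after first restricting to the event that no single clock rings more than $n^{0.1}$ times in $[0,0.1\log n]$. You instead concentrate the \emph{total} number of isolated nodes $|I|$, using that, conditionally on the clock realization, the indicators ``$j$ is never targeted'' are empty-bin indicators of independent balls-in-bins throws and hence negatively associated (so Chernoff applies and gives $|I|\gtrsim n^{0.3}$ up to an $\exp(-\Omega(n^{0.3}))$ event), and then you dispose of the bits by symmetry: the clock/target randomness is exchangeable over $[n]$ and independent of the deterministic bit assignment, so conditionally on $|I|=L$ the set $I$ is a uniform $L$-subset and avoids all $\ge\eps n$ minority nodes with probability at most $(1-\eps)^L$. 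Both routes are sound and yield the stated bound (in both cases for $n$ large, with the $\eps$-dependence absorbed as you indicate); what yours buys is that you never need to control the maximal number of rings of a single node nor invoke the high-probability-bounded-differences variant of McDiarmid, at the price of invoking negative association for the occupancy process and being careful — as you correctly flag — that ``isolated'' is defined purely through the clocks and the potential targets $\frk r(j,t)$ so that the exchangeability and independence-from-the-bits claims are exact.
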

\begin{proof}
	Let $Y$ denote the number of nodes $i$ that have never communicated (as neither initiator nor recipient) before time $0.1\log n$ and for which the initial bit $\frk b_i$ is different from the majority bit, i.e., $\frk b_i\neq \frk b$. Then we clearly have
	\eqbn
	\P[\tau_{\op{consensus}}<0.1\log n] \leq \P[Y=0].
	\eqen
	For each node $i$ for which $\frk b_i\neq\frk b$, the probability that $i$ has not communicated with anyone before time $0.1\log n$ is at least the following
		\eqbn
	\P\big[ (\cP_i\cap[0,0.1\log n])=\emptyset]\cdot\P[\frk r(j,t)\neq i\,\forall j\in[n],t\in\cP_j\cap[0,0.1\log n ]\big]\geq n^{-0.1}\cdot n^{-0.1}=n^{-0.2}.
	\eqen
	Since there are at least $\ep n$ nodes with the wrong initial bit, this gives $\E[Y]\geq \ep n^{0.8}$. 
	
	For $i\in[n]$ let $X_i$ denote the randomness associated with $\cP_i\cap[0,0.1\log n]$ and $\frk r(i,t)$ for $t\in \cP_i\cap [0,0.1\log n]$. Then $Y$ is a function of the independent random variables $X_i$. Let $E$ be the event that no Poisson clock rings more than $n^{0.1}$ times during $[0,0.1\log n]$. By a union bound and \cite[Theorem A.1.15]{alon2004probabilistic}, for all sufficiently large $n$,
	\eqbn
	\P[E^c] \leq n\P[ \#\cP_1\cap [0,0.1\log n]>n^{0.1} ] \leq \exp(-n^{0.1}).
	\eqen
	Changing one $X_i$ cannot change $Y$ by more than $n$, and 
	on the event $E$ changing one $X_i$ does not change $Y$ by more than $2n^{0.1}+1$. By a variant of McDiarmid's inequality when differences are bounded with high probability \cite[Theorem 3.9]{kutin-mcdiarmid}, for all sufficiently large $n$,
	\eqbn
	\P[Y=0] \leq \P[ |Y-\E[Y]|>n^{0.65} ] 
	\leq 4\exp\Big(-\frac{(n^{0.65})^2}{8(2n^{0.1}+1)^2 n} \Big)
	\leq 4\exp(-0.05n^{0.1}).
	\eqen
	The lemma follows by choosing $c$ sufficiently small.
\end{proof}

\begin{lemma}
	There is a constant $a\in(0,1)$ depending only on $p$ such that with probability at least $1-\exp(-an^{0.05})$ the following holds for $t=0,\dots,s$ and all $\sigma\in A(t)$
	\eqb
		\# \{ i\in[n]\,:\,\sigma(i,t)=\sigma  \} \geq a^{2^{t}}n.
		\label{eq104}
	\eqe
	\label{prop36}
\end{lemma}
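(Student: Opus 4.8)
The plan is to prove the sharper claim that, with the stated probability, $\#\{i:\sigma(i,t)=\sigma\}\ge a_tn$ for all $t\le s$ and $\sigma\in A(t)$, where $a_0:=1-p=\min(p,1-p)$ and $a_t:=c_0a_{t-1}^2$ for a small universal constant $c_0\in(0,1)$ fixed below. Iterating the recursion gives $a_t=c_0^{2^t-1}a_0^{2^t}\ge (c_0a_0)^{2^t}$, so Lemma~\ref{prop36} follows by taking $a$ to be any positive constant with $a\le c_0a_0$ (which depends only on $p$) and at most a universal constant coming from the error bound. The base case $t=0$ is immediate, since $A(0)$ has exactly two states, of frequencies $p$ and $1-p\ge a_0\ge a$. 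For $t\ge1$ I argue inductively: assuming the bound at time $t-1$, I show that conditionally on $\cF_{t-1}$, with probability $\ge1-\exp(-c_1n^{0.05})$ (universal $c_1>0$) every $\tau\in A(t)$ has frequency $\ge a_t$ at time $t$; then I chain these single‑step bounds over $t=0,\dots,s$ via a first‑failure‑time union bound, absorbing the factor $(s+1)^2=n^{o(1)}$ into the exponent. The hypothesis $s\le\log\log n-c^{-1}$ enters only to guarantee $a_{t-1}\ge n^{-1/5}$ for all relevant $t$, which is exactly what makes the single‑step error bound beat $\exp(-n^{0.05})$; since $a_{t-1}$ decays doubly exponentially in $t$, this forces $c$ to be small depending on $p$.

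For the single step, fix $\tau\in A(t)$ and condition on $\cF_{t-1}$ on the event that all $\sigma\in A(t-1)$ have frequency $\ge a_{t-1}$. By the defining recursion for $A(k)$, $\tau$ has one of the forms $\Lambda_1(\sigma_1,\sigma_2)$ or $\Lambda_2(\sigma_1,\sigma_2)$ with $\sigma_1\in\cS'\cap A(t-1)$, $\sigma_2\in A(t-1)$, or $\Lambda'(\sigma)$ with $\sigma\in A(t-1)\setminus\cS'$, or already lies in $A(t-1)$. In the cases $\tau=\Lambda'(\sigma)$ and $\tau\in A(t-1)$ it is enough to keep a single node in place: a node in state $\sigma$ (resp.\ $\tau$) at time $t-1$ whose clock rings exactly once (resp.\ not at all) in $(t-1,t]$ and that is never contacted there is in state $\Lambda'(\sigma)$ (resp.\ $\tau$) at time $t$, and each fixed node with the right time‑$(t-1)$ state does this with probability bounded below by a universal constant, using that the number of times a given node is contacted in a unit interval is stochastically dominated by $\mathrm{Poisson}(1)$; hence in expectation there are $\ge(\text{const})\,a_{t-1}n\ge(\text{const})\,a_{t-1}^2n$ such nodes. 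In the case $\tau=\Lambda_1(\sigma_1,\sigma_2)$ I restrict to the short sub‑interval $(t-1,t-1+\eps]$ for a small constant $\eps$, so that the other nodes barely drift: for each of the $\ge a_{t-1}n$ nodes $v$ in state $\sigma_1$ at time $t-1$, let $G_v$ be the event that $v$'s clock rings exactly once in $(t-1,t]$, at a time $r_v\in(t-1,t-1+\eps]$, that the node $\frk r(v,r_v)$ it then contacts was in state $\sigma_2$ at time $t-1$ and is still in $\sigma_2$ at $r_v^-$, and that $v$ is never contacted during $(t-1,t]$. On $G_v$ the node $v$ is in $\sigma_1$ at $r_v^-$, moves to $\Lambda_1(\sigma_1,\sigma_2)=\tau$ at $r_v$, and stays there until $t$; using that $\frk r(v,r_v)$ is uniform and independent, that a given node stays put over an interval of length $\le\eps$ with probability $\ge1-O(\eps)$, and that $v$ is uncontacted over a unit interval with probability $\ge e^{-1}-o(1)$, one gets $\P[G_v\mid\cF_{t-1}]\ge c_2\eps\,a_{t-1}$ for a universal $c_2$ (the mild cross‑correlations among the sub‑events are handled by crude inclusion--exclusion, and $v$ being idle before $r_v$ is what decouples it from the state of $\frk r(v,r_v)$). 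Summing, $\E[\#\{i:\sigma(i,t)=\tau\}\mid\cF_{t-1}]\ge(a_{t-1}n)(c_2\eps a_{t-1})=:2c_0a_{t-1}^2n$; the case $\tau=\Lambda_2(\sigma_1,\sigma_2)$ is analogous, looking instead at the $\sigma_2$‑nodes contacted early by a still‑$\sigma_1$ node and otherwise undisturbed.

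It remains to promote this expectation bound to a high‑probability bound, and this is the step I expect to be the main obstacle. Given $\cF_{t-1}$, the count $N_\tau:=\#\{i:\sigma(i,t)=\tau\}$ is a function of the $n$ independent blocks of randomness $X_i:=(\cP_i\cap(t-1,t],\ \frk r(i,\cdot)|_{(t-1,t]})$, but it depends on them in a globally tangled way. Here Lemma~\ref{prop37} is the crucial tool: on the event $E((t-1,t])$ (and the event, of probability $\ge1-\exp(-\text{const}\cdot n^{0.05})$, that no clock rings more than $n^{0.05}$ times in $(t-1,t]$), modifying a single block $X_i$ can only affect the time‑$t$ states of the fewer than $n^{0.05}$ nodes influenced by $i$ during $(t-1,t]$, hence changes $N_\tau$ by at most $O(n^{0.05})$, while always changing it by at most $n$. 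Kutin and McDiarmid's bounded‑differences inequality with high‑probability differences \cite{kutin-mcdiarmid}, applied exactly as in the proof of Lemma~\ref{prop70}, then yields $\P[\,|N_\tau-\E N_\tau|>c_0a_{t-1}^2n\mid\cF_{t-1}\,]\le\exp(-c_1n^{0.05})$, where one checks that $(c_0a_{t-1}^2n)^2/n^{1.1}\gg n^{0.05}$ using $a_{t-1}\ge n^{-1/5}$. Since $\E N_\tau\ge2c_0a_{t-1}^2n$, this gives $N_\tau\ge a_tn$ with the required probability; a union bound over the $\le s$ choices of $\tau$ completes the inductive step, and chaining over $t$ (together with Lemma~\ref{prop37} for each of the $s$ unit intervals) completes the proof.
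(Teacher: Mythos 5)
Your proposal is correct and takes essentially the same route as the paper's proof: the same induction with the recursion $a_t\asymp a_{t-1}^2$, expectation lower bounds obtained by constructing constant-probability local events for each of the cases in the inductive definition of $A(t)$, concentration via the influence bound of Lemma \ref{prop37} combined with the Kutin--McDiarmid inequality for differences bounded with high probability, and a union bound over states and times, with the hypothesis on $s$ entering only to keep $a^{2^t}n$ polynomially large. The only difference is cosmetic: you freeze the contacted partner by restricting the ring to a short sub-interval $(t-1,t-1+\eps]$ and arguing it has not yet moved, whereas the paper instead requires the partner's clock to stay silent and the partner to be otherwise uncontacted over the whole unit interval, which yields the constant $e^{-4}$ directly and avoids the inclusion--exclusion step you sketch.
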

Before presenting the proof we observe that the right side of \eqref{eq104} is greater than $n^{0.9}$ for $t\leq s$ and $c$ sufficiently small
\eqb
a^{2^{t}}n 
\geq a^{2^{\log\log n-c^{-1}}}n
= n^{1+(\log a )2^{-1/c} }.
\label{eq105}
\eqe
\begin{proof}[Proof of Lemma \ref{prop36}]
	For $k\in\N\cup\{0 \}$ and $a_0,a_1\in(0,1)$, let $E_{a_0,a_1}(k)$ denote the event that all states in $A(k)$ are well represented at time $k$. More precisely,
	\eqbn
	E_{a_0,a_1}(k) = \{  \# \{ i\in[n]\,:\,\sigma(i,k)=\sigma  \} \geq a_0^{2^k} a_1^{2^k-1} n,\,\forall \sigma\in A(k) \}.
	\eqen
	Let $c\in(0,1)$ be the constant in Lemma \ref{prop37}. We will prove that for all $k\in\N$ and for $a_0,a_1\in(0,1)$ depending only on $p$, the following holds for all sufficiently large $n$
	\eqb
		\P[ E_{a_0,a_1}(k)^c;\,E_{a_0,a_1}(k-1)]<4s\exp(-cn^{0.05}/2).
	\label{eq46}
	\eqe
	
	We will first explain why \eqref{eq46} implies the lemma. Observe that $E_{a_0,a_1}(0)$ occurs for $a_0<2\ep$. This and \eqref{eq46} imply the following for $a\leq 0.5a_0a_1$ and all sufficiently large $n$
	\eqbn
	\begin{split}
		\P[ \# \{ i\in[n]\,:&\,\sigma(i,k)=\sigma  \} \geq a^{2^{k}}n,\,\forall 
		\sigma\in A(k), k=0,\dots,s]\geq \P\bigg[ \bigcup_{k=0}^{s} E_{a_0,a_1}(k) \bigg]\\
		&\geq 1 - \sum_{k=1}^{s} \P[ E_{a_0,a_1}(k)^c;\,E_{a_0,a_1}(k-1)]\\
		&\geq 1-4s^2\exp(-cn^{0.05}/2).
	\end{split}
	\eqen
	By choosing $a$ sufficiently small this implies the lemma, since for sufficiently small $a$ right side is $\geq 1-\exp(-an^{0.05})$.
	
	We will now prove \eqref{eq46}. Assume $E(k-1)$ occurs and let $\sigma\in A(k)$. There are four cases (recall that $\Lambda_1$ and $\Lambda_2$ denote the coordinate functions of $\Lambda=(\Lambda_1,\Lambda_2)$): 
	(i) $\sigma\in A(k-1)$, 
	(ii) $\sigma\in \{ \Lambda_1(\sigma',\sigma'')\,:\,\sigma',\sigma''\in A(k-1),\sigma'\in\cS' \}$, 
	(iii) $\sigma\in \{ \Lambda_2(\sigma',\sigma'')\,:\,\sigma',\sigma''\in A(k-1),\sigma'\in\cS'  \}$, and
	(iv) $\sigma\in \{ \Lambda'(\sigma')\,:\, \sigma'\in A(k-1)\setminus\cS' \}$.
	We will only consider (i) and (ii), since (iii) and (iv) can be treated similarly.
	
	For any $i\in[n]$ the probability that the Poisson clock of $i$ does not ring during a given interval of length $1$ and that no one initiates a communication with $i$ during this interval is at least $e^{-2}$. Therefore we get the following in case (i) by choosing $a_0$ and $a_1$ sufficiently small
	\eqbn
	\begin{split}
		\E[\# \{ i\in[n]\,:\,\sigma(i,k)=\sigma  \}\,|\, E_{a_0,a_1}(k-1)] 
		&\geq \P[\sigma(i,k)=\sigma\,|\,\sigma(i,k-1)=\sigma]\cdot a_0^{2^{k-1}}a_1^{2^{k-1}-1} n\\
		&\geq e^{-2} \cdot a_0^{2^{k-1}}a_1^{2^{k-1}-1} n \geq a_0^{2^{k}}a_1^{2^{k}}n.
	\end{split}
	\eqen
	
	For any $i\in[n]$ the probability of the following event is at least $e^{-4}$ for $J=[k-1,k]$  
	\begin{itemize}
		\item the Poisson clock of $i$ rings exactly once during $J$ (probability $e^{-1}$), 
		\item no one initiates a communication with $i$ during $J$ (probability $\geq e^{-1}$),  
		\item if $i$ chooses to communicate when its Poisson clock rings then the node $j$ that it contacts has a Poisson clock which does not ring at all during $J$ (probability $e^{-1}$), and
		\item no one else than $i$ initiates a communication with $j$ during $J$ (probability $\geq e^{-1}$).
	\end{itemize}
	On this event $i$ will have state $\Lambda_1(\sigma(i,k-1),\sigma(j,k-1) )$ at time $k$.
 	We get the following in case (ii) by choosing $a_0$ and $a_1$ sufficiently small
	\eqbn
	\begin{split}
		\E[\# \{ i\in[n]\,:\,&\sigma(i,k)=\sigma  \}\,|\, E(k-1)]
		\geq \P[ \sigma(i,k)=\sigma \,|\,\sigma(i,k)=\sigma' ]
		\cdot a_0^{2^{k-1}}a_1^{2^{k-1}-1} n \\
		&\geq 
		e^{-4} \cdot a_0^{2^{k-1}}a_1^{2^{k-1}-1}\cdot a_0^{2^{k-1}}a_1^{2^{k-1}-1} n \cdot \frac{n-1}{n}
		\geq 
		a_0^{2^{k}}a_1^{2^{k}-1}n.
	\end{split}
	\eqen
	Note that the extra factor of $\frac{n-1}{n}$ in the second term may be needed if $\sigma=\Lambda(\sigma',\sigma')$ for $\sigma'\in A(k-1)$. 

	Concentration of the above random variables $\# \{ i\in[n]\,:\,\sigma(i,k)=\sigma \}$ follow from a version of McDiarmid's inequality when differences are bounded with high probability \cite[Theorem 3.9]{kutin-mcdiarmid}. We write out the details for case (ii), but the other cases are treated in the exact same way. For $i\in[n]$ let $X_i$ denote the randomness associated with $\cP_i\cap[k-1,k]$ and $\frk r(i,k)$ for $k\in \cP_i\cap [k-1,k]$. Note that the random variables $X_i$ are independent. Let $\cF$ denote the $\sigma$-algebra generated by the random variables $\sigma(i,k-1)$ for all $i\in[n]$. Conditioned on $\cF$, the random variable $Y:=\# \{ i\in[n]\,:\,\sigma(i,k)=\sigma \}$ is a function of the random variables $X_i$. Changing one $X_i$ cannot change $Y$ by more than $n$, and on the event $E$ of Lemma \ref{prop37} changing one $X_i$ changes $Y$ by at most $n^{0.05}$. By Lemma \ref{prop37} and McDiarmid's inequality for differences bounded with high probability, the following holds for all sufficiently large $n$,
	\eqbn
		\P[ Y<0.5a_0^{2^{k}}a_1^{2^{k}-1}n ]
		\leq \P[ |Y-\E[Y\,|\,\cF]|> \sqrt{c}n^{0.57} ]
		\leq 4\exp\left( -cn^{0.05}/2  \right).
	\eqen
	Taking a union bound over all $\sigma\in A(k)$ we obtain \eqref{eq46}.
\end{proof}

\begin{lemma}
	Under the assumptions of Theorem \ref{prop:lower-async}, there is a $n_0\in\N$ depending only on $p$ such that for $n\geq n_0$ the set $A(s)$ contains no aware states.
	\label{prop39}
\end{lemma}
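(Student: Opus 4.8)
The plan is to argue by contradiction. Suppose $A(s)$ contains an aware state $\sigma_*$, and let $b_*\in\{0,1\}$ be the belief bit associated with $\sigma_*$ (the belief bit of a node is a deterministic function of its state). The structural point I would stress first is that the sets $A(k)$, and hence $A(s)=\bigcup_{k\geq 0}A(k)$ by \eqref{eq45}, are determined only by the two possible initial states and by the transition functions $\Lambda,\Lambda'$; they do not depend on the majority bit $\frk b$ or on $p$. So I would run the (fixed) protocol on an instance with majority bit $\frk b=1-b_*$ and advantage $p-1/2$ for some fixed $p\in[1/2+\eps,1-\eps]$, and note that $\sigma_*$ still belongs to $A(s)$ for this instance.

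Next I would invoke Lemma~\ref{prop36} with $t=s$; since $\sigma_*\in A(s)$, this yields that with probability at least $1-\exp(-an^{0.05})$ there are at least $a^{2^s}n$ nodes in state $\sigma_*$ at time $s$. Combining this with the estimate \eqref{eq105} and the hypothesis $s<\log\log n-c^{-1}$ of Theorem~\ref{prop:lower-async}, we get $a^{2^s}n\geq n^{0.9}\geq 1$, so w.h.p.\ there is at least one node $i$ in state $\sigma_*$ at time $s$. Because $\sigma_*$ is aware, $\wh\sigma(i,t')=b_*\neq\frk b$ for every $t'\geq s$, no matter which nodes $i$ communicates with afterwards; hence $\tau_{\op{consensus}}=\infty$ and consensus is never reached on this event.

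To conclude, I would observe that the probability the protocol reaches consensus on the correct bit is therefore at most $\exp(-an^{0.05})$. Since $a$ depends only on $p$, there is an $n_0$ depending only on $p$ such that $\exp(-an^{0.05})<1/2$ for all $n\geq n_0$; for such $n$ this contradicts the standing assumption in Theorem~\ref{prop:lower-async} that consensus on the correct bit is achieved with probability greater than $1/2$. Hence $A(s)$ contains no aware states once $n\geq n_0$.

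The argument is short, and I do not expect a serious obstacle. The one step that needs care is ensuring Lemma~\ref{prop36} actually produces \emph{at least one} node in state $\sigma_*$, i.e.\ that $a^{2^s}n\geq 1$; this is precisely where the constraint $s<\log\log n-c^{-1}$ (with $c$ chosen small in terms of $\eps$) and the bound \eqref{eq105} enter, and it is also the reason $n_0$ must be taken large. The conceptual heart is just the pairing of two $\frk b$-independent objects — awareness is a property of the transition functions alone, and $A(s)$ is the same set whether $\frk b=0$ or $\frk b=1$ — which lets the adversary always choose $\frk b$ to disagree with the frozen belief bit of a would-be aware state.
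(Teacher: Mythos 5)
Your proposal is correct and follows essentially the same route as the paper's proof: choose the majority bit to disagree with the frozen belief bit of the putative aware state, invoke Lemma \ref{prop36} (with the bound \eqref{eq105} coming from $s<\log\log n-c^{-1}$) to find w.h.p.\ a node in that state at time $s$, and contradict the assumption that consensus on the correct bit is reached with probability greater than $1/2$. Your added remark that $A(s)$ and awareness are determined by the transition functions alone, independently of $\frk b$, is exactly the (implicit) point the paper relies on.
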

\begin{proof}
	Let $\sigma\in A(s)$, and let $\frk b\in\{0,1 \}$ be such that the belief bit of a node with state $\sigma$ is $1-\frk b$. Choose initial states such that $\frk b$ is the majority bit. By Lemma \ref{prop36} there is an $n_0\in\N$ such that for $n\geq n_0$ it holds with probability greater than $1/2$ that we can find a node $i\in[n]$ such that $\sigma(i,s)=\sigma$. If $\sigma$ is an aware state, then $i$ would have belief bit $1-\frk b$ for all $t\geq s$, which is a contradiction to the assumption that the protocol reaches consensus with probability greater than $1/2$.
\end{proof}

In our proof of Theorem \ref{prop:lower-async} we need to ensure that a significant fraction of nodes cannot remain silent for long periods of time. If this was possible, we would not be able to link time elapsed with the number of communication events. To address this we introduce the notion of \emph{passive states}. 
\begin{definition}[Passive states]
	We say that a state $\sigma\in\cS$ is a \emph{passive} state if a node in this state will not initiate any communications until it has been contacted by another node. In other words, $\sigma$ is passive if $(\Lambda')^k(\sigma)\not\in\cS'$ for all $k\in\N$, where $\cS'$ and $\Lambda'$ are defined as in Section \ref{sec:model}.  Let $\cS^{\op{p}}\subset\cS$ denote the set of passive states.
\end{definition}

\begin{proof}[Proof of Theorem \ref{prop:lower-async}]
	The assertion about aware states is immediate by Lemma \ref{prop39}.
	
	To prove the assertion about the number of communications, consider two cases: 
	(i) $A(s)\cap\cS^{\op{p}}\neq\emptyset$ and 
	(ii) $A(s)\cap\cS^{\op{p}}=\emptyset$.
	
	(i) Let $\sigma_0\in  A(s)\cap\cS^{\op{p}}$ be an arbitrarily chosen passive state in $A(s)$. By Lemma \ref{prop36} (see \eqref{eq105}) and \eqref{eq45} it holds with probability converging to 1 as $n\rta\infty$ that
	\eqbn
		\#\cS_{\sigma_0}  \geq a^{2^s}n\geq n^{0.9},\qquad\text{where}\qquad
		\cS_{\sigma_0} :=  \{ i\in[n]\,:\,\sigma(i,s)=\sigma_0  \}.
	\eqen
	By symmetry in 0 and 1 we may assume without loss of generality that a node with state $\sigma_0$ estimate the majority bit to be 0. Since we require consensus to be reached both for $\frk b=0$ and $\frk b=1$, we may also assume that $\frk b=1$. Since nodes with state $\sigma_0$ are passive, before consensus can be reached, for each $j\in\cS_{\sigma_0}$ there must be some node who initiates a communication with $j$. Since $\#\cS_{\sigma_0}\geq n^{0.9}$, by a standard coupon collector argument, the number of communications needed to contact all these nodes stochastically dominates the sum of independent random variables $Y_1,\dots,Y_{\lceil n^{0.9}\rceil}$, where $Y_k$ has the law of a geometric random variable of parameter $k/n$. In particular, with $N_{\op{consensus}}=N(\tau_{\op{consensus}})$ as Section \ref{sec:model},
	\eqbn
	\P[N_{\op{consensus}}<n(\log n)^{0.9}] \leq 
	\P\left[ \left ( \sum_{k=1}^{\lceil n^{0.9}\rceil} Y_k \right ) < n(\log n)^{0.9} \right].
	\eqen
	Since
	\eqbn
	\E\left[ \sum_{k=1}^{\lceil n^{0.9}\rceil} Y_k \right] 
	=\sum_{k=1}^{\lceil n^{0.9}\rceil}\frac{n}{k}
	\geq 0.9n\log n
	\qquad\text{and}\qquad
	\op{Var}\left[ \sum_{k=1}^{\lceil n^{0.9}\rceil} Y_k \right] 
	=\sum_{k=1}^{\lceil n^{0.9}\rceil} \frac{n(n-k)}{k^2}
	\leq 2n^2,
	\eqen
	we obtain the desired bound by applying
	Chebyshev's inequality.\footnote{We could have obtained a better bound for the probability by evaluating $\E\big[ \exp\big( c\sum_{k=1}^{\lceil n^{0.9}\rceil} Y_k \big) \big]$ for an appropriate constant $c$ and applying Markov's inequality. However, the estimate we find here is sufficient for our purpose.}

	(ii)  By Lemma \ref{prop35} and since no states in $A(s)$ are passive, all nodes communicate at least every $s$ clock ring (either as an initiator or a recipient of the communication), so 
	\eqbn
	N(t) \geq 0.5 \sum_{i\in[n]} \lfloor s^{-1}\cdot\#([0,t]\cap\cP_i) \rfloor.
	\eqen
	The random variables in the sum on the right side are i.i.d. For $t=0.3\log n$ and $s\leq\log\log n-c^{-1}$ there is a universal constant $C$ such that 
	\eqbn
	\E[ \lfloor s^{-1}\cdot\#([0,t]\cap\cP_i) \rfloor] \geq t/s-1.
	\qquad\text{and}\qquad
	\op{Var}[\lfloor s^{-1}\cdot\#([0,t]\cap\cP_i)\rfloor ] \leq Ct/s^2.
	\eqen
	Using these estimates, Chebyshev's inequality gives that $N(0.1\log n)>0.01s^{-1}n\log n$ w.h.p.
	Using this and Lemma \ref{prop70} we get that the right side of the following inequality converges to 0 as $n\rta\infty$.
	\eqbn
	\P[N(\tau_{\op{consensus}})<0.01s^{-1}n\log n] 
	\leq
	\P[ N(0.1\log n)<0.01s^{-1}n\log n ]+\P[\tau_{\op{consensus}}<0.1\log n].
	\eqen
\end{proof}

\subsection{Synchronous model}
\label{sec:lower-sync}
We will not provide all details of the proof of Theorem \ref{prop:lower} since it is rather similar to the proof in the asynchronous case. Instead we will describe in which ways we need to change the argument.

Our argument will again make use of sets $A(t)\subset\cS$ for $t\in\N\cup\{0 \}$, but the definition and basic properties of the model are somewhat different as compared to the asynchronous case. Let $A(0)$ be the set on two elements defined exactly as in the asynchronous case. The inductive definition of $A(k)$ in terms of $A(k-1)$ is as follows
\eqbn
A(k) =
	\{ \Lambda_1(\sigma,\sigma'),\Lambda_2(\sigma,\sigma')\,:\,\sigma,\sigma'\in A(k-1),\sigma\in\cS' \}
	\cup\{ \Lambda'(\sigma)\,:\, \sigma\in A(k-1) \}.
\eqen
The following lemma can be easily proved by induction on $k$.
\begin{lemma}
	For any $\sigma\in\cS$ and $t\in\N\cup\{0 \}$ we have $\sigma\in A(k)$ if and only if 
	$\P[ \exists i\in[n] \,:\,\sigma(i,k)=\sigma]>0$.
\end{lemma}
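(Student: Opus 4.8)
The plan is to prove the ``if and only if'' by induction on $k$ (I write $k$ for the $t$ in the statement), treating the two implications separately at each stage. The base case $k=0$ is immediate: at time $0$ node $i$ is in the state determined by its initial bit $\frk b_i\in\{0,1\}$, so the states occurring at time $0$ are exactly the two elements of $A(0)$, and since $p\in[1/2+\eps,1-\eps]$ forces at least $\eps n$ nodes to carry each bit, both of these states actually occur (with probability $1$). For the inductive step I first dispose of the direction ``$\P[\exists i:\sigma(i,k)=\sigma]>0\Rightarrow\sigma\in A(k)$''. Fix a scheduler history of positive probability on which some node $i$ has state $\sigma$ at time $k$. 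A short case analysis on the role of $i$ at step $k$ --- idle, initiator, or recipient, where under the synchronous collision rules one must check which of the two parties is the actual initiator --- shows that $\sigma(i,k)$ equals $\Lambda'(\sigma(i,k-1))$, $\Lambda_1(\sigma(i,k-1),\sigma(j,k-1))$, or $\Lambda_2(\sigma(j,k-1),\sigma(i,k-1))$ for the communication partner $j$ (when there is one). By the inductive hypothesis the states $\sigma(i,k-1)$ and $\sigma(j,k-1)$ lie in $A(k-1)$, so $\sigma\in A(k)$ by the defining formula.

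The reverse direction, ``$\sigma\in A(k)\Rightarrow\P[\exists i:\sigma(i,k)=\sigma]>0$'', is where the real work lies, and the main obstacle is that realizing a transition of the form $\sigma=\Lambda_1(\tau,\tau')$ or $\sigma=\Lambda_2(\tau,\tau')$ requires \emph{two} nodes that are simultaneously in states $\tau,\tau'\in A(k-1)$ at time $k-1$, which the inductive hypothesis as stated (one state at a time) does not by itself provide. I would therefore strengthen the inductive claim to the following: for every finite tuple $(\tau_1,\dots,\tau_m)$ of states drawn from $A(k)$ there is a valid instance --- take $n$ large and pad the initial configuration with extra nodes so as to keep $p\in[1/2+\eps,1-\eps]$ --- admitting a positive-probability scheduler history under which distinct nodes are in states $\tau_1,\dots,\tau_m$ at time $k$. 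The original statement is the case $m=1$.

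The strengthened inductive step is carried out by running several disjoint groups of nodes in parallel. Conditioning on the positive-probability event that at every time step every active node picks its communication partner inside its own group, the groups evolve exactly as independent standalone instances (a uniform choice conditioned to land in a subset is uniform on that subset, and collisions get resolved within the group), so the inductive hypothesis applied group by group lets one group realize $\tau$, another realize $\tau'$, and further groups realize any additional prescribed predecessor states needed for $\tau_1,\dots,\tau_m$. Conditioning further on the positive-probability event that at step $k$ a $\tau$-node initiates a communication with a $\tau'$-node with no interfering collision then produces $\Lambda_1(\tau,\tau')$ on the initiator and, when $\tau'\notin\cS'$, $\Lambda_2(\tau,\tau')$ on the recipient, while $\Lambda'$-transitions are obtained by simply letting a node stay idle. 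The one point to be careful about is that the defining formula for $A(k)$ be read consistently with the synchronous collision conventions --- a node cannot be the recipient of a $\Lambda_2$-transition while it is itself initiating --- but modulo this bookkeeping the induction closes, and everything here should be routine once the strengthened hypothesis is in place.
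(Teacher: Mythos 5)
The paper gives no proof of this lemma (it is simply asserted to be ``easily proved by induction on $k$''), so what matters is whether your induction is itself sound. Your ``only if'' direction is fine, and your diagnosis of the real difficulty is exactly right: to realize a $\Lambda$-transition at step $k$ you need two distinct nodes simultaneously in prescribed states of $A(k-1)$, which the one-state-at-a-time hypothesis does not give; this simultaneity is precisely what the paper's quantitative statement (Lemma \ref{prop36s}) provides in strengthened, w.h.p.\ form. The genuine gap is in how you realize your strengthened hypothesis: you ``take $n$ large and pad the initial configuration with extra nodes.'' The probability in the statement refers to runs of the given $n$-node system, and the transition functions $\Lambda,\Lambda'$ --- hence the sets $A(k)$ --- are those of the protocol for that particular $n$; the theorem only assumes a protocol for each $n$ separately, so reachability in a padded instance of a different size is a different (and weaker) statement, and the $m=1$ case of your strengthened claim is not the lemma. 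The repair is to run your disjoint-groups construction \emph{inside} the given $[n]$: each initial bit is carried by at least $\eps n$ nodes, the dependency trees for all states of $A(k)$ involve only on the order of $s\cdot 2^{k}$ nodes (and the lemma is invoked only for $k$ of order $\log\log n$), and any prescribed pattern of group-internal partner choices has positive probability, so no enlargement of the instance is needed or legitimate.

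A second, smaller but real gap is the collision bookkeeping at the final step. Under rule (a) of the synchronous model an initiator whose target is itself an initiator exchanges nothing, so having a $\tau$-node contact a $\tau'$-node with $\tau'\in\cS'$ does \emph{not} produce $\Lambda_1(\tau,\tau')$ on the initiator, contrary to what you write (you impose the restriction $\tau'\notin\cS'$ only for $\Lambda_2$); for the lemma to be true at all, the $\Lambda$-part of the definition of $A(k)$ has to be read with the recipient state not in $\cS'$, and your construction should respect that rather than silently assume initiator--initiator contacts succeed. Likewise, $\Lambda'(\tau)$ for $\tau\in\cS'$ cannot be obtained by ``letting the node stay idle'': a node in a state of $\cS'$ always initiates, so you must engineer a rejected initiation with positive probability (for example two nodes in $\cS'$-states choosing each other), which again uses the simultaneity strengthening. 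With these two repairs --- groups run inside the given instance, and the three transition types realized in a way consistent with the collision rules --- the induction closes.
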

Note that unlike in the asynchronous case, the sets $A(k)$ are not necessarily increasing in the synchronous case. Define
\eqbn
\wh A =  \bigcup_{m=0}^{\log\log n-c^{-1}} A(m).
\eqen
The following variant Lemma \ref{prop36} still holds.
\begin{lemma}
	There is a constant $a\in(0,1)$ depending only on $p$ such that with probability at least $1-\exp(-an^{0.05})$ the following holds for $k=0,\dots,\log\log n-c^{-1}$ and $\sigma\in A(t)$
	\eqbn
	\# \{ i\in[n]\,:\,\sigma(i,k)=\sigma  \} \geq a^{2^{k}}n.
	\eqen
	\label{prop36s}
\end{lemma}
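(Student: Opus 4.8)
The plan is to transcribe the proof of Lemma \ref{prop36} almost verbatim, replacing the continuous‑time ``one round per unit interval'' dynamics by the synchronous ``one round per step'' dynamics; the synchronous version is in fact cleaner because within a single round information reaches only a node's immediate contact neighbours, so influence cannot cascade inside a step. Concretely, for constants $a_0,a_1\in(0,1)$ depending only on $p$ I would define the events
\[
E_{a_0,a_1}(k)=\Big\{\#\{i\in[n]:\sigma(i,k)=\sigma\}\ge a_0^{2^k}a_1^{2^k-1}n\ \ \text{for all }\sigma\in A(k)\Big\},
\]
and prove $\P[E_{a_0,a_1}(k)^c;\,E_{a_0,a_1}(k-1)]\le 4s\exp(-cn^{0.05}/2)$ for every $1\le k\le\log\log n-c^{-1}$, with $c$ as in Lemma \ref{prop37}. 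Chaining these bounds, using $|A(k)|\le s$ and $a_0^{2^k}a_1^{2^k-1}\ge a^{2^k}$ for $a\le\tfrac12 a_0a_1$, then gives the lemma exactly as in the asynchronous case (cf.\ the deduction of Lemma \ref{prop36} from \eqref{eq46}). The base case $E_{a_0,a_1}(0)$ holds deterministically once $a_0\le\eps$, since each of the two initial states is held by at least $\eps n$ nodes.

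For the inductive step, fix $k$, work on $E_{a_0,a_1}(k-1)$, and fix $\sigma\in A(k)$. By the definition of $A(k)$, the state $\sigma$ is produced from states of $A(k-1)$ through one of the transition mechanisms of the synchronous model: $\sigma=\Lambda'(\sigma')$, or $\sigma=\Lambda_1(\sigma',\sigma'')$ with $\sigma'\in\cS'$, or $\sigma=\Lambda_2(\sigma',\sigma'')$ with $\sigma'\in\cS'$, for some $\sigma',\sigma''\in A(k-1)$. In each case I would exhibit, for the density‑$\ge a_0^{2^{k-1}}a_1^{2^{k-1}-1}$ set of candidate nodes, a local configuration of the single round $k-1\to k$ of probability bounded below by an absolute constant that forces the node into state $\sigma$: for $\Lambda'$, that a node currently in state $\sigma'$ is contacted by no one (probability $\to e^{-1}$); for $\Lambda_1$, that a node currently in state $\sigma'$ contacts one of the at least $a_0^{2^{k-1}}a_1^{2^{k-1}-1}n$ nodes currently in state $\sigma''$; for $\Lambda_2$, that a node currently in state $\sigma''$ is contacted by exactly one node and that node is in state $\sigma'$. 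Summing over the candidate nodes and, in the latter two cases, over the choice of partner, yields $\E[\#\{i:\sigma(i,k)=\sigma\}\mid\cF]\ge a_0^{2^k}a_1^{2^k-1}n$ on $E_{a_0,a_1}(k-1)$ provided $a_0,a_1$ are small enough, where $\cF$ is generated by the time‑$(k-1)$ states.

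For concentration I would observe that, conditioned on $\cF$, the only randomness driving the round $k-1\to k$ is the family of independent variables $X_i$ consisting of the contact choice $\frk r(i,k)$ (when $\sigma(i,k-1)\in\cS'$) together with the collision tie‑break used at $i$ (when $\sigma(i,k-1)\notin\cS'$). Resampling a single $X_i$ changes the time‑$k$ state of at most three nodes — $i$ itself, the old target $\frk r(i,k)$, and the new target — since a synchronous round propagates information only to immediate neighbours. Hence $Y:=\#\{i:\sigma(i,k)=\sigma\}$ has $\cF$‑conditionally bounded differences at most $3$, and McDiarmid's inequality gives $\P[|Y-\E[Y\mid\cF]|>\tfrac12 a_0^{2^k}a_1^{2^k-1}n]\le 2\exp(-\Omega(n^{1-o(1)}))$, which for $k\le\log\log n-c^{-1}$ (so that $a_0^{2^k}a_1^{2^k-1}n\ge n^{1-o(1)}$, cf.\ \eqref{eq105}) is well inside the required bound; a union bound over the $\le s$ states of $A(k)$ completes the induction. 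One may alternatively keep the presentation parallel to Lemma \ref{prop37} by noting that the one‑round influence set of $i$ is $\{i\}\cup\{\frk r(i,k)\}\cup\{j:\frk r(j,k)=i\}$, whose size exceeds $n^{0.05}$ with probability at most $\exp(-\Omega(n^{0.05}\log n))$, and then invoking the high‑probability‑bounded‑differences McDiarmid inequality as in the asynchronous proof.

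The step I expect to require the most care is the collision bookkeeping in the $\Lambda_2$ case: one must check that the event that a node $j$ in state $\sigma''$ is contacted by exactly one node and that this node is in state $\sigma'$ has probability at least a constant multiple of $a_0^{2^{k-1}}a_1^{2^{k-1}-1}$ uniformly in $n$, and that the tie‑break rule together with the possibility that $j$ is simultaneously contacted by active nodes which then get rejected does not spoil this lower bound. Everything else is a direct transcription of the asynchronous argument, made if anything easier by the absence of within‑round cascading.
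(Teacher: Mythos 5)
Your plan is essentially the paper's own proof: the paper omits the argument for Lemma \ref{prop36s}, saying only that it runs exactly as the proof of Lemma \ref{prop36} with concentration being easier because one-step influence is deterministically bounded, so the standard McDiarmid inequality replaces Lemma \ref{prop37} --- which is precisely your plan (your alternative route via a one-round influence set is fine but unnecessary). One subcase to repair: in the synchronous model $A(k)$ contains $\Lambda'(\sigma')$ for \emph{all} $\sigma'\in A(k-1)$, including $\sigma'\in\cS'$ (a rejected initiator), and for such $\sigma'$ the forcing event is not ``contacted by no one'' (an initiator that is accepted applies $\Lambda_1$, not $\Lambda'$) but rather ``its chosen target is itself an initiator'', e.g.\ another node in state $\sigma'$, which on $E_{a_0,a_1}(k-1)$ has probability of order $a_0^{2^{k-1}}a_1^{2^{k-1}-1}$, so the recursion of the form $a_k\gtrsim a_{k-1}^2$ still closes and the rest of your argument is unaffected.
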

The proof is as in the asynchronous case and is therefore omitted. In fact, in the setting of Lemma \ref{prop36s} it is easier to argue concentration, since it is deterministically the case that no node influence more than two nodes (including itself) in one time step, so we do not need to prove Lemma \ref{prop37} and we can apply the standard version of McDiarmid's inequality for deterministically bounded differences.

\begin{lemma}
	In the setting of Theorem \ref{prop:lower} and for $s\leq \log\log\log n-c^{-1}$, there is a $n_0$ depending only on $p$ and $\eps$ such that for $n\geq n_0$ the set $\wh A$ contains no aware states.
	\label{prop39s}
\end{lemma}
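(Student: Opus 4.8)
The plan is to repeat the proof of Lemma \ref{prop39} essentially verbatim, with Lemma \ref{prop36s} in place of Lemma \ref{prop36}. Fix an arbitrary state $\sigma\in\wh A$ and choose $m\in\{0,1,\dots,\log\log n-c^{-1}\}$ with $\sigma\in A(m)$. Since the belief bit is determined by the state, an aware state carries a well-defined belief bit; suppose for contradiction that $\sigma$ is aware and let $1-\frk b$ denote this belief bit. The sets $A(k)$, and hence $\wh A$, are obtained deterministically from $A(0)$ and do not depend on which of $0,1$ is the majority bit, so having identified $\sigma$ we are free to now fix an instance with $p\in[1/2+\eps,1-\eps]$ and majority bit equal to $\frk b$ (possible once $n$ is large); by the hypothesis of Theorem \ref{prop:lower} the protocol still reaches consensus on $\frk b$ with probability greater than $1/2$.

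Next I would apply Lemma \ref{prop36s} at time $m$: with probability at least $1-\exp(-an^{0.05})$ there are at least $a^{2^m}n$ nodes in state $\sigma$ at time $m$. Exactly as in the computation preceding the proof of Lemma \ref{prop36} (see \eqref{eq105}), choosing $c$ small enough in terms of $p$ gives $a^{2^m}n\ge a^{2^{\log\log n-c^{-1}}}n = n^{1+(\log a)2^{-1/c}}\ge n^{0.9}\ge 1$ for $n$ large. Hence there is $n_0$, depending only on $p$ and $\eps$, so that for $n\ge n_0$ we have, with probability strictly greater than $1/2$, a node $i$ with $\sigma(i,m)=\sigma$. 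If $\sigma$ were aware this node would satisfy $\wh\sigma(i,t)=1-\frk b\neq\frk b$ for all $t\ge m$, so consensus on the correct bit would fail, contradicting that the protocol succeeds with probability greater than $1/2$. Thus $\sigma$ is not aware, and as $\sigma\in\wh A$ was arbitrary, $\wh A$ contains no aware states.

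I would also record the point of the hypothesis $s\le\log\log\log n-c^{-1}$, which is not used above but is what upgrades this to the assertion in Theorem \ref{prop:lower} that w.h.p.\ no node is ever aware: the map $A(k)\mapsto A(k+1)$ is a fixed function on the at most $2^s$ subsets of $\cS$, so the sequence $(A(k))_{k\ge0}$ enters a cycle within $2^s$ steps, and for $c$ small and $n$ large one has $2^s\le\log\log n-c^{-1}$, whence $\bigcup_{k\ge0}A(k)=\wh A$. By the lemma preceding Lemma \ref{prop36s} this union is exactly the set of states attained with positive probability at some time, so no attained state is aware.

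I do not expect a real obstacle here, since everything mirrors the asynchronous case. The only genuine care needed is in the choice of the constant $c$ (small enough in terms of $p$ and $\eps$ that simultaneously $a^{2^m}n\ge n^{0.9}$ over the whole range $m\le\log\log n-c^{-1}$ and $2^s\le\log\log n-c^{-1}$) and in the bookkeeping observation that $\wh A$ does not depend on the value of the majority bit, which is what makes it legitimate to choose the problem instance only after the state $\sigma$ has been exhibited.
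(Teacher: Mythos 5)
Your proof is correct and takes essentially the same route as the paper: the paper likewise combines Lemma \ref{prop36s} (every state of $\wh A$ occurs w.h.p.\ at some time $\le \log\log n-c^{-1}$, independently of the majority bit) with the contradiction argument of Lemma \ref{prop39}, and uses the same eventual-periodicity/pigeonhole observation (at most $2^s$ possible values of $A(k)$, and $2^s\le\log\log n-c^{-1}$) to get $\wh A=\bigcup_{k\ge0}A(k)$. Your remark that the hypothesis $s\le\log\log\log n-c^{-1}$ enters only through this last step, which is what the theorem needs to conclude that no node is ever in an aware state, is consistent with how the paper deploys it inside the same proof.
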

\begin{proof}
	Notice that since $A(k)$ is defined in terms $A(k-1)$, the sequence of sets $(A(k))_{k\in\N}$ is eventually periodic. Furthermore, the number of possible values of $A(k)$ is at most $2^s$, which implies that the period is at most $2^s$. We deduce from these properties that if $2^s\leq \log\log n-c^{-1}$ then $\wh A$ is equal to the union of $A(k)$ for all $k\in\N\cup\{0 \}$, so for any $i\in[n]$ and $t\in\R_+$ we have $\sigma(i,t)\in\wh A$. By Lemma \ref{prop36s} and since $a^{2^k}n>1$ we know that all states in $\wh A$ can be found in the model w.h.p.\ at some time $t\leq \log\log n-c^{-1}$, no matter what is the majority bit. We conclude by a similar argument as in the proof of Lemma \ref{prop39} that none of these states can be aware.
\end{proof}

The following lemma holds since the number of communications needed to reach agreement is $\Omega(n)$ for any $s$. 
\begin{lemma} 
	 Theorem \ref{prop:lower} holds for $s\geq 0.1\log\log n$.
	\label{prop29}
\end{lemma}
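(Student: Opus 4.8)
The plan is to reduce the bound to a clean linear lower bound and then verify it by a short symmetry argument. For $s\geq 0.1\log\log n$ one has $\log\log n\leq 10s$, so $ns^{-1}\log\log n\vee n\leq 10n$; hence it suffices to show that any protocol satisfying the hypotheses of Theorem \ref{prop:lower} incurs communication cost at least $\eps n$ — in fact deterministically, and for every admissible $\frk b$ and $p$ — since then the cost exceeds $(\eps/10)(ns^{-1}\log\log n\vee n)$. The statement about aware states in Theorem \ref{prop:lower} is vacuous in this regime, since $s\geq 0.1\log\log n>\log\log\log n-c^{-1}$ for all large $n$.

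Write $\sigma^{(0)},\sigma^{(1)}\in\cS$ for the initial state of a node whose initial bit is $0$, resp.\ $1$. The first step is to rule out that both $\sigma^{(0)}$ and $\sigma^{(1)}$ are passive (lie in $\cS^{\op{p}}$). If they were, then in the synchronous model no communication is ever initiated: inductively, if no communication has occurred through time $k-1$ then no node has been contacted, so at time $k$ the state of node $i$ just before its clock ring is $(\Lambda')^{k-1}(\sigma^{(\frk b_i)})$, which is not in $\cS'$ by passivity, so $i$ declines to initiate. The execution is then deterministic, and at time $t$ node $i$ has the belief bit of $(\Lambda')^{t}(\sigma^{(\frk b_i)})$. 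Since $p\in[1/2+\eps,1-\eps]$ both initial bits occur, so consensus would force the belief bits of $(\Lambda')^{t}(\sigma^{(0)})$ and of $(\Lambda')^{t}(\sigma^{(1)})$ to both equal $\frk b$ for all large $t$; that can hold for at most one of $\frk b\in\{0,1\}$, so for the other value consensus is reached with probability $0$, contradicting the hypothesis.

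Hence some initial state is not passive. I would then choose $c^{*}\in\{0,1\}$ with $\sigma^{(c^{*})}\notin\cS^{\op{p}}$ minimizing $k_{0}:=\min\{k\geq 1:(\Lambda')^{k-1}(\sigma^{(c^{*})})\in\cS'\}$ over the non-passive initial states; then $k_{0}<\infty$ by the definition of $\cS^{\op{p}}$, and the orbit of the other initial state under $\Lambda'$ has not entered $\cS'$ by time $k_{0}-1$. Running the same induction as above shows no communication is initiated at times $1,\dots,k_{0}-1$, so no node has been contacted by the end of time $k_{0}-1$; therefore at time $k_{0}$ every one of the at least $\min(p,1-p)n\geq\eps n$ nodes with initial bit $c^{*}$ is in the active state $(\Lambda')^{k_{0}-1}(\sigma^{(c^{*})})\in\cS'$ and simultaneously initiates a communication, each of unit cost (this is true even with the collision rules of the synchronous model). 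So the communication cost is at least $\eps n$, as needed.

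The only genuinely delicate point — and the reason the bound is not literally ``immediate'' — is the silent update map $\Lambda'$, which a priori could let a wrong-bit node correct its belief bit without ever communicating, seemingly avoiding the $\Omega(n)$ cost. The resolution, which is the crux of the argument above, is that any such self-correcting behaviour that achieves consensus for one majority bit must fail for the opposite one; combined with the full symmetry among equal-bit nodes and the deterministic synchronous clock, this pins down a linear set of nodes that must all initiate at one common deterministic time step.
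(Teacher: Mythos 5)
Your proposal is correct in substance, but it takes a genuinely different (and much more explicit) route than the paper, whose entire justification for Lemma \ref{prop29} is the one-line observation that $\Omega(n)$ communications are needed for any $s$, so that for $s\geq 0.1\log\log n$ the bound $c(ns^{-1}\log\log n\vee n)$ reduces to $\Omega(n)$. The implicit argument behind that one-liner is of the ``every node whose silent trajectory would be wrong must be contacted before consensus'' type (a counting argument over touched nodes, giving $\geq \eps n/2$ initiations). You instead exploit the deterministic synchronous dynamics directly: either both initial states are passive, which is ruled out because a communication-free execution is deterministic and cannot be correct for both $\frk b=0$ and $\frk b=1$, or there is a first deterministic time $k_0$ at which all $\geq \eps n$ nodes sharing one initial bit simultaneously initiate, each initiation costing $1$ even under the collision rules. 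Your reduction of the exponent ($ns^{-1}\log\log n\vee n\leq 10n$, constant $\eps/10$) and the observation that the aware-state clause is vacuous for $s\geq 0.1\log\log n$ are both fine.

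One caveat you should patch: you assert the $\eps n$ bound ``for every admissible $\frk b$ and $p$'' with cost measured by $N_{\op{consensus}}=N(\tau_{\op{consensus}})$, but in principle $\tau_{\op{consensus}}$ could be $<k_0$ for one value of $\frk b$ (all belief bits could already equal $\frk b$ from some time before $k_0$ onward and stay there), in which case the time-$k_0$ initiations are not counted. This is harmless, and the fix uses exactly your Case A symmetry idea: before $k_0$ no communication has occurred, so at time $k_0-1$ the belief of every node is determined by its initial bit alone, and these beliefs can all equal $\frk b$ for at most one value of $\frk b$; for the other value some node has the wrong belief on $[k_0-1,k_0)$, forcing $\tau_{\op{consensus}}\geq k_0$ and hence $N_{\op{consensus}}\geq \eps n$ in that instance. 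Exhibiting one admissible instance with large cost is all the theorem requires (this is also how the paper argues its asynchronous lower bound, where it fixes a convenient $\frk b$), so with this one extra sentence your proof is complete; just drop or qualify the ``for every admissible $\frk b$'' claim.
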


\begin{proof}[Proof of Theorem \ref{prop:lower}]
	The assertion about aware states follows from Lemma \ref{prop39s}.
	
	By Lemma \ref{prop29}, to prove the bound on the number of communications it is sufficient to prove Theorem \ref{prop:lower} for $s\leq 0.1\log\log n$. Similarly as in the asynchronous case, we consider two cases separately: 	
	(i) $\wh A\cap\cS^{\op{p}}\neq\emptyset$ and 
	(ii) $\wh A\cap\cS^{\op{p}}=\emptyset$. 
	
	Case (i) is treated similarly as in the asynchronous case by applying a coupon collector argument.
	
	In case (ii) first observe that by Lemma \ref{prop36s}, for at least one $\frk b\in\{0,1\}$,  consensus cannot have been reached at time $0.5\log\log n<\log\log n-c^{-1}$ w.h.p., since for all $\sigma\in A(\lceil 0.5\log\log n\rceil)$ there is a node $i\in[n]$ such that $\sigma(i,\lceil 0.5\log\log n\rceil)=\sigma$, and the set $A( \lceil 0.5\log\log n\rceil )$ does not depend on $\frk b$. Since no states in $\wh A$ are passive by assumption, all nodes communicate (as initiator or recipient) at least every $s$ time step. Therefore the number of communications before time $0.5\log\log n$ is at least $\lfloor 0.2s^{-1}n\log\log n-n\rfloor=\Theta(s^{-1}n\log\log n)$, so this is a lower bound for the number of communications needed to reach consensus.
\end{proof}

\section{Synchronous upper bound for $s=C (\log \log n)^2$}
\label{app:uppersync}
In this section we first describe precisely the protocol introduced in Section \ref{sec:intro-sync}, and then we give a detailed analysis of the protocol, which proves Theorem \ref{prop:uppersync}.

\subsection{The protocol}
\label{sec:uppersync-protocol}
 Recall that at any point in time a node is exactly one of the following six types: aspirant, expert, expert candidate, regular, informed, or terminal. 

Define 
$$
M=\lceil 2\log\log n\rceil,\qquad K=\lceil \CCC\log\log n \rceil.
$$
For any node $i\in[n]$ and $t\geq 0$ the state $\sigma(i,t)\in\cS$ of $i$ at time $t\geq 0$ is a tuple of integers such that the first element of the tuple indicates the type $\sigma_1(i,t)\in [6]$ of $i$ at time $t$. The remaining elements of the tuple are as follows for nodes of the various types.
\begin{itemize}
	\item aspirant: $\sigma(i,t)=(1,\eta,d,d',b',b)$, where 
	$\eta\in\{-1,0,\dots,6 \}$ is the future type, 
	$d\in[300\eps^{-2}K]$ is the time counter, 
	$d'\in[K]\cup\{0 \}$ is the trial counter, 
	$b'\in\{-1,0,1 \}$ is the test bit, and 
	$b$ is the belief bit.
	\item expert: $\sigma(i,t)=(2,m,d,b)$, where $m\in[M]$ is the level, $d\in[2K+3]$ is the time counter, and $b$ is the belief bit.
	\item regular: $\sigma(i,t)=(3,d,b)$, where $d\in[3MK]$ is the time counter and $b\in\{0,1 \}$ is the belief bit.
	\item terminal: $\sigma(i,t)=(4,b)$, where $b\in\{0,1 \}$ is the belief bit.
	\item expert candidate: $\sigma(i,t)=(5,b',b)$, where $b'\in\{-1,0,1 \}$ is the  test bit and $b\in\{0,1 \}$ is the belief bit.
	\item informed: $\sigma(i,t)=(6,b)$, where $b$ is the belief bit.
\end{itemize}
Note that if $t\in\N$ then $\sigma(i,t)$ refers to the state of $i$ \emph{after} all updates in time step $t$ are complete. In particular, the function $t\mapsto\sigma(i,t)$ is right-continuous. The state of $i$ immediately before time $t$ is denoted by $\sigma(i,t^-)$, i.e., $\sigma(i,t^-)=\lim_{t' \uparrow t}\sigma(i,t')$.

The protocol is divided into the following phases.

\paragraph{Expert selection phase}
This phase lasts for time $\lceil 300\eps^{-2}K\rceil$ and all nodes are aspirants throughout the phase.  At time $t=0$ each node $i\in[n]$ is an aspirant with state $\sigma(i,0)=(0,-1,1,1,-1,\frk b_i)$, where $\frk b_i$ is the initial bit.  

At each time step throughout this phase all nodes increase their value of $d$ by 1. This allows the nodes to know when the expert selection phase ends and the estimation phase begins.
In the first four time steps, each node's test bit will be set in such a way that the probability of obtaining a test bit $b'=1$ is equal to the probability of obtaining a test bit $b'=0$. 
These test bits will subsequently be used to select experts. 
Many nodes will end up in a third category, with test bit $b'=-1$; this test bit is effectively ignored during the expert selection phase.

At time $t=1$ each aspirant $i$ for which $b=0$ initiates a communication with a uniformly chosen node $j$. Two scenarios can occur: (i) The communication is rejected (since $j$ also initiates a communication or since someone else communicates with $j$ instead), or (ii) $i$ and $j$ communicate. In case (i) (resp.\ (ii)) $i$ sets $b'$ equal to 0 (resp.\ 1). 

At time $t=2$ each aspirant for which $b=0$ initiates a communication with a uniformly chosen node $j$, and again the two scenarios (i) and (ii) can occur. If $b'=1$ and (i) occurs, or if $b'=0$ and (ii) occurs, then the value of $b'$ is left unchanged. Otherwise $b'$ is set to $-1$.

Time steps $t=3,4$ are exactly as time steps $t=1,2$, except that the aspirants for which $b=1$ communicate instead.

In the remainder of this phase, each node will uniformly sample one node at each time step and count how many $b'=0$ bits it observes before encountering a node with $b'=1$. 
If a node encounters $K$ test bits 0 before the first test bit 1, then the node is labelled an expert. Otherwise, it becomes a regular node.
More precisely, at the remaining \emph{even} time steps $t=6,8,\dots,\lceil 300\eps^{-2}K\rceil-2$ of the expert selection phase each node $i\in[n]$ with state $\sigma(i,t^-)=(1,\eta,d,d',b'',b)$ does the following, in addition to increasing $d$ by 1.
\begin{itemize}
	\item If $\eta=-1$, $d'<K$, and $b=0$ then $i$ initiates a communication with some node $j$. Let $b'$ denote the test bit of $j$. If $b'=0$ then $d'$ increases by 1. If $b'=1$ then $\eta$ is set to $3$. 
	\item If $\eta=-1$, $d'=K$, and $b=0$ then $i$ sets $\eta=2$.
	\item Otherwise $i$ does not initiate a communication or update its state (except for increasing $d$ by 1).
\end{itemize}
Nodes which communicate because they were contacted by another node do not update their state.

At the remaining \emph{odd} times $5,7,\dots,\lceil 300\eps^{-2}K\rceil-1$ the same happens, but with the roles of $b=0$ and $b=1$ swapped.

At the end of the expert selection phase (i.e., at the time when all nodes have time counter $d=\lceil 300\eps^{-2}K\rceil$) the following happens for a node $i$ with state  $\sigma(i,t^-)=(1,\eta,\lceil 300\eps^{-2}K\rceil,d',b'',b)$:
\begin{itemize}
	\item Nodes $i$ for which $\eta=2$ become level 0 experts with state $(2,0,1,b)$.
	\item Nodes $i$ for which $\eta\in\{-1,3 \}$ become regular nodes with state $(3,1,b)$.
\end{itemize}

\paragraph{Estimation phase} The estimation phase is divided into $M$ rounds, where each round lasts for time $2K+3$. In round $m\in[M]$ the following happen, where the times $t$ indicates the time relative to the start of the round, so time $t$ means time $\lceil 300\eps^{-2}K\rceil+(m-1)(2K+3)+t$ for the protocol.

At $t=1$ each level $m-1$ expert $i$ initiates a communication with a uniformly chosen node $j$. The node $j$ becomes a level $m$ expert candidate with state $(5,b',b)$, where $b=\wh\sigma(i,t)$ is the belief bit of $i$ and $b'=-1$.

At $t=2$ each level $m-1$ expert $i$ initiates a communication with a uniformly chosen node $j$. If $j$ is an expert candidate then $j$ sets its test bit $b'$ equal to the belief bit $\wh\sigma(i,t)$ of node $i$. If $j$ is not an expert candidate (which means that $j$ is either a regular node or a level $m-1$ expert) then $j$ does not update its state. Expert candidates which are not contacted in this time step become regular nodes.

At $t=3$ each level $m-1$ expert $i$ initiates a communication with a uniformly chosen node $j$. If $j$ is an expert candidate with state $(5,b',b)$ for which $b'\neq -1$ then $j$ becomes a level $m$ expert with state $(2,m,1,b'')$, where $b''$ is the majority bit in $\{b,b',\wh\sigma(i,t) \}$. All nodes $i$ which do not become level $m$ experts in this time step become regular nodes with state $(3,1,b)$ (where $b$ is the belief bit of $i$ immediately before time $t$), so at the end of this time step all nodes are either level $m$ experts or regular nodes.

At $t=4,\dots,2K+3$ the following happen. Each level $m$ expert $i$ initiates a communication with a uniformly chosen node $j$. The node $j$ becomes a level $m$ expert with the same state as $i$, i.e., $\sigma(j,t)=\sigma(i,t)$. A round $m$ expert increases its counter $d$ by $1$ in each time step. This allows it to know when one round ends and the next round starts. 

Regular nodes also increase their counter $d$ by 1 at each time step. Since the total duration of the estimation phase is $M(2K+3)$, the regular nodes' counter will not reach its maximal value $3MK$ in the estimation phase.

At the end of round $M$ all level $M$ experts become informed with state $(6,b)$, where $b$ is the belief bit of the expert.

\paragraph{Pushing phase} At each integer time $t$ each informed node $i$ initiates a communication with some node $j$. If the connection is established and $j$ is \emph{not} a terminal node, then $j$ also becomes informed with the same state as $i$, while the state of $i$ is unchanged, i.e., $\sigma(i,t)=\sigma(i,t^-)$ and $\sigma(j,t)=\sigma(i,t)$. If the communication is rejected (since $j$ initiates a communication and/or communicates with another node instead) or $j$ is a terminal node, then $i$ becomes a terminal node with state $(4,\wh\sigma(i,t^-))$. Terminal nodes never initiate communications and never change their state when being contacted by other nodes, so they satisfy the criteria for terminal states as defined in Section \ref{sec:model}.

Let $\frk T_t\subset[n]$ (resp.\ $\frk I_t\subset[n]$) denote the set of terminal nodes (resp.\ informed nodes) at time $t$ of the pushing phase. Define $T$ to be the first time (relative to the beginning of the pushing phase) at which either the informed nodes or the terminal nodes reach a frequency of 0.1, i.e.,
\eqb
T = \inf\{ t\geq 0\,:\, \#\frk T_t\geq 0.1n \text{\,\,or\,\,} \#\frk I_t\geq 0.1n \}.
\label{eq121}
\eqe
We define the pushing phase to end immediately after time step $T+1$. Note that since the end of the pushing phase is defined by the occurrence of a global event, the nodes will not know when this phase ends.

Regular nodes behave just as in the estimation phase, i.e., they increase their counter $d$ by 1 (modulo $3MK$) at each time step. We will show (Lemma \ref{prop46}) that w.h.p.\ the pushing phase is sufficiently short such that no regular nodes will have a counter which reaches its maximal value $3MK$ during this phase.

\paragraph{Pulling phase} All nodes execute the exact same actions as in the pushing phase (in fact, this has to be the case since the nodes do not know when the pushing phase ends and the pulling phase starts). When the counter $d$ of a regular node $i$ reaches $3MK$ the node $i$ will initiate a communication with a uniformly chosen node $j$. If $j$ is a terminal node then $i$ also becomes a terminal node with the same state as $i$, i.e., $\sigma(i,t)=\sigma(j,t)$. 
If $j$ is not a terminal node then $i$ sets its counter $d$ to 1. In other words, a regular node $i$ initiates a communication every $3MK$ time step until it encounters a terminal node, upon which it becomes a terminal node as well.

\subsection{Analysis}
In this section we will prove Theorem \ref{prop:uppersync} by analyzing the protocol defined in Section \ref{sec:intro-sync}. The theorem is an immediate consequence of Lemmas \ref{prop:memory2}, \ref{prop:correct2}, and \ref{prop:comm2}.

\begin{lemma} 
	For the protocol described in Section \ref{sec:uppersync-protocol} there is a constant $C$ depending only on $\eps$ such that it is sufficient with $\lceil C(\log\log n)^2\rceil$ states of memory per node.
	\label{prop:memory2}
\end{lemma}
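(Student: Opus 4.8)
The plan is to mirror the proof of Lemma \ref{prop42}. Recall from Section \ref{sec:uppersync-protocol} that every node's state is a tuple whose first entry $\sigma_1(i,t)\in[6]$ records the type, and whose remaining entries range over explicitly specified bounded sets that depend only on the type. Consequently $\#\cS$ is at most the sum, over the six types, of the number of admissible tuples of that type, and it suffices to bound each of these six counts and add them.

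For the counting step I would go type by type, multiplying the number of possible values of each tuple coordinate. An aspirant has state $(1,\eta,d,d',b',b)$ with $\eta\in\{-1,0,\dots,6\}$ (at most $8$ values), $d\in[\lceil 300\eps^{-2}K\rceil]$, $d'\in[K]\cup\{0\}$ (at most $K+1$ values), $b'\in\{-1,0,1\}$, and $b\in\{0,1\}$, so there are at most $8\cdot\lceil 300\eps^{-2}K\rceil\cdot(K+1)\cdot 3\cdot 2$ aspirant states. An expert has state $(2,m,d,b)$ with $m\in[M]$, $d\in[2K+3]$, $b\in\{0,1\}$, giving at most $2M(2K+3)$ states; a regular node has state $(3,d,b)$ with $d\in[3MK]$, $b\in\{0,1\}$, giving at most $6MK$ states; and terminal nodes, expert candidates, and informed nodes contribute at most $2$, $6$, and $2$ states respectively. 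Summing and substituting $M=\lceil 2\log\log n\rceil$ and $K=\lceil \CCC\log\log n\rceil$, the aspirant term is $O(\eps^{-2}(\log\log n)^2)$, the expert and regular terms are each $O((\log\log n)^2)$, and the remaining three terms are $O(1)$, so $\#\cS\leq\lceil C(\log\log n)^2\rceil$ for a suitable $C$.

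There is essentially no obstacle here; it is pure bookkeeping. The only point worth stating carefully is that the constant $C$ depends only on $\eps$ and not on $n$: the dependence on $n$ enters solely through $M$ and $K$, both of which equal $\Theta(\log\log n)$ with absolute implied constants, and the only $\eps$-dependence is the factor $300\eps^{-2}$ appearing in the range of the aspirant time counter $d$; hence the entire bound is absorbed into $\lceil C(\log\log n)^2\rceil$ with $C=O(\eps^{-2})$.
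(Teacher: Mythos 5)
Your proof is correct and follows essentially the same route as the paper, which simply notes that the bound is immediate from counting the admissible state tuples for each of the six node types (with aspirants, experts, and regular nodes each contributing $O_\eps((\log\log n)^2)$ states); your write-up just carries out that bookkeeping explicitly. The observation that the constant depends only on $\eps$ through the factor $300\eps^{-2}$ in the aspirant counter range is also accurate.
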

\begin{proof}
	The lemma is immediate by considering the memory usage for each of the six types of nodes. Note that aspirants, experts, and regular nodes all require order $(\log\log n)^2$ states of memory. (See the proof of Lemma \ref{prop42} for a similar more detailed calculation.)
\end{proof}

For $k\in[2K]\cup\{0 \}$ and $m\in[M]$ let $\cE^k(m)$ denote the set of experts immediately after step $k+3$ of round $m$. In particular, $i\in\cE^0(m)$ if and only if $i$ became a level $m$ expert upon being contacted by three level $m-1$ experts in the beginning of the $m$th round. Furthermore, a node $i$ communicates in the first three time steps of round $m$ if and only if $i\in\cE^{2K}(m-1)$. Let $\cE^{2K}(0)=\cE(0)$ denote the set of level $0$ experts at the end of the expert selection phase. Define $\alpha^k= n\cdot0.5^{3K}\cdot 2^{k}$ for $k\in[2K]\cup\{0 \}$. For $k=2K$ and $m=0$ set $\alpha=\alpha^{2K}$. Recall that $p\in[1/2+\ep,1-\ep]$ and $p-\frac{1}{2}$ denotes the initial advantage of the majority bit $\frk b\in\{0,1 \}$.
\begin{lemma} 
	W.h.p.\ $|\#\cE(0)-\alpha|<\alpha (\log n)^{-\CCC}$. Furthermore, w.h.p.,
	\eqb
	\begin{split}
		|\#\{i\in\cE(0)\,:\, \frk b_i=\frk b \} - n0.5^Kp|&<0.5\alpha (\log n)^{-\CCC},\\
		|\#\{i\in\cE(0)\,:\, \frk b_i\neq\frk b \} - n0.5^K(1-p)|&<0.5\alpha (\log n)^{-\CCC}.
	\end{split}
	\label{eq137}
	\eqe
	\label{prop53}
\end{lemma}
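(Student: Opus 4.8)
The plan is to understand the distribution of the \emph{test bits} $b'\in\{-1,0,1\}$ produced by the first four time steps, then compute for a fixed node the probability of observing $K$ test bits $0$ before the first test bit $1$ within the phase, and finally pass from this expectation to the high‑probability bounds by a second‑moment estimate. At steps $t=1,2$ precisely the nodes with initial bit $0$ initiate, and at steps $t=3,4$ precisely those with initial bit $1$ initiate, so the set of initiators at each of these four steps is \emph{deterministic}. Hence, for a fixed node $i$ with $\frk b_i=0$, the indicators $S_1,S_2$ that its communications at steps $1$ and $2$ are accepted are i.i.d., say $\mathrm{Bern}(q)$, and the rule defining $b'$ gives $b'_i=1\iff(S_1,S_2)=(1,0)$ and $b'_i=0\iff(S_1,S_2)=(0,1)$; this is exactly von Neumann's unbiasing \cite{vN51}, so $\P[b'_i=0]=\P[b'_i=1]=q(1-q)$. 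A balls‑into‑bins/Poisson estimate, using that the density of initiators lies in $[\eps,1-\eps]$, shows $q$ and the corresponding probability $q'$ for initial‑$1$ nodes lie in $[c_1,1-c_1]$ for some $c_1(\eps)>0$; in particular $\P[b'_i\in\{0,1\}]\ge c_0$ for some $c_0(\eps)>0$ and every $i$. Writing $Z_b=\#\{i:b'_i=b\}$, we get $\E[Z_0]=\E[Z_1]$. Moreover two distinct nodes' test bits are independent except on an event of probability $O(1/n)$ (their step‑$1$ or step‑$2$ communications interact only through a shared target; bits of an initial‑$0$ node and an initial‑$1$ node are exactly independent), so $\mathrm{Var}(Z_0-Z_1),\mathrm{Var}(Z_0)=O(n)$, and by Chebyshev the good event $G=\{|Z_0-Z_1|\le n^{0.6},\ Z_0+Z_1\ge \tfrac12 c_0 n\}$ holds w.h.p.

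\textbf{Probability a given node becomes a level $0$ expert.} After step $4$ the test bits are frozen. From step $5$ on, a node $i$ of the right parity repeatedly samples a uniform $j\ne i$; call the sample \emph{effective} if $i$'s communication is accepted and $b'_j\ne-1$. On $G$, ``effective'' is a function of the rejection event (determined by the other initiators) and of $\{b'_j\ne-1\}$, while the \emph{value} of an effective sample equals $b'_j$, which is $0$ with probability $(Z_0-\1_{b'_i=0})/(Z_0+Z_1-\1_{b'_i\ne-1})=\tfrac12+O(n^{-0.4})$; these two depend on disjoint randomness, so up to $O(1/n)$ corrections from re‑sampling a node, $i$'s effective values are i.i.d.\ $\mathrm{Bern}(\tfrac12+O(n^{-0.4}))$ and independent of the \emph{times} of its effective samples. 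Node $i$ becomes a level $0$ expert iff its first $K$ effective values are all $0$ and at least $K$ of its $\Theta(\eps^{-2}K)$ samples are effective. The first event has probability $0.5^{K}(1+O(Kn^{-0.4}))$. For the second, the per‑step probability that a sample is effective is at least a constant $c_3(\eps)>0$ throughout the phase (the initiator pool only shrinks, so acceptances only become more likely), so $i$'s number of effective samples stochastically dominates a $\mathrm{Bin}(\lfloor150\eps^{-2}K\rfloor,c_3)$ variable whose mean exceeds $CK$ for a large absolute constant $C$; a Chernoff bound then gives failure probability $\le e^{-cK}$ with $c(\eps)>\ln 2$, hence $o((\log n)^{-5})$. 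Since the only $\frk b_i$‑dependence above is the negligible $O(1/n)$ self‑exclusion term,
\eqbn
\P[\,i\in\cE(0)\mid G\,]=0.5^{K}\bigl(1+O(Kn^{-0.4})+o((\log n)^{-5})\bigr)\qquad\text{uniformly in }i.
\eqen

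\textbf{Concentration and \eqref{eq137}.} Summing over $i$ and using $\alpha=n0.5^{3K}2^{2K}=n0.5^{K}$ gives $\E[\#\cE(0)\mid G]=\alpha(1+o((\log n)^{-5}))$. On $G$ the indicators $\{\1_{i\in\cE(0)}\}$ are weakly dependent: two nodes' statuses correlate only via the probability‑$O(K^2/n)$ event of drawing a common node among their first $K$ effective samples, contributing $O(K^2n^{-1})\,0.5^{2K}$ to each pairwise covariance; hence $\mathrm{Var}(\#\cE(0)\mid G)=O(\alpha)+O(n^2\cdot K^2n^{-1}0.5^{2K})=O(\alpha)$ (using $K^2\alpha/n=o(1)$). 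Chebyshev yields $|\#\cE(0)-\alpha|\le\alpha^{0.6}+\alpha\,o((\log n)^{-5})=o(\alpha(\log n)^{-5})$ w.h.p., which is the first assertion. For \eqref{eq137}, rerun the same computation with the sum over $i$ restricted to the $pn$ nodes with $\frk b_i=\frk b$, resp.\ the $(1-p)n$ nodes with $\frk b_i\neq\frk b$: the per‑node expert probability does not depend on $\frk b_i$ beyond $O(1/n)$, so these restricted counts have means $n0.5^{K}p(1+o((\log n)^{-5}))$ and $n0.5^{K}(1-p)(1+o((\log n)^{-5}))$, and the variance bound is unchanged.

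\textbf{Main obstacle.} The delicate part is the synchronous collision rule. First, von Neumann unbiasing must survive the fact that acceptances of distinct nodes sharing a target are correlated; the rescue is that the initiator set at each of steps $1$–$4$ is deterministic, which makes one node's two acceptance indicators genuinely i.i.d. Second, the target error $(\log n)^{-5}=\Theta(0.5^{K})$ is too small to permit a union bound over the $n$ nodes of the ``a node never completes $K$ effective reads'' event, which is only $(\log n)^{-\Theta(1)}$‑small; one must instead argue at the level of $\E[\#\cE(0)]$ and verify that the constant $300\eps^{-2}$ in the phase length makes this probability $o((\log n)^{-5})$ \emph{per node}, i.e.\ forces the Chernoff exponent above $\ln 2$. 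The remaining weak‑dependence second‑moment estimates (for $Z_0-Z_1$, for $\#\cE(0)$, and for its bit‑restricted versions) are routine but must be carried out carefully enough to keep every error term well below $\alpha(\log n)^{-5}$.
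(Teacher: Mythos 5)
Your proposal follows essentially the same route as the paper's proof: von Neumann unbiasing for the test bits set in steps $1$--$4$ (using that the initiator sets at those steps are deterministic), concentration of the test-bit counts so that each effective read is $0$ with probability $\tfrac12+O(n^{-\Theta(1)})$, a per-node expert probability $\approx 0.5^K$, a Chernoff/geometric estimate showing that the phase length $\lceil 300\eps^{-2}K\rceil$ makes the ``did not complete $K$ effective reads'' loss negligible, concentration of $\#\cE(0)$, and the same computation restricted to the two parity classes to get \eqref{eq137}. The differences are in the tooling: the paper introduces the idealized infinite-time set $\wh\cE(0)$, computes $\P[i\in\wh\cE(0)\,|\,\cF]=(V^b_0/(V^b_0+V^b_1))^K$ conditionally on the test-bit counts, bounds $\#(\wh\cE(0)\setminus\cE(0))$ by Markov's inequality from a per-node bound $0.5^{4K}$, and uses McDiarmid for concentration, whereas you fold the truncation in multiplicatively (via the claimed independence of values and effectiveness times) and use Chebyshev with pairwise covariance estimates. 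Both variants work; the paper's $\wh\cE(0)$/Markov split avoids having to argue the values-versus-times factorization, while your version avoids introducing the idealized set.

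One accounting point deserves care. In your display $\P[i\in\cE(0)\,|\,G]=0.5^K\bigl(1+O(Kn^{-0.4})+o((\log n)^{-5})\bigr)$ the $o((\log n)^{-5})$ is a \emph{relative} error, but the Chernoff bound you invoke (``exponent above $\ln 2$'') only gives an \emph{absolute} failure probability $o((\log n)^{-5})=o(0.5^K)$; used additively, this yields only $\P[i\in\cE(0)]=0.5^K(1+o(1))$, which is not enough since the lemma's tolerance is $\alpha(\log n)^{-5}\approx n(\log n)^{-10}$. Your statement is justified only through the asserted independence of the effective values and the effectiveness/timing pattern (so that the truncation loss multiplies rather than adds), and that independence, at the stated accuracy, is the real technical content. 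A more robust fix, and in effect what the paper does, is to note that with $\geq 150\eps^{-2}K-O(1)$ attempts each succeeding with probability $\Theta(\eps^2)$, the per-node truncation probability is at most $0.5^{4K}$, so even the crude additive bound suffices; in other words the correct threshold without the factorization is ``exponent above $2\ln 2$,'' and the constant $300\eps^{-2}$ in the phase length is generous enough to deliver it.
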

\begin{proof}
	It is sufficient to prove \eqref{eq137} since the first assertion of the lemma follows from these two estimates.
	Recall that in the expert selection phase (after the first four initial steps) each node $i$ repeatedly asks a uniformly chosen other node for their test bit $b'$. If $i$ observes $K$ test bits $b'=0$ before the first test bit $b'=1$, then it becomes a level 0 expert at the end of the expert selection phase, while if $i$ observes a test bit $b'=1$ before $K$ test bits $b'=0$ then $i$ becomes a regular node. If none of these two events occur before the end of the expert selection phase, then $i$ becomes a regular node. Let $\wh\cE(0)$ denote the set of experts we would have obtained if the expert selection phase lasted for infinitely many time steps, such that a node becomes an expert if and only if it observes $K$ test bits $b'=0$ before the first test bit $b'=1$. 
	This set can differ from $\cE(0)$ because during the expert selection phase, which lasts only $\lceil 300\eps^{-2}K\rceil$ time steps, a node may encounter so many nodes with $b'=-1$ that it neither reaches $K$ nodes with $b'=0$ nor any node with $b'=1$ before the phase is over.
	For $b=0,1$ define
	\eqbn
	\wh\cE^b(0) = \{ i\in\wh\cE(0)\,:\,\frk b_i=b  \}.
	\eqen
	
	When the test bits are defined at the very beginning of the expert selection phase, the following two events are equally likely: 
	(i) happens first and then (ii) (this gives $b'=0$), and 
	(ii) happens first and then (i) (this gives $b'=1$). It follows that if $b'$ is the test bit of a uniformly sampled node, then $\P[b'=0\,|\,b'\in\{0,1 \} ]=\P[b'=1\,|\,b'\in\{0,1 \} ]=1/2$.
	The $b'$ are not completely independent for different nodes. However, an application of McDiarmid's inequality implies that if $V^b_0$ (resp.\ $V^b_1$) for $b=0,1$ is the number of nodes for which $b'=0$ (resp.\ $b'=1$) and $\frk b_i=b$ then w.h.p.\ $|V^b_1/(V^b_0+V^b_1)-0.5|<n^{-0.45}$ for $b=0,1$; denote this event by $E$. Let $\cF$ denote the $\sigma$-algebra generated by $V^b_0$ and $V^b_1$ for $b=0,1$. For any $i\in[n]$ with $\frk b_i=b$ the following holds for $X$ a geometric random variable of parameter $V^b_1/(V^b_0+V^b_1)$ 
	\eqbn
	\P[i\in\wh\cE(0)\,|\,\cF ]
	=\P[X>K\,|\,\cF]
	=\Big(\frac{V^b_0}{V^b_0+V^b_1} \Big)^{K}.
	\eqen
	On the event $E$ the right side deviates from $0.5^K$ by at most $0.5^K\cdot 3n^{-0.45}K$, so 
	$$
	|\E[ \#\wh\cE^{\frk b}(0)]-p\alpha|<0.5^K\cdot 3n^{0.55}K,\qquad
	\E[ \#\wh\cE^{1-\frk b}(0)]-(1-p)\alpha|<0.5^K\cdot 3n^{0.55}K.
	$$ 
	By McDiarmid's inequality applied conditionally on $\cF$, w.h.p.
	\eqb
	|\#\wh\cE^{\frk b}(0)-p\alpha|<n^{0.6},\qquad
	|\#\wh\cE^{1-\frk b}(0)-(1-p)\alpha|<n^{0.6}.
	\label{eq47}
	\eqe
	
	In order to conclude the proof we need to bound $\#(\cE^b(0)\setminus\wh\cE^b(0))$ for $b=0,1$. First we will bound from below the number of nodes for which the test bit $b'=0$ after the first four time steps and for which the initial bit $\frk b_i=0$. By symmetry, the exact same bound holds for $\frk b_i=1$, and by the result of the preceding paragraph, the same bound also holds for test bit $b'=1$. Consider the very first step $t=1$ of the expert selection phase, and let $i\in[n]$ be a node with $\frk b_i=0$ which initiates a communication with a uniformly sampled node $j\in[n]$ at time $t=1$. 
	Recall the two scenarios (i) and (ii) described in the specification of the expert selection phase, and let $\gamma\in\{p,1-p\}\in[\eps,1-\ep]$ denote the fraction of the nodes with initial bit 0. The event (i) occurs if $\frk b_j=0$, which has probability $(\gamma n-1)/(n-1)$. 
		The event (ii) occurs if $\frk b_j=1$ (which has probability $n(1-\gamma)/(n-1)$) and if no other node initiates a communication with $j$ at time $t=1$ (which has probability $(1-(n-1)^{-1})^{\gamma n-1}$).	Combining these estimates and using that $\ep<1/4$ we see that for all sufficiently large $n$, the probability that $b'=0$ (conditioned on the event $\frk b_i=0$) is
	\eqbn
	\geq (\gamma n-1)/(n-1) \cdot n(1-\gamma)/(n-1) \cdot (1-(n-1)^{-1})^{\gamma n-1}
	\geq 0.3\eps.
	\eqen
	
	Every time a node $i$ with $\frk b_i=0$ initiates a communication with a uniformly chosen node $j$ it will set $\eta=3$ if (a) $\frk b_j=1$, (b) no one else initiates a communication with $j$ at the same time, and (c) the test bit of $j$ equals 1. The conditions in (a) and (b) ensure that the two nodes communicate. The event in (a) has probability $>\alpha$ (with $\alpha$ as in the previous paragraph).
	The event in (b) has probability $\geq(1-(n-1)^{-1})^{\alpha n}$ conditioned on the occurrence of the event in (a).
	The event in (c) has probability $>0.3\eps$ conditioned on (a) and (b). Combining these estimates, for sufficiently large $n$ the probability that $i$ will set $\eta=3$ is at least $0.03\eps^2$. Therefore $\P[i\in \wh\cE(0)\setminus\cE(0)]\leq \P[ Y>\lceil 300\eps^{-2}K\rceil-5)/2 ]$, where $Y$ is a geometric random variable with success probability $0.03\eps^2$. This estimate also holds for $\frk b_i=1$ by a similar argument. For all sufficiently large $n$,
	$$
	\P[ Y>(\lceil 300\eps^{-2}K\rceil-5)/2 ]
	=
	(1-0.03\eps^2)^{(\lceil300\eps^{-2}K\rceil-5)/2} 
	< 
	\exp(-0.03\eps^2 \cdot(\lceil 300\eps^{-2}K\rceil-5)/2 )
	<
	0.5^{4K}.
	$$ 
	By Markov's inequality,
	\eqb
	\P[ \#( \wh\cE^{\frk b}(0)\setminus\cE^{\frk b}(0) )>n0.5^{3K} ] 
	\leq 
	\frac{pn \P[i\in \wh\cE^{\frk b}(0)\setminus\cE^{\frk b}(0)] }{n0.5^{3K}}
	< \frac{pn0.5^{4K}}{n0.5^{3K}}
	=p0.5^{K}.
	\label{eq48}
	\eqe
	By a union bound, \eqref{eq47}, and \eqref{eq48},
	\eqbn	
	\begin{split}
		\P[ |\#\cE^{\frk b}(0)-p\alpha|\geq 0.5\alpha (\log n)^{-\CCC} ]
		&\leq
		\P[ |\#\cE^{\frk b}(0)-\#\wh\cE^{\frk b}(0)|\geq n0.5^{3K} ]
		+\P[ |\#\wh\cE^{\frk b}(0)-p\alpha|\geq n^{0.6} ]\\
		&\leq p 0.5^K + \exp(-2n^{0.2}).
	\end{split}
	\eqen
	This implies the first estimate of \eqref{eq137}. The second estimate of \eqref{eq137} follows by a similar argument.
\end{proof}

\begin{remark}	
	No regular node will initiate a communication before time $3MK+\lceil 300\eps^{-2}K\rceil-1$. In particular, if $T<3MK+\lceil 300\eps^{-2}K\rceil-2$, where $T$ is defined as in \eqref{eq121}, then the only phase during which regular nodes will initiate communications is the pulling phase.	
	\label{prop47}
\end{remark}

For all $m\in[M]$ and $k\in[2K]\cup\{0 \}$ define the event $E_m^k$ by 
\eqbn
	E_m^k=\{ |\#\cE^k(m)-\alpha^k|<\alpha^k\beta_m^k \}, \qquad\text{where}\qquad
	\beta_m^k= 5^m (\log n)^{-\CCC} + (k+1) n^{-0.3}.
\eqen
\begin{lemma} 
	W.h.p., the event $E_m^k$ occurs for all $m\in[M]$ and $k\in[2K]\cup\{0 \}$.
\label{prop51b}
\end{lemma}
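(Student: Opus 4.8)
\medskip\noindent\textbf{Proof proposal.}
The plan is a double induction that tracks the single family of random variables $\#\cE^k(m)$ across the estimation phase: an outer induction on the round number $m$ and, within each round, an inner induction on the step counter $k\in[2K]\cup\{0\}$, processed in the order these quantities are generated by the protocol ($\cE(0)=\cE^{2K}(0)$, then $\cE^0(1),\cE^1(1),\dots,\cE^{2K}(1),\cE^0(2),\dots$). Following the template of \eqref{eq46}, at the $(m,k)$-th step I would bound $\P[\,(E_m^k)^c;\ \text{all earlier events occur}\,]$ by conditioning on the $\sigma$-algebra $\cF$ generated by the configuration of all nodes immediately before that step: on the earlier events the relevant ``input'' expert count is $\cF$-measurable and of the form $(1+\theta)\alpha^{\cdot}$ with $|\theta|$ bounded by the appropriate $\beta$, the new count has an explicit $\cF$-conditional mean, and concentration around it comes from McDiarmid. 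The base case $\cE^{2K}(0)=\cE(0)$ is exactly Lemma \ref{prop53}: it gives $|\#\cE(0)-\alpha|<\alpha(\log n)^{-\CCC}\le\alpha\beta_0^{2K}$ w.h.p.

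Two kinds of transitions occur. In a \emph{doubling step} (one of the times $t=4,\dots,2K+3$ of round $m$) each of the $N=\#\cE^k(m)$ level-$m$ experts contacts a uniform node, and a node not already an expert is recruited precisely when at least one of those $N$ contacts lands on it, so the $\cF$-conditional mean of $\#\cE^{k+1}(m)$ is $N+(n-N)\bigl(1-(1-\tfrac1{n-1})^N\bigr)=2N(1-O(N/n))$. Since $N/n\le\alpha/n=0.5^K\le(\log n)^{-\CCC}$ and the geometric growth of the targets makes the collision losses telescope, $\sum_{k<2K}O(\alpha^k/n)=O((\log n)^{-\CCC})$ over a whole round, these losses are absorbed into the $5^m(\log n)^{-\CCC}$ part of $\beta$. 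In the \emph{first three steps} of round $m$, i.e.\ the transition $\cE^{2K}(m-1)\to\cE^0(m)$, a regular node becomes a level-$m$ expert iff it is contacted by a level-$(m-1)$ expert in each of the three consecutive steps (promoting it candidate $\to$ test-bit-set $\to$ expert), and the three contact events use independent fresh randomness, so the $\cF$-conditional mean of $\#\cE^0(m)$ is $(n-N)(q')^3=(N^3/n^2)(1-O(N/n))=\alpha^0(1+\theta)^3(1-O((\log n)^{-\CCC}))$ with $q'\approx N/n$; thus the \emph{relative} error roughly triples at every round boundary, plus an additive $O((\log n)^{-\CCC})$.

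For concentration I would apply McDiarmid's inequality to the independent contact choices of the (at most $3\alpha$) experts involved: changing one such choice changes $\#\cE^{\cdot}(\cdot)$ by at most an absolute constant, and the collision tie-breaking does not affect the count at all since a node is recruited as soon as it is contacted, irrespective of which contact succeeds. Because $\alpha^0=n0.5^{3K}=n^{1-o(1)}$, the resulting fluctuations are of order $\sqrt{\alpha}\cdot\mathrm{polylog}(n)\ll\alpha^{\cdot}n^{-0.3}$, so each deviation bound fails with probability $\exp(-n^{\Omega(1)})$; the $(k+1)n^{-0.3}$ term of $\beta_m^k$ is exactly what accommodates the per-step concentration error accumulated within a round, while at a round boundary that accumulated $n^{-0.3}$-error is reset and reabsorbed into the jump from $5^{m-1}$ to $5^m$ (using $(2K+1)n^{-0.3}\ll(\log n)^{-\CCC}$). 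A union bound over the $O((\log\log n)^2)$ pairs $(m,k)$ then completes the proof.

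I expect the main obstacle to be the error-propagation bookkeeping rather than any single estimate. The relative error triples over each of the $M=\lceil 2\log\log n\rceil$ rounds, so the final error is of order $3^M(\log n)^{-\CCC}=(\log n)^{2\log_2 3-\CCC}=o(1)$ precisely because $\CCC=5>2\log_2 3$; but one must keep $|\theta|<\beta$ at \emph{every} intermediate step. This means verifying that the $5^m$ in $\beta_m^k$ (the constant $5$ chosen larger than the tripling rate $3$ to leave a geometrically widening safety margin) dominates $3^m$ together with all accumulated collision and concentration corrections, and in particular at the very first round boundary, where the margin $5^1-3\cdot5^0=2$ is tightest, so that the implied constant in the $O((\log n)^{-\CCC})$ collision correction there must be checked to be small enough. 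A secondary, more routine but fiddly, task is to confirm that collisions (an expert contacting another expert, or two experts hitting the same node) only ever \emph{decrease} the counts, by no more than the $O(N/n)$ relative amount claimed, rather than interfering in a subtler way with the candidate mechanics of the steps $t=1,2,3$.
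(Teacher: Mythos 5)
Your overall plan coincides with the paper's proof of Lemma \ref{prop51b}: a double induction over $(m,k)$ in the order the protocol generates the counts, with the base case supplied by Lemma \ref{prop53}, conditional-mean computations for the two kinds of transitions (the analogues of \eqref{eq110} and \eqref{eq119}), bounded-difference concentration, and a union bound over the $O(KM)$ pairs. Your conditional means are also right: during steps $4,\dots,2K+3$ a non-expert is recruited iff at least one expert contacts it (regular nodes do not initiate in this phase, cf.\ Remark \ref{prop47}, and collisions only affect which initiator succeeds, not whether the node is recruited), and at a round boundary a node joins $\cE^0(m)$ iff it is hit in each of the three steps, the three steps using independent contact randomness.

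The gap is in the within-round bookkeeping, and it is not just a deferred constant check. On $E_m^{k-1}$ the only information you carry is $|\#\cE^{k-1}(m)-\alpha^{k-1}|<\alpha^{k-1}\beta_m^{k-1}$, and your own mean estimate for the next step is $2\,\#\cE^{k-1}(m)\bigl(1-\Theta(\#\cE^{k-1}(m)/n)\bigr)$. Since $\#\cE^{k-1}(m)/n$ lies between roughly $2^{-3K}\geq (\log n)^{-3\CCC}$ and $2^{-K}\leq(\log n)^{-\CCC}$, while $\beta_m^k-\beta_m^{k-1}=n^{-0.3}$ is exponentially smaller, the implication ``on $E_m^{k-1}$, w.h.p.\ $E_m^k$'' cannot follow from this estimate: the per-step increment of $\beta$ covers only the McDiarmid fluctuation, not the collision loss. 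Your proposed fix --- absorbing the within-round collision losses ``into the $5^m(\log n)^{-\CCC}$ part of $\beta$'' --- cannot be carried out inside a step-by-step induction on the events $E_m^k$ themselves, because that part of $\beta_m^k$ does not grow with $k$ and the hypothesis $E_m^{k-1}$ records no slack below it. To realize your idea you must strengthen the inner induction hypothesis, e.g.\ prove $\#\cE^k(m)\geq\alpha^k\bigl(1-\beta_m^0-2n^{-1}\sum_{j<k}\alpha^j-k\,n^{-0.3}\bigr)$ (and the analogous upper bound), so that the total within-round loss, of order $2^{-K}$, plus the boundary loss, of order $\tfrac52\cdot2^{-K}$, is charged against the jump $5^m-3\cdot5^{m-1}=2\cdot5^{m-1}$. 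And the numerical verification you defer is genuinely tight exactly where you suspect, at $m=1$: with the stated base-case bound $(\log n)^{-\CCC}$ and the worst case $2^{-K}=(\log n)^{-\CCC}$ the combined loss constant exceeds the margin $2$, and the argument only survives because the error actually produced by Lemma \ref{prop53} (and by the early doubling steps, where $\alpha^j/n$ is as small as $(\log n)^{-3\CCC}$) is $O((\log n)^{-2\CCC})$; so you must propagate that sharper information rather than the bare statement of the base case. (The same point is glossed in the paper's own display \eqref{eq112}, whose final inequality needs the same repair, so your proposal reproduces the paper's weak spot rather than resolving it.)
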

\begin{proof}
	For all $m\in[M]$ and $k\in[2K]\cup\{0 \}$ let $\cF_m^k$ denote the $\sigma$-algebra generated by all events occurring at or before time step $\lceil 300\eps^{-2}K \rceil+(m-1)(2K+3)+k+3$ of the estimation phase. In particular, if we order these $\sigma$-algebras by increasing information, $\cF_{m-1}^{2K}$ is the first $\sigma$-algebra containing information about the set of experts $\cE^{2K}(m-1)$ which will initiate a communication in steps 1, 2, and 3 of round $m$, while $\cF_m^0$ is the first $\sigma$-algebra containing information about the set of nodes which become level $m$ experts upon receiving a bit in all three steps.
	 
	We will prove that for $m\in[M]$ and $k\in[2K]$,
	\eqb
	\P[ (E_m^k)^c\,|\,\cF_m^{k-1} ]\1_{E_m^{k-1}} \leq 
	\exp\left( -\frac{n^{1.2}}{2\cdot\#\cE^{2K}(m-1)} \right)
	\label{eq110}
	\eqe
	and that for $m\in[M]$,
	\eqb
	\P[ (E_m^0)^c\,|\,\cF_{m-1}^{2K} ]\1_{E^{2K}(m-1)} \leq 
	\exp\left( -\frac{n^{1.2}}{6\cdot\#\cE^{2K}(m-1)} \right).
	\label{eq119}
	\eqe
	These two inequalities combined with Lemma \ref{prop53} and a union bound immediately imply our lemma.
	
	First we prove \eqref{eq110}. We condition on $\cF_m^{k-1}$ throughout the argument. Observe that in step $k+3$ of round $m$ there are $\#\cE^{k-1}(m)$ nodes which are initiating a communication with another node. If all the contacted nodes were regular nodes and no nodes were contacted by two experts, we would have $2\cdot\#\cE^{k-1}(m)$ experts immediately after this time step. However, fewer new experts may be created due to experts contacting nodes which are already experts or several experts contacting the same node. We prove \eqref{eq110} by bounding the number of such collisions.
	
	For $i\in \cE^{k-1}(m)$ let $\wh E(i)$ be the event that the node $j=\frk r(i,\lceil 300\eps^{-2}K\rceil+(m-1)(2K+3)+k+3)$ contacted by $i$ at time $\lceil 300\eps^{-2}K \rceil+(m-1)(2K+3)+k+3$ is not in $\cE^{k-1}(m)$ (so it does not initiate a communication itself at this time) and that no on else contacts $j$ simultaneously as $i$, i.e., 
	\eqbn
	\begin{split}
		\wh E(i)=\,\,& \Big\{ \frk r(i,\lceil 300\eps^{-2}K\rceil+(m-1)(2K+3)+k+3)\\
		&\neq \frk r(j,\lceil 300\eps^{-2}K\rceil+(m-1)(2K+3)+k+3),\,\forall j\in [n] \Big\} \\
		&\cup
		\big\{ \frk r(i,\lceil 300\eps^{-2}K\rceil+(m-1)(2K+3)+k+3)\not\in \cE^{k-1}(m) \big\}.
	\end{split}
	\eqen
	On the event $\wh E(i)$, the node $\frk r(i,\lceil 300\eps^{-2}K \rceil+(m-1)(2K+3)+k+3)$ becomes a new expert due to the communication initiated by $i$. Let $i_1<\dots<i_{\#\cE^{k-1}(m) }$ denote the elements of $\cE^{k-1}(m)$ in increasing order. 
	Conditioned on $\cF_{m}^{k-1}$ and on whether the events $\wh E(i_1),\dots,\wh E(i_{\ell})$ occur for $\ell<\#\cE^{k-1}(m)$, the event $\wh E(i_{\ell+1})$ occurs with probability at least  $1-2n^{-1}\cdot\#\cE^{k-1}(m)$. Therefore the following process is a submartingale
	\eqbn
	\ell \mapsto \sum_{i\in \{i_1,\dots,i_\ell \}} (\1_{\wh E(i)}-(1-2n^{-1}\cdot\#\cE^{k-1}(m))).
	\eqen
	By Azuma's inequality,
	\eqb
	\P\left[ \sum_{i\in \cE^{k-1}(m)} \1_{\wh E(i)} < 
		\#\cE^{k-1}(m)( 1-2n^{-1}\cdot\#\cE^{k-1}(m) )- n^{0.6}\,\Big|\, \cF_m^{k-1} \right] 
		\leq \exp\left( - \frac{n^{1.2}}{2\cdot\# \cE^{k-1}(m)} \right).
		\label{eq111}
	\eqe
	On the \emph{complement} of the event in \eqref{eq111} and on $E_m^{k-1}$,
	\eqb
	\begin{split}
		\#\cE^k(m) 
		&\geq \#\cE^{k-1}(m)+\#\cE^{k-1}(m)\cdot( 1-2n^{-1}\cdot\#\cE^{k-1}(m) )- n^{0.6} \\
		&\geq 2\alpha^{k-1}(1-\beta_m^{k-1})( 1-n^{-1}\alpha^{k-1}(1-\beta_m^{k-1}) ) -n^{0.6} 
		\geq \alpha^k (1-\beta_m^k).
	\end{split}
	\label{eq112}
	\eqe
	Furthermore, on the event $E_m^{k-1}$ it is deterministically the case that
	\eqb
	\#\cE^k(m) 
	\leq 2\cdot\#\cE^{k-1}(m)
	\leq 2\cdot\alpha^{k-1}(1+\beta_m^{k-1})
	\leq \alpha^k(1+\beta_m^k).
	\label{eq113}
	\eqe
	Combining \eqref{eq111}, \eqref{eq112}, and \eqref{eq113} we get \eqref{eq110}.
	
	Now we will prove \eqref{eq119}. Conditioned on $\cF_{m-1}^{2K}$, for any fixed $i\in[n]\setminus \cE^{2K}(m-1)$ the probability that $i$ is \emph{not} contacted by an expert in the first (resp., second, third) time step of round $m$ is $(1-(n-1)^{-1})^{\#\cE^{2K}(m-1)}$. Therefore, if $n^{-1}\cdot\#\cE^{2K}(m-1)<0.01$, for $n$ sufficiently large and on the event $E_m^{k-1}$ there is a $\delta_0\in\R$ satisfying $|\delta_0|\leq  0.6 n^{-1}\cdot\#\cE^{2K}(m-1)$ such that
	\eqbn
	\P[ i\in\cE^0(m)\,|\,\cF_{m-1}^{2K} ] 
	= 
	\Big(1-(1-(n-1)^{-1})^{\#\cE^{{2K}}(m-1)}\Big)^3
	=
	(n^{-1}\cdot\#\cE^{2K}(m-1))^3\cdot(1 + \delta_0).
	\eqen
	Using this and that $n^{-2}(\alpha^{2K})^3=\alpha^0$ it follows that on the event $E_{m-1}^{2K}$ and with $\delta\in[-\beta_{m-1}^{2K},\beta_{m-1}^{2K}]$ chosen such that
	$\#\cE^{2K}(m-1)=\alpha^{2K}(1+\delta)$,
	\eqbn
	\begin{split}
		\E[ \#\cE^0(m) \,|\,\cF_{m-1}^{2K} ]
		&= (n - \#\cE^{2K}(m-1))\big( n^{-1}\alpha^{2K}(1+\delta) \big)^3\cdot(1+\delta_0)\\
		&= \alpha^0 \Big(1+3\delta+\delta_0-n^{-1}\cdot \#\cE^{2K}(m-1
		+ O\big( (n^{-1}\cdot \#\cE^{2K}(m-1))^2 + (\beta_{m-1}^{2K})^2 \big) \Big).
	\end{split}
	\eqen
	Conditioned on $\cF_{m-1}^{2K}$, the random variable $\#\cE^0(m)$ can be written as a function of the random variables $\frk r(i,t)$ for $i\in \cE^{2K}(m-1)$ and $t$  the three first time steps of round $m$. Changing one of these random variables $\frk r(i,t)$ changes $\#\cE^0(m)$ by at most 1. Therefore McDiarmid's inequality gives
	\eqb
	\P\big[ |\#\cE^0(m)-\E[\#\cE^0(m)\,|\,\cF^{2K}_{m-1}] | > n^{0.6} \,|\, \cF^{2K}_{m-1} \big]
	\leq \exp\left( -\frac{n^{1.2}}{6\cdot\#\cE^{2K}(m-1)} \right).
	\label{eq114}
	\eqe
	On $E_{m-1}^{2K}=\{|\delta|<\beta_{m-1}^{2K} \}$  
	and the event in \eqref{eq114}, the event $E_m^0$ occurs for $n$ sufficiently large, since
	\eqbn
	|\#\cE^0(m)-\alpha^0| 
	\leq n^{0.6}+\alpha^0 \Big(3|\delta|+|\delta_0|+n^{-1}\cdot \#\cE^{2K}(m-1)
	+ O\big( (n^{-1}\cdot \#\cE^{2K}(m-1))^2 + (\beta_{m-1}^{2K})^2 \big) \Big) 
	\leq \alpha^0\beta_m^0.
	\eqen
\end{proof}

A node $i\in[n]$ is defined to be \emph{incorrect} at time $t\geq 0$ if its belief bit is different from the majority bit, i.e., $\wh\sigma(i,t)\neq\frk b$. For $m\in[M]$ let $\ell_m$ be the number of experts which are incorrect at the end of round $m$, i.e., if $t=\lceil 300\eps^{-2}K\rceil+m({2K}+3)$ is the time at which round $m$ ends then
\eqbn
\ell_m = \#\{i\in\cE^{2K}(m-1)\,:\, \wh\sigma(i,t)\neq\frk b \}.
\eqen
Similarly, let $\ell_0$ denote the number of experts which are incorrect at the end of the expert selection phase, i.e., if $t=\lceil 300\eps^{-2}K\rceil$ is the time at which the expert selection phase ends then
\eqbn
\ell_0 = \#\{i\in\cE^{2K}(0)\,:\, \wh\sigma(i,t)\neq\frk b \}.
\eqen
For $m\in[M]\cup\{0 \}$ let $\delta_{m}$ be the fraction of level $m$ experts that have an incorrect estimate for $\frk b$ at the end of round $m$ (or, for $m=0$, at the end of the expert selection phase), i.e., 
$$
\delta_m = \frac{\ell_m}{\#\cE^{2K}(m)}.
$$
\begin{lemma} 
	For $\ell_{m-1}\geq n^{0.25}$,
	\eqb
	\P[\ell_{m} \geq 2^{2K}(3\ell_{m-1}^2/n + \ell_{m-1}^{0.55})\,|\, \cF_{m-1}^{2K}] 
	\leq \exp \Big(-\frac{\ell_{m-1}^{0.1}}{6}\Big).
	\label{eq109}
	\eqe
	For $\delta_{m-1}\geq n^{-0.25}$,
	\eqb
	\P[\delta_{m} \geq (3 \delta_{m-1}^2 - 2\delta_{m-1}^{3})(1+ (\beta_m^{2K})^{0.5} ) \1_{E_M^{2K}} \,|\, \cF_{m-1}^{2K}]\1_{E_{m-1}^{2K}} \leq \exp\Big( -\frac{n^{1.1}}{6\cdot\#\cE^{2K}(m-1) } \Big).
	\label{eq108}
	\eqe
	\label{prop:error}
\end{lemma}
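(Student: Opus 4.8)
The plan is to reduce both estimates to a statement about the \emph{seed} level-$m$ experts, i.e.\ the set $\cE^0(m)$ of nodes that become level-$m$ experts in the first three steps of round $m$ by taking the majority of three belief bits received from level-$(m-1)$ experts. Write $L_0(m)=\#\{i\in\cE^0(m):\wh\sigma(i,\cdot)\neq\frk b\}$ for the number of \emph{incorrect} seeds. Since in each of the rumor-spreading steps $4,\dots,2K+3$ every current level-$m$ expert creates at most one new level-$m$ expert (with the same belief bit), the incorrect count at most doubles each step, so deterministically $\ell_m\le 2^{2K}L_0(m)$; hence for \eqref{eq109} it suffices to show $\P[L_0(m)\ge 3\ell_{m-1}^2/n+\ell_{m-1}^{0.55}\mid\cF_{m-1}^{2K}]\le\exp(-\ell_{m-1}^{0.1}/6)$.

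For this I would overcount: if a node $j$ becomes an incorrect seed, its majority vote over the three bits it received at steps $1,2,3$ came out wrong, so it received an incorrect bit at \emph{at least two} of those steps, which forces $j$ to have been \emph{targeted} by an incorrect level-$(m-1)$ expert at two distinct steps. Thus $L_0(m)\le Z$, where $Z$ counts pairs $\big((e,t),(e',t')\big)$ with $e,e'$ incorrect level-$(m-1)$ experts, $t<t'$ in $\{1,2,3\}$, and $e$ targeting at step $t$ the same node as $e'$ at step $t'$ (possibly $e=e'$). Conditioned on $\cF_{m-1}^{2K}$ the $3\ell_{m-1}$ relevant targets are essentially independent uniform on $[n]\setminus\{\text{self}\}$, so $\E[Z\mid\cF_{m-1}^{2K}]\le 3\ell_{m-1}^2/(n-1)\le 3\ell_{m-1}^2/n+o(1)$. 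The main obstacle is the concentration of $Z$: the naive bounded-difference setup fails because one expert lies in $\Theta(\ell_{m-1})$ of the pairs, so a single target change moves $Z$ by $\Theta(\ell_{m-1})$ and McDiarmid over the $3\ell_{m-1}$ targets is vacuous. Instead I would (i) show, by a union bound over nodes and a Chernoff/Poisson tail — valid since the mean load $3\ell_{m-1}/n$ is small — that with probability $1-n^{-\omega(1)}$ no node is targeted by more than $B=O(\log n)$ incorrect contacts at any step (event $\mathcal B$); and (ii) on $\mathcal B$, for each of the three step-pairs $(t,t')$, expose the step-$t$ configuration first and write the corresponding contribution to $Z$ as $\sum_{e\text{ incorrect}}D^{(t)}_{j_{e}}$, where $D^{(t)}_j$ is the number of incorrect step-$t$ contacts at $j$ and $j_e$ is $e$'s step-$t'$ target; this is a sum of $\ell_{m-1}$ i.i.d.\ variables bounded by $B$ with total mean $\le\ell_{m-1}^2/(n-1)$, so Bernstein's inequality — using $\ell_{m-1}\ge n^{0.25}$ so that the mean is much smaller than (or comparable to) $\ell_{m-1}^{0.55}$ — gives the deviation $\tfrac13\ell_{m-1}^{0.55}$ with probability $\le\exp(-\ell_{m-1}^{0.1}/6)$, and summing the three pairs and $\P[\mathcal B^c]$ yields \eqref{eq109}.

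For \eqref{eq108} I again work with the seeds, but now I need the sharp constant, which comes from a von Neumann-style symmetry computation. Conditioned on $\cF_{m-1}^{2K}$ and on a fixed node $j\notin\cE^{2K}(m-1)$ becoming a seed (which happens exactly when $j$ is contacted by a level-$(m-1)$ expert at each of steps $1,2,3$; note level-$(m-1)$ experts themselves cannot become seeds, since they initiate their own communications in those steps), the three bits $j$ adopts are independent, and each is the belief bit of a \emph{uniformly random} level-$(m-1)$ expert — because, given $j$ is contacted at a step, the collision rule makes the partner uniform among all level-$(m-1)$ experts — hence incorrect with probability exactly $\delta_{m-1}$. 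Therefore $\P[j\text{ incorrect seed}\mid j\text{ seed},\cF_{m-1}^{2K}]=3\delta_{m-1}^2(1-\delta_{m-1})+\delta_{m-1}^3=3\delta_{m-1}^2-2\delta_{m-1}^3$, and summing over $j$ gives $\E[L_0(m)\mid\cF_{m-1}^{2K}]=(3\delta_{m-1}^2-2\delta_{m-1}^3)\,\E[\#\cE^0(m)\mid\cF_{m-1}^{2K}]$, which on $E_{m-1}^{2K}$ equals $(3\delta_{m-1}^2-2\delta_{m-1}^3)\,\alpha^0(1+O(\beta_{m-1}^{2K}))$ by the computation in the proof of Lemma \ref{prop51b}.

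Finally I would bound the upper tail of $L_0(m)$ by the same truncate-and-Bernstein argument as in the first part (exposing the three seeding steps sequentially on the bounded-load event); here $\delta_{m-1}\ge n^{-0.25}$ forces the mean $\mu=\E[L_0(m)\mid\cF_{m-1}^{2K}]$ to be polynomially large, so Bernstein delivers $L_0(m)\le(3\delta_{m-1}^2-2\delta_{m-1}^3)\,\alpha^0\big(1+\tfrac12(\beta_m^{2K})^{0.5}\big)$ except with probability $\exp(-n^{1.1}/(6\#\cE^{2K}(m-1)))$ (this is precisely the target rate: a deviation of order $n^{0.55}$ over $O(\#\cE^{2K}(m-1))$ steps). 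Then, on the good events of Lemma \ref{prop51b} (where $\#\cE^0(m)\le\alpha^0(1+\beta_m^0)$ and $\#\cE^{2K}(m)\ge\alpha^{2K}(1-\beta_m^{2K})$), combining with the deterministic bound $\ell_m\le 2^{2K}L_0(m)$ and the identity $2^{2K}\alpha^0=\alpha^{2K}$ gives
$\delta_m=\ell_m/\#\cE^{2K}(m)\le(3\delta_{m-1}^2-2\delta_{m-1}^3)(1+(\beta_m^{2K})^{0.5})$,
since $O(\beta_{m-1}^{2K})+O(\beta_m^{2K})+\tfrac12(\beta_m^{2K})^{0.5}\le(\beta_m^{2K})^{0.5}$ for large $n$; the factor $\1_{E_M^{2K}}$ in the statement simply records that the argument runs on these good $\#\cE$-events, which hold w.h.p.\ by Lemma \ref{prop51b}. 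As noted, only an \emph{upper} bound on $L_0(m)$ and $\#\cE^0(m)$ and a \emph{lower} bound on $\#\cE^{2K}(m)$ are needed, so no two-sided control of the rumor-spreading is required; the one genuinely delicate point, as in the first part, is that McDiarmid must not be applied to the whole round (a single altered choice in an early spreading step flips up to $2^{2K-O(1)}$ downstream experts), which is why the concentration is confined to the three seeding steps and combined with the deterministic doubling bound.
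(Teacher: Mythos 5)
Your overall architecture matches the paper's: reduce to the ``seed'' experts created in the first three steps, use the deterministic at-most-doubling bound $\ell_m\le 2^{2K}\cdot(\text{incorrect seeds})$, bound the expectation via ``targeted by an incorrect level-$(m-1)$ expert at two of the three steps'' for \eqref{eq109}, and via the majority-of-three computation for \eqref{eq108}, then combine with the $E$-events and $2^{2K}\alpha^0=\alpha^{2K}$. Your symmetry argument giving $\P[\text{incorrect}\mid\text{seed}]=3\delta_{m-1}^2-2\delta_{m-1}^3$ exactly (partners independent and uniform over $\cE^{2K}(m-1)$, experts cannot themselves become seeds) is in fact a cleaner version of the paper's direct expectation computation, which carries $(1+O(\delta))$ corrections; that part is a genuine improvement in presentation.

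The gap is in your concentration step, and it stems from a misdiagnosis. You claim the bounded-difference approach fails because one incorrect expert lies in $\Theta(\ell_{m-1})$ of the collision pairs; but that is only because you chose to count \emph{pairs} $Z$. The paper counts \emph{nodes}: let $\ell'_m$ be the number of nodes targeted by an incorrect level-$(m-1)$ expert in at least two of the three steps (resp.\ $\ell''_m$ the number of incorrect seeds). Changing a single target $\frk r(i,t)$ moves at most two nodes' status, so these statistics have Lipschitz constant $O(1)$ in each of the $3\ell_{m-1}$ (resp.\ $3\cdot\#\cE^{2K}(m-1)$) targeting variables, and plain McDiarmid immediately gives exactly the exponents $\exp(-\ell_{m-1}^{0.1}/6)$ and $\exp\big(-n^{1.1}/(6\cdot\#\cE^{2K}(m-1))\big)$ — no truncation event and no Bernstein are needed. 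Moreover, your truncate-and-Bernstein route as written does not yield the stated bounds: with $B=O(\log n)$ the additive term $\P[\mathcal B^c]$ is only $\exp\big(-\Theta(\log n\,\log\log n)\big)$, which is far \emph{larger} than $\exp(-\ell_{m-1}^{0.1}/6)$ for $\ell_{m-1}\ge n^{0.25}$ and than $\exp\big(-n^{1.1}/(6\cdot\#\cE^{2K}(m-1))\big)\approx\exp\big(-\Theta(n^{0.1}(\log n)^5)\big)$, so ``summing the three pairs and $\P[\mathcal B^c]$'' cannot give the claimed conditional bounds; you would have to take $B$ polynomially large (and recheck the Bernstein max term, which then becomes delicate near $\ell_{m-1}=n^{0.25}$). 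The simplest repair is to abandon the pair count and apply McDiarmid to the node counts, which is precisely the paper's one-line concentration argument.
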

\begin{proof}
	First we prove \eqref{eq109}. For any fixed node $i$ which is not a level $m-1$ expert the probability that this node is contacted by an incorrect expert in the first (resp., second, third) time step of round $m$ is equal to $1-(1-(n-1)^{-1})^{\ell_{m-1}}\leq \ell_{m-1}/(n-1)$. 
	Therefore the probability that $i$ is contacted by an incorrect expert in at least two of these three time steps is
	\eqbn
	3\ell_{m-1}^2/(n-1)^2 - \ell_{m-1}^3/(n-1)^3 \leq 3\ell_{m-1}^2/n^2.
	\eqen
	Let $\ell'_{m}$ be the number of nodes which are contacted by an expert with an incorrect belief bit in at least two of the three time steps, and note that 
	$\E[\ell'_{m}\,|\,\cF_{m-1}^{2K} ]\leq 3\ell_{m-1}^2/n$. Observe that $\ell'_m$ is an upper bound for the number of incorrect round $m$ experts at the end of step 3 of round $m$. Conditioned on $\cF_{m-1}^{2K}$, the random variable $\ell'_{m}$ is a function of $\frk r(i,t)$ for $i\in \cE^{2K}(m-1)$ an incorrect expert and $t$ equal to each of the first three steps of round $m$. Changing one of the $3\ell_{m-1}$ random variables $\frk r(i,t)$ changes $\ell'_{m}$ by at most 1. Therefore McDiarmid's inequality gives
	\eqb
	\P[ \ell'_{m} - \E[\ell'_{m}\,|\,\cF_{m-1}^{2K} ]\geq\ell_{m-1}^{0.55} \,|\,\cF_{m-1}^{2K} ]
	\leq \exp \Big(-\frac{\ell_{m-1}^{1.1}}{6\ell_{m-1}}\Big)
	= \exp \Big(-\frac{\ell_{m-1}^{0.1}}{6}\Big).
	\label{eq115}
	\eqe
	The number of incorrect experts at most doubles in each time step from 4 to ${2K}+3$ of round $M$. Therefore the following holds on the \emph{complement} of the event in \eqref{eq115} 
	\eqbn
	\ell_{m}< 2^{2K}\ell'_{m-1} < 2^{2K}(3\ell_{m-1}^2/n + \ell_{m-1}^{0.55}),
	\eqen
	which implies \eqref{eq109}.
	
	Now we will prove \eqref{eq108}. Let $\ell''_{m}$ denote the number of incorrect experts at the end of time step 3. (Note that $\ell''_m\leq\ell'_m$ in general, since in order to become an incorrect expert a node needs to receive two incorrect bits, and in addition a third (correct or incorrect) bit, while $\ell'_m$ counts the number of nodes satisfying the first of these two requirements.) Then $\ell''_{m}$ is a function of $\frk r(i,t)$ for $i\in\#\cE^{2K}(m)$ and $t$ equal to each of the first three steps of round $m$. Since changing one of these $3\cdot\#\cE^{2K}(m-1)$ random variables $\frk r(i,t)$ changes $\ell''_{m}$ by at most 1, McDiarmid's inequality gives
	\eqb
	\P[ \ell''_m -\E[ \ell''_m\,|\,\cF_{m-1}^{2K} ] > n^{0.55} \,|\,\cF_{m-1}^{2K} ]
	\leq
	\exp\Big( -\frac{n^{1.1}}{6\cdot \#\cE^{2K}(m-1)} \Big).
	\label{eq49}
	\eqe
	An expert in $\cE^0(m)$ obtains its belief bit by taking the majority bit among the three bits received from level $m-1$ experts. 
	The expert is incorrect 
	if it receives exactly two incorrect bits or 
	if it receives three incorrect bits. Since the three bits are independent, letting $\delta=\#\cE^{2K}(m-1)/n$ we have
	\eqbn
	\begin{split}
		\E[ \ell''_{m}&\,\,|\,\cF_{m-1}^{2K} ] \\
		=\,\,&
		3\cdot(n-\#\cE^{2K}(m-1))\cdot\Big( 1- (1-(n-1)^{-1})^{\ell_{m-1}} \Big)^2\cdot
		\Big( 1- (1-(n-1)^{-1})^{\#\cE^{2K}(m-1)-\ell_{m-1}} \Big)\\
		&+
		(n-\#\cE^{2K}(m-1))\cdot\Big( 1- (1-(n-1)^{-1})^{\ell_{m-1}} \Big)^3\\
		=\,\,& 3n(1-\delta)
		\cdot
		(\delta\delta_{m-1})^2(1+O(\delta\delta_{m-1}))
		\cdot
		\delta(1-\delta_{m-1})(1+O(\delta)) \\
		&+ n(1-\delta)\cdot(\delta\delta_{m-1})^3(1+O(\delta\delta_{m-1}))\\
		=\,\,&\delta^3n(3\delta_{m-1}^2-2\delta_{m-1}^3)(1+O(\delta)).
	\end{split}
	\label{eq50}
	\eqen
	It follows that on the \emph{complement} of the event in \eqref{eq49}, and if the events $\delta_{m-1}\geq n^{-0.25}$, $E_{m-1}^{2K}$, and $E_m^{2K}$ also occur, for all sufficiently large $n$,
	\eqbn
	\begin{split}
		\delta_{m} 
		=    \frac{\ell_{m}}{\#\cE^{2K}(m)} 
		\leq \frac{\ell''_{m}2^{2K}}{\alpha^{2K}(1-\beta^{2K}_{m})}
		&\leq (3\delta_{m-1}^2-2\delta_{m-1}^3)\big(1+O(\delta)+\beta_m^{2K}+2(\beta_m^{2K})^2\big)\\
		&<(3\delta_{m-1}^2-2\delta_{m-1}^3)\big(1+(\beta_m^{2K})^{0.5}\big),
	\end{split}
	\eqen
	Using \eqref{eq49} this implies \eqref{eq108}.
\end{proof}

\begin{lemma} 
	W.h.p.\ all level $M$ experts have a belief bit equal to $\frk b$.
	\label{prop52}
\end{lemma}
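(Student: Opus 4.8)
The plan is to show that the number $\ell_M$ of incorrect level $M$ experts equals $0$ w.h.p.; since $\ell_M\in\N\cup\{0\}$ this is the same as $\ell_M<1$, and I will in fact follow $\delta_m$ (equivalently $\ell_m$) through the $M=\lceil 2\log\log n\rceil$ rounds and show it reaches $0$ strictly before round $M$. First I would fix the intersection of all the high-probability events furnished by Lemmas \ref{prop53}, \ref{prop51b} and \ref{prop:error} (the last two after a union bound over $m\in[M]$); this intersection has probability $1-o(1)$ because the error probabilities in \eqref{eq110}, \eqref{eq119}, \eqref{eq109} and \eqref{eq108} are all $\exp(-n^{\Omega(1)})$ or $\exp(-(\log n)^{\Omega(1)})$ and $M=O(\log\log n)$. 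On this event: $\#\cE^{2K}(m)\in[\tfrac12\alpha,2\alpha]$ with $\alpha=n0.5^{K}=n(\log n)^{-\CCC}$ for every $m$; $\delta_m\le (3\delta_{m-1}^2-2\delta_{m-1}^3)\bigl(1+(\beta_m^{2K})^{1/2}\bigr)$ whenever $\delta_{m-1}\ge n^{-0.25}$; and $\ell_m\le 2^{2K}\bigl(3\ell_{m-1}^2/n+\ell_{m-1}^{0.55}\bigr)$ whenever $\ell_{m-1}\ge n^{0.25}$. Moreover $\beta_m^{2K}=5^{m}(\log n)^{-\CCC}+(2K+1)n^{-0.3}=o(1)$ uniformly in $m\le M$, since $5^{M}\le 5(\log n)^{2\log_2 5}$ and $\CCC>2\log_2 5$; hence the multiplicative slack above is $1+o(1)$. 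Finally, the belief bit of a level $0$ expert is its initial bit, so Lemma \ref{prop53} gives $\delta_0=\ell_0/\#\cE^{2K}(0)\le (1-p)+o(1)\le\tfrac12-\tfrac\eps2$.

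Next I would analyze the deterministic map $g(x)=3x^2-2x^3$ (the probability that a majority of three i.i.d.\ $\mathrm{Bernoulli}(x)$ bits equals $1$), whose fixed points in $[0,1]$ are $0,\tfrac12,1$. On $[0,\tfrac12-\tfrac\eps2]$ one has $g(x)/x=x(3-2x)\le(\tfrac12-\tfrac\eps2)(2+\eps)\le 1-\tfrac\eps2$, so $g$ contracts there by the factor $1-\tfrac\eps2$; absorbing the $1+o(1)$ slack, this shows there is a constant $m_1=m_1(\eps)$ with $\delta_{m_1}\le\tfrac14$. Once $\delta_{m-1}\le\tfrac14$ (and $\ge n^{-0.25}$) we have $\delta_m\le 3.1\,\delta_{m-1}^2$ for $n$ large, i.e.\ $3.1\,\delta_m\le(3.1\,\delta_{m-1})^2$, and since $3.1\,\delta_{m_1}\le 0.8$ we get $3.1\,\delta_{m_1+j}\le 0.8^{2^{j}}$, which is below $n^{-0.25}$ once $2^{j}\gtrsim\log n$, i.e.\ after at most $j_1=\lceil\log\log n\rceil$ further rounds. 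At that round $m_2:=m_1+j_1$ we then have $\ell_{m_2}\le n^{-0.25}\#\cE^{2K}(m_2)\le 2n^{0.75}(\log n)^{-\CCC}\le n^{0.76}$; feeding this into \eqref{eq109} (which behaves like $\ell\mapsto 2^{2K}\ell^{0.55}=(\log n)^{O(1)}\ell^{0.55}$ once $\ell\le n^{1/2}$) and iterating, at most three more rounds bring $\ell$ below $n^{0.25}$.

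It remains to close the gap from $n^{0.25}$ down to $0$. Let $m_0$ be the first round with $\ell_{m_0}<n^{0.25}$; by the count above $m_0\le m_1+\lceil\log\log n\rceil+3$, which is $\le M-1$ for all $n$ large enough (depending on $\eps$), since $M=\lceil2\log\log n\rceil$. In round $m_0+1$ a node becomes a level $m_0+1$ expert with an incorrect belief bit only if it receives an incorrect bit in at least two of the first three steps of the round; in each step it is contacted by an incorrect level $m_0$ expert with probability at most $\ell_{m_0}/(n-1)$, and the three steps use independent uniform contacts, so (conditioning on the filtration up to the end of round $m_0$ and union-bounding over the $O(\log\log n)$ possible values of $m_0$) the expected number of such nodes is at most $n\binom{3}{2}\bigl(\ell_{m_0}/(n-1)\bigr)^2\le 4\ell_{m_0}^2/n< 4n^{-1/2}\to 0$. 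By Markov's inequality no such node exists w.h.p.; since during steps $4,\dots,2K+3$ experts merely replicate their own (correct) state, this gives $\ell_{m_0+1}=0$, and as $\ell\equiv 0$ is absorbing we conclude $\ell_M=0$, i.e.\ every level $M$ expert has belief bit $\frk b$. I expect the main obstacle to be the bookkeeping of the round count: one must verify that the slow contraction near $\delta=\tfrac12$ consumes only $O_\eps(1)$ rounds and the doubly-exponential phase only $\approx\log\log n$ rounds, so that the total stays under $M=\lceil2\log\log n\rceil$, and that neither the accumulated factors $1+(\beta_m^{2K})^{1/2}$ nor the factors $2^{2K}=(\log n)^{O(1)}$ appearing in \eqref{eq109} spoil this — which is precisely the reason for the choices $M=\lceil2\log\log n\rceil$ and $K=\lceil\CCC\log\log n\rceil$.
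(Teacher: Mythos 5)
Your proposal is correct and follows essentially the same route as the paper's proof: condition on the events of Lemmas \ref{prop53}, \ref{prop51b}, and \ref{prop:error}, track $\delta_m$ (resp.\ $\ell_m$) through a constant number of rounds of uniform contraction near $1/2$, then a doubly-exponential decay phase of $\approx\log\log n$ rounds, then a few rounds via \eqref{eq109}, and finally a first-moment argument showing that once $\ell_m<n^{0.25}$ no incorrect expert is created in the next round, with the round count fitting under $M$. The only differences are cosmetic: you use the quadratic bound $\delta_m\le 3.1\delta_{m-1}^2$ where the paper uses $f_n(\delta)<\delta^{1.5}$ (so your middle phase needs $\approx\log\log n$ rather than $1.98\log\log n$ rounds), and you finish with Markov's inequality plus a union bound over the random round index where the paper unions directly over nodes — both are equivalent first-moment arguments.
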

\begin{proof}
	For any $\delta\in[0,1]$  
we define the stopping time $T(\delta)$ by
	\eqbn
	T(\delta) = \inf\{m\in[M]\cup\{0 \}\,:\, \delta_m \leq \delta  \},
	\eqen
	where we let the infimum of an empty set be $\infty$. We will argue that w.h.p., for $n$ sufficiently large,
	\eqbn
	\begin{split}
		&(i)\,\, T(0.1)<0.01\log\log n,\qquad\qquad\qquad\,\,\,
		(ii)\,\, T(n^{-0.25})-T(0.1)<1.98\log\log n,\\
		&(iii)\,\, T(n^{-0.75})-T(n^{-0.25})\leq 3,\qquad\qquad
		(iv)\,\, T(0)-T(n^{-0.75})\leq 1.
	\end{split}
	\eqen
	Combining these four bounds and using that $M=\lceil 2\log\log n \rceil$ we immediately get the lemma.
	
	By a union bound and Lemmas \ref{prop51b} and \ref{prop:error}, we may assume the \emph{complements} of the events considered in \eqref{eq109} and \eqref{eq108} occur for all $m$, and that $E_m^k$ occurs for all $m\in[M]$ and $k\in[{2K}]\cup\{0 \}$. Define $f_n:[0,1]\to\R$ by
	\eqbn
	f_n(\delta) = (3 \delta^2 - 2\delta^{3})(1+(\beta_m^{2K})^{0.5}).
	\eqen
	Note that $f_n(\delta_{m-1})$ appears on the right side of the inequality defining the event in \eqref{eq108}, so we can bound the fraction of incorrect experts by recursively applying $f_n$.
	Also note that the function $f_n$ depends on $n$ since $\beta_{m}^{2K}$ depends on $n$.
	
	(i) Lemma \ref{prop53} gives that $\delta_0<1/2-\ep/2$ w.h.p. We condition on this high probability event throughout the proof of (i). We have $T(0.1)=0$ if $\delta_0\leq 0.1$, so we assume $\delta_0>0.1$. For some $n_0$,
	\eqbn
	v 
	:= 
	\min_{n>n_0,\delta \in [0.1,1/2-\ep/2)} \delta - f_n(\delta) 
	= 
	\min_{n>n_0} \min \left \{1/2-\eps/2-f_n(1/2-\eps/2),\, 0.1 - f_n(0.1) \right \}
	> 0.
	\eqen
	The equality holds because $f_n$ is strictly increasing and convex in $(0,1/2)$, so the minimum is attained at one of the endpoints of the interval.
	Note that $v$ represents the smallest decrease in $\delta_n$ we can obtain from one round of the protocol, i.e., one application of the function $f_n$.
	Therefore $T(0.1)\leq (1/2-\eps/2-0.1)/v<0.01\log\log n$ for $n$ large enough. 
	
	(ii) Observe that $f_n(\delta)<\delta^{1.5}$ for $\delta<0.1$ and $n$ sufficiently large. This completes the proof since $0.1^{1.5^{1.98\log\log n}}<n^{-0.25}$ for all sufficiently large $n$.
	
	(iii) If the event in \eqref{eq109} occurs for all $m$ and  $\ell_m\in[n^{0.25},n^{0.75}]$, then $\ell_{m+1}<2^{2K}(3 (n^{0.75})^2/n +(n^{0.75})^{0.55} ) < n^{0.5+0.0001}$, and further 
	$\ell_{m+2}<2^{2K}(3 (n^{0.5+0.0001})^2/n +(n^{0.5+0.0001})^{0.55} ) < n^{0.28}$ and 
	$\ell_{m+3}<2^{2K}(3 (n^{0.28})^2/n +(n^{0.28})^{0.55} ) < n^{0.25}$, so $T(n^{-0.75})-T(n^{-0.25})\leq 3$.
	
	(iv) If the probability that a uniformly sampled expert is incorrect is smaller than $n^{-0.75}$, then there are at most $n^{-0.75} \cdot \# \cE^k(m)<n^{-0.75}\cdot n=n^{0.25}$ incorrect experts. 
	The probability that any given node receives a bit from two such experts is at most $3\cdot n^{-0.75}\cdot n^{-0.75}(1+o(1))$. By a union bound, w.h.p.\ there will be no nodes which receive two such bits.
\end{proof}

Let $T_0:=\lceil 300\eps^{-2}K\rceil+(2K+3)M$ be the time at which the estimation phase ends and the pushing phase begins. Recall the time $T$ defined in \eqref{eq121}.
\begin{lemma} 
	W.h.p., $T-T_0\leq 10\cdot\CCC\log\log n$. Furthermore, there is a constant $q>0$ depending only on $\eps$ such that w.h.p.,
	\begin{itemize}
		\item[(i)] there are at least $qn$ terminal nodes at time $T+1$, and
		\item[(ii)] the number of communications during $(T_0,T+1]$ is at most $n$.
	\end{itemize}
	\label{prop46}
\end{lemma}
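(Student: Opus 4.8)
The plan is to track the number $\#\frk I_t$ of informed nodes during the pushing phase. By Lemma~\ref{prop51b} and the structure of the estimation phase we may work on the high-probability event that at time $T_0$ there are $\#\frk I_0 = \#\cE^{2K}(M) = \alpha(1+o(1)) = \Theta(n(\log n)^{-\CCC})$ informed nodes, no terminal nodes, and all remaining nodes regular. By Remark~\ref{prop47} no regular node initiates a communication before absolute time $3MK+\lceil 300\eps^{-2}K\rceil-1$, and since the $(2K+3)M$ term in $T_0$ has strictly smaller leading constant (in $(\log\log n)^2$) than $3MK$, this time exceeds $T_0=\Theta((\log\log n)^2)$ by $\Theta((\log\log n)^2)$. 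Hence the whole pushing phase takes place while regular nodes are silent, so (terminal nodes being silent as well) at each relative step $t$ the number of communications is exactly $\#\frk I_{t-1}$, and it suffices to understand the evolution of $t\mapsto\#\frk I_t$.

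The heart of the argument is a geometric growth estimate. For $t<T$ one has deterministically $\#\frk I_t<0.1n$ and $\#\frk T_t<0.1n$, hence $\#\frk R_t>0.8n$, and I would show
\[
\E\bigl[\#\frk I_{t+1}\mid\mathcal H_t\bigr]\ \ge\ 1.4\,\#\frk I_t,
\]
where $\mathcal H_t$ is the history through relative step $t$. Writing $\#\frk I_{t+1}=\#A+\#B$ with $A$ the informed nodes of step $t$ that remain informed and $B$ the regular nodes of step $t$ that turn informed: an informed node whose uniform target is a regular node targeted by no other informed node certainly stays informed, so $\E[\#A\mid\mathcal H_t]\ge \#\frk I_t\cdot\frac{\#\frk R_t}{n-1}\bigl(1-\frac{1}{n-1}\bigr)^{\#\frk I_t-1}\ge 0.72\,\#\frk I_t$; and a regular node turns informed as soon as one informed node targets it, so $\E[\#B\mid\mathcal H_t]=\#\frk R_t\bigl(1-(1-\frac{1}{n-1})^{\#\frk I_t}\bigr)\ge 0.76\,\#\frk I_t$, using $\#\frk R_t/n>0.8$ and $\#\frk I_t/n\le 0.1$. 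The losers of collisions become terminal rather than informed, and some informed nodes target informed or terminal nodes and also become terminal, but these effects only decrease $\#A$ and are swallowed by the crude constants. Conditionally on $\mathcal H_t$, $\#\frk I_{t+1}$ is a function of the fresh independent targets of the $\#\frk I_t$ informed nodes, and changing one target moves it by $O(1)$, so McDiarmid's inequality gives $\P[\#\frk I_{t+1}<1.4\,\#\frk I_t\mid\mathcal H_t]\le 2\exp(-c\,\#\frk I_t)\le 2\exp(-c'n(\log n)^{-\CCC})$, and a union bound over the at most $10\CCC\log\log n$ relevant steps makes the bound $\#\frk I_{t+1}\ge 1.4\,\#\frk I_t$ hold simultaneously for all $t<T$ w.h.p.

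Granting this, the three assertions follow. Iterating the growth bound gives $1.4^{\,T-1}\#\frk I_0\le \#\frk I_{T-1}<0.1n$, hence $T-1<\log_{1.4}\!\bigl(0.1n/\#\frk I_0\bigr)=O(\log\log n)$, which for $n$ large is $<10\CCC\log\log n$. For~(i): if $\#\frk T_T\ge 0.1n$ then $\#\frk T_{T+1}\ge 0.1n$ since terminal nodes never revert; otherwise $\#\frk I_T\ge 0.1n$, and at step $T+1$ each informed node whose target is itself informed has its communication rejected and becomes terminal, an event of probability $\ge(\#\frk I_T-1)/(n-1)\ge 0.09$, so $\E[\#\frk T_{T+1}\mid\mathcal H_T]\ge 0.009n$ and a further application of McDiarmid (with the step-$(T+1)$ targets independent of $\mathcal H_T$) yields $\#\frk T_{T+1}\ge 0.008n$ w.h.p.; thus~(i) holds with a fixed $q>0$. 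For~(ii): running the growth bound backwards gives $\#\frk I_s\le 1.4^{-(T-1-s)}\#\frk I_{T-1}$ for $s\le T-1$, while $\#\frk I_T\le 2\#\frk I_{T-1}$ trivially (one step at most doubles the informed count), so the total communication cost of the pushing phase is
\[
\sum_{s=0}^{T}\#\frk I_s\ \le\ \#\frk I_{T-1}\Bigl(\sum_{k\ge 0}1.4^{-k}+2\Bigr)\ =\ 5.5\,\#\frk I_{T-1}\ <\ 0.55\,n\ <\ n
\]
w.h.p. The main obstacle is the growth estimate together with its concentration: because $\#\frk I_t$ is only of order $n(\log n)^{-\CCC}$ rather than $\Theta(n)$, one must extract an expectation lower bound with an explicit multiplicative constant strictly above $1$ — which forces a careful accounting of collisions and of informed-to-terminal transitions — and then push it through McDiarmid's inequality with the $O(\log\log n)$ steps union-bounded away; the remaining steps are routine.
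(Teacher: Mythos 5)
Your proposal is correct and follows essentially the same route as the paper's proof: a per-step multiplicative growth estimate for the number of informed nodes (expectation bound of about $1.4\,\#\frk I_t$ plus McDiarmid concentration, union-bounded over the $O(\log\log n)$ steps, with regular nodes silent because their $3MK$ counter expires $\Theta((\log\log n)^2)$ after $T_0$), then the same case analysis for (i) via informed-to-informed collisions at step $T+1$, and the same backward geometric-series summation for (ii). The only cosmetic difference is that you split the growth into surviving and newly informed nodes while the paper counts the expected net gain per initiated communication, and (as the paper does by restricting to $t<T\wedge(3MK+\lceil 300\eps^{-2}K\rceil)$) one should state the growth estimate only on the regular-silence window before concluding that $T$ lies inside it.
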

\begin{proof}
	Let $\cF_t$ be the $\sigma$-algebra generated by all events and states before or at time $t$. When an informed node $i$ initiates a communication with a node $j$, either a new informed node will be created (if $j$ is \emph{not} terminal and the connection is established) or $i$ becomes a terminal node (if $j$ is terminal, or if the connection is not established since $j$ is communicating with someone else and/or initiate a communication itself). 
 Assume $t<T\wedge (3MK+\lceil 300\ep^{-2}K \rceil)$ and recall Remark \ref{prop47}. The probability that $j$ is terminal is $\leq 0.1$, while the probability that $j$ initiates a communication and/or communicates with someone else is $\leq 2\frk I_{t-1}/n \leq 0.2$. 
 	The probability that $i$ communicates with a node $j$ that is not terminal and the connection is established (thereby adding a new informed node) is therefore at least 0.7, and the complement has probability at most 0.3, which causes $i$ to be removed from the set of informed nodes. 
	Therefore the expected number of  \emph{additional} informed nodes created due to the communication initiated by $i$ is $\geq 0.7-0.3=0.4$. 
	Therefore, on the event $t<T\wedge (3MK-\lceil 300\ep^{-2}K \rceil)$,
	\eqbn
	\E[\#\frk I_t\,|\,\cF_{t-1} ] \geq 1.4\cdot\#\frk I_{t-1}.
	\eqen
	Given $\cF_{t-1}$ the random variable $\#\frk I_t$ is determined by the random variables $\frk r(i,t)$ for $i\in\frk I_{t-1}$. Since changing one $\frk r(i,t)$ can change $\#\frk I_t$ by at most 2, McDiarmid's inequality gives that
	\eqb
	\P[ \#\frk I_t-\E[\#\frk I_t\,|\,\cF_{t-1} ]<-0.1\cdot\#\frk I_{t-1} \,|\,\cF_{t-1}  ]
	\leq \exp\Big( -\frac{2\cdot(0.1\cdot\#\frk I_{t-1})^2}{ 2^2 \cdot\#\frk I_{t-1} } \Big)
	= \exp ( - 0.005 \cdot\#\frk I_{t-1} ).
	\label{eq120}
	\eqe
	By Lemma \ref{prop51b}, w.h.p., $\#\frk I_{T_0}\geq \alpha^{2K}(1-\beta_M^{2K})\geq n2^{-{K}-1}$. On this event and on the complement of the event in \eqref{eq120} for $t=T_0+1,\dots,T$ we have $\#\frk I_{t-1}>n^{0.99}$ 
	for all such $t$. By using this, \eqref{eq120}, and that
	$$
	n2^{-{K}-1}\cdot 1.3^{\lfloor 10\cdot\CCC\log\log n\rfloor-1}>n,
	$$
	we see that 
	\eqbn 
	 \P[ T-T_0>\lfloor 10\cdot\CCC\log\log n\rfloor-1 ] 
	 \leq
	 (\lfloor 10\cdot\CCC\log\log n\rfloor-1) \exp ( - 0.005 n^{0.99} )
	 +\P[ \#\frk I_0<\alpha^{2K}(1-\beta_M^{2K}) ].
	\eqen
	In particular, $T-T_0\leq \lfloor 10\cdot\CCC\log\log n\rfloor-1$ w.h.p. By Remark \ref{prop47}, regular nodes do not initiate any communications in the pushing phase on this event.
	
	By the definition of $T$ we know that at least one of the following two events must occur: (a) $\#\frk I_T\geq 0.1n$ and (b) $\#\frk T_T\geq 0.1n$. In case (b) it is immediate that assertion (i) of the lemma is satisfied. In case (a) the expected number of terminal nodes at time $T+1$ is at least $\exp(-0.09) \#\frk I_T(\#\frk I_T-1)/(n-1)>0.005n$ for sufficiently large $n$, since an informed node becomes a terminal node if it initiates a communication with an informed node (probability $(\#\frk I_T-1)/(n-1)$) and if no other informed nodes initiate a communication with the same informed node (probability $(1-(n-1)^{-1})^{\#\frk I_T-2}\geq \exp(-0.09)$ for sufficiently large $n$). By McDiarmid's inequality it follows that (i) holds w.h.p.\ also in case (a).
	
	To prove (ii) notice that at each time step $T_0+1,...,T_0+T-1$ the number of informed nodes $\#\frk I_t$ grows by at least a factor of 1.3 w.h.p., so $\#\frk I_{T-t}\leq 0.1n\cdot 1.3^{-t+1}$. Furthermore, $\#\frk I_T\leq 2\cdot\#\frk I_{T-1}<0.2n$. The number of communications initiated at time $t+1$ is equal to $\#\frk I_{t}$ for all $t$. Therefore the number of communications initiated by informed nodes at or before time $T+1$ is at most
	\eqbn
	0.2n+\sum_{j=0}^{\infty} 0.1n\cdot 1.3^{-j}\leq n.
	\eqen
	Since only informed nodes initiate communications in the pushing phase w.h.p., this proves (ii).
\end{proof}

\begin{lemma} 
	The protocol reaches terminal consensus in finite time w.h.p., i.e., $\tau_{\op{terminal}}<\infty$ w.h.p.
	\label{prop:correct2}
\end{lemma}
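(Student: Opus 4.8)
The plan is to assemble the lemmas already proved in this section into a single statement about $\tau_{\op{terminal}}$. The protocol reaches terminal consensus if and only if every node eventually enters a terminal state with belief bit $\frk b$; since the only terminal states are $(4,b)$, and since terminal nodes never change state, it suffices to show that w.h.p.\ (a) every node eventually becomes terminal, and (b) no node ever becomes terminal with the wrong belief bit $1-\frk b$.

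For part (b), I would trace the possible origins of a terminal node. A node becomes terminal either by being an informed node that fails to create a new informed node (in the pushing/pulling phase), or by being a regular node that contacts a terminal node (in the pulling phase). In the first case, the terminal node inherits the belief bit of an informed node, which in turn descended (through the rumor-spreading steps $t=4,\dots,2K+3$ of round $M$ and the pushing phase) from a level $M$ expert. By Lemma \ref{prop52}, w.h.p.\ all level $M$ experts have belief bit $\frk b$, so w.h.p.\ every informed node has belief bit $\frk b$, hence every terminal node created by the first mechanism has belief bit $\frk b$. In the second case the new terminal node simply copies an existing terminal node's state, so an easy induction on the order in which nodes become terminal (as in the proof of Lemma \ref{prop43}) shows that w.h.p.\ all terminal nodes have belief bit $\frk b$. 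This establishes (b) and simultaneously shows $\wh\sigma(i,\tau_{\op{terminal}})=\frk b$ for all $i$ on the relevant high-probability event.

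For part (a), I would first invoke Lemma \ref{prop46}: w.h.p.\ the pushing phase ends by time $T+1$ with $\#\frk T_{T+1}\geq qn$ terminal nodes, and w.h.p.\ (by Remark \ref{prop47} and the bound $T-T_0\leq 10\cdot\CCC\log\log n$) no regular node initiates a communication before the pulling phase begins. After that, every remaining non-terminal node is either informed or regular. Informed nodes become terminal as soon as they contact a terminal node (or have a rejected communication), which happens with uniformly positive probability at each of their clock rings once a $q$-fraction of nodes are terminal; by comparison with a geometric random variable this happens in finite time a.s. Regular nodes initiate a communication every $3MK$ time steps; each such attempt hits a terminal node with probability at least $q$ (the terminal fraction only grows), so again by comparison with geometric random variables every regular node becomes terminal in finite time a.s. Taking a union bound over the finitely many high-probability events ($\cE(0)\ne\emptyset$, Lemmas \ref{prop51b}, \ref{prop52}, \ref{prop46}, and the induction for part (b)), we conclude $\tau_{\op{terminal}}<\infty$ w.h.p.

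The main obstacle I anticipate is bookkeeping rather than any genuinely new estimate: one must be careful that the counter $d\in[3MK]$ of a regular node does not overflow (wrap around prematurely) during the pushing phase, so that regular nodes genuinely stay silent until the pulling phase — this is exactly what Lemma \ref{prop46}'s bound $T-T_0\le 10\cdot\CCC\log\log n < 3MK$ is for — and that the "rejected communication makes me terminal" rule does not cause informed nodes to die off before a constant fraction of terminal nodes has been created, which is handled by the drift estimate $\E[\#\frk I_t\mid\cF_{t-1}]\ge 1.4\,\#\frk I_{t-1}$ in Lemma \ref{prop46}. Once those two points are in hand, the remaining arguments are routine geometric-random-variable comparisons and a union bound.
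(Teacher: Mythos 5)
Your proposal is correct and follows essentially the same route as the paper: w.h.p.\ informed nodes exist (via the expert-count estimates), every node eventually becomes terminal, and the belief bit of every terminal node traces back to a level $M$ expert, which is correct w.h.p.\ by Lemma \ref{prop52}. The paper's proof is just more economical on the termination side — it only notes that, given at least one informed node at the start of the pushing phase, the protocol terminates almost surely, without invoking the quantitative bound of Lemma \ref{prop46} or the geometric-random-variable comparisons you spell out.
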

\begin{proof}
	By Lemma \ref{prop51b}, w.h.p.\ there will be at least one informed node at the beginning of the pushing phase. On this event the protocol terminates a.s., in the  sense that all nodes eventually become terminal nodes. The belief bit of all the terminal nodes originate from a level $M$ expert. Therefore, by Lemma \ref{prop52}, w.h.p.\ all terminal nodes will have belief bit $\frk b$. Combining the above we get $\tau_{\op{terminal}}<\infty$ w.h.p.
\end{proof}

\begin{lemma} 
	There exists a constant $C$ depending only on $\eps$ such that w.h.p.\ the communication cost before terminal consensus is smaller than $Cn$, i.e., $N_{\op{terminal}}<Cn$.
	\label{prop:comm2}
\end{lemma}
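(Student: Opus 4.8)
The plan is to decompose $N_{\op{terminal}}$ according to the type of node that initiates each communication and bound the four resulting contributions separately. First note that aspirants exist only during the expert selection phase, experts only during the estimation phase, terminal nodes never communicate, and expert candidates are passive; moreover, by Remark \ref{prop47} together with Lemma \ref{prop46} (which gives $T-T_0 = O(\log\log n)$, so that $T+1 < 3MK + \lceil 300\eps^{-2}K\rceil - 1$ w.h.p.), w.h.p.\ regular nodes initiate communications only in the pulling phase, i.e.\ at times $>T+1$. The contribution of the experts is negligible: by Lemmas \ref{prop53} and \ref{prop51b} we have, w.h.p., $\#\cE^k(m)\le 2\alpha^k$ and $\#\cE^{2K}(m-1)\le 2\alpha^{2K}$ for all $m\in[M]$ and $k\in[2K]\cup\{0\}$, so round $m$ contributes $3\,\#\cE^{2K}(m-1)+\sum_{k=0}^{2K-1}\#\cE^k(m)=O(n\,0.5^K)$ communications (using $\alpha^k=n\,0.5^{3K}2^k$ and $0.5^{3K}2^{2K}=(\log n)^{-5}$), for a total of $O(Mn\,0.5^K)=O\big(n\log\log n/(\log n)^5\big)=o(n)$ over the $M$ rounds. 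The contribution during $(T_0,T+1]$, all initiated by informed nodes, is at most $n$ by Lemma \ref{prop46}(ii).

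For the aspirants, the first four time steps of the expert selection phase contribute exactly $2n$ communications. For the remaining steps we condition on the test-bit configuration, which is determined by the end of step $4$; on the high-probability event from the proof of Lemma \ref{prop53} the fraction of nodes with test bit $1$ is at least a constant $c_\eps>0$ depending only on $\eps$. Conditionally on the test bits, each aspirant $i$ repeatedly reads the test bit of a uniformly and independently sampled node, and stops as soon as it reads a test bit $1$ (or has read $K$ zeros); since each read returns $1$ with conditional probability at least $c_\eps - O(1/n)$, the number $R_i$ of communications initiated by $i$ after step $4$ is stochastically dominated by a geometric random variable of parameter $c_\eps/2$. Because the sampling variables $\frk r(i,t)$ with $t\ge 5$ are independent across $i$ and the test bits have been frozen, the $R_i$ are moreover conditionally independent, so $\sum_i R_i$ is dominated by a sum of $n$ i.i.d.\ geometric variables of parameter $c_\eps/2$; a Chernoff bound then gives $\sum_i R_i=O(n/\eps)$ w.h.p. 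I expect this to be the main obstacle: one must replace the trivial $O(n\log\log n)$ bound coming from the length of the expert selection phase by a genuine $O(n)$ high-probability bound, and the crux is precisely that freezing the test bits makes the aspirants behave independently with $O(1)$ expected communications each.

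Finally, for the communications initiated after time $T+1$: by Lemma \ref{prop46}(i) there are at least $qn$ terminal nodes at time $T+1$, and since terminal states are absorbing this remains true at all later times. Hence each communication initiated after $T+1$ by an informed node, and each communication initiated (in the pulling phase) by a regular node, turns that node terminal with probability at least $q/2$, regardless of the past. Since each of the $n$ nodes is informed during at most one maximal time interval — and likewise regular — the number of post-$T+1$ communications initiated by informed nodes, and separately those initiated by regular nodes, is each stochastically dominated by a sum of $n$ i.i.d.\ geometric random variables of parameter $q/2$ (the standard coupling for trials whose conditional success probability is bounded below, exactly as in the proof of Lemma \ref{prop44}), hence each is $O(n/q)$ w.h.p.\ by Chebyshev's inequality. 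Adding the four contributions yields $N_{\op{terminal}}=O(n/\eps)+o(n)+n+O(n/q)\le Cn$ w.h.p.\ for a constant $C$ depending only on $\eps$, since $q$ depends only on $\eps$ by Lemma \ref{prop46}.
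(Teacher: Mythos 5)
Your proposal is correct and follows essentially the same route as the paper's proof: split the cost by initiator type, bound the aspirant cost after the first steps by geometric domination once the test bits are frozen and a constant fraction of them equal $1$, note the expert/estimation-phase cost is $o(n)$, invoke Lemma \ref{prop46}(ii) for the pushing phase, and use Lemma \ref{prop46}(i) to dominate each node's pulling-phase (and post-$T+1$ informed) communications by a geometric variable with success probability of order $q$. The only differences are cosmetic bookkeeping (per-round expert counts and the exact $2n$ for the first four steps, and attributing the constant fraction of test-bit-$1$ nodes to Lemma \ref{prop53} rather than proving it in place as the paper does).
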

\begin{proof}
	We estimate the cost in each phase separately.
	
	In the first four steps of the expert selection phase there are $\Theta(n)$ communications. Let $i\in[n]$ and let $b'\in\{-1,0,1 \}$ be the value of the test bit of $i$ after the first four time steps. Then $\P[b'=1]=(1-p)p\exp(-p)(1+O(1/n))$ (resp.\ $\P[b'=1]=(1-p)p\exp(-(1-p))(1+O(1/n))$) if $\frk b_i$ is equal (resp.\ is not equal) to the majority bit and $p$ is the fraction of nodes for which $\frk b_{i} = \frk b$. 
	 
	 In either case $\P[b'=1]>0.18\ep$. By McDiarmid's inequality the number of nodes $i$ for which the test bit $b'=1$ after the first four time steps is at least $0.17\ep n$ w.h.p. On this event, in the remainder of the expert selection phase the number of communications of each fixed node $i$ is stochastically dominated by a geometric random variable with parameter $0.17\ep-1/n$, independently for each $i$, since each node becomes a terminal node with probability at least $0.17\ep-1/n$ every time it initiates a communication. By concentration of the sum of independent geometric random variables, it follows that the number of communications in the expert selection phase is $O(n)$ w.h.p.

	By Remark \ref{prop47}, w.h.p.\ all communications in the estimation phase are initiated by experts. W.h.p.\ there are $O( n(\log n)^{-\CCC} )$ experts at any given time. Since the duration of the estimation phase is $\Theta((\log\log n)^2)$, we get $o(n)$ communications. 
	
	By Lemma \ref{prop46}(ii) there are at most $2n$ communications in the pushing phase w.h.p.
	
	By Lemma \ref{prop46}(i), w.h.p., every time a node $i$ (informed or regular) initiates a communication with a uniformly chosen node $j$ in the pulling phase the probability that $j$ is a terminal node is at least $q$. Therefore the number of communications initiated by $i$ is stochastically dominated by a geometric random variable with success probability $q$. By concentration of the sum of independent geometric random variables, we get that the number of communications in the protocol after time $T$ of the pushing phase is at most $2q^{-1}n$ w.h.p.
\end{proof}

\section{Asynchronous upper bound for $s = C (\log \log n)^3$}
\label{app:upperasync2}

In this section we first describe precisely the protocol introduced in Section \ref{sec:intro-async2}, and then we give a detailed analysis of the protocol, which proves Theorem \ref{prop:upperasync2}.

\subsection{The protocol}\label{ssec:upper1}
We define the protocol by specifying the behavior of the nodes of the various types: aspirant, expert, regular node, terminal node, expert candidate, and informed. For each type we describe (i) what the node does when its clock rings (i.e., whether it initiates a communication with another node and how it updates its state), and (ii) what the node does when another node contacts it (i.e., how the node updates its state when this happens). In Section \ref{sec:intro-async2} we give a more intuitive but less complete description of the protocol than here, and we strongly encourage to read that section before this one.

For each node $i$ we write the state $\sigma(i,t)\in\cS$ of $i$ at time $t\geq 0$ as a tuple of integers such that 
the first element of the tuple indicates the type of $i$ at time $t$, the second to last element of the tuple is the initial bit $\frk b_i\in\{0,1 \}$, and
the last element of the tuple is the belief bit $\wh\sigma(i,t)\in\{0,1 \}$.
Otherwise the form of the tuple depends on the type. We let $\sigma_1(i,t)\in[6]$ denote the type of $i$ at time $t$. Let $\sigma(i,t^-)=\lim_{t' \uparrow t}\sigma(i,t')$ denote the state of $i$ infinitesimally before time $t$. Define
\eqbn
\begin{split}
	&M = \lceil 2\log\log n\rceil,\qquad \qquad\qquad\qquad\qquad\quad
	K=\lceil 6\log\log n\rceil,\\ 
	&\Tasp=\lceil 5000\eps^{-1}\log\log n\rceil,\\
	&t_m^\init = 6\Tasp + 7K(m-1) + K,\qquad\qquad
	t_m = 6\Tasp + 7Km.
\end{split}
\eqen
First we describe the form of the state of a node for each of the different types of nodes.
\begin{itemize}
\item If $i\in[n]$ is an \emph{aspirant} at time $t\geq 0$ then the state of $i$ is of the form $\sigma(i,t)=(1,d,\xi,\chi,b',b'',b''',\frk b_i,b)$, where 
$d\in[\Tasp]$ is the counter, 
$\xi\in[4]\cup\{0 \}$ is the expert type,
$\chi\in[3]$ is the phase,
$b'\in\{-1,0,1 \}$ is the first test bit,  
$b''\in\{-1,0,1 \}$ is the second test bit, and
$b'''\in\{-1,0,1 \}$ is the third test bit.
\item If $i\in[n]$ is an \emph{expert} at time $t\geq 0$ then the state of $i$ is of the form $\sigma(i,t)=(2,m,d,\xi,\frk b_i,b)$, where  
	$m\in [M]\cup\{0 \}$ is the level number,
	$d\in[2K+7]$ is the time counter, and
	$\xi\in[4]$ is the expert type. 
\item If $i\in[n]$ is a \emph{regular node} at time $t\geq 0$ then the state of $i$ is of the form $\sigma(i,t)=(3,d,\xi,\psi,\frk b_i,b)$, where $d\in[\lceil(\log\log n)^2\rceil]$ is the time counter, $\xi\in[4]$ is the expert type, and $\psi\in\{0,1 \}$ is the expert indicator. 
\item If $i\in[n]$ is a \emph{terminal node} at time $t\geq 0$ then the state of $i$ is of the form $\sigma(i,t)=(4,\frk b_i,b)$. 
\item If $i\in[n]$ is an \emph{expert candidate} at time $t\geq 0$ then the state of $i$ is of the form $\sigma(i,t)=(5,m,d,\xi,b^1,b^2,b^3,\frk b_i,b)$, where $m\in[M]$ is the level number, $d\in[2t_M]$ is the time counter, $\xi\in[4]$ is the expert type, and $b^1,b^2,b^3\in\{-1,0,1 \}$ are the test bits. 
\item If $i\in[n]$ is an \emph{informed node} at time $t\geq 0$ then the state of $i$ is of the form $\sigma(i,t)=(6,\frk b_i,b)$.
\end{itemize}

The \emph{initial data} of the protocol are as follows. At time 0 each node $i\in[n]$ is an aspirant with state of the form
$\phi(i,0)=(1,1,0  ,1   ,-1,-1 ,-1  ,\frk b_i,\frk b_i)$, where $\frk b_i$ is the initial bit assigned to $i$.

First we describe the behavior of \emph{aspirants}. An aspirant first determines the value of its expert type $\xi$ (phase $\chi=1$), then it is determined whether the aspirant will become a level 0 expert or not (phase $\chi=2$), and then, on the event that the aspirant will become a level 0 expert, the aspirant waits for an additional $\Tasp$ clock rings (phase $\chi=3$). In phase $\chi=1$, the aspirant repeatedly collects a tuple of four bits by asking four randomly chosen nodes for their initial bit. The first time it gets a tuple with exactly one or three bits 1, the expert type is determined by considering the order of 0's and 1's in the tuple. In phase $\chi=2$ the aspirant repeatedly collects pairs of bits by asking two randomly chosen nodes for their initial bit. If the aspirant collects $K$ pairs $(0,1)$ before the first pair $(1,0)$ then it will become a level 0 expert; otherwise it turns into a regular node. The following is a more precise description of the behavior of aspirants. We assume $i\in[n]$ is an aspirant at time $t\geq 0$ with state $(1,d,\xi,\chi,b',b'',b''',\frk b_i,b)$. 
\begin{itemize}
	\item[(i)] When the clock of $i$ rings and the phase $\chi=1$ then $i$ initiates a communication with a uniformly chosen node $j$. The following describes how $i$ updates its state based on the state of $j$. 
	\begin{itemize}
		\item If $b'=-1$ then $i$ sets $b'=\frk b_j$.
		\item If $b'\neq -1$ and $b''=-1$ then $i$ sets $b''=\frk b_j$.
		\item If $b'\neq -1$, $b''\neq -1$, and $b'''=-1$ then $i$ sets $b'''=\frk b_j$.
		\item If $b', b'',b''' \neq -1$ then consider the tuple $(b',b'',b''',\frk b_j)$. If this tuple has three bits 0 and one bit 1, or three bits 1 and one bit 0, then let $v\in[4]$ be the position of the bit which is different from the other bits. Set the expert type $\xi=v$ and the phase $\chi=2$. If the tuple $(b',b'',b''',\frk b_j)$ does not satisfy the mentioned condition, set $b',b'',b'''$ all equal to $-1$.
	\end{itemize}
	\item[(i')] When the clock of $i$ rings and the phase $\chi=2$ then $i$ initiates a communication with a uniformly chosen node $j$. If $d\leq K$ then $i$ updates its state as follows.
	\begin{itemize}
		\item If $b'=\frk b_j$ then $i$ sets $b'=-1$.
		\item If $b'=-1$ then $i$ sets $b'=\frk b_j$.
		\item If $b'=0$ and $\frk b_j=1$ then $d$ increases by 1 and $i$ sets $b'=-1$.
		\item If $b'=1$ and $\frk b_j=0$ then $i$ becomes a regular node with state $(3,1,\xi,0,\frk b_i,\frk b_i)$, where $\xi$ is the expert type of $i$ immediately before the clock was ringing.
	\end{itemize}
Note that $d$ counts the number of times the event described in the third item above happens. If $d=K+1$ then $i$ sets $\chi=3$ and $d=1$. 
	\item[(i'')] When the clock of $i$ rings then the following happens if the phase $\chi=3$. If $d<\Tasp$ then $d$ increases by 1. If $d=\Tasp$ then $i$ becomes an expert with state $(2,0,1,\xi,\frk b_i,\frk b_i)$, where $\xi$ is the expert type of $i$ immediately before the Poisson clock was ringing. 
	\item[(ii)] If another node $j$ initiates a communication with $i$ then $i$ will not change its state.  
\end{itemize}
If $i\in[n]$ is an \emph{expert} at time $t\geq 0$ then the following holds. 
\begin{itemize}
	\item[(i)] When the clock of $i$ rings, $i$ initiates a communication with another node $j$. If the time counter $d$ of $i$ equals $2K+7$ and the level $m<M$, then $i$ will transform into a regular node with state $(3,1,\xi,1,\frk b_i,b)$ immediately after the communication, where $\xi$ (resp.\ $b$) is the expert type (resp.\ belief bit) of $i$ immediately before the communication. If $d=2K+7$ and $m=M$ then $i$ transforms into an informed node immediately after the communication. 
	Notice that $m$'s growth is governed by the rules for regular nodes.
	If $d\neq 2K+7$ then the time counter $d$ will increase by 1 and the other elements of the tuple describing the state remain unchanged.		
	\item[(ii)] If another node $j$ initiates a communication with $i$ then $i$ will not change its state.
\end{itemize}
If $i\in[n]$ is a \emph{regular node} at time $t\geq 0$ then the following holds.
\begin{itemize}
	\item[(i)] When the clock of $i$ rings then $i$ initiates a communication with a uniformly sampled node $j$ if and only if the time counter $d=\lceil(\log\log n)^2\rceil$. If $j$ is \emph{not} a terminal node then $i$ sets its time counter $d=1$. If $j$ is a terminal node then $i$ becomes a terminal node with the same state as $j$, i.e., $\sigma(i,t)=\sigma(j,t)$. If $i$ does not initiate a communication with another node (since $d\neq \lceil(\log\log n)^2\rceil$) then $i$ increases its time counter $d$ by $1$. In other words, a regular node will initiate a communication every $\lceil(\log\log n)^2\rceil$ clock rings until it encounters a terminal node, upon which it will also become a terminal node.
	\item[(ii)] If a node $j$ contacts $i$ then the state of $i$ is updated as follows:
	\begin{itemize}
		\item If $j$ is an expert with state $\sigma(j,t^-)=(2,m,d,\xi',\frk b_j,b)$  for $d\neq 2K+7$, and if the expert indicator $\psi$ of $i$ satisfies $\psi=0$, then $i$ becomes an expert with state $(2,m,d+1,\xi,\frk b_i,b)$, where $\xi$ is the expert type of $i$ immediately before the communication. In other words, if $j$ is an expert and $i$ has not previously been an expert (since $\psi=0$) then $i$ will become an expert of the same level and with the same belief bit as $j$, but with counter $d$ increased by 1 (assuming the counter satisfies $d\neq 2K+7$). 
		\item If $j$ is an expert with state $(2,m,2K+7,\xi',\frk b_j,b)$ for $m<M$ and $\xi\in[3]$, 
		and if the expert indicator $\psi$ of $i$ satisfies $\psi=0$,
		then $i$ becomes an expert candidate with state $(5,m+1,1,\xi,b^1,b^2,b^3,\frk b_i,b)$, where $b^{\xi'}=b$, $b^{\xi''}=-1$ for $\xi''=[3]\setminus\{\xi' \}$, and $\xi$ is the expert type of $i$ immediately before the communication. In particular, $i$ becomes an expert candidate of level $m+1$ upon being contacted by a level $m$ expert with counter $d=2K+7$ and expert type $\xi\in[3]$. 
		\item If $j$ has state $(2,M,2K+7,\xi,\frk b_j,b)$ then $i$ becomes an informed node with state $(6,\frk b_i,b)$. 
		\item If $j$ has state $(6,\frk b_j,b)$ then $i$ becomes an informed node with state $(6,\frk b_i,b)$. 
	\end{itemize}
	Note that in all four cases $i$ adopts the belief bit $b$ of $j$ (except in the first two cases for $\psi=1$).
\end{itemize} 
If $i\in[n]$ is an \emph{expert candidate} at time $t\geq 0$ then the following holds.
\begin{itemize}
	\item[(i)] When the clock of $i$ rings $i$ will not initiate a communication. If the time counter $d<2t_M$ then $d$ is increased by 1. If $d=2t_M$ then $i$ becomes a regular node with state $(3,1,\xi,1,\frk b_i,b)$, where $b$ (resp.\ $\xi$) is the belief bit (resp.\ expert type) of $i$ immediately before the clock was ringing. In other words, an expert candidate will remain an expert candidate for at most $2t_M$ clock rings, and if its clock rings $2t_M$ times before it has turned into an expert (see (ii) right below) then it will turn into a regular node.
	\item[(ii)] If $i$ has state $(5,m,d,\xi',b^1,b^2,b^3,\frk b_i,b)$ and is contacted by a node $j$ then the state of $i$ updates as follows:
	\begin{itemize}
		\item If $j$ is an expert with state $(2,m-1,2K+7,\xi,\frk b_j,b')$ for $\xi\in[3]$, and if $b^\xi=-1$, then $i$ sets $b^\xi=b'$. If $b^1,b^2,b^3$ are all different from $-1$ after this update then $i$ becomes a level $m$ expert with state $(2,m,1,\xi',\frk b_i,b_0)$, where $b_0$ is the majority bit in $\{b^1,b^2,b^3\}$. In other words, an expert candidate $i$ becomes a level $m$ expert if it has received bits from three level $m-1$ experts with counter $2K+7$ and expert type 1,2,3, respectively, and the belief bit of $i$ will be the majority belief bit among these three level $m-1$ experts.
		\item Otherwise the state of $i$ does not change.
	\end{itemize} 
\end{itemize}
If $i\in[n]$ is an \emph{informed node} at time $t\geq 0$ then the following hold.
\begin{itemize}
	\item[(i)] When the clock of $i$ rings it initiates a communication with a uniformly chosen node $j$. If $j$ is informed or terminal then $i$ becomes a terminal node with unchanged belief bit. Otherwise $i$ does not update its state. 
	\item[(ii)] If another node $j$ initiates a communication with $i$ then $i$ will not change its state.
\end{itemize}
If $i\in[n]$ is a \emph{terminal node} then it does not initiate communications and it does not update its state if it is contacted by other nodes. Notice that a terminal node is in a terminal state as defined in Section \ref{sec:model}, i.e., a terminal node has a state in the set $\cS_\infty$ defined in that section.

\subsection{Analysis}

\begin{lemma}
		For the protocol described in Section \ref{ssec:upper1} there is a constant $C$ depending only on $\eps$ such that $C\lceil\log\log n\rceil^3$ states of memory per node suffice.
\end{lemma}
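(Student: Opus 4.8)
The plan is to mimic the proofs of Lemmas \ref{prop42} and \ref{prop:memory2}: since every state belongs to exactly one of the six node types (the first coordinate of the tuple records the type), it suffices to bound, for each type separately, the number of admissible tuples, and then sum the six bounds. For a fixed type, the number of admissible tuples is at most the product of the number of possible values of each coordinate of the corresponding tuple, so the whole argument reduces to reading off the coordinate ranges from the protocol description in Section \ref{ssec:upper1} and multiplying.

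Carrying this out type by type: an aspirant has state $(1,d,\xi,\chi,b',b'',b''',\frk b_i,b)$ with a counter $d\in[\Tasp]$ where $\Tasp=\Theta(\eps^{-1}\log\log n)$, and the remaining coordinates $\xi,\chi,b',b'',b''',\frk b_i,b$ each take a bounded number of values, giving $O(\log\log n)$ states; an expert has state $(2,m,d,\xi,\frk b_i,b)$ with a level $m\in\{0,\dots,M\}$ where $M=\Theta(\log\log n)$, a counter $d\in[2K+7]$ where $K=\Theta(\log\log n)$, and the rest bounded, giving $O((\log\log n)^2)$ states; a regular node has state $(3,d,\xi,\psi,\frk b_i,b)$ with a counter $d\in[\lceil(\log\log n)^2\rceil]$ and otherwise bounded coordinates, again $O((\log\log n)^2)$; terminal and informed nodes use only $O(1)$ states. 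The dominant type is the expert candidate, whose state $(5,m,d,\xi,b^1,b^2,b^3,\frk b_i,b)$ has a level $m\in[M]=\Theta(\log\log n)$, a counter $d\in[2t_M]$, and $\xi,b^1,b^2,b^3,\frk b_i,b$ all bounded. The key observation is that $t_M=6\Tasp+7KM=\Theta(\eps^{-1}\log\log n)+\Theta((\log\log n)^2)=\Theta((\log\log n)^2)$ for $n$ large, so the counter contributes two factors of $\log\log n$ and the level one more, yielding $\Theta((\log\log n)^3)$ expert candidate states.

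Summing the six bounds gives a total of $O((\log\log n)^3)$ states, where every implicit constant depends only on $\eps$ (the $\eps$-dependence entering solely through the factor $\eps^{-1}$ in $\Tasp$); hence $C\lceil\log\log n\rceil^3$ states per node suffice for a suitable $C=C(\eps)$. There is no real obstacle beyond careful bookkeeping; the only point that requires a moment's thought is confirming that $t_M$ is quadratic — not cubic — in $\log\log n$, so that it is the product of the expert candidate's level range with its quadratic timer (rather than the timer alone) that produces the cubic memory bound, consistent with the ``memory usage'' heuristic in Section \ref{sec:intro-async2}.
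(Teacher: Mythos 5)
Your proposal is correct and follows the same route as the paper's proof: enumerate the six node types, bound the number of states of each type by the product of its coordinate ranges, and observe that the expert candidates dominate, with the level range $M=\Theta(\log\log n)$ times the counter range $2t_M=\Theta((\log\log n)^2)$ giving the cubic bound (with the $\eps$-dependence entering only through $\Tasp$, hence $t_M$). Nothing is missing.
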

\begin{proof}
	This is immediately verified by calculating the memory need for each of the six types of nodes. We notice that the expert candidates require the most memory. Namely, by multiplying the number of states allowed in each element of the tuple specified above, we see that an expert candidate must be able to store the following number of states 
	\eqbn
	6 \cdot
	M\cdot
	2t_M\cdot
	4\cdot
	3^3\cdot
	2\cdot
	2=\Theta( (\log\log n)^3 ). 
	\eqen
\end{proof}

For any $m\in[M]\cup\{0 \}$ let $\cE(m)$ denote the set of level $m$ experts, i.e.,
\eqbn
\begin{split}
	\cE(m)=
	\{ i\in[n]\,:\,\exists t\geq 0 \text{\,\,such\,\,that\,\,}  \sigma(i,t)=(2,m,2K+7,\xi,\frk b_i,b),\, b\in\{0,1\},\xi\in[4] \}.  
\end{split}
\eqen
\begin{remark}
	Notice that the sets $\cE(m)$ for $m\in[M]$ are disjoint, since a node can only be an expert once, which follows by using that when an expert converts into a regular node the regular node will have expert indicator $\psi=1$, and a regular node with expert indicator $\psi=1$ cannot become an expert or expert candidate.
	\label{rmk2}
\end{remark}
For $m\in[M]$ let $\cE^\init(m)\subset\cE(m)$ denote the set of level $m$ experts which became experts upon receiving three bits from level $m-1$ experts, i.e.,
\eqbn
\begin{split}
	\cE^\init(m)=&\,\,
	\{ i\in[n]\,:\,\exists t\geq 0 \text{\,\,such\,\,that\,\,}\sigma(i,t)=(2,m,1,\xi,\frk b_i,b),\, b\in\{0,1\},\xi\in[4] \}. 
\end{split}
\eqen

We will now define what it means that an expert is \emph{premature}. We will let $\frk P\subset\bigcup_{m\in[M]\cup\{0 \}}\cE(m)$ denote the set of premature experts. A level 0 expert $i$ is premature if it became an expert before time $\Tasp/2$, i.e.,  
\eqbn
\frk P\cap \cE(0) = \{i\in\cE(0)\,:\,\exists t<\Tasp/2
\,\,\text{such that}\,\, \sigma_1(i,t)=2 \}.
\eqen
We now define $\frk P\cap \cE(m)$ inductively. Given $\frk P\cap \cE(m-1)$ we first define $ \frk P\cap \cE^\init(m)$ and then we define $\frk P\cap (\cE(m)\setminus\cE^\init(m))$. Let $i\in\cE^\init(m)$. Then the belief bit of $i$ was determined by taking the majority bit among the bits received from three level $m-1$ experts $\frk i_1(i),\frk i_2(i),\frk i_3(i)\in\cE(m-1)$. We say that $i$ is premature if at least one of these three nodes is premature, i.e., 
\eqbn
\frk P\cap \cE^\init(m) = \{i\in\cE^1(m)\,:\, \{\frk i_1(i),\frk i_2(i),\frk i_3(i) \}\cap\frk P \neq\emptyset  \}.
\eqen
Let $i\in\cE(m)\setminus\cE^\init(m)$. Then $i$ first became an expert upon being contacted by some node $\frk i_4(i)\in\cE(m)$. We say that $i$ is premature if $\frk i_4$ is premature, i.e.,
\eqb
\frk P\cap (\cE(m)\setminus\cE^\init(m)) = \{i\in\cE(m)\setminus\cE^1(m)\,:\, \frk i_4(i)\in\frk P \}.
\eqe
Note that $\frk P$ is well-defined since a node can only be an expert once, see Remark \ref{rmk2}. Recall that an expert $i$ spreads its majority bit $\frk b$ by repeatedly contacting other nodes $j$. If $i$ is premature the bit may be spread to fewer other nodes since it may be more likely that $j$ is an aspirant. Part of our analysis in this section involves bounding from above the number of premature nodes, which guarantees that sufficiently many experts are created.

Next we define the set $\cA\subset[n]$ of \emph{non-expiring} nodes. 
Recall that an expert candidate will only remain an expert candidate for $2t_M$ clock rings, and will turn into a regular node if it has not received three expert bits before this happens. In order to bound from below the number of experts, we bound from above the number expert candidates whose clock is ringing at least $2t_M$ times before time $t_M$. The purpose of introducing non-expiring nodes is to keep track of experts candidates for which the clock rings \emph{less than} $2t_M$ times before time $t_M$ and experts originating from such expert candidates, since it is easier to bound from below the number of such experts. 
We say that a node $i$ is non-expiring if at least one of the following three criteria are satisfied: (i) The Poisson clock of $i$ rings at most $2t_M$ times during $[0,t_M]$ and $i$ is an expert candidate at some point in time, i.e.,
\eqbn
\exists t\geq 0\text{\,\,such\,\,that\,\,}\sigma_1(i,t)=5\text{\qquad and\qquad }
\#(\cP_i\cap[0,t_M])\leq 2t_M,
\eqen
(ii) $i$ is a level $m$ expert for some $m \in [M]$ due to receiving a bit from a non-expiring level $m$ expert, i.e.,
$$
i\in\bigcup_{m\in[M]}\cE(m)\setminus\cE^\init(m)
\text{\qquad and\qquad }
\frk i_4(i)\in\cA,
$$ 
or (iii) $i\in\cE(0)$.
Note that if a non-expiring expert candidate receives bits from three level $m-1$ experts before time $t_M$ and is a regular node with $\psi=0$ when it receives the first of these three bits, then it will become a level $m$ expert. We will show that for an expert candidate $i$ the event in (i) happens w.h.p., which helps us to lower bound the number of non-expiring experts and expert candidates.

Let $\cE_-(m)\subset\cE(m)$ be the set of level $m$ experts created before time $t_m$ which are non-expiring and \emph{not} premature, i.e.,
\eqbn
\begin{split}
	\cE_-(m)=&\,\,
	\{ i\in[n]\,:\,\exists t\in[0,t_m] \text{\,\,such\,\,that\,\,}\sigma(i,t)=(2,m,2K+7,\xi,\frk b_i,b),\, b\in\{0,1\},\xi\in[4],\\
	&\qquad i\not\in\frk P,\text{\,\,and\,\,} i\in\cA \}. 
\end{split}
\eqen
Let $\cE^\init_-(m)\subset\cE_-(m)\cap \cE^1(m)$ denote the set of level $m$ experts which became experts upon receiving three bits from level $m-1$ experts, 
which became experts before time $t_m^\init$, and 
which are non-expiring and not premature, i.e.,
\eqbn
\begin{split}
	\cE_-^\init(m)=&\,\,
	\{ i\in[n]\,:\,\exists t\in[0,t_m^\init] \text{\,\,such\,\,that\,\,}\sigma(i,t)=(2,m,1,\xi,\frk b_i,b),\, b\in\{0,1\},\xi\in[4],\\
	&\qquad i\not\in\frk P,\text{\,\,and\,\,} i\in\cA. 
\end{split}
\eqen
Define 
\eqbn
L(m) = \#\cE(m),\quad 
L^\init(m) = \#\cE^\init(m),\quad
L_-(m) = \#\cE_-(m),\quad
L^\init_-(m) = \#\cE^\init_-(m)
\eqen
and
\eqbn
\alpha = n0.5^K,\qquad \beta_{m}=5^m(\log n)^{-\CCC},\qquad
\alpha^\init = n0.5^{3K},\qquad 
\beta^\init_{m}=3\cdot 5^{m-1}(\log n)^{-\CCC}+(\log n)^{-\CCC}.
\eqen

We will use the following estimates for a Poisson random variable $X\sim\op{Pois}(\lambda)$ with $\lambda>0$ multiple times throughout this section. See e.g.\ \cite[Theorem A.1.15]{alon2004probabilistic} for a proof.
\eqb
\P[ X\geq 2\lambda ]\leq (e/4)^\lambda <2^{-0.55\lambda},\qquad
\P[X\leq\lambda/2]\leq \exp(-\lambda/8)<2^{-0.18\lambda}. 
\label{eq:poisson}
\eqe
The following basic estimate for a geometric random variable $Y\sim\op{Geom}(\mu)$ will also be used multiple times.
\eqb
\P[Y\geq x] = \exp(-\mu x)<2^{-1.44\mu x},\qquad x>0.
\label{eq:geometric}
\eqe

The following lemma says that the number of level 0 experts is very close to $\alpha$. Furthermore, it says that the set $\cE_-(0)$ contains almost all level 0 experts. 
\begin{lemma}[Initial bound, number of experts] For all sufficiently large $n$,
	\eqbn
	\P[|L_-(0)-\alpha|>\beta_0\alpha ]\leq 3\exp(-2n^{0.2}),\qquad
	\P[|L(0)-\alpha|>\beta_0\alpha ]\leq 2\exp(-2n^{0.2}).
	\eqen
	\label{prop48}
\end{lemma}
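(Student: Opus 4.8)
The plan is to first handle the "expiry" issue and then carry out a von Neumann unbiasing argument very similar to the one in Lemma \ref{prop53} of the synchronous case and Lemma \ref{prop38} of the first asynchronous protocol. Recall that a level $0$ expert is a node that, in phase $\chi=2$, observes $K$ pairs $(0,1)$ before the first pair $(1,0)$; and such a node is in $\cE_-(0)$ essentially because $\cE(0)\subseteq\cA$ by definition (criterion (iii)) and a level $0$ expert is premature only if it became an expert before time $\Tasp/2$. So the two claimed inequalities differ only in whether we additionally exclude premature level $0$ experts, and the first inequality will follow from the second once we show that w.h.p.\ \emph{no} level $0$ expert is premature, i.e.\ no aspirant completes phases $\chi=1$ and $\chi=2$ within $\Tasp/2$ clock rings. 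That last fact follows from the Poisson tail bound \eqref{eq:poisson}: becoming a level $0$ expert requires at least $K$ successful $(0,1)$ observations in phase $\chi=2$, hence at least $\Tasp$ clock rings total (since $d$ runs through $[\Tasp]$ while $\chi=3$, the aspirant waits an additional $\Tasp$ rings anyway before converting); a node's clock rings $\le\Tasp/2$ times in $[0,\Tasp/2]$ on an event whose complement has probability $\le (e/4)^{\Tasp/4}\le 2^{-0.13\Tasp}$, and a union bound over the $n$ nodes, using $\Tasp=\lceil 5000\eps^{-1}\log\log n\rceil$, makes the exceptional probability $o(n\cdot 2^{-0.13\Tasp})$, which is far smaller than $\exp(-2n^{0.2})$ for large $n$. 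Actually, to keep the stated bound I would just fold this event into the overall union bound.

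For the main count, the key point is that when an aspirant is in phase $\chi=2$, each pair of bits it collects is drawn (up to an $O(1/n)$ correction from not sampling itself) from the initial-bit distribution, and by von Neumann unbiasing the events "$(0,1)$ before $(1,0)$" in a single trial are exactly symmetric: conditioned on the pair being in $\{(0,1),(1,0)\}$, each has probability $1/2$. Hence the probability that a fixed node eventually becomes a level $0$ expert (in the idealized process where phase $\chi=2$ runs forever) is exactly $0.5^K$. The only subtlety is that phase $\chi=2$ does \emph{not} run forever — the aspirant phase lasts at most $\Tasp$ clock rings for most nodes — but the probability that a node needs more than, say, $\Tasp/4$ clock rings in phase $\chi=2$ before it either becomes a regular node or accumulates $K$ successful observations is geometrically small: each clock ring gives a pair in $\{(0,1),(1,0)\}$ with probability $p(1-p)(1+O(1/n))\ge\eps/2$, so by \eqref{eq:geometric} the relevant tail is $\le(1-\eps/2)^{\Tasp/8}$, and with $\Tasp=\lceil 5000\eps^{-1}\log\log n\rceil$ a union bound over $n$ nodes makes this negligible. (This is exactly the truncation issue handled via $\wh\cE(0)$ versus $\cE(0)$ in the proof of Lemma \ref{prop53}, and I would reuse that device: let $\wh L(0)$ be the idealized count, bound $\E[\wh L(0)]=n0.5^K(1+O(1/n))=\alpha(1+o(\beta_0))$ and $|L(0)-\wh L(0)|$ separately, the latter by Markov's inequality on the number of nodes that "run out of time".)

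Finally, concentration: the idealized indicator for node $i$ becoming a level $0$ expert is \emph{not} independent across $i$, because all nodes are querying the same pool of initial bits, but it is a function of the sequence of (clock, query-target) randomness which changes by a bounded amount when one node's randomness changes. So I would apply McDiarmid's bounded-differences inequality, conditioning where convenient on the $\sigma$-algebra generated by the initial bits, exactly as in the proof of Lemma \ref{prop53}: changing the query randomness of one node changes $\wh L(0)$ by at most $1$ (that node alone), plus possibly a few others whose queries hit the changed node — on the high-probability event that no clock rings more than $\Tasp$ times in the aspirant phase this is bounded, so McDiarmid (or its with-high-probability-bounded-differences variant, \cite[Theorem 3.9]{kutin-mcdiarmid}) gives a deviation bound of the form $\exp(-\Omega(n^{0.2}))$ for deviations of order $n^{0.6}\ll\beta_0\alpha=5^0(\log n)^{-\CCC}\cdot n0.5^K$; note $\alpha=n2^{-K}$ with $K=\lceil 6\log\log n\rceil$, so $\beta_0\alpha\gg n^{0.9}$ is polynomially large and the error term $n^{0.6}$ is comfortably smaller. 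Adding up the $O(1)$ exceptional events (premature experts, clock overruns, time-truncation in phase $\chi=2$, and the McDiarmid deviation) gives the claimed bounds $3\exp(-2n^{0.2})$ and $2\exp(-2n^{0.2})$ for large $n$.

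The main obstacle I expect is bookkeeping the dependence structure cleanly enough to invoke McDiarmid — in particular, verifying that one node's clock/query randomness affects $\wh L(0)$ by a bounded amount requires controlling how many other aspirants query a given node during the aspirant phase, which is why one conditions on a "no clock rings too often" event and uses the with-high-probability-bounded-differences version of McDiarmid. The von Neumann symmetry and the Poisson/geometric tail bounds are routine; the truncation argument ($\cE(0)$ vs.\ its idealized version) is the one genuinely protocol-specific step, but it is a direct adaptation of the corresponding step in Lemma \ref{prop53}.
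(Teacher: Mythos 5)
Your reduction of the first inequality is where the real gap lies. By its definition, $\cE_-(0)$ is \emph{not} $\cE(0)\setminus\frk P$: membership also requires that the node reach counter $2K+7$ as a level-$0$ expert by time $t_0=6\Tasp$, i.e.\ that its clock rings often enough during $[0,6\Tasp]$ and that phases $\chi=1,2$ consume few enough of those rings. The paper's proof spends most of its effort on precisely these timing events, and your proposal never addresses them. Even for prematurity your argument fails quantitatively: the per-node probability that a level-$0$ expert is premature is only polylogarithmically small (of order $0.5^K\cdot 2^{-\Omega(\Tasp)}=(\log n)^{-O(1)}$), so a union bound over the $n$ nodes exceeds $1$ — you cannot conclude that w.h.p.\ no level-$0$ expert is premature, and in any case a ``w.h.p.'' statement is far too weak for the claimed bound $3\exp(-2n^{0.2})$. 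The viable route (the paper's) is to bound the \emph{count}: show $\P[i\in\cE(0)\setminus\cE_-(0)]\le (\log n)^{-c}$ per node, use that these events are independent across $i$, and apply Hoeffding to get $\#(\cE(0)\setminus\cE_-(0))\ll \beta_0\alpha$ outside an event of probability $\exp(-n(\log n)^{-c'})$, which is what makes the stretched-exponential probability in the lemma attainable.

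For the second inequality, the extra machinery you invoke rests on two misreadings of the protocol, and one of them would wreck the stated constants if carried out. In this asynchronous protocol the aspirant phases $\chi=1,2$ have no deadline — only phase $\chi=3$ lasts exactly $\Tasp$ clock rings — so there is no truncation at all: a node becomes a level-$0$ expert with probability exactly $0.5^K$. Moreover the indicators $\1_{\{i\in\cE(0)\}}$ are genuinely independent across $i$, since aspirants query the \emph{deterministic} initial bits $\frk b_j$ and an aspirant does not change state when contacted; hence plain Hoeffding already gives $\P[|L(0)-\alpha|>n^{0.6}]\le 2\exp(-2n^{0.2})$, with no need for the $\wh\cE(0)$ device or a McDiarmid variant (your worry that other nodes' queries ``hit the changed node'' is unfounded for the same reason). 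This is not merely cosmetic: the truncation correction you propose — Markov, or a union bound over $n$ nodes of a tail of size $(1-\eps/2)^{\Theta(\Tasp)}=(\log n)^{-O(1)}$ — would contribute an additive error probability that is only polylogarithmically small, which does not fit under $2\exp(-2n^{0.2})$; as written, that step is part of your proof and invalidates the claimed bound even though the issue it tries to fix does not exist here.
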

\begin{proof}
	Recall that an aspirant $i$ with phase $\chi=2$ repeatedly collects pairs of bits, and that $i$ becomes an expert if and only if it observes $K$ bit pairs $(0,1)$ before the first bit pair $(1,0)$. Since $i$ is equally likely to observe a pair $(0,1)$ and a pair $(1,0)$ each time it collects a bit pair, it becomes a level 0 expert with probability exactly $0.5^K$. Furthermore, the events $\{i\in\cE(0) \}$ are independent for different $i$. 
By Hoeffding's inequality,
	\eqb
	\P[ |\#\cE(0)-\alpha|>n^{0.6} ] \leq 2\exp(-2n^{0.2}).
	\label{eq122}
	\eqe
	This implies the second inequality of the lemma. 
	
	To prove the first inequality of the lemma, we will bound from above $\#(\cE(0)\setminus\cE_-(0))$. For $i\in[n]$ and $v=1,2,3$ let $\tau_i^v\geq 0$ denote the time at which $i$ exits phase $\chi=v$ as an aspirant, i.e., letting $\chi(i,t)\in[3]$ denote the phase of $i$ at time $t$ on the event that $\sigma_1(i,t)=1$ (and setting $\chi(i,t)=0$ if $\sigma_1(i,t)\neq 1$), we have
	\eqbn
	\tau_i^v = \sup\{t\geq 0\,:\, \chi(i,t)=v \}.
	\eqen
	If $i\in\cE(0)\setminus\cE_-(0)$ then, by definition of $\cE_-(0)$, at least one of the following four events occur: 
	(i) $\#(\cP_i\cap[0,6\Tasp])\leq 3\Tasp$, 
	(ii) $\#(\cP_i\cap[0,\tau_i^1])\geq \Tasp$,    
	(iii) $\#(\cP_i\cap(\tau_i^1,\tau_i^2])\geq \Tasp$, and
	(iv) $i\in\frk P$.
	We will bound the probability of the events (i)-(iv) from above. 
	
	To bound the probability of the event in (i), we use \eqref{eq:poisson} to obtain the following
	\eqbn
	\P[ \#(\cP_i\cap[0,6\Tasp])\leq 3\Tasp ]\leq 2^{-0.18\cdot 6\Tasp}<(\log n)^{-20000}.
	\eqen
	
	To bound the probability of the event in (ii), recall that in phase $\chi=1$ of the aspirant phase a node collects bit quadruples $(b',b'',b''',b'''')$ repeatedly, and phase $\chi=1$ ends the first time that this quadruple contains exactly one or three bits 0. For each quadruple and sufficiently large $n$ this constraint is satisfied with probability at least $4\ep(1-\ep)^3>\ep/2$, so $\#(\cP_i\cap[0,\tau_i^1])$ is stochastically dominated by 4 times a geometric random variable $Y$ with success probability $\ep/2$. Therefore we get upon an application of \eqref{eq:poisson},
	\eqbn
	\P[\#(\cP_i\cap[0,\tau_i^1])\geq \Tasp]
	\leq \P[Y>\Tasp/4]
	\leq 2^{-1.44 \cdot \ep/2 \cdot \Tasp/4}
	\leq(\log n)^{-900},
	\eqen
	where we have used \eqref{eq:geometric} to get the second to last inequality.
	
	To bound the probability of the event in (iii), recall that in phase $\chi=2$ of the aspirant phase a node collects bit pairs $(b',b'')$ repeatedly, and phase $\chi=2$ ends at the latest at the first time that $(b',b'')=(1,0)$. For each pair and sufficiently large $n$, each time $i$ collects a bit pair we have $(b',b'')=(1,0)$ with probability at least $\ep(1-\ep)(1-1/n)>\ep/2$, so $\#(\cP_i\cap(\tau_i^1,\tau_i^2])$ is stochastically dominated by 2 times the geometric random variable $Y$ considered above. Therefore we get the following
	\eqbn
	\P[\#(\cP_i\cap(\tau_i^1,\tau_i^2])\geq \Tasp]
	\leq \P[Y>\Tasp/2]
	\leq 2^{-1.44 \cdot \ep/2 \cdot \Tasp/2}
	\leq(\log n)^{-1800}.
	\eqen
	
	To bound the probability of the event in (iv), recall that an aspirant which becomes a round 0 expert has a phase $\chi=3$ which consists of $\Tasp$ clock rings. In particular, in order for a round 0 expert to be in $\frk P$ its clock must ring at least $\Tasp$ times during the time interval $[0,\Tasp/2]$, so \eqref{eq:poisson} gives
	\eqbn
	\begin{split}
		\P[ i\in\cE(0)\cap\frk P ]
		&=\P[ i\in\cE(0)]\cdot\P[i\in\frk P \,|\, i\in\cE(0)]
		<0.5^K\cdot\P[\#(\cP_i\cap[0,\Tasp/2])>\Tasp]\\
		&<0.5^K\cdot 2^{-0.55\cdot \Tasp/2}<(\log n)^{-5500}. 
	\end{split}
	\eqen
	
	Combining the bounds for the events (i)-(iv) above, we get that for each $i\in[n]$,
	\eqbn
	\P[i\in\cE(0)\setminus\cE_-(0)]<2(\log n)^{-900}.
	\eqen
	Since the event $i\in\cE(0)\setminus\cE_-(0)$ happens independently for each $i$, Hoeffding's inequality gives that except on an event of probability $\exp(-(\log n)^{-1800}n)$ for sufficiently large $n$, we have $\#(\cE(0)\setminus\cE_-(0))<3(\log n)^{-900}n$. Combining this with \eqref{eq122} and using that $\cE_-(0)\subset\cE(0)$ gives the first inequality of the lemma.
\end{proof}

The following lemma says that there are few aspirants at time $\Tasp/2$. Recall that for $i\in[n]$ and $t\geq 0$ we have $\sigma_1(i,t)=1$ if and only if node $i$ is an aspirant at time $t$.
\begin{lemma}
	Define the event $D_1$ by 
	\eqbn
	D_1= \Big\{ \#\{ i\in[n]\,:\,\sigma_1(i,\Tasp/2)=1\} <2n(\log n)^{-6}\Big\}.
	\eqen
	Then $D_1$ happens w.h.p.
	\label{prop51}
\end{lemma}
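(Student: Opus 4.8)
The plan is to sort the aspirants present at time $\Tasp/2$ according to their phase $\chi\in\{1,2,3\}$. A node that is ever in phase $\chi=3$ will eventually become a level $0$ expert (in phase $3$ it merely counts down $\Tasp$ clock rings), so the set of nodes that are aspirants in phase $3$ at time $\Tasp/2$ is contained in $\cE(0)$, and this part of the count is at most $L(0)$. By Lemma~\ref{prop48}, w.h.p.\ $L(0)<(1+\beta_0)\alpha$, and since $\alpha=n0.5^K\le n(\log n)^{-6}$ (using $K\ge 6\log\log n$) and $\beta_0=(\log n)^{-\CCC}\to 0$, this gives $L(0)<1.5\,n(\log n)^{-6}$ w.h.p.\ for $n$ large. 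It therefore remains to show that the number $V$ of nodes that are aspirants in phase $1$ or $2$ at time $\Tasp/2$ satisfies $V<0.5\,n(\log n)^{-6}$ w.h.p.

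To bound $\E[V]$, I would estimate, for a fixed node $i$, the probability that $i$ is still in phase $1$ or $2$ at time $\Tasp/2$. While it is in phases $1$ and $2$, an aspirant only reads other nodes' (deterministic) initial bits, so its evolution is a function of its clock $\cP_i$ together with an i.i.d.\ sequence of bits, each equal to $1$ with probability $q_i$, where $q_i$ and $1-q_i$ both lie in $[\eps/2,\,1-\eps/2]$ for $n$ large. First, $\P[\#(\cP_i\cap[0,\Tasp/2])<\Tasp/4]\le (\log n)^{-C}$ for any fixed $C$ and $n$ large, by the Poisson tail \eqref{eq:poisson} and the choice $\Tasp=\lceil 5000\eps^{-1}\log\log n\rceil$. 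On the complement, $i$ has read at least $\Tasp/4$ bits by time $\Tasp/2$. Phase $1$ ends as soon as $i$ reads a quadruple containing exactly one or exactly three ones, which has probability $>\eps/2$ per quadruple; hence by the geometric tail \eqref{eq:geometric}, phase $1$ consumes more than $\Tasp/8$ reads with probability at most $(\log n)^{-C}$. On the complement of that event, at least $\Tasp/8$ of $i$'s reads were made in phase $2$; since the phase-$2$ test bit $b'$ alternates between $-1$ and a genuine bit, these reads contain at least $\Tasp/17$ completed pairs, and $i$ remains in phase $2$ only if none of these pairs reads $(1,0)$, an event of probability $q_i(1-q_i)>\eps/2$ per pair. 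Thus the probability that $i$ is still in phase $1$ or $2$ at time $\Tasp/2$ is at most $(1-\eps/2)^{\Tasp/17}$ plus the two earlier terms, which by the definition of $\Tasp$ is $\le(\log n)^{-10}$ for $n$ large. Summing over $i$, $\E[V]\le n(\log n)^{-10}$, so Markov's inequality gives $\P[V\ge 0.5\,n(\log n)^{-6}]\le 2(\log n)^{-4}\to 0$.

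Combining the two estimates, w.h.p.\ the number of aspirants at time $\Tasp/2$ is at most $L(0)+V<1.5\,n(\log n)^{-6}+0.5\,n(\log n)^{-6}=2\,n(\log n)^{-6}$, i.e.\ the event $D_1$ occurs.

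The only delicate point is the single-node estimate: one must split the aspirant trajectory cleanly into the three phases, observe that a node leaving phase $2$ to the regular nodes is governed by reading a $(1,0)$ pair (so ``still in phase $1$ or $2$ at time $\Tasp/2$'' really does force $\Omega(\Tasp)$ unproductive reads), and keep the constants straight so that the resulting bound beats $(\log n)^{-6}$ — which it does comfortably because $\Tasp$ carries a large constant and $\eps$ enters $\Tasp$ as $\eps^{-1}$. Everything else is a routine application of the Poisson and geometric tail bounds \eqref{eq:poisson}--\eqref{eq:geometric} already recorded, together with Lemma~\ref{prop48}.
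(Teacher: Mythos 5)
Your proof is correct and follows essentially the same route as the paper's: the expert-track aspirants (phase $\chi=3$, resp.\ future experts) are absorbed into Lemma~\ref{prop48}, the remaining slow phase-1/2 aspirants are handled per node by a Poisson tail for the clock plus geometric tails for phases 1 and 2, and the count is then concluded (you via Markov, the paper via Hoeffding — immaterial here). One small caveat: the inequality $q_i(1-q_i)>\eps/2$ should be justified from $q_i,1-q_i\ge\eps-O(1/n)$ together with $\eps<1/4$ (as the paper does with $p(1-p)(1-1/n)>\eps/2$), since it does not follow from $q_i,1-q_i\in[\eps/2,1-\eps/2]$ alone.
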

\begin{proof} 
	An aspirant can be in three phases $\chi=1,2,3$. In phase $\chi=1$ the aspirant collects bit quadruples $(b',b'',b''',b'''')$, and if exactly one or exactly three of the bits in the quadruple are equal to 0 then the aspirant proceeds to phase $\chi=2$. Let $Y_1$ denote the number of quadruples collected by the aspirant in phase $\chi=1$, so $Y_1=\#(\cP_i\cap[0,\tau_i^1])/4$ in the notation of Lemma \ref{prop48}. In phase $\chi=2$ the aspirant collects bits pairs $(b',b'')$. It becomes a regular node the first time it observes a pair $(1,0)$, and it will eventually become a level 0 expert if it observes $K$ pairs $(0,1)$ before the first pair $(1,0)$. Let $Y_2$ denote the number of bit pairs collected in phase $\chi=2$, so $Y_2=\#(\cP_i\cap(\tau_i^1,\tau_i^2])/2$ in the notation of Lemma \ref{prop48}. We observed in the proof of Lemma \ref{prop48} that both $Y_1$ and $Y_2$ are stochastically dominated by a geometric random variable $Y$ with success probability $\ep/2$.
	
	Let $E(i)$ denote the event that $i$ becomes a regular node immediately after the aspirant phase (i.e., $i\not\in\cE(0)$), and that the second part of its aspirant phase ends after time $\Tasp/2$ (i.e., $\sigma_1(i,\Tasp/2)=1$). Due to the upper bound on the number of experts established in Lemma \ref{prop48}, in order to conclude the proof it is sufficient to show that w.h.p., there are at most $2n(\log n)^{-6}-\alpha-\beta_0\alpha$ nodes $i$ for which $E(i)$ occurs. 
	The clock of $i$ either rings at least $\Tasp/4$ times before time $\Tasp/2$, or this does not happen. On the former event, and if $E(i)$ occurs, then there are either more than $\Tasp/4$ clock rings during $[0,\tau^1_i]$ (so $Y_1>\Tasp/32$) or more than $\Tasp/4$ clock rings during $(\tau^1_i,\tau^2_i]$ (so $Y_2>\Tasp/16$). By a union bound,  \eqref{eq:poisson}, and \eqref{eq:geometric} 
	\eqbn
	\begin{split}
		\P[ E(i) ]
		&\leq \P[ \#(\cP_i\cap[0,\Tasp/2])\leq\Tasp/4 ]
		+ \P[ Y_1>\Tasp/32 ] + \P[Y_2>\Tasp/16]\\
		&\leq 
		2^{-0.18\cdot \Tasp/2 }+
		2^{-1.44\cdot \ep/2\cdot \Tasp/32 }+
		2^{-1.44\cdot \ep/2\cdot \Tasp/16 }
		<(\log n)^{-100}.
	\end{split}
	\eqen
	Since $E(i)$ occurs independently for each node, this estimate and Hoeffding's inequality gives that w.h.p., there are at most $2(\log n)^{-200}n<2n(\log n)^{-6}-\alpha-\beta_0\alpha$ nodes $i$ for which $E(i)$ occurs, which concludes the proof of the lemma.
\end{proof}

The following lemma says that there are few premature level 0 experts, i.e., there are few level 0 experts which are created before time $\Tasp/2$.
\begin{lemma}
	\eqbn
	D_6 = \{ \#(\cE(0)\cap\frk P ) < \alpha(\log n)^{-6} \}. 
	\eqen
	Then $D_6$ happens w.h.p.
\end{lemma}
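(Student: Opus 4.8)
The plan is to bound $\E[\#(\cE(0)\cap\frk P)]$ from above and apply Markov's inequality. The needed per-node estimate already appears in the proof of Lemma~\ref{prop48} (in the bound for the event ``$i\in\frk P$''): for all sufficiently large $n$ and every $i\in[n]$,
\eqbn
\P[i\in\cE(0)\cap\frk P]=\P[i\in\cE(0)]\cdot\P[i\in\frk P\mid i\in\cE(0)]< 0.5^K\cdot 2^{-0.55\Tasp/2}.
\eqen
I would first recall why this holds: whether $i\in\cE(0)$ is a deterministic function of the sequence of initial bits that $i$ observes during its aspirant phase, hence of the i.i.d.\ sequence $(\frk r(i,t))_{t\in\cP_i}$ (read in increasing order of $t$) together with the fixed initial configuration, and is therefore independent of the clock $\cP_i$; moreover $\P[i\in\cE(0)]=0.5^K$ exactly by von Neumann unbiasing. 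On $\{i\in\cE(0)\}$ the node becomes a level $0$ expert only at its $N$th clock ring, where $N\geq\Tasp$ since phase $\chi=3$ alone takes $\Tasp$ clock rings; hence ``$i\in\frk P$'' forces $\#(\cP_i\cap[0,\Tasp/2])\geq\Tasp$, and since $N$ is independent of $\cP_i$ this event has conditional probability at most $\P[\#(\cP_i\cap[0,\Tasp/2])\geq\Tasp]\leq 2^{-0.55\Tasp/2}$, using \eqref{eq:poisson}.

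Next I would substitute $\Tasp\geq 5000\eps^{-1}\log\log n\geq 20000\log\log n$ (valid because $\eps<1/4$), giving $2^{-0.55\Tasp/2}\leq 2^{-5500\log\log n}=(\log n)^{-5500}$, and then sum over $i\in[n]$ using $n\cdot 0.5^K=\alpha$ to obtain $\E[\#(\cE(0)\cap\frk P)]\leq\alpha(\log n)^{-5500}$. As this lies far below the threshold $\alpha(\log n)^{-6}$ appearing in $D_6$, Markov's inequality gives
\eqbn
\P[D_6^c]=\P\big[\#(\cE(0)\cap\frk P)\geq\alpha(\log n)^{-6}\big]\leq(\log n)^{-5494}=o(1),
\eqen
which proves the lemma. (If a stretched-exponential error bound were wanted, one could instead note that $\#(\cE(0)\cap\frk P)=\sum_{i\in[n]}\1_{i\in\cE(0)\cap\frk P}$ is a sum of independent Bernoulli variables --- each a function of the independent pair $(\cP_i,(\frk r(i,t))_{t\in\cP_i})$ --- and apply a Chernoff bound for Bernoulli sums.)

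I do not anticipate a real obstacle; this lemma is essentially a corollary of a computation already carried out for Lemma~\ref{prop48}. The only point deserving care is the factorization $\P[i\in\cE(0)\cap\frk P]=\P[i\in\cE(0)]\cdot\P[i\in\frk P\mid i\in\cE(0)]$ together with the conditional bound $\P[i\in\frk P\mid i\in\cE(0)]\leq 2^{-0.55\Tasp/2}$: the relevant observation is that whether $i$ ever becomes a level $0$ expert is decided by the \emph{content} of its observations, independently of the \emph{timing} of its clock rings, which in turn determines whether the transition happens before $\Tasp/2$. This independence is already exploited in the proof of Lemma~\ref{prop48}, so it can simply be invoked here.
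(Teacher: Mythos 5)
Your proof is correct and takes essentially the same route as the paper: the identical per-node estimate $\P[i\in\cE(0)\cap\frk P]\leq 0.5^K\cdot 2^{-0.55\Tasp/2}$, justified exactly as in the paper by the independence of the clock $\cP_i$ from the content-determined event $\{i\in\cE(0)\}$ and the Poisson tail bound \eqref{eq:poisson}. The only deviation is the final step, where the paper applies Hoeffding's inequality to the independent indicators $\1_{i\in\cE(0)\cap\frk P}$ (yielding a stretched-exponential failure probability), whereas you use Markov's inequality on the first moment, which is weaker but entirely sufficient for the stated w.h.p.\ conclusion.
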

\begin{proof}
	Since the clock of an aspirant must ring at least $\Tasp$ times in order for the node to become a level 0 expert, and since $\cP_i$ is independent of the event $\{ i\in\cE(0) \}$ for each $i$, for all sufficiently large $n$,
	\eqbn
	\begin{split}
		\P[ i\in \cE(0)\cap\frk P ]
		&= \P[ i\in \cE(0) ]\cdot \P[i\in\frk P\,|\,i\in \cE(0) ]
		\leq \alpha/n \cdot \P[ \#(\cP_i\cap[0,\Tasp/2])>\Tasp ]\\
		&\leq \alpha/n \cdot 2^{-0.55\cdot \Tasp/2}
		\leq \alpha/n\cdot (\log n)^{-5000}.
	\end{split}
	\eqen
	where we use \eqref{eq:poisson} to bound $\P[ \#(\cP_i\cap[0,\Tasp/2])>\Tasp ]$. Since the events $\{i\in\cE(0)\cap\frk P \}$ are independent for different $i$, the lemma follows by Hoeffding's inequality.
\end{proof}

The following lemma says that the expert type of each node $i\in[n]$ is uniformly distributed on $[4]$, conditional on all information until the time at which the expert initiates a communication which may lead to the creation of a new expert candidate.
\begin{lemma}
	Let $t$ be a time at which an expert $i$ with counter $2K+7$ initiates a communication. Let $\cF$ be the $\sigma$-algebra generated by $\frk r(i,t)$ and all information before time $t$, except for the expert type $\xi$ of $i$. Then $\xi$ is independent of $\cF$ and satisfies 
	\eqbn
	\P[\xi=1\,|\,\cF]=\P[\xi=2\,|\,\cF]=\P[\xi=3\,|\,\cF]=\P[\xi=4\,|\,\cF]=1/4.
	\eqen
	\label{prop50}
\end{lemma}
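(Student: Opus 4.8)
The plan is to reduce the statement to a von Neumann--style symmetry argument inside a single block of four i.i.d.\ bits. First I would recall precisely how the expert type $\xi$ of node $i$ is generated: during phase $\chi=1$ of its aspirant phase, $i$ contacts uniformly chosen nodes at consecutive clock rings and records their initial bits; grouping the recorded bits into consecutive quadruples, it exits phase $\chi=1$ at the first quadruple that is a permutation of $(1,0,0,0)$ or of $(0,1,1,1)$, and sets $\xi$ to the position within that quadruple of the bit that differs from the other three. Since the model chooses each contacted node uniformly on $[n]\setminus\{i \}$ independently of all other randomness, the recorded bits are i.i.d.\ Bernoulli$(q)$ with $q=\#\{j\neq i:\frk b_j=1 \}/(n-1)$, and $q\in(0,1)$ for $n$ large (using $p\in[1/2+\eps,1-\eps]$); on the event that $i$ ever becomes an expert, such a quadruple exists.

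Next I would isolate the combinatorial core: for a quadruple $(X_1,X_2,X_3,X_4)$ of i.i.d.\ Bernoulli$(q)$ bits, conditioned on the quadruple being \emph{good} (exactly one minority bit) \emph{and} on the number of ones in it (which is then $1$ or $3$), the position of the minority bit is uniform on $\{1,2,3,4 \}$ --- simply because the four permutations of $(1,0,0,0)$ each have unconditional probability $q(1-q)^3$ and the four permutations of $(0,1,1,1)$ each have unconditional probability $q^3(1-q)$. Crucially, because the recorded bit sequence is i.i.d.\ and independent of all randomness in the model not used in $i$'s phase-$\chi=1$ collection, this conditional uniformity is preserved after additionally conditioning on which quadruple is the first good one, on the contents of all earlier (bad) quadruples, on the one-count of the good quadruple, and on all of that other randomness (including $\frk r(i,t)$, which comes from a much later clock ring). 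In other words, the ``minority position'' within the successful quadruple is a uniform draw on $\{1,2,3,4 \}$ independent of everything else.

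I would then conclude as follows. The quantity $\xi$ is exactly this minority position, and the $\sigma$-algebra $\cF$ in the statement --- all information before time $t$, with the information about $\xi$ removed, together with $\frk r(i,t)$ --- is contained in the $\sigma$-algebra generated by the data listed above (the index of the first good quadruple, the bad quadruples, the one-count of the good quadruple, $\frk r(i,t)$, and the model randomness not involving $i$'s phase-$\chi=1$ bits), since removing ``the information about $\xi$'' is precisely removing the arrangement of the minority bit within the good quadruple beyond what its one-count already determines. Hence $\xi$ is uniform on $\{1,2,3,4 \}$ conditionally on $\cF$, giving $\P[\xi=v\mid\cF]=1/4$ for each $v\in\{1,2,3,4\}$, which is the assertion.

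The main obstacle I anticipate is expository rather than mathematical: pinning down the meaning of ``all information before time $t$, except for the expert type $\xi$ of $i$'', and checking that nothing in the protocol leaks the arrangement of the successful quadruple except through $\xi$ itself. For this I would verify that during phase $\chi=1$ no other node ever reads or records $i$'s partial-quadruple bits $b',b'',b'''$ --- each node type, when contacted by the aspirant $i$, either ignores the contact or reacts only to the fact that $i$'s type is ``aspirant'' --- so that once the aspirant phase ends, the only trace of the recorded bits anywhere in the system is $\xi$ stored in $i$'s own state, while $\frk r(i,t)$, referring to a clock ring long after phase $\chi=1$, carries no information about them.
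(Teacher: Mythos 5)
Your proposal is correct and follows essentially the same route as the paper: uniformity of $\xi$ via the von Neumann symmetry of the first good quadruple, and independence from $\cF$ because the recorded bits, and hence $\xi$, influence no other node's state before time $t$. The paper compresses your ``no leakage'' check into the single observation that $\xi$ does not affect any node other than $i$ until time $t$ at the earliest, using that a node can be an expert only once (Remark \ref{rmk2}), which is the same point you make when noting that the only trace of the successful quadruple is $\xi$ stored in $i$'s own state and that $\frk r(i,t)$ carries no information about it.
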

\begin{proof}
	In the aspirant phase the value of $\xi$ is determined by letting the aspirant repeatedly sample bit quadruples $(b^1,b^2,b^3,b^4)$, and setting $\xi=1$ (resp.\ $\xi=2,3,4$) if this quadruple equals $(1,0,0,0)$ or $(0,1,1,1)$ (resp.\ 
	$(0,1,0,0)$ or $(1,0,1,1)$;
	$(0,0,1,0)$ or $(1,1,0,1)$;
	$(0,0,0,1)$ or $(1,1,1,0)$). Notice that the probability of sampling each quadruple is the same (Von Neumann unbiasing). Therefore 
	\eqbn
	\P[\xi=1]=\P[\xi=2]=\P[\xi=3]=\P[\xi=4]
	=1/4.
	\eqen
	
	We have $\P[\xi=u]=\P[\xi=u\,|\,\cF]$ for $u=1,2,3,4$ since the value of $\xi$ does not influence the state of any nodes other than $i$ until (at the earliest) at time $t$. Here we use that $i$ can only be an expert once, see Remark \ref{rmk2}.
\end{proof}

The following lemma upper bounds the number of experts.
\begin{lemma}[Upper bound, number of experts]
	Let $D_2$ denote the event that for all $m\in[M]$, we have $L(m)<\alpha(1+\beta_m)$. Then $D_2$ happens w.h.p.
	\label{prop41}
\end{lemma}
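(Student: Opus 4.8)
The plan is to establish $L(m)<\alpha(1+\beta_m)$ simultaneously for all $m\in\{0,1,\dots,M\}$ by induction on $m$, and then take a union bound over the $M=\lceil2\log\log n\rceil$ values. The base case $m=0$ is exactly the second inequality of Lemma~\ref{prop48}, which gives $|L(0)-\alpha|\le\beta_0\alpha$ except on an event of probability $2\exp(-2n^{0.2})$. For the inductive step I condition on the event $\{L(m-1)<\alpha(1+\beta_{m-1})\}$ together with the high-probability events produced by the earlier lemmas, and estimate $L(m)$ in two parts: first an upper bound on $L^\init(m)=\#\cE^\init(m)$ — the level-$m$ experts born with counter $1$ out of a triple of level-$(m-1)$ experts — and then an upper bound for $L(m)$ in terms of $L^\init(m)$ coming from the within-level rumor-spreading. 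Lemma~\ref{prop50} enters only to control the distribution of the expert types.

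For the first part, a node can enter $\cE^\init(m)$ only if, while it is a regular node with expert indicator $\psi=0$ and expert type in $[3]$, it is contacted by three level-$(m-1)$ experts that have time counter $2K+7$ and pairwise distinct expert types $1,2,3$; each level-$(m-1)$ expert initiates exactly one communication while at counter $2K+7$. By Lemma~\ref{prop50}, conditionally on the $\sigma$-algebra generated by everything up to but excluding those experts' types, the types are independent and uniform on $[4]$, so Azuma's inequality gives $a_v\le\tfrac14L(m-1)+n^{0.6}\le\tfrac14\alpha(1+\beta_{m-1})+n^{0.6}$ w.h.p.\ for $v=1,2,3$, where $a_v$ is the number of level-$(m-1)$ experts of type $v$. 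Conditioning further on these experts and their types, $L^\init(m)$ is a function of the delivery targets $\frk r(i,t)$ that changes by at most $1$ under a single target change, and — since the events ``hit by a type-$v$ expert'', $v=1,2,3$, involve disjoint families of independent uniform targets — its conditional expectation is at most $n\prod_{v=1}^{3}\tfrac{a_v}{n-1}\le 2^{-6}\alpha^\init(1+\beta_{m-1})^3(1+O(n^{-0.4}))$, using $\alpha^3/n^2=\alpha^\init$. Since $\beta_{m-1}=5^{m-1}(\log n)^{-\CCC}=o(1)$ uniformly over $m\le M$, one has $(1+\beta_{m-1})^3\le 1+(3+o(1))\beta_{m-1}$, and McDiarmid's inequality (bounded differences $1$, fluctuation $n^{0.6}\ll\alpha^\init(\log n)^{-\CCC}$) yields, w.h.p.,
\[
L^\init(m)\ <\ 2^{-6}\,\alpha^\init\bigl(1+\beta^\init_m\bigr),\qquad \beta^\init_m=3\cdot 5^{m-1}(\log n)^{-\CCC}+(\log n)^{-\CCC}.
\]

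For the second part I use the deterministic structure of the counter: a level-$m$ expert with counter $d<2K+7$, upon a clock ring, advances its own counter to $d+1$ and turns at most one further node into a level-$m$ expert of counter $d+1$. Writing $\cE^{(d)}(m)$ for the set of level-$m$ experts that ever attain counter $d$, we have $\cE^{(1)}(m)=\cE^\init(m)$, $\cE^{(2K+7)}(m)=\cE(m)$, and $\#\cE^{(d+1)}(m)\le 2\,\#\cE^{(d)}(m)$ for every $d$, hence $L(m)\le 2^{2K+6}L^\init(m)$ deterministically. Combining with the first part and the identity $2^{2K}\alpha^\init=\alpha$,
\[
L(m)\ \le\ 2^{2K+6}\cdot 2^{-6}\alpha^\init\bigl(1+\beta^\init_m\bigr)\ =\ \alpha\bigl(1+\beta^\init_m\bigr)\ <\ \alpha(1+\beta_m),
\]
the last step because $\beta^\init_m=3\cdot5^{m-1}(\log n)^{-\CCC}+(\log n)^{-\CCC}<5^m(\log n)^{-\CCC}=\beta_m$ for $m\ge1$. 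A union bound over $m\in[M]$, each step failing with probability $\exp(-n^{\Omega(1)})$, then shows $D_2$ holds w.h.p.

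The step I expect to be the main obstacle is the first part: arranging the filtration so that Lemma~\ref{prop50} applies without circularity (it is phrased precisely to make this possible), and carrying the conditional McDiarmid estimate through while keeping track of the constant — one must check that the three type-splitting factors $\tfrac14$ multiplied by the counter-range factor $2^{2K+6}$ give exactly $\alpha/\alpha^\init=2^{2K}$, so that the leading term lands on $\alpha$ rather than a constant multiple of it, and that the cube-induced tripling of the relative error is absorbed by the gap between $3\cdot5^{m-1}$ and $5^m$ because $\beta_{m-1}$ stays $(\log n)^{-\Omega(1)}$ throughout $m\le M$. By contrast, the second part needs no concentration at all — the crude deterministic doubling $\#\cE^{(d+1)}(m)\le 2\#\cE^{(d)}(m)$ suffices for an upper bound, which is what makes this lemma much shorter than its synchronous analogue Lemma~\ref{prop51b} (and than the complementary \emph{lower} bound on the number of experts, where one would have to show that this doubling rarely fails because of collisions, already-expert targets, $\psi=1$ nodes, and expired expert candidates).
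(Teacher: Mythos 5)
Your proposal follows essentially the same route as the paper's proof: bound the number of ``seed'' level-$m$ experts created by triples of level-$(m-1)$ experts, using the per-type counts (via Lemma \ref{prop50}) and a bounded-difference inequality, then multiply by the deterministic doubling factor $2^{2K+6}$ and union bound over $m\in[M]$. Whether the per-type split is carried inside the induction hypothesis (the paper's events $\wh E(m)$, propagated at the end of each step by Lemma \ref{prop50} and Hoeffding) or re-derived at each step from the total bound, as you do, is a cosmetic difference.

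Two steps need tightening. First, the concentration inequality should not be applied to $L^\init(m)$ itself: conditionally on the level-$(m-1)$ experts, their types and the times of their counter-$(2K+7)$ communications, $L^\init(m)$ is \emph{not} a function of the targets $\frk r(i,s_i)$ alone, and bounded differences can fail — whether a contacted node actually ends up in $\cE^\init(m)$ also depends on its state at the contact times, on its own clock, and on contacts by level-$m$ experts created downstream of the seeds (such a contact turns a regular node with $\psi=0$ into a counter-$d+1$ expert, removing it from the triple route), so a single target change can cascade and flip several memberships. The paper's device, which your expectation computation already implicitly uses, is to apply McDiarmid to the count $\wh L$ of nodes contacted by level-$(m-1)$ experts of all three types $1,2,3$ at counter $2K+7$ — a genuine bounded-difference (by $1$) function of the targets — and then invoke the deterministic inclusion $\cE^\init(m)\subset\{i:\wh E_i\}$, so $L^\init(m)\le\wh L$. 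Second, your intermediate target $\beta^\init_m=3\cdot 5^{m-1}(\log n)^{-\CCC}+(\log n)^{-\CCC}$ is too tight at the larger levels: cubing $1+\beta_{m-1}$ produces a term $3\beta_{m-1}^2=3\cdot 25^{m-1}(\log n)^{-2\CCC}$, which exceeds the additive slack $(\log n)^{-\CCC}$ once $m-1\gtrsim\tfrac{\CCC}{2\log_2 5}\log\log n$, well below $M$, so your derivation does not deliver that particular inequality. This does not endanger the conclusion: all the recursion needs is that the relative error grows by a factor $3+o(1)<5$ per level, since $\beta_m=5\beta_{m-1}$ (the paper records this as $(1+\beta_{m-1})^3\le 1+3.5\beta_{m-1}$ and then $L(m)\le\alpha(1+3.6\beta_{m-1})<\alpha(1+\beta_m)$); so simply drop the detour through $\beta^\init_m$ and carry $(3+o(1))\beta_{m-1}$ directly to the comparison with $\beta_m$.
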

\begin{proof}
	For $\xi=1,2,3,4$ let $\cE_\xi(m)$ denote the set of level $m$ experts with expert type $\xi$, i.e.,
	\eqbn
	\begin{split}
		\cE_\xi(m)=
		\{ i\in[n]\,:\,\exists t\geq 0 \text{\,\,such\,\,that\,\,}  \sigma(i,t)=(2,m,2K+7,\xi,\frk b_i,b),\, b\in\{0,1\} \},
	\end{split}
	\eqen
	and set $L_\xi(m) = \#\cE_\xi(m)$. 
	
	For $m\in[M]\cup\{0 \}$ let $\wh E(m)$ denote the following event
	$$
	\wh E(m)=\{ L_\xi(m)<\alpha(1+\beta_m)/4,\,\xi\in[4] \}.
	$$ 
	Notice that $\wh E(m)\subset\{L(m)<\alpha(1+\beta_m) \}$. Let $\cG_{m-1}$ denote the $\sigma$-algebra generated by $\cE_\xi(m-1)$ for $\xi\in[4]$. We will show the following for $m=1,\dots,M$ and sufficiently large $n$, which immediately implies the lemma by Lemma \ref{prop48} and a union bound
	\eqbn
	\P[ \wh E(m)^c\,|\,\cG_{m-1}]\1_{\wh E(m-1)} 
	<\exp(-2n^{0.05}).
	\eqen
	Fix $m\in\{1,\dots,M \}$ and condition on $\cG_{m-1}$. For each $i\in\cE(m-1)$ let $s_i$ denote the time that $i$ initiates a communication and has counter $2K+7$ immediately prior to initiating the communication. For each $i\in[n]$ let $\wh E_i$ denote the event that $i$ receives bits from three level $m-1$ experts with counter $2K+7$ and with expert type 1, 2, and 3, respectively, i.e., with $\xi(j)$ denoting the expert type of node $j\in[n]$ (and $\xi(j)=0$ if $j$ does not have an expert type),
	\eqbn
	\begin{split}
		\wh E_i = \{ \exists i_1,i_2,i_3\in\cE(m-1)\,:&\,s_{i_1}<s_{i_2}<s_{i_3},\, i=\frk r(i_1,s_{i_1})=\frk r(i_2,s_{i_2})=\frk r(i_3,s_{i_3}) \\
		& \{ \xi(i_1),\xi(i_2),\xi(i_3) \}=\{1,2,3 \}  \}.
	\end{split}
	\eqen
 	Then for sufficiently large $n$ and $i\not\in \cE(m-1)$, on the event $\wh E(m-1)$,
	\eqbn
	\begin{split}
		\P[ \wh E_i&\,|\,\cG_{m-1} ] \\
		&= 
		\Big( 1- (1-(n-1)^{-1})^{L_1(m-1)} \Big)
		\Big( 1- (1-(n-1)^{-1})^{L_2(m-1)} \Big)
		\Big( 1- (1-(n-1)^{-1})^{L_3(m-1)} \Big)\\
		&\leq \frac{L_1(m-1)L_2(m-1)L_3(m-1)}{(n-1)^3}
		<0.5^{3K+6}(1+3.5\beta_{m-1}).
	\end{split}
	\eqen
	Let $\wh L=\#\{ i\in[n]\,:\, \wh E_i \text{\,\,occurs} \}$. 
	Observe that $\wh L$ is a function of the random variables $\frk r(i,s_i)$ for $i\in\cE(m-1)$, and that changing one of these $\#\cE^\init(m)$ random variables changes $\wh L$ by at most 1. Therefore McDiarmid's inequality gives
	\eqbn
	\begin{split}
		 \P\Big[ \wh L> 0.5^{3K+6}&(1+3.5\beta_{m-1})n+n^{0.55}\,|\,\cG_{m-1} \Big]\1_{\wh E(m-1)} \\
		&\leq \exp\left( -\frac{2n^{1.1}}{L(m-1)} \right)\1_{\wh E(m-1)}
		\leq 
		\exp\left(-\frac{2n^{1.1}}{\alpha(1+\beta_{m-1})} \right).
	\end{split}
\label{eq118}
	\eqen
	Observe that $\cE^\init(m)\subset\{i\in[n]:\wh E_i\text{ occurs} \}$. These sets are not identical e.g.\ because $i$ may not be eligible to become an expert, or because a node can only remain an expert candidate for a bounded number of clock rings. Furthermore, recall that each $i\in \cE^\init(m)$ gives at most $2^{2K+6}$ level $m$ experts, so $L(m)<2^{2K+6}\cdot \#\cE^\init(m)\leq 2^{2K+6}\wh L$. Therefore, on the complement of the event in \eqref{eq118} and for sufficiently large $n$,
	$$
	L(m)
	\leq 2^{2K+6}\big( 0.5^{3K+6}(1+3.5\beta_{m-1})n+n^{0.55} \big)
	\leq \alpha(1 + 3.6\beta_{m-1}).
	$$
	By Lemma \ref{prop50}, conditioned on $\cE(m)$ the random variables $\xi(i)$ for $i\in\cE(m)$ are independent and uniformly distributed on $[4]$. The lemma now follows from \eqref{eq119} and Hoeffding's inequality.
\end{proof}

Recall that each type of node is associated with a number (aspirants correspond to 1, experts correspond to 2, etc.).
For $t\geq 0$ and $u\in[6]$ let $N_u(t)$ denote the number of nodes of type $u$ at time $t$, i.e.,
\eqbn
N_u(t) = \# \{i\in [n]\,:\, \sigma_1(i,t)=j \}.
\eqen
\begin{remark}
	Since each expert creates at most one expert candidate, on the event $D_2$ we have the following bounds for the number of experts and expert candidates, respectively, for any $t\geq 0$ 
	\eqbn
	N_2(t) \leq (M+1)\alpha(1+\beta_m),\qquad
	N_5(t) \leq (M+1)\alpha(1+\beta_m).
	\eqen
	\label{rmk1}
\end{remark}

We say that $j\in[n]$ is a \emph{potential level $(m,d)$ expert} if we can find a sequence of 
pairs $(i_1,t_1),\dots,(i_{d},t_{d})$ such that $t_k\in\cP_{i_k}$ for all $k$, $i_1\in \cE^\init_-(m)$, $i_{d}=j$, and for each $k\in[d-1]$ we have 
$i_{k+1}\in\{i_k,\frk r(i_k,t_k) \}$. If there are $v\in\N$ such sequences we say that $j$ is a potential level $(m,d)$ expert of multiplicity $v$. Observe that the number of potential level $(m,d)$ experts (counted with multiplicities) is exactly $2^{d-1}L^\init_-(m)$. Also observe that 
all level $m$ experts are potential level $(m,2K+7)$ experts while the opposite is not necessarily true. We say that $j$ is a \emph{late} potential level 
$(m,d)$ expert if $\sum_{k=1}^{d-1} (t_{k+1}-t_k)>6K$.

The next lemma bounds from above the number of late potential level $(m,d)$ experts. Eventually we want to bound the number of late experts (rather than the number of \emph{potential} late experts) but the estimate is easier for potential experts since the random variables $t_{k+1}-t_k$ have the law of independent unit rate exponential random variables in this case.
\begin{lemma} 
	For $m\in[M]$ and $d=[2K+7]$ let $D(m,d)$ be the event that the number of late potential level $(m,d)$ experts (counted with multiplicity) is smaller than $(\log n)^{-6}2^{d-1}L^1_-(m)$. Then the event $D_3:=\bigcap_{m=1}^M \bigcap_{d=1}^{2K+7} D(m,d)$ occurs w.h.p.
\end{lemma}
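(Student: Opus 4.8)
The plan is to fix $m\in[M]$ and $d\in[2K+7]$, work conditionally on the set $\cE^\init_-(m)$ (the case $d=1$ being trivial since then every path has duration $0$), and bound the number of late paths by a first-moment computation. By the observation preceding the lemma, given $\cE^\init_-(m)$ the potential level $(m,d)$ experts counted with multiplicity correspond to exactly $2^{d-1}L^\init_-(m)$ \emph{paths}, a path being a choice of root $i_1\in\cE^\init_-(m)$ together with a string of $d-1$ binary ("stay"/"follow") choices; for such a path with clock rings $t_1<\dots<t_d$, "late" means $\sum_{k=1}^{d-1}(t_{k+1}-t_k)=t_d-t_1>6K$. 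So I would first show that an individual path is late with probability at most $(\log n)^{-c}$ for a constant $c>6$, then take expectations and apply Markov's inequality to get $\P[D(m,d)^c\mid\cE^\init_-(m)]\le(\log n)^{6-c}$, and finally union bound over the $M(2K+7)=O((\log\log n)^2)$ pairs $(m,d)$.

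The core step is the per-path estimate. The key point is that, for a fixed path, the increments $t_{k+1}-t_k$ are (conditionally on $\cE^\init_-(m)$) independent $\op{Exp}(1)$ variables: $i_{k+1}$ is determined by what has happened up to time $t_k$, and $t_{k+1}$ is the first ring of $i_{k+1}$'s clock after $t_k$, which by the memoryless property of the Poisson clocks is an $\op{Exp}(1)$ waiting time independent of the past. Hence $t_d-t_1$ is stochastically dominated by a sum of $d-1\le 2K+6$ i.i.d.\ $\op{Exp}(1)$ variables, i.e.\ by a $\Gamma(2K+6,1)$ variable; using $\P[\Gamma(n',1)>t]=\P[\op{Pois}(t)\le n'-1]$, the inequality $2K+6\le 3K$ valid for large $n$, and the Poisson tail bound \eqref{eq:poisson}, the probability of being late is at most $\P[\op{Pois}(6K)\le 3K]< 2^{-0.18\cdot 6K}=2^{-1.08K}\le(\log n)^{-6.4}$ for $n$ large (using $K\ge 6\log\log n$, and absorbing a bounded factor coming from the conditioning, discussed below). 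Then $\E[\#\{\text{late paths}\}\mid\cE^\init_-(m)]\le(\log n)^{-6.4}2^{d-1}L^\init_-(m)$, and Markov's inequality gives $\P[D(m,d)^c\mid\cE^\init_-(m)]\le(\log n)^{-0.4}$, provided $L^\init_-(m)\ge 1$. The latter holds w.h.p.\ (indeed $L^\init_-(m)$ is of order $n(\log n)^{-18}$); this is the one external input needed and is of the type established alongside the remaining level-$m$ bounds (cf.\ Lemma~\ref{prop48}). Taking expectations over $\cE^\init_-(m)$ and union bounding yields $\P[D_3^c]=O((\log\log n)^2(\log n)^{-0.4})\to 0$.

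I expect the only genuine subtlety to be justifying the "increments are i.i.d.\ exponential" claim under the conditioning on $\cE^\init_-(m)$: membership in $\cE^\init_-(m)$ is not simply a function of the history up to the time $s_{i_1}$ at which $i_1$ became a level $m$ expert, because the "non-expiring" requirement constrains $\#(\cP_{i_1}\cap[0,t_M])$ on an interval overlapping the path. The clean way to handle this is to first condition only on the stopping-time event that $i_1$ became a level $m$ expert via three bits by time $t_m^\init$, for which the strong Markov property at $s_{i_1}$ applies directly and gives the exponential increments; bound the first moment of the number of late paths in this slightly enlarged family; and then separately observe that discarding the premature and expiring experts changes $L^\init_-(m)$ by only a negligible factor, an estimate of the same kind as in Lemma~\ref{prop48} and the bound on premature experts. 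Everything else is a routine Chernoff-plus-union-bound argument.
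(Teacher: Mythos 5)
Your argument is correct and is essentially the paper's proof: the same path decomposition into $2^{d-1}L^\init_-(m)$ paths with i.i.d.\ unit-rate exponential increments, the exponential--Poisson duality with the tail bound \eqref{eq:poisson} giving a per-path late probability $\leq(\log n)^{-6.4}$, Markov's inequality on the fraction of late paths, and a union bound over the $(m,d)$ pairs. Your extra care about the conditioning on $\cE^\init_-(m)$ and the degenerate case $L^\init_-(m)=0$ only tightens points the paper leaves implicit; it does not change the route.
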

\begin{proof}
	Fix $m\in[M]$ and $d\in[2K+7]$. Let $j\in[n]$ be a randomly chosen potential level $(m,d)$ expert such that the probability of sampling a given $j$ is proportional to its multiplicity as a potential level $(m,d)$ expert. Let $(i_1,t_1),\dots,(i_{d},t_{d})$ be as described above. Then the random variables $t_{k+1}-t_k$ have the law of independent unit rate exponential random variables. 
	Let $E_j$ denote the event that $j$ is a potential level $(m,d)$ expert, and let $\wh E_j$ denote the event that $j$ is a \emph{late} potential level $(m,d)$ expert. Then the following holds by \eqref{eq:poisson} for $X_1,\dots,X_{d-1}$ independent unit rate exponential random variables and $Y$ a Poisson random variable with expectation $6K$ 
	\eqbn
	\P[ \wh E_j\,|\,E_j ] 
	= \P\left[ \sum_{k=1}^{d-1} X_k>6K \right]
	= \P\left[ Y<d-1 \right]
	<  \P\left[ Y<3K \right]
	\leq (\log n)^{-6.4}.
	\eqen
	Markov's inequality gives
	\eqbn
	\P[D(m,d)^c] 
	\leq \frac{(\log n)^{-6.4}}{(\log n)^{-6}}
	=(\log n)^{-0.4}.
	\eqen
	The lemma follows by taking a union bound over all  $m\in[M]$ and $d\in[2K+7]$.
\end{proof}

Let $\Tinf$ be the first time at which at least $n(\log n)^{-6}$ nodes are informed, i.e.,
\eqbn
\Tinf = \inf\{t\geq 0\,:\,N_6(t)\geq n(\log n)^{-6} \}.
\eqen
Let $\Tinff$ be the first time at which at least a fraction $0.01$ of the nodes are either informed or terminal, i.e.,
\eqbn
\Tinff = \inf\{t\geq 0\,:\,N_6(t)\vee N_4(t)\geq 0.01n \}.
\eqen
The following lemma bounds from above the number of terminal nodes which are created before time $\Tinf$, and also bounds the number of communications initiated by informed nodes before this time.
\begin{lemma} 
	Let $D_4$ denote the event that for all $t\leq\Tinf\wedge t_M$ we have $N_4(t)<n(\log n)^{-6}$. Then $D_4$ occurs w.h.p. Furthermore, the number of communications initiated by informed nodes before time $\Tinf\wedge (\log n)$ is smaller than $2n$ w.h.p.
	\label{prop:T1}
\end{lemma}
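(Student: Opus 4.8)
The plan is to pin down the only two mechanisms that create terminal nodes, to show that on the window $[0,\Tinf\wedge t_M]$ each of them produces $o(n(\log n)^{-6})$ terminal nodes, and separately to dominate the communications initiated by informed nodes by a Poisson count of vanishing relative mean. First I would note, from the definitions in Section~\ref{ssec:upper1}, that $t_M=\Theta((\log\log n)^2)$, and, inspecting the transition rules, that a node becomes terminal only via (a) an informed node whose clock rings contacting a node that is already informed or terminal, or (b) a regular node whose counter has reached $\lceil(\log\log n)^2\rceil$, so that it initiates a communication, contacting a terminal node. Consequently $N_4$ is non-decreasing, jumps by exactly $1$ at each of its (a.s.\ distinct) jump times, and $N_4=N_4^{(a)}+N_4^{(b)}$, where $N_4^{(a)},N_4^{(b)}$ count the type-(a) and type-(b) events so far. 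I would then set $\tau=\inf\{t\ge 0:N_4(t)\ge n(\log n)^{-6}\}$ and $\tau'=\tau\wedge\Tinf\wedge t_M$, and observe that on $[0,\tau')$ one automatically has $N_6<n(\log n)^{-6}$ (definition of $\Tinf$) and $N_4<n(\log n)^{-6}$ (definition of $\tau$). It then suffices to prove $N_4(\tau')<n(\log n)^{-6}$ w.h.p., since on $\{\tau\le\Tinf\wedge t_M\}$ one would have $\tau'=\tau$ and hence $N_4(\tau')=N_4(\tau)\ge n(\log n)^{-6}$, a contradiction; so that event has probability $o(1)$, which is exactly $D_4$.

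For the type-(a) count: the counting process of clock rings of currently-informed nodes has intensity $N_6(s^-)\le n(\log n)^{-6}$ on $[0,\tau')$, hence is stochastically dominated by a rate-$n(\log n)^{-6}$ Poisson process; a given ring yields a type-(a) event only when the uniformly chosen partner is informed or terminal, which on $[0,\tau')$ has probability at most $2(\log n)^{-6}$. Thinning, $N_4^{(a)}(\tau')$ is dominated by a Poisson variable of mean $\le 2n(\log n)^{-12}t_M=\Theta(n(\log\log n)^2(\log n)^{-12})$, which diverges but is $o(n(\log n)^{-6})$. So w.h.p.\ $N_4^{(a)}(\tau')\le a^\ast$ for some $a^\ast=o(n(\log n)^{-6})$ (e.g.\ $a^\ast=n(\log\log n)^{3}(\log n)^{-12}$).

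The type-(b) count is where I expect the real difficulty. A regular node initiates once every $\lceil(\log\log n)^2\rceil$ clock rings, and since $\sum_i\#(\cP_i\cap[0,t_M])$ concentrates around $nt_M$ while $t_M/\lceil(\log\log n)^2\rceil$ is bounded, there are at most $C_1 n$ communications initiated by regular nodes in $[0,t_M]$ w.h.p.\ for a constant $C_1$; multiplying this by the hit-probability $(\log n)^{-6}$ gives only $O(n(\log n)^{-6})$, which is not obviously below the threshold. The resolution is that type-(b) events are \emph{self-limiting}: a regular node becomes terminal only by contacting an already-terminal node, so terminal nodes are seeded solely by the $\le a^\ast$ type-(a) events and then multiply through type-(b) events, and over the short window $[0,t_M]$ this multiplication only inflates the count by a constant factor. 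To make this precise I would enumerate the at most $C_1 n$ regular communications in $[0,\tau']$ as $1,\dots,R$, let $\cF_k$ be the history just before communication $k$, and set $Z_k=a^\ast+(\#\,\text{type-(b) events strictly before communication }k)$. On the two high-probability events above, $Z_k\ge N_4(\cdot^-)$ at communication $k$, so communication $k$ is a type-(b) event with $\cF_k$-conditional probability at most $Z_k/(n-1)$, and a routine monotone coupling shows the number of type-(b) events in $[0,\tau']$ is stochastically dominated by $Z_R-a^\ast$ with $\E[Z_R]\le a^\ast(1+2/n)^{C_1 n}\le a^\ast e^{2C_1}$. Markov's inequality then gives $Z_R<\tfrac12 n(\log n)^{-6}$ w.h.p.\ (as $a^\ast e^{2C_1}=o(n(\log n)^{-6})$), whence $N_4(\tau')\le a^\ast+(Z_R-a^\ast)+O(1)<n(\log n)^{-6}$ w.h.p., which is exactly what was needed.

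For the last assertion, the communications initiated by informed nodes during $[0,\Tinf\wedge(\log n)]$ form a counting process of intensity $N_6(s^-)<n(\log n)^{-6}$ on that interval, so their number is dominated by a Poisson variable of mean $n(\log n)^{-6}\log n=n(\log n)^{-5}$; since $n(\log n)^{-5}\ll 2n$, Poisson concentration (or even Markov's inequality) gives that it is below $2n$ w.h.p. The one genuinely delicate step, as indicated, is the type-(b) estimate, where the self-limiting/urn structure is essential to beat the crude $O(n(\log n)^{-6})$ bound.
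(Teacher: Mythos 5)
Your proposal is correct, and its skeleton coincides with the paper's: the paper also splits $N_4$ into terminal nodes created from informed nodes and from regular nodes, bounds the informed-created part by (number of informed-initiated rings before $t_M$) $\times$ (hit probability) $=O(t_M\, n(\log n)^{-12})$, bounds the number of regular-initiated communications in $[0,t_M]$ by $O(n)$ via the $\lceil(\log\log n)^2\rceil$-ring spacing, and proves the second assertion exactly as you do (intensity at most $\approx n(\log n)^{-6}$ times horizon $\log n$). Where you genuinely diverge is the treatment of the regular-created ("type-(b)") count: the paper defines a martingale over the regular-initiated communication times, applies Azuma's inequality, and then solves the resulting discrete Gronwall-type recursion $f(k)\le \eps_1+\eps_2 k+(n-1)^{-1}\sum_{k'<k}f(k')$ to get the quantitative bound $N_4(T)<n(\log n)^{-11}$; you instead dominate the count by a multiplicatively growing process seeded at $a^\ast$, bound its mean by $a^\ast(1+2/n)^{C_1 n}\le a^\ast e^{2C_1}$, and finish with Markov. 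Your route is more elementary (no concentration inequality is needed for this step) and suffices because the target threshold $n(\log n)^{-6}$ exceeds the seed $a^\ast$ by a polylogarithmic factor; the paper's route yields a sharper high-probability bound, which is not needed for the lemma. The one point you should write out carefully is the "routine monotone coupling": the bound $\P[\text{type-(b) at step }k\mid\cF_k]\le Z_k/(n-1)$ only holds on the event that the informed-created count just before step $k$ is at most $a^\ast$; since that event is $\cF_k$-measurable, you can run the domination up to the first time it fails and then restrict to the high-probability event $\{N_4^{(a)}(\tau')\le a^\ast\}$ — this is the same kind of conditioning the paper also glosses over, so it is a matter of presentation rather than a gap.
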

\begin{proof}
	Let $a$ be a constant independent of $n$ such that $a(\log\log n)^2>2t_M$. Let $\Tterm=\inf\{t\geq 0\,:\,N_4(t)\geq n(\log n)^{-6} \}$, and define $T=\Tinf\wedge t_M\wedge\Tterm$. To prove the first assertion of the lemma it is sufficient to show that $\Tterm>T$ w.h.p. Let $\frk k$ be the number of communications initiated by regular nodes at or before time $T$, i.e., if $T_k$ is the $k$th time a regular node initiates a communications we have $\frk k=\sup\{ k\,:\,T_k\leq T \}$. Note that each terminal node was either informed or regular immediately before becoming a terminal node. We have $N_4(t)=N_4^{\op{r}}(t)+N_4^{\op{i}}(t)$, where $N_4^{\op{i}}(t)$ (resp.\ $N_4^{\op{r}}(t)$) is the number of nodes $i$ which are terminal nodes at time $t$ and which were informed (resp.\ regular) immediately before they became terminal, i.e., if $t_0=\inf\{ t'\geq 0\,:\,\sigma_1(i,t')=4 \}$ then $\sigma_1(i,t_0^-)=6$ (resp.\ $\sigma_1(i,t_0^-)=3$). 
	
	We will now define a martingale $(M_k)_{k\in\N\cup\{0 \}}$ inductively. First set $M_0=0$. If $k>\frk k$ set $M_{k}=M_{k-1}$. Otherwise define $M_{k}=M_{k-1}+A_k-\E[A_k\,|\,\cF_{k-1}]$, where the random variable $A_k$ and the $\sigma$-algebra $\cF_k$ are defined as follows. Set $A_k=1$ if a new terminal node is created at time $T_k$, and set $A_k=0$ otherwise. Observe that for $k<\frk k$, $N_4^{\op r}(T_k)=\sum_{k'=1}^{k}A_k$. For $k\leq\frk k$ let $\cF_{k}$ be the $\sigma$-algebra which contains all information about the protocol for times $t<T_{k+1}$. Then $(M_k)_{k\in\N\cup\{0 \}}$ is a martingale for the filtration $\cF_k$ (if $\cF_k=\cF_{\frk k}$ for $k>\frk k$) with increments bounded by 1. 
	
	Azuma's inequality gives that w.h.p., $M_{\frk k\wedge an} \leq (an)^{0.55}$. By Hoeffding's inequality, w.h.p.\ the total number of clock rings before time $t_M$ is bounded by $1.1t_Mn$, i.e.,
	\eqbn
	\#\{ t\in[0,t_M]\,:\,\exists i\in[n]\text{\,\,such\,\,that\,\,}t\in\cP_i  \} \leq 1.1t_Mn.
	\eqen
	On this event, by the definition of $a$ and since regular nodes communicate every $\lceil \log\log n \rceil^2$ clock rings, we have $\frk k\leq 1.1t_Mn/\lceil \log\log n \rceil^2<an$. 
	
	Since informed nodes communicate at unit rate, the total number of times that some informed node initiates a communication before time $T\leq t_M$ is bounded by $2t_M\cdot \sup_{t\in[0,T]} N_6(t)$ w.h.p. Furthermore, an informed node becomes a terminal node with probability $(n-1)^{-1}( N_4(t^-)+N_6(t^-)-1 )<n^{-1}( N_4(t^-)+N_6(t^-))$ if it initiates a communication at some time $t$. Using these two observations we get that w.h.p.,
	\eqbn
	\begin{split}
	N_4^{\op{i}}( T )
	&\leq 
	2\cdot2t_M\cdot \sup_{t\in[0,T]} N_6( t ) \cdot 
	\sup_{t\in[0,T]} n^{-1}( N_4(t^-)+N_6(t^-))\\
	&\leq 4t_M\cdot n(\log n)^{-6}\cdot 2(\log n)^{-6}
	= 8t_Mn(\log n)^{-12}.
	\end{split}
	\eqen
	
	We condition on the high probability events discussed in the last two paragraphs in the remainder of the proof. Then
	\eqbn
	\begin{split}
	N_4^{\op{r}}(T)
	&=
	N_4^{\op{r}}(T_{\frk k})
	=
	\sum_{k=1}^{\frk k} A_{k}
	=
	M_{\frk k} + \sum_{k=1}^{\frk k}\E[A_k\,|\,\cF_{k-1}] 
	= (an)^{0.55} + \sum_{k=1}^{\frk k}\frac{N_4( (T_k)^- )}{n-1}\\
	&\leq (an)^{0.55} + \sum_{k=1}^{\frk k}\frac{N^{\op{r}}_4( (T_k)^- ) + 8t_Mn(\log n)^{-12}}{n-1}.
	\end{split}
	\eqen
	Defining $f(k):=N_4^{\op{r}}(T_{k})=N_4^{\op{r}}((T_{k+1})^- )$ for $k=1,\dots,\frk k$, $\eps_1=(an)^{0.55}$ and $\eps_2=8t_M(\log n)^{-12}n/(n-1)$, we have
	\eqb
	f(k) \leq \eps_1+\eps_2k+(n-1)^{-1}\sum_{k'=1}^{k-1} f(k').
	\label{eq135}
	\eqe
	By induction on $k$ we get from \eqref{eq135} that
	\eqbn
	f(k) \leq  \eps_2n(e^{2k/(n-1)}-1) + \eps_1 e^{2k/(n-1)}.
	\eqen
	Inserting $k=\frk k<an$ and using that $f$ is monotone, we get the following for sufficiently large $n$
	\eqbn
	N_4^{\op{r}}(T_{\frk k}) = f(\frk k) < f(an) \leq \eps_2 n ( e^{3a}-1 ) + \eps_1 e^{3a},
	\eqen
	so 
	\eqbn
	N_4(T)=N^{\op{i}}_4(T)+N^{\op{r}}_4(T_{\frk k})\leq 
	8t_Mn(\log n)^{-12}
	+\eps_2 n ( e^{3a}-1 ) + \eps_1 e^{3a}
	<n(\log n)^{-11}, 
	\eqen
	which implies $T<\Tterm$ as desired.
	
	To complete the proof of the lemma we need to bound the number of communications initiated by informed nodes before time $\Tinf\wedge (\log n)$. Since informed nodes communicate at unit rate, the total number of times that some informed node initiates a communication before time $\Tinf\wedge (\log n)$ is bounded by $2(\Tinf\wedge (\log n))\cdot \sup_{t\in[0,\Tinf]} N_6(t)\leq 2n$ w.h.p.
\end{proof}

Define
\eqbn
\begin{split}
	E_m &= \{ L_-(m)>\alpha(1-\beta_m) \} \cup \{  \Tinf<t_m \},\\
	E_m^\init &= \{ L^\init_-(m)>\alpha^\init(1-\beta^\init_m) \} \cup \{ \Tinf<t_m^\init  \}.
\end{split}
\eqen
The following lemma lower bounds the total number of level $m$ experts, given a lower bound for the number of nodes which become level $m$ experts upon receiving a bit from three level $m-1$ experts. 
More precisely, we will prove these results for the experts in $\cE_-^1(m)$ and $\cE_-(m)$, respectively. 
\begin{lemma}[Lower bound, number of experts]
	For all sufficiently large $n$,
	\eqb
	\P[(E_m)^c ; E^\init_{m}; D_1; D_2; D_3; D_4] < \exp(-2n^{0.1}).
	\label{eq116}
	\eqe
	\label{prop59}
\end{lemma}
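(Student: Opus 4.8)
\emph{Setting up the reduction.} The plan is to peel off the trivial cases, fix the good initial level-$m$ experts, and then run a level-by-level concentration argument over the rumor-spreading part of round $m$. On the event $E^\init_m$ we are in one of two situations: either $\Tinf<t^\init_m$, and then $\Tinf<t_m$ (since $t_m-t^\init_m=6K>0$) so $E_m$ holds and there is nothing to prove; or $L^\init_-(m)>\alpha^\init(1-\beta^\init_m)$. Since in addition $(E_m)^c$ forces $\Tinf\ge t_m$ and $L_-(m)\le\alpha(1-\beta_m)$, it suffices to bound the probability that $L^\init_-(m)>\alpha^\init(1-\beta^\init_m)$, that $\Tinf\ge t_m$, that $D_1,D_2,D_3,D_4$ all hold, and yet $L_-(m)\le\alpha(1-\beta_m)$. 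I would condition on $\cE^\init_-(m)$ together with the state of the protocol at the (pre-$t^\init_m$) times its members are created, and then study the rumor-spreading window of length $6K$ ending at $t_m=t^\init_m+6K$.

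\emph{The rumor tree.} Fix a root $i\in\cE^\init_-(m)$. As its counter runs through $1,2,\dots,2K+7$ it initiates one communication per clock ring, and each ring with counter $<2K+7$ converts a contacted regular node with expert indicator $\psi=0$ into a fresh level-$m$ expert whose counter is one larger. So in the idealized case where every contact lands on an eligible node, the number of level-$m$ experts descending from the roots is a fixed power of two times $\#\cE^\init_-(m)$, i.e.\ $\approx\alpha$ — a count which the upper bound $L(m)<\alpha(1+\beta_m)$ from $D_2$ shows is essentially tight. Crucially, every node of such a rumor tree inherits from its root, along the $\frk i_4(\cdot)$ pointers, the properties of being non-premature and non-expiring, so it lands in $\cE_-(m)$ once it reaches counter $2K+7$ before time $t_m$. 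The loss away from the idealized case is of two kinds. (i) A ring may contact an ineligible node; but throughout the window, which lies past $\Tasp/2$, the aspirant count is $<2n(\log n)^{-6}$ by $D_1$ (it only decreases), the terminal count is $<n(\log n)^{-6}$ by $D_4$ (here $t_m\le\Tinf\wedge t_M$ on our event), and the numbers of experts, expert candidates, and ex-experts (regular nodes with $\psi=1$) are $O(M\alpha)=o(n(\log n)^{-5})$ by $D_2$ and Remark \ref{rmk1}; hence a given ring lands on an eligible, unused node with conditional probability $1-O((\log n)^{-5})$. (ii) A chain of rings from a root to one of its counter-$(2K+7)$ descendants may be \emph{late} (span more than $6K$ time), in which case the descendant need not reach counter $2K+7$ by $t^\init_m+6K=t_m$; but $D_3$ caps the number of such late potential descendants by $(\log n)^{-6}$ times the total, for every chain length. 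Premature roots are already excluded from $\cE^\init_-(m)$.

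\emph{Concentration.} I would estimate level by level in the counter, exactly as in Lemma \ref{prop51b}. Writing $\nu_k$ for the number of on-time level-$m$ experts of counter $k$ descending from $\cE^\init_-(m)$, we have $\nu_1=L^\init_-(m)$, and conditionally on the history through the $(k-1)$st step of round $m$ each $i$ of counter $k-1$ creates a fresh counter-$k$ descendant exactly when its ring is on time and eligible, an event of conditional probability $1-O((\log n)^{-5})$ that depends on only one contact variable $\frk r(i,\cdot)$; since changing one such variable moves the count by at most $1$, Azuma's inequality gives $\nu_k\ge 2(1-O((\log n)^{-5}))\,\nu_{k-1}-n^{0.55}$ outside an event of probability $\le\exp(-\Omega(n^{0.1}))$. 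Iterating over the $\Theta(\log\log n)$ counter steps and subtracting the $D_3$-controlled late-chain loss yields $L_-(m)\ge\nu_{2K+7}\ge\alpha(1-\beta_m)$ off a union of $\Theta(\log\log n)$ such events, of total probability $\le\exp(-2n^{0.1})$ for large $n$.

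\emph{Main obstacle.} The awkward point is that $D_2,D_3,D_4$ involve randomness after time $t^\init_m$ and so cannot simply be conditioned on before the level-by-level martingale is run. As in Lemma \ref{prop51b}, the fix is to replace each good event by its consequences measurable with respect to the $\sigma$-algebra available at the current step — the bounds $N_2(t),N_5(t)\le(M+1)\alpha(1+\beta_M)$, $N_4(t)\le n(\log n)^{-6}$ and the counts of late descendants, all read off at times up to the current step — and to absorb the not-yet-revealed part into the additive slack $n^{0.55}$. Keeping the accumulated multiplicative error $(1-O((\log n)^{-5}))$ over the $\Theta(\log\log n)$ counter steps within a round, and then over the $m\le M$ rounds, under control is exactly what dictates the geometric growth of $\beta_m$ in the round number (and the choice $\CCC=5>2\log 5$).
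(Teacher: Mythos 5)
Your proposal is correct and follows essentially the same route as the paper's proof: after discarding the trivial case $\Tinf<t_m^\init$, you run the same counter-level doubling recursion for the descendants of $\cE^\init_-(m)$, with Azuma/McDiarmid-type concentration at each of the $2K+6$ steps, the per-contact eligibility probability $1-O((\log n)^{-5})$ extracted from $D_1$, $D_2$ (via Remark \ref{rmk1}) and $D_4$, and the late chains controlled by $D_3$ exactly as in the paper's definition of $k_d$. The only cosmetic difference is that the paper indexes its martingales by the successive communication times of counter-$d$ experts with increments in $\{1,2\}$ rather than by a per-level count, which changes nothing substantive.
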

\begin{proof}
	For $k\in\N$ and $d\in[2K+7]$ let $T^d_k$ be the $k$th time that a level $m$ expert $i\in\cE_-(m)$ with counter $d$ initiates a communication, where $T^d_k=\infty$ if this time is not well-defined. Observe that $T^d_k>\Tasp/2$ by the definition of $\cE_-(m)$. We will now define a martingale $M^d_k$ inductively. First set $M^d_0=0$. If $T^d_k=\infty$ or $T^d_k>\Tinf$ then set $M^d_k=M^d_{k-1}$. Otherwise define $M^d_k=M^d_{k-1}+A^d_k-\E[A^d_k\,|\,\cF^d_{k-1}]$, where the random variable $A^k_d$ and the $\sigma$-algebra $\cF^d_k$ are defined as follows. Let $A^d_k=2$ if some node $i\in[n]$ (other than the node initiating the communication at time $T^d_k$) becomes a new level $m$ expert at time $T^d_k$ and set $A^d_k =1$ otherwise. For $T_k^d\leq \Tinf$ 
let $\cF^d_{k-1}$ be the $\sigma$-algebra which contains information about the protocol for times $t<T^d_{k}$. 
	For $T_k^d> \Tinf$ let $\cF^d_{k-1}$ be the $\sigma$-algebra which contains information about the protocol for times $t\leq \Tinf$.  
	Then $M^d_k$ is a martingale for the filtration $\cF^d_k$. The increments of $M_k^d$ are bounded by 1. For $d=2,\dots,2K+7$ let $L^d_-(m)$ be the number of level $m$ experts with counter $d$ which are created before time $t_m$, i.e.,
	\eqbn
	L^d_-(m) = \#\{  i\in[n]\,:\,\exists t\in[0,t_m] \text{\,\,such\,\,that\,\,}\sigma(i,t)=(2,m,d,\xi,\frk b_i,b),\, b\in\{0,1\},\xi\in[4] \}.
	\eqen
	Also define
	$$
	k_d=\sup\{k\in[ \lceil (1 - 4(\log n)^{-6}  )\cdot L^{d-1}_-(m)\rceil ]\,:\, 
	T_k^d<t_m	 \},
	$$
	with $k_d=0$ if the considered set is empty.
	By Azuma's inequality and since $k_d\leq n$, $\P[ M^d_{k_d}<-n^{0.55}]\leq \exp(-2n^{0.1})$. Let $E=E_m^\init\cap D_1\cap D_2\cap D_3\cap D_4\cap\{\Tinf>t_m \}\cap\{M^d_{k_d}\geq -n^{0.55}  \}$. Recall Remark \ref{rmk1}.
	On $E$ we have $T^d_{k_d}<t_m<\Tinf$, so for all $k\leq k_d$ and letting $t=(T^d_k)^-$ be infinitesimally smaller than $T^d_k$,
	\eqbn
	\begin{split}
		\E[A^d_k\,|\,\cF^d_k] 
		&= 1+(n-1)^{-1}N_3(t)
		\geq 1+n^{-1}(n-N_1(t)-N_2(t)-N_4(t)-N_5(t)-N_6(t))\\
		&\geq 1+n^{-1}\Big(n- 2n(\log n)^{-6}-(M+1)\alpha(1+\beta_M)-n(\log n)^{-6}\\
		&\qquad-(M+1)\alpha(1+\beta_M)-n(\log n)^{-6}\Big)\\
		&\geq 2 - 2.5n^{-1}M\alpha.
	\end{split}
	\eqen	
	It follows that on $E$,
	\eqb
	\begin{split}
		L_-^{d}(m)
		&=\sum_{k\,:\,T_k^d\leq t_m}A^d_{k}
		\geq
		\sum_{k=1}^{k_d}A^d_{k} 
		= M_{k_d}+\sum_{k=1}^{k_d}\E[A^d_k\,|\,\cF^d_k] 
		\geq -n^{0.55}+(2 - 2.5n^{-1}M\alpha )k_d.
	\end{split}
	\label{eq132}
	\eqe
	We will argue by induction on $d$ that 
	\eqb
	k_d=(1 - 4(\log n)^{-6})\cdot L^{d-1}_-(m).
	\label{eq134} 
	\eqe
	For $d=2$, \eqref{eq134} is immediate since the number of late potential level $(m,2)$ experts is smaller than $(\log n)^{-6}L^1_-(m)$. Assuming that \eqref{eq134} holds for $1,\dots,d$, we get by \eqref{eq132} that
	\eqb
		L_-^{d}(m)\geq 2(1-2n^{-1}M\alpha)\cdot L_-^{d-1}(m).
		\label{eq133}
	\eqe
	Iterating this gives 
	$L_-^{d}(m)
	\geq (1-2n^{-1}M\alpha)^{d-1}\cdot 2^{d-1}\cdot L_-^{1}(m)
	\geq 0.9\cdot 2^{d-1}\cdot L_-^{1}(m)$.
	By first using the definition of $D_3$ and then using this bound, we get that the number of late potential level $(m,d+1)$ experts is smaller than $(\log n)^{-6}2^{d}L^1_-(m)<4(\log n)^{-6}L_-^{d}(m)$, which implies by the definition of $k_{d+1}$ that we have $k_{d+1}=(1 - 4(\log n)^{-6})\cdot L^{d}_-(m)$. 

	Iterating \eqref{eq133} and using that $L_-^1(m)\geq \alpha^1(1-\beta_{m}^1)$ on $E_m^1$,
	\eqbn
	L_-^{2K+7}(m) \geq 2^{2K+6}(1- 2n^{-1}M\alpha )^{2K+6} L_-^\init(m)
	\geq 2^{2K+6} (1- 3n^{-1}M\alpha\cdot (2K+6)-\beta_{m}^\init ) \alpha^1
	\geq \alpha(1-\beta_{m}),
	\eqen
	which implies the occurrence of $E_m$.
\end{proof}

The following lemma lower bounds the number of nodes which become level $m$ experts upon receiving bits from three level $m-1$ experts, given a lower bound on the number of level $m-1$ experts. More precisely, we will prove these results for the experts in $\cE_-^1(m)$ and $\cE_-(m-1)$, respectively.
\begin{lemma}[Lower bound, number of experts]
	For all sufficiently large $n$,
	\eqb
	\P[  (E^\init_m)^c ; E_{m-1}; D_1; D_2; D_4; D_6]< \exp(-n^{0.09})
	\label{eq117}
	\eqe
	\label{prop60}
\end{lemma}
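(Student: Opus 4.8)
The plan is to treat Lemma \ref{prop60} as the ``three-contact'' analogue of Lemma \ref{prop59}: there the population of level-$m$ experts was grown from $\cE^\init(m)$ by rumor-spreading, whereas here we must lower bound $L^\init_-(m)=\#\cE^\init_-(m)$ itself, i.e.\ the number of non-premature, non-expiring nodes that are promoted to level-$m$ experts before time $t_m^\init$ by receiving bits from three level-$(m-1)$ experts of expert types $1,2,3$. First I would dispose of the trivial case: if $\Tinf<t_m^\init$ then $E^\init_m$ holds, so assume $\Tinf\ge t_m^\init$; since $t_{m-1}<t_m^\init$, the event $E_{m-1}$ then forces $L_-(m-1)\ge\alpha(1-\beta_{m-1})$. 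By Lemma \ref{prop50}, conditionally on the history up to (but excluding the expert type attached to) the counter-$(2K+7)$ communication of each such level-$(m-1)$ expert, that type is uniform on $[4]$ and these types are independent, so Hoeffding's inequality gives that w.h.p.\ at least $\tfrac14\alpha(1-\beta_{m-1})(1-o(1))$ of them carry each type $\xi\in\{1,2,3\}$; call these counts $L_\xi$. Moreover an expert in $\cE_-(m-1)$ fires its counter-$(2K+7)$ communication at its first clock ring after counter $2K+7$, hence (except with probability $e^{-K}$, union-boundable since this is $(\log n)^{-\Omega(1)}$) before time $t_{m-1}+K=t_m^\init$; thus w.h.p.\ all but a $(\log n)^{-6}$ fraction of these $\approx\alpha$ communications land, on uniformly random targets, inside the window $[t_{m-1},t_m^\init]$ of length $\approx K$.

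Next I would identify the nodes able to become such experts. Call $j$ \emph{available} if throughout $[0,t_m^\init]$ it is either a regular node with expert indicator $\psi=0$ or a level-$m$ expert candidate it became during this interval, and if $\#(\cP_j\cap[0,t_M])\le 2t_M$ (so it cannot expire as an expert candidate within $[0,t_m^\init]$, using $t_m^\init\le t_M$ and the margin in the constant $2t_M$). If an available $j$ is contacted, while it is an expert candidate, by level-$(m-1)$ experts from $\cE_-(m-1)$ of all three types before time $t_m^\init$, then $j\in\cE^\init_-(m)$: the first contact makes it a level-$m$ expert candidate, the remaining two fill its test bits and promote it before $t_m^\init$, it is non-premature because $\cE_-(m-1)\subset[n]\setminus\frk P$, and it is non-expiring by the clock condition. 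The count of nodes that are \emph{not} available is $o(\beta^\init_m)\,n$, combining the hypotheses: $D_1$ bounds aspirants at time $\Tasp/2$ (and essentially none survive later), $D_2$ with Remark \ref{rmk1} bounds experts, expert candidates, and nodes that have ever been experts or expert candidates (those with $\psi=1$) by $O(M\alpha(1+\beta_M))$, $D_4$ bounds terminal nodes before $\Tinf\wedge t_M\ge t_m^\init$, the definition of $\Tinf$ bounds informed nodes by $n(\log n)^{-6}$, and $D_6$ together with $D_2$ controls the premature experts and premature expert candidates created before $t_m^\init$ (a fraction $(\log n)^{-\Omega(1)}$ of $O(M\alpha)$ nodes).

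For the main estimate I would condition on $\cE_-(m-1)$, on the expert types of its members, on the Poisson clock rings up to $t_m^\init$ (hence on the firing times of the counter-$(2K+7)$ communications), and on which nodes are available; the only remaining randomness is then the vector of targets of these $\le L(m-1)\le n$ communications, and $L^\init_-(m)$ is a function of it that changes by at most $1$ when one coordinate changes. For a fixed available $j$ the conditional probability of receiving a type-$\xi$ contact for each $\xi\in\{1,2,3\}$ is at least $\prod_{\xi=1}^{3}\big(1-(1-(n-1)^{-1})^{L_\xi}\big)$, which is $(1-o(\beta^\init_m))\,\alpha^\init/n$ because $n^{-2}(\alpha/4)^3$ matches $\alpha^\init/n$ and the relative errors — roughly $3\beta_{m-1}$ from the three type counts, plus the $(\log n)^{-6}$-losses above — fit inside $\beta^\init_m=3\cdot 5^{m-1}(\log n)^{-\CCC}+(\log n)^{-\CCC}$. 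Summing over the $(1-o(\beta^\init_m))n$ available nodes, the conditional mean of $L^\init_-(m)$ is at least $\alpha^\init(1-\tfrac12\beta^\init_m)$, and McDiarmid's inequality gives a deviation of at most $n^{0.55}\ll\beta^\init_m\alpha^\init$ except with probability $\exp(-\Omega(n^{0.1}))$; intersecting with the w.h.p.\ events used above yields \eqref{eq117}, with room to spare over the claimed $\exp(-n^{0.09})$.

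The hard part, I expect, is the bookkeeping hidden in the last two paragraphs: making ``available'' a clean event measurable with respect to information that does not involve the contact targets, and verifying that for a $(1-o(\beta^\init_m))$-fraction of available nodes the \emph{three} window constraints (first, second, and third contact, all within the $\le 2t_M$ clock rings of the node's candidacy and all before $t_m^\init$) can be met simultaneously. This is exactly where one leans on the fact that the relevant communications of the non-premature level-$(m-1)$ experts are squeezed into the short interval $[t_{m-1},t_m^\init]$, and on $2t_M$ comfortably exceeding $t_m^\init$, so that expiry is never the binding obstruction.
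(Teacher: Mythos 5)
Your route is genuinely different from the paper's: you freeze the set $\cE_-(m-1)$, its expert types, the clock rings, and an ``availability'' event, then compute a per-node probability of being hit by all three types and finish with McDiarmid. The paper instead processes the counter-$(2K+7)$ firings of $\cE_-(m-1)$ sequentially in time and runs three coupled martingales $M^1_k,M^2_k,M^3_k$ counting, respectively, newly created level-$m$ candidates with one bit, candidates with two bits, and promoted level-$m$ experts; Azuma's inequality is applied to each, and the three conditional-expectation recursions are discretely integrated to produce the cubic growth $\approx \tfrac{1}{2^6}\wt n^{-2}k^3$. The reason this structural difference matters is that the step you yourself flag as ``the hard part'' is a genuine gap, not bookkeeping: availability of a node $j$ over $[0,t_m^\init]$ is \emph{not} determined by information independent of the target variables you want to leave random. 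The targets of the earlier counter-$(2K+7)$ firings of $\cE_-(m-1)$ create level-$m$ experts, and those experts immediately rumor-spread (counters $<2K+7$) before $t_m^\init$, converting regular nodes with $\psi=0$ into level-$m$ experts and thereby destroying their availability. So conditioning on $\{j \text{ available}\}$ tilts the joint law of exactly the contact targets entering your product bound, and the assertion that after your conditioning ``the only remaining randomness is the vector of targets,'' each still uniform, is false as stated. The paper's sequential scheme never needs such a claim: at each firing time $T_k$ the success probability is computed conditionally on the true configuration $\cF_{k-1}$ and lower-bounded on the spot using $D_1,D_2,D_4$ and Remark \ref{rmk1} to control the number of non-regular nodes at that moment.

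Two further points. First, being contacted by $\cE_-(m-1)$ experts of all three types does not by itself place $j$ in $\cE^\init_-(m)$: if an earlier counter-$(2K+7)$ contact of the same type comes from a premature (or otherwise excluded) level-$(m-1)$ expert, that contact is the one that fills the test bit, so $\frk i_\xi(j)\in\frk P$ and $j$ is premature even though your stated event occurs. Controlling this requires a bound on premature experts at level $m-1$, whereas $D_6$ only bounds $\#(\cE(0)\cap\frk P)$; you would either need to propagate such a bound through the levels or fold these interfering contacts into unavailability with a quantitative estimate, neither of which is in your sketch (the paper's accounting sidesteps this by only counting candidates created and promoted at the $\cE_-(m-1)$ firing times). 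Second, a minor arithmetic point: the per-node probability you compute is $\approx (\alpha/4)^3/n^3=\alpha^\init/(64n)$, not $\alpha^\init/n$, so ``matches'' is off by $4^3$ under the paper's definition $\alpha^\init=n0.5^{3K}$; the paper's own computation produces the same $2^{-6}$ and absorbs it in the $2^{2K+6}$ doublings of Lemma \ref{prop59}, so this is a matter of matching conventions rather than a substantive error, but it should be stated correctly.
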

\begin{proof}	
	We say that an expert candidate with state $(5,m,d,\xi,b^1,b^2,b^3,\frk b_i,b)$ has one (resp.\ two) bits if exactly one (resp.\ two) of the variables $b^1,b^2,b^3$ is different from $-1$. For $v=1,2,3$ we say that the expert candidate \emph{has bit $b^v$} if $b^v\neq -1$.
	
	Let $T_k$ be the $k$th time that an expert $i\in\cE_-(m-1)$ with counter $d=2K+7$ initiates a communication, where $T_k=\infty$ if this time is not well-defined (so $T_k=\infty$ if and only if $k>\#\cE_-(m-1)$). For $k\in\N$ and $u=1,2,3$ define $M^u_k$ inductively as follows with $M^u_0=0$. If $T_k=\infty$ or $T_k>\Tinf$ set $M^u_k=M^u_{k-1}$. Otherwise set $M_k^u=M_{k-1}^u+A_k^u-\E[A_k^u\,|\,\cF_{k-1}]$, where the random variable $A^u_k$ and the sigma-algebra $\cF_k$ are defined as follows for such $k$. Let $\cF_{k}$ be the $\sigma$-algebra containing information about the protocol for times $t<T_{k+1}$, in addition to $\cP_i\cap [0,t_m]$ for all nodes $i$ which are level $m$ expert candidates at time $T_k$. For $u=1,2$ set $A^u_k=1$ iff a new expert candidate $i$ with $u$ bits is created at time $T_k$, and if $i\in\cA$. Recall that $i\in\cA$ means that the Poisson clock of $i$ rings less than $2t_M$ times during $[0,t_{M}]$, and note that if $i$ is the node contacted at time $T_k$ then $i\in\cA$ is measurable with respect to $\cF_{k}$. Observe that $i\in\cA$ occurs except on an event of the following probability for $Y$ a Poisson random variable of parameter $t_M$: $\P[Y\geq 2t_M]<2^{-0.55t_M}<2^{-40(\log\log n)^2}$ (see \eqref{eq:poisson}). Set $A^3_k=1$ if a new level $m$ expert is created at time $T_k$. Observe that $L_-^\init(m)\geq \sum_{k\,:\,T_k\leq t_m\wedge\Tinf} A^3_k$ (with equality if $t_m<\Tinf$). 
	
	For $u=1,2,3$ the process $M_k^u$ is a martingale with increments bounded by 1. By Azuma's inequality, for $k=1,\dots,k_0:=\lceil 1-2(\log n)^{-6}\rceil L_-(m-1)$,
	\eqb
		\P[ |M^u_{k}|>n^{0.55} ] \leq 2\exp(-n^{1.1}/(2k)).
		\label{eq126}
	\eqe
	Let $L$ be the number of experts $i\in\cE_-(m-1)$ with counter $2K+7$ which initiate a communication before time $t_m^\init$. For $i\in\cE_-(m-1)$ let $\frk s_i\in\cP_i$ denote the time such that the counter of $i$ is set equal to $2K+7$ at time $\frk s_i$. For a uniformly sampled $i\in\cE_-(m-1)$ and $Y$ a unit rate exponential random variable, the probability that $i$ does not initiate a communication during the interval $(\frk s_i,t_m^\init)$ is at most $\P[Y>K]=\exp(-K)\leq (\log n)^{-6}$. Therefore, by Hoeffding's inequality,
 	\eqbn 
	\P[ L\geq k_0\,|\, \cE^\init_-(m) ]
	\1_{E_{m-1}}
	<\exp(-2(\log n)^{-12}\cdot \#\cE^\init_-(m))\1_{E_{m-1}}.
	\eqen
	In order for $A_k^1=1$ three criteria must be satisfied: (i) The node $j$ which is contacted at time $T_k$ is regular, (ii) the expert $i$ which initiates the communication must have an expert type $\xi\in\{1,2,3 \}$, and (iii) $j\in\cA$. The two events (i) and (ii) are independent by Lemma \ref{prop50}, and they have probability $(n-1)^{-1}N_3\big((T_k^1)^-\big)>n^{-1}N_3\big((T_k^1)^-\big)$ and $3/4$, respectively. The event (iii) has probability $\P[\wh Y>2t_m]$ for $\wh Y$ a Poisson random variable with parameter $t_M$.
	
	Let $E=D_1\cap D_2\cap D_4\cap E_{m-1}\cap \{\Tinf>t_m^\init \}\cap \{ L> k_0 \}$. On $E$ we have $T_{k_0}<t_m^1<\Tinf$, so for $k=1,\dots,k_0$ and $t=(T_k^\init)^-$ infinitesimally smaller than $T_k^\init$,
	\eqbn
	\begin{split}
	\E[A_k^1&\,|\,\cF_{k-1}] > \frac{3}{4}\cdot n^{-1} ( N_3(t) )-\P[\wh Y>2t_M]\\
	&= \frac{3}{4}\cdot n^{-1}(n-N_1(t)-N_2(t)-N_4(t)-N_5(t)-N_6(t))-2^{-0.55\lambda}\\
	&\geq \frac{3}{4}\cdot n^{-1}\Big(n- 2n(\log n)^{-6}-(M+1)\alpha(1+\beta_M)-n(\log n)^{-6}\\
	&\quad-(M+1)\alpha(1+\beta_M) - n(\log n)^{-6} \Big)-2^{-0.55\lambda}\\
	&\geq \frac 34(1 - 2.5n^{-1}M\alpha).
	\end{split}
	\eqen
	and
	\eqbn
	\begin{split}
		\E[A_k^1&\,|\,\cF_{k-1}] \leq \frac{3}{4}\cdot n^{-1} \cdot N_3(t) \leq \frac 34.
	\end{split}
	\eqen
	
	Define $\wh A_k^u=\sum_{k'\leq k} A_{k'}^u$ for all $u,k$. We have $A_k^2=1$ if and only if the following two criteria are satisfied: (i) The node which is contacted at time $T_k$ has exactly one bit $b^\xi$, and (ii) the expert which initiates the communication has an expert type in $\{1,2,3 \}\setminus\{\xi \}$. These two events (i) and (ii) are independent by Lemma \ref{prop50}, and they have probability $\wt n^{-1} (\wh A_{k-1}^1-\wh A_{k-1}^2)$ and $1/2$, respectively, where $\wt n=n-1$. Therefore
	\eqbn
	\E[A_k^2\,|\,\cF_{k-1}] = \frac{1}{2}\wt n^{-1}  (\wh A_{k-1}^1-\wh A_{k-1}^2).
	\eqen
	By a similar argument,
	\eqbn
	\E[A_k^3\,|\,\cF_{k-1}] 
	= \frac{1}{4} \wt n^{-1} (\wh A_{k-1}^2-\wh A_{k-1}^3).
	\eqen
	Define $a=1-2.5n^{-1}M\alpha$ to simplify notation. On the intersection of the \emph{complement} of the event in \eqref{eq126} for $u=1,2,3$ and $E$,
	\eqbn
	\begin{split}
		\wh A_k^1&=\sum_{j\leq k} A^1_j = M_k^1 + \sum_{j\leq k} \E[A^1_j\,|\,\cF_{j-1}]\geq -n^{0.55}+\frac 34 ak,\\
		\wh A_k^1&=\sum_{j\leq k} A^1_j = M_k^1 + \sum_{j\leq k} \E[A^1_j\,|\,\cF_{j-1}]\leq n^{0.55}+\frac 34 k,\\
		\wh A_k^2&=\sum_{j\leq k} A^2_j
		= M_k^2 + \sum_{j\leq k} \E[A^2_j\,|\,\cF_{j-1}]
		\leq n^{0.55}+\sum_{j\leq k} \frac 12 \wt n^{-1} \wh A^1_{j-1}\\
		&\leq n^{0.55}+\sum_{j\leq k} \frac 12 \wt n^{-1} \Big(\frac 34 (j-1) + n^{0.55}\Big) \leq \frac {3}{16} \wt n^{-1}k^2 + \Big(\frac{1}{2}\wt n^{-1}k +1\Big)n^{0.55}.
	\end{split}
	\eqen
	Using the lower bound for $\wh A_k^1$ and the upper bound for $\wh A_k^2$, 
	\eqbn
	\begin{split}
		\wh A_k^2&=\sum_{j\leq k} A^2_j
		= M_k^2 + \sum_{j\leq k} \E[A^2_j\,|\,\cF_{j-1}]
		\geq -n^{0.55}+\sum_{j\leq k} \frac 12 \wt n^{-1} (\wh A^1_{j-1}-\wh A^2_{j-1})\\
		&\geq -n^{0.55}+\sum_{j\leq k} \frac 12 \wt n^{-1} \Big(\frac 34 a(j-1) - n^{0.55} -\frac {3}{16} \wt n^{-1}(j-1)^2 - \Big(\frac{1}{2}\wt n^{-1}(j-1) +1\Big)n^{0.55} \Big)\\
		&\geq \frac {3}{16} a\wt n^{-1}(k-1)^2 
		- \frac{1}{2^5}\wt n^{-2}k^3
		-\Big(\frac{1}{2} \wt n^{-1}k +\frac {1}{8} \wt n^{-2}k^2 + \frac{1}{2}\wt n^{-1}k+1 \Big)n^{0.55},\\
		\wh A_k^3&=\sum_{j\leq k} A^3_j 
		= M_k^3 + \sum_{j\leq k} \E[ A^3_j\,|\,\cF_{j-1}]
		\leq M_k^3 + \sum_{j\leq k}
		\frac{1}{4}\wt n^{-1} \wh A_{j-1}^2\\
		& \leq n^{0.55}+\sum_{j\leq k} \frac 14 \wt n^{-1} 
		\Big( \frac {3}{16} \wt n^{-1}j^2 + \Big(\frac{1}{2}\wt n^{-1}j +1\Big)n^{0.55} \Big)\\
		& \leq \frac{1}{2^6}\wt n^{-2}k^3 +
		\Big( \frac {1}{16} \wt n^{-2}k^2 + \frac{1}{4}\wt n^{-1}k +1\Big)n^{0.55}.
		\end{split}
	\eqen
	Using the lower bound for $\wh A_k^2$ and the upper bound for $\wh A_k^3$, 
	\eqbn
	\begin{split}
		\wh A_k^3&=\sum_{j\leq k} A^3_j 
		= M_k^3 + \sum_{j\leq k} \E[ A^3_j\,|\,\cF_{j-1}]
		\geq M_k^3 + \sum_{j\leq k}
		\frac{1}{4}\wt n^{-1} (\wh A_{j-1}^2-\wh A_{j-1}^3)\\
		& \geq -n^{0.55}+\sum_{j\leq k} \frac 14 \wt n^{-1} 
		\Big( \frac {3}{16} a\wt n^{-1}(j-2)^2 - \Big(\frac{1}{2}\wt n^{-1}(j-1) +1\Big)n^{0.55} \Big)\\
		&\qquad - \sum_{j\leq k} \frac 14 \wt n^{-1} 
		\Big(\frac{1}{2^6}\wt n^{-2}j^3 +
		\Big( \frac {1}{16} \wt n^{-2}j^2 + \frac{1}{4}\wt n^{-1}j +1\Big)n^{0.55}\Big)\\
		& \geq \frac{1}{2^6}a\wt n^{-2}(k-2)^3 
		-\frac{1}{2^{10}} \wt n^{-3}k^4
		- \frac 14 \wt n^{-1} 
		\Big( \frac {1}{16\cdot 3} \wt n^{-2}k^3 + \frac{1}{8}\wt n^{-1}k^2 +k\Big) n^{0.55}\\
		&\qquad-\Big( \frac {1}{16} \wt n^{-2}(k-1)^2 + \frac{1}{4}\wt n^{-1}k +1\Big)n^{0.55}.
	\end{split}
	\eqen
	Inserting $k=k_0$, on the event $E$,
	\eqbn
	L_-(m)\geq
	\wh A_{k_0}^3 \geq \alpha^\init(1-\beta_{m}^\init),
	\eqen
	so $E_m^\init$ occurs.
\end{proof}

\begin{coro}
	W.h.p., at least one of the following two holds: (i) $\Tinf<t_M$ or (ii) $|L_-(M)-\alpha|<\alpha\beta_M$. 
	\label{prop49}
\end{coro}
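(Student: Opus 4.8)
The plan is to propagate the lower bounds on the number of experts level by level through a two-step induction on the round index, alternating Lemmas \ref{prop60} and \ref{prop59}, and then to combine the resulting event at level $M$ with the deterministic upper bound recorded in $D_2$ (Lemma \ref{prop41}).

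First I would set $G := D_1\cap D_2\cap D_3\cap D_4\cap D_6$; each of these events holds w.h.p.\ (by Lemma \ref{prop51} for $D_1$, Lemma \ref{prop41} for $D_2$, Lemma \ref{prop:T1} for $D_4$, and the earlier lemmas establishing $D_3$ and $D_6$), so $\P[G^c]=o(1)$. The base case is the event $E_0$, which holds w.h.p.\ since Lemma \ref{prop48} gives $L_-(0)>\alpha(1-\beta_0)$ with probability $1-3\exp(-2n^{0.2})$. For the induction I would relabel the relevant events as $A_0:=E_0$, $A_{2k-1}:=E^\init_k$, and $A_{2k}:=E_k$ for $k\in[M]$. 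Lemma \ref{prop60}, applied with $E_{m-1}=A_{2k-2}$ and using $G\subseteq D_1\cap D_2\cap D_4\cap D_6$, yields $\P[A_{2k-1}^c\cap A_{2k-2}\cap G]\le\exp(-n^{0.09})$, while Lemma \ref{prop59}, applied with $E^\init_m=A_{2k-1}$ and using $G\subseteq D_1\cap D_2\cap D_3\cap D_4$, yields $\P[A_{2k}^c\cap A_{2k-1}\cap G]\le\exp(-2n^{0.1})$.

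Decomposing the complement of $E_M=A_{2M}$ according to the first index $j$ at which $A_j$ fails gives
\[
A_{2M}^c\subseteq A_0^c\cup G^c\cup\bigcup_{j=1}^{2M}\bigl(A_j^c\cap A_{j-1}\cap G\bigr),
\]
so a union bound would give $\P[E_M^c]\le\P[A_0^c]+\P[G^c]+2M\exp(-n^{0.09})=o(1)$, using $2M=O(\log\log n)$. Finally, on the w.h.p.\ event $E_M\cap D_2$: if $\Tinf<t_M$ we are in alternative (i); otherwise $E_M$ forces $L_-(M)>\alpha(1-\beta_M)$, while $\cE_-(M)\subseteq\cE(M)$ together with $D_2$ gives $L_-(M)\le L(M)<\alpha(1+\beta_M)$, hence $|L_-(M)-\alpha|<\alpha\beta_M$, which is alternative (ii) (here $\beta_M=5^M(\log n)^{-\CCC}\to0$, so this interval is nonempty for large $n$). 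The argument is essentially bookkeeping; the only point that needs a moment's care is that the number of rounds $M=\lceil2\log\log n\rceil$ is small enough that the union bound over $m$ of the sub-exponential failure probabilities from Lemmas \ref{prop59} and \ref{prop60} still tends to $0$ — which it plainly does.
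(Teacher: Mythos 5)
Your proof is correct and follows essentially the same route as the paper: establish $D_1\cap D_2\cap D_3\cap D_4\cap D_6$ w.h.p., start from Lemma \ref{prop48}, chain Lemmas \ref{prop60} and \ref{prop59} across the $M=O(\log\log n)$ rounds to get $E_M$ w.h.p., and read off the corollary. Your explicit use of $D_2$ (via $L_-(M)\le L(M)<\alpha(1+\beta_M)$) to supply the upper half of the two-sided bound in alternative (ii) is a point the paper's one-line proof glosses over, and it is handled correctly here.
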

\begin{proof}
	Combining the lemmas above, the event $D_1\cap D_2\cap D_4\cap D_6$ occurs w.h.p. It follows from Lemmas \ref{prop48}, \ref{prop59}, and \ref{prop60} that the events $E_m^1$ and $E_m$ occur w.h.p.\ for all $m$. We now obtain the corollary, since the event in the statement of the corollary (i.e., the union of (i) and (ii)) is equivalent to occurrence of $E_M$. 
\end{proof}

The following lemma implies that at time $\Tinf$ there are less than $0.1n$ nodes which are still aspirants. We will use this to argue that informed nodes spread their bit efficiently after time $\Tinf$, since w.h.p., every time they initiate a communication they reach a regular node with constant order probability.
\begin{lemma}
	Let $D_8$ be the event 
	\eqbn
	D_8 = \{\Tinf>200\eps^{-1} \}\cap\{N_1(200\eps^{-1})<0.1n \}.
	\eqen 
	Then $D_8$ happens w.h.p.
\end{lemma}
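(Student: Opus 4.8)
The lemma asserts two things about the fixed constant time $C:=200\eps^{-1}$: that $\Tinf>C$ w.h.p., and that $N_1(C)<0.1n$ w.h.p. The plan is to establish the two claims separately and then intersect the two high-probability events.

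For the bound on $N_1(C)$, the plan is to exploit that the event $E(i):=\{\sigma_1(i,C)=1\}$ that node $i$ is still an aspirant at time $C$ is a deterministic function of $\cP_i\cap[0,C]$, of the nodes $\frk r(i,t)$ that $i$ contacts during $[0,C]$, and of the (fixed) initial bits of those nodes; indeed, while an aspirant $i$ only ever reads initial bits $\frk b_j$ of nodes it contacts, and it never updates its state when contacted by another node. Hence the indicators $(\one_{E(i)})_{i\in[n]}$ are independent. To estimate $\P[E(i)]$ I would split on whether $i$ is destined to become a level $0$ expert, which happens with probability exactly $2^{-K}=(\log n)^{-6}$ (and then $i$ may well still be an aspirant at time $C$, since phase $\chi=3$ alone needs $\Tasp$ clock rings); in the complementary event $i$ becomes a regular node after at most $4Y_1+2Y_2$ clock rings, where $Y_1$ (the number of bit-quadruples collected in phase $\chi=1$) is stochastically dominated by a geometric random variable of success probability $\ge\eps$ and $Y_2$ (the number of bit-pairs collected up to the first pair $(1,0)$ in phase $\chi=2$) by a geometric of success probability $\ge\eps/2$ — these are exactly the dominations used in the proof of Lemma~\ref{prop48}. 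Bounding $\#(\cP_i\cap[0,C])\sim\op{Pois}(C)$ below by $150\eps^{-1}$ via \eqref{eq:poisson} and $Y_1,Y_2$ above via \eqref{eq:geometric} gives $\P[E(i)]<0.1$ (in fact bounded by a much smaller constant) for all large $n$, so that $\E[N_1(C)]\le(0.1-\delta)n$ for some $\delta>0$; Hoeffding's inequality and independence then yield $N_1(C)<0.1n$ w.h.p.

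For the bound on $\Tinf$, the plan is to observe that every node that is informed at time $C$ is influenced (in the sense of Definition~\ref{def:influence}) during $[0,C]$ by the ``original'' level $0$ expert at the root of its genealogy: tracing an informed node back through the chain of push/pull and expert-promotion communications always reaches a level $M$ expert, and then, descending through expert candidates and lower levels, a level $0$ expert that was once an aspirant and hence completed the $\Tasp$ clock rings of phase $\chi=3$ before time $C$. Let $B:=\{i\in[n]:\#(\cP_i\cap[0,C])\ge\Tasp\}$; that root node lies in $B$. Two estimates then finish: (1) a Poisson upper-tail bound gives $\P[i\in B]\le(eC/\Tasp)^{\Tasp}\le(\log n)^{-5000\eps^{-1}}$ for large $n$, so $\E[\#B]\le n(\log n)^{-5000\eps^{-1}}\le n(\log n)^{-20000}$ (here $\eps<1/4$ makes the exponent comfortably bigger than $6$), and by Markov's inequality $\#B<n(\log n)^{-19999}$ w.h.p.; (2) as in the proof of Lemma~\ref{prop37}, $\#T(i,[0,C])$ is stochastically dominated by a rate-$2$ Yule-Furry process run for the constant time $C$, whose size at time $C$ is geometric with parameter $e^{-2C}$, so with $\kappa:=3e^{2C}$ we get $\P[\#T(i,[0,C])\ge\kappa\ln n]\le n^{-2}$ and a union bound gives $\max_i\#T(i,[0,C])<\kappa\ln n$ w.h.p. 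Combining, $N_6(C)\le\#\bigcup_{i_0\in B}T(i_0,[0,C])\le\#B\cdot\max_i\#T(i,[0,C])<\kappa n(\log n)^{-19998}<n(\log n)^{-6}$ for $n$ large, i.e.\ $\Tinf>C$ w.h.p.

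The main obstacle is in the bound on $\Tinf$: the tempting shortcut of claiming that w.h.p.\ \emph{no} node has its clock ring $\Tasp$ times by time $C$ fails, because $\E[\#B]\to\infty$ — the Poisson tail $(eC/\Tasp)^{\Tasp}$ is only $\exp(-\Theta(\log\log n\cdot\log\log\log n))$, which is super-polynomially small but still does not beat the union bound over $n$ nodes, since $\log n\gg\log\log n\cdot\log\log\log n$. One must instead accept that $B$ is typically nonempty but has size only $o(n(\log n)^{-6})$, and combine this with the fact that over a constant time horizon an influence tree has size only $O(\log n)$ w.h.p., so that it is the \emph{product} $\#B\cdot\max_i\#T(i,[0,C])$, rather than $\#B$ alone, that controls the informed population at time $C$. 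Making the exponents in (1) and (2) line up (the large constant $5000$ in the definition of $\Tasp$ is exactly the source of slack) and carefully verifying that every informed node at time $C$ really does trace back to a member of $B$ are the delicate points; the rest is routine concentration.
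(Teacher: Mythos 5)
Your proposal is correct and takes essentially the same route as the paper: for $N_1(200\eps^{-1})$ a per-node tail estimate (Poisson clock count plus geometric dominations for the aspirant phases) with independence across nodes and Hoeffding, and for $\Tinf$ the observation that every node informed by time $200\eps^{-1}$ lies in the influence set of a level-$0$ expert whose clock must have rung at least $\Tasp$ times by that constant time, combined with a Poisson tail bound on the number of such nodes and Yule--Furry domination of the influence sets. The only cosmetic difference is the final accounting, where the paper multiplies $\#\cE$ by the constant expected Yule--Furry size and applies Markov conditionally, while you use a w.h.p.\ $O(\log n)$ bound on the maximal influence set and take a product; both land comfortably below $n(\log n)^{-6}$.
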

\begin{proof}
	We will prove separately that the two conditions of $D_8$ are satisfied w.h.p. First we consider the condition $\Tinf>200\eps^{-1}$. Let $\cE\subset[n]$ denote the set of nodes which become level 0 experts before time $200\ep^{-1}$, i.e.,
	\eqbn
	\cE = \{ i\in[n]\,:\,i\in\cE(0) \text{\,\,and\,\,}\exists t\in[0,200\ep^{-1}] \text{\,\,such\,\,that\,\,}\sigma_1(i,t)=2 \}.
	\eqen
	Let $\cI\subset[n]$ denote the set of nodes which become informed before time $200\ep^{-1}$, i.e.,
	\eqbn
	\cI = \{ i\in[n]\,:\exists t\in[0,200\ep^{-1}] \text{\,\,such\,\,that\,\,}\sigma_1(i,t)=6 \}.
	\eqen
	Recall the notion of \emph{influence} from Definition \ref{def:influence}. Let $T(i) \subset[n]$ denote the set of nodes influenced by $i$ during $[0,200\eps^{-1}]$. By the definition of the protocol, any node which becomes informed before time $200\eps^{-1}$ is influenced by a level 0 expert during $[0,200\eps^{-1}]$. Therefore
	\eqb
	\cI\subset \bigcup_{i\in\cE} T(i).
	\label{eq123}
	\eqe
	
	If $i\in[n]$ is a level 0 expert, then the clock of $i$ must ring at least $\Tasp$ times before $i$ becomes an expert since this is the duration of phase $\chi=3$ of the aspirant phase. Therefore, for any fixed $i\in[n]$ and with $Y$ a Poisson random variable with parameter $\Tasp/2$, \eqref{eq:poisson} gives
	\eqbn
	\P[i\in\cE] \leq \P[ \#(\cP_i\cap[0,200\ep^{-1}])\geq\Tasp ]
	\leq \P[Y\geq\Tasp] \leq 2^{-0.55\Tasp/2} < (\log n)^{-5000}.
	\eqen
	Markov's inequality gives 
	\eqb
	\P[\#\cE\geq n(\log n)^{-4000}] \leq (\log n)^{-1000}. 
	\eqe
	
	The random variable $\#T(i)$ is stochastically dominated by a Yule-Furry process with rate 2 at time $200\ep^{-1}$, see the proof of Lemma \ref{prop37}. It follows from the explicit formula for the distribution of a Yule-Furry process \cite[page 122]{karlin66-2} that $C_{\op{YF}}:=\E[\#T(i)]<\infty$ for a constant $C_{\op{YF}}$ depending only on $t$. Note that the random variables $T(i)$ and $\1_{i\in\cE}$ are independent. Therefore, by Markov's inequality and \eqref{eq123},
	\eqbn
	\P[ \Tinf>200\eps^{-1}\,|\,\cE ]
	\leq \P[\#\cI\geq n(\log n)^{-6}\,|\,\cE]
	\leq \frac{\E[\#\cI\,|\,\cE]}{n(\log n)^{-6}}
	\leq \frac{\#\cE \cdot C_{\op{YF}}}{n(\log n)^{-6}}.
	\eqen
 	By taking a union bound and applying this estimate and \eqref{eq112},
 	\eqbn
 	\begin{split}
 		\P[ \Tinf>200\eps^{-1}]
 		&\leq \P[ \Tinf>200\eps^{-1};\#\cE< n(\log n)^{-4000}]+\P[ \#\cE\geq n(\log n)^{-4000}]\\
 		&\leq \frac{n(\log n)^{-4000} \cdot C_{\op{YF}}}{n(\log n)^{-6}} 
 		+(\log n)^{-1000},
 	\end{split}
 	\eqen
 	which converges to 0 as $n\rta\infty$. We conclude that the first of the two events defining $D_8$ occurs w.h.p.
	
	Now we consider the second of the two events defining $D_8$, and we will prove that $N_1(200\eps^{-1})<0.1n$ w.h.p. Fix $i\in[n]$, and let $\tau_i^1$, $\tau_i^2$, $Y_1$, and $Y_2$ be as in the proof of Lemmas \ref{prop48} and \ref{prop51}. By a union bound, \eqref{eq:poisson}, and \eqref{eq:geometric}, 
	\eqbn
	\begin{split}
		\P[&\sigma_1(i,200\eps^{-1})=1]\\
		&\leq \P[\#(\cP_i\cap [0,200\ep^{-1}])\leq 100\ep^{-1}]
		+ \P[ \#(\cP_i\cap [0,\tau_i^1])\geq 50\ep^{-1} ]
		+ \P[ \#(\cP_i\cap (\tau_i^1,\tau_i^2])\geq 50\ep^{-1} ]\\
		&\leq 2^{-0.18\cdot 200\ep^{-1}}+\P[Y_1\geq 50\ep^{-1}/4]+\P[Y_2\geq 50\ep^{-1}/2]\\
		&\leq 2^{-0.18\cdot 200\cdot 4} + 2^{-1.44\cdot 50\ep^{-1}/4\cdot \ep/2} + 2^{-1.44\cdot 50\ep^{-1}/2\cdot \ep/2}<0.05.
	\end{split}
	\eqen
	Since the events $\{\sigma_1(i,200\eps^{-1})=1 \}$ are independent for different $i$, Hoeffding's inequality gives that w.h.p.,
	\eqbn
	N_1(200\ep^{-1})=\sum_{i\in[n]} \1_{ \sigma_1(i,200\eps^{-1})=1 }<0.1n.
	\eqen
\end{proof} 

An informed node $i$ spreads its estimate for the majority bit $\frk b$ by contacting a uniformly chosen node $j$ every time its clock rings. If the node $j$ is a regular node infinitesimally before the communication it will also become informed with the same majority bit estimate as $i$. The next lemma shows that this spreading is rather fast after time $\Tinf$. More precisely, the lemma shows that it typically takes at most time $8\log\log n$ from $n(\log n)^{-6}$ nodes are informed to a constant fraction of the nodes are informed. It also shows that the number of communications initiated by informed nodes during this time interval is at most $n$ w.h.p.
\begin{lemma} 
	Let $D_6$ be the event that 
	\begin{itemize}
		\item[(i)] during the interval $[\Tinf,\Tinff]$ the number of communications initiated by informed nodes is smaller than $n$, and 
		\item[(ii)] $\Tinff-\Tinf<12\log\log n$.
	\end{itemize}
	Then $D_6$ occurs w.h.p. 
	\label{prop:T1T2}
\end{lemma}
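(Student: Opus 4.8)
The plan is to condition on the high-probability events established in the preceding lemmas and in Corollary~\ref{prop49} — in particular $D_1$, $D_2$, $D_8$, and the conclusion of Corollary~\ref{prop49} — and then control the size $N_6(t)$ of the informed population on this event. On $D_8$ we have $\Tinf>200\eps^{-1}$, and since the number of aspirants is non-increasing, $N_1(t)<0.1n$ for all $t\ge 200\eps^{-1}$; on $D_2$ together with Remark~\ref{rmk1} we have $N_2(t),N_5(t)\le(M+1)\alpha(1+\beta_M)=o(n)$; and for $t<\Tinff$ the definition of $\Tinff$ gives $N_4(t),N_6(t)<0.01n$. Adding these, $N_3(t)=n-N_1-N_2-N_4-N_5-N_6\ge 0.85n$ for every $t\in[\Tinf,\Tinff)$ and $n$ large. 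Also $\Tinf<\infty$ w.h.p.: by Corollary~\ref{prop49}, once roughly $\alpha$ level-$M$ experts become informed, the multiplicative growth described below carries $N_6$ past the threshold $n(\log n)^{-6}$; in particular $N_6(\Tinf)\ge n(\log n)^{-6}\to\infty$.

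The core of the argument is a multiplicative lower bound for the growth of $N_6$. Fix a small constant $\delta>0$ and a sub-interval $[t,t+\delta]\subset[\Tinf,\Tinff)$. The $N_6(t)$ informed nodes present at time $t$ have rate-one clocks, so the number of their rings in $[t,t+\delta]$ is Poisson with mean $N_6(t)\delta\ge n(\log n)^{-6}\delta\to\infty$, and a Chernoff bound gives at least $0.9\,N_6(t)\delta$ such rings w.h.p. At each such ring the contacted node is, conditionally on the past, a regular node with probability $N_3(\cdot)/(n-1)\ge 0.8$ — in which case a fresh informed node is created — and an informed or terminal node with probability at most $(N_4+N_6)/(n-1)\le 0.03$ — in which case the ringing node leaves the informed set. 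A further Chernoff bound on the number of regular hits, the observation that the expected number of collisions (two rings contacting the same regular node) is $O((N_6(t)\delta)^2/n)=O(\delta\,N_6(t))$ and hence a vanishing fraction of the hits, and the observation that the number of informed nodes turning terminal inside the sub-interval is likewise an $O(\delta)$ fraction, together yield $N_6(t+\delta)\ge(1+c\delta)\,N_6(t)$ w.h.p.\ for a fixed $c>0$ (one can take $c=1/2$). Iterating over the $O(\log\log n/\delta)=O(\log\log n)$ relevant sub-intervals — the per-interval failure probability is $o(1/\log\log n)$, so the union is still $o(1)$ — $N_6$ multiplies by at least $e^{c'\Delta}$ over any window $[\Tinf,\Tinf+\Delta]\subset[\Tinf,\Tinff)$ for a constant $c'>0$. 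Since $N_6$ must climb from $n(\log n)^{-6}$ to $0.01n$ before $\Tinff$, an increase by the factor $0.01(\log n)^6$ (about $6\log\log n$ doublings), this forces $\Tinff-\Tinf<12\log\log n$ w.h.p.; the constant $12$ leaves ample room for the losses to collisions and terminations, so (ii) holds.

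For (i), the same multiplicative estimate read backwards in time gives $N_6(\Tinff-k\delta)\le(1+c\delta)^{-k}N_6(\Tinff)$, and comparing with a geometric series we obtain $\int_{\Tinf}^{\Tinff}N_6(t)\,dt\le\kappa n$ for a constant $\kappa<1/2$ (using $N_6(\Tinff)\le 0.01n+1$). The number of communications initiated by informed nodes during $[\Tinf,\Tinff]$ is exactly the number of rings of their clocks on that interval, which conditionally is Poisson with mean $\int_{\Tinf}^{\Tinff}N_6(t)\,dt\le\kappa n$ — a mean tending to infinity — and hence is smaller than $n$ w.h.p.\ by Chernoff, giving (i). Intersecting the finitely many high-probability events used above yields that $D_6$ holds w.h.p.

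The main obstacle will be making the within-sub-interval bookkeeping clean enough to extract a genuine multiplicative growth factor with a usable constant: one must simultaneously handle the fluctuations in the number of clock rings, the conditional probability that a contact lands on a regular node as $N_3$ itself drifts downward during the interval, the collisions between contacts, and the informed nodes that turn terminal mid-interval, and then verify that the accumulated multiplicative error over $O(\log\log n)$ sub-intervals is negligible. A slightly different route is to set up a submartingale indexed by the successive clock rings of informed nodes — increment $+1$ on a regular hit, $-1$ on an informed or terminal hit, $0$ otherwise, each recentred by its conditional mean $\ge 0.77$ — and bound the number of such rings via Azuma's inequality; this requires the same care with the abundance of regular nodes and the sparsity of informed and terminal nodes on $[\Tinf,\Tinff)$.
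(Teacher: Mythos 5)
Your proposal is correct in outline, but it follows a genuinely different route from the paper; interestingly, the ``alternative route'' you sketch in your last sentence is essentially the paper's actual proof. The paper indexes by communication events rather than by time: it lets $T_k$ be the $k$th time after $\Tinf$ at which an informed node (or a level-$M$ expert with counter $2K+7$) initiates a communication, forms the martingale of recentred increments $A_k=N_6(T_k)-N_6(T_k^-)$, and shows $\E[A_k\mid\cF_{k-1}]\geq 0.6$ using exactly the configuration bounds you use ($N_1<0.1n$ from $D_8$, $N_2,N_5=o(n)$ from $D_2$, and $N_4,N_6<0.01n$ before $\Tinff$). Part (i) then falls out immediately: if $\frk k\geq n$ communications occurred before $\Tinff$, Azuma would force $N_6$ above $0.6n-o(n)$, contradicting the cap $\approx 0.01n$, so $\frk k<n$ with no integral estimate needed. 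Part (ii) follows because $T_{k+1}-T_k$ is dominated by an exponential of rate $N_6(T_k)\geq 0.5n(\log n)^{-6}+0.6k$, and the resulting harmonic-type sum has mean $\leq 11\log\log n$ with small variance (Chebyshev). Your time-discretized multiplicative-growth argument ($N_6(t+\delta)\geq(1+c\delta)N_6(t)$ per $\delta$-interval, then $\sim 6\log\log n$ doublings for (ii) and $\int_{\Tinf}^{\Tinff}N_6\leq\kappa n$ for (i)) is workable and the constants do check out, but it buys intuition at the cost of exactly the bookkeeping you flag (collisions, mid-interval terminations, drift of $N_3$, and the fact that the growth bound must be applied only to grid intervals contained in $[\Tinf,\Tinff)$, with the last partial interval handled separately). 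One step you should repair if you pursue your route: the number of informed-initiated communications is not ``conditionally Poisson with mean $\int N_6$'', since the random intensity $N_6(s^-)$ depends on the counting process itself; you need either the compensator/exponential-martingale bound for point processes or a stochastic domination by a Poisson process of rate $\sup_{t<\Tinff}N_6(t)\leq 0.01n+1$ on the (already controlled) time window — this is precisely the kind of self-referential conditioning that the paper's event-indexed martingale sidesteps. Finally, your aside that $\Tinf<\infty$ follows from Corollary \ref{prop49} is not needed for this lemma (and is proved later, in Lemma \ref{prop:T2small}, using this lemma), so it is best omitted to avoid circularity.
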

\begin{proof}
	Let $T_k$ be the $k$th time after $\Tinf$ at which either an informed node or a level $M$ expert with counter $2K+7$ initiates a communication (for $k$ sufficiently large such that this time is not well-defined, set $T_k=\infty$). Let $\frk k=\sup\{k\,:\,T_k\leq\Tinff \}$. Define a martingale $(M_k)_{k\in\N\cup\{0 \}}$ by $M_0=0$ and $M_k=M_{k-1}+A_{k}-\E[A_k\,|\,\cF_{k-1}]$ for $k\in\N$, where the random variable $A_k$ and the $\sigma$-algebra $\cF_{k-1}$ are defined as follows. For $k\leq \frk k$ let $A_k=N_6(T_k)-N_6( (T_k)^- )$, and for $k>\frk k$ set $A_k=0$. For $k\leq\frk k$ let $\cF_{k-1}$ be the $\sigma$-algebra containing all information until infinitesimally before $T_k$, and for $k>\frk k$ let  $\cF_{k-1}$ be the $\sigma$-algebra containing all information until  $\Tinff$.
	The martingale $M_k$ has increments bounded by 1. By Azuma's inequality,
	\eqb
	\P[M_{\frk k\wedge n} \leq -n^{0.55}] \leq \exp(-2n^{0.1}).
	\label{eq127}
	\eqe
	On the \emph{complement} of the event in \eqref{eq127},
	\eqb
	\begin{split}
		N_6(T_{\frk k\wedge n})
		&=N_6(\Tinf)+\sum_{k'=1}^{\frk k\wedge n} A_{k'}
		=N_6(\Tinf)+M_{\frk k\wedge n}+\sum_{k'=1}^{\frk k\wedge n} \E[A_{k'}\,|\,\cF_{k'-1}]\\
		&>\frac{n}{(\log n)^6}-n^{0.55}+\sum_{k'=1}^{\frk k\wedge n} \E[A_{k'}\,|\,\cF_{k'-1}].
	\end{split}
	\label{eq131}
	\eqe
	Let $E=D_2\cap D_4\cap D_8\cap \{ M_{\frk k\wedge n} > -n^{0.55} \}$. Then $E$ occurs with high probability. We assume in the remainder of the proof that $E$ occurs. 
	
	Now we will argue that for $k<\frk k$ we have
	\eqb
	\E[A_{k}\,|\,\cF_{k-1}] \geq 0.6.
	\label{eq128}
	\eqe
	When the clock of an informed node or a level $M$ expert $i$ with counter $2K+7$ rings, a new informed node is created if and only if $i$ contacts a regular node $j$. By occurrence of the events $D_8$ and $D_2$, and by definition of $\Tinff$, for $t=(T_i)^-$ infinitesimally smaller than $T_i$, this event has probability 
	\eqb
	\begin{split}
		(n-1)^{-1}N_3(t) &\geq
		n^{-1}N_3(t) \geq n^{-1}(n-N_1(t)-N_2(t)-N_4(t)-N_5(t)-N_6(t)) \\
		&\geq n^{-1}(n-0.1n-(M+1)\alpha(1+\beta_M)-0.01n-(M+1)\alpha(1+\beta_M)-0.01n)
		\geq 0.8
	\end{split}
	\label{eq129}
	\eqe
	for $n$ large enough.
	When the clock of an informed node or a level $M$ expert $i$ with counter $2K+7$ rings, the number of informed nodes decreases by 1 if and only if $i$ is an informed node which contacts either an informed node or a terminal node. This event has probability at most
	\eqb
	(n-1)^{-1}(N_4(t)+N_6(t)-1)<
	n^{-1}(N_4(t)+N_6(t)) \leq 0.2.
	\label{eq130}
	\eqe 
	Combining \eqref{eq129} and \eqref{eq130} gives \eqref{eq128}.
	
	Now consider two different cases: (i) $\frk k\geq n$ and (ii) $\frk k< n$. In case (i) we get a contradiction to \eqref{eq131} since the left side is smaller than $0.1n$ and the right side is bigger than $n\cdot 0.6-o(n)$.	
	In case (ii) we get that the number of communications initiated by informed nodes during $[\Tinf,\Tinff]$ is smaller than $\frk k<n$. This proves that the first requirement in the definition of $D_6$ is fulfilled w.h.p.
	
	To prove that the second requirement of $D_6$ is fulfilled w.h.p., notice that the law of $T_{k+1}-T_k$ given all information before or at time $T_k$ is stochastically dominated by an exponential random variable with rate $N_6(T_k)$, which has expectation $1/N_6(T_k)$. 
	By \eqref{eq127}, \eqref{eq131}, and \eqref{eq128}, w.h.p.\ we have $N_6(T_k)>0.5n(\log n)^{-6}+0.6k$ for all $k\leq\frk k$.
	Using this and that $\frk k<n$ w.h.p., follows that w.h.p. $T_{\frk k\wedge n}-\Tinf$ is stochastically dominated by the sum of $n$ independent geometric random variables $X_k$ such that $X_k$ has parameter $0.5n(\log n)^{-6}+0.6k$. Defining $X=\sum_{k=1}^{\frk k\wedge n} X_k$ we have for sufficiently large $n$,
	\eqbn
	\begin{split}
		\E\left[ X \right] &= \sum_{k=1}^n\frac{1}{0.5n(\log n)^{-6}+0.6k} \leq 11\log\log n,\\
		\op{Var}\left[ X \right]
		&=\sum_{k=1}^n \frac{1}{(0.5n(\log n)^{-6}+0.6k)^2} 
		\leq \frac{4(\log n)^6}{n}.
	\end{split}
	\eqen
	We conclude that $\Tinff-\Tinf<12\log\log n$ w.h.p.\ since
	\eqbn
	\P[X>12\log\log n]
	\leq\P\big[ | X-\E[X] |^2>(\log\log n)^2 \big]
	\leq \frac{4n^{-1}(\log n)^6}{(\log\log n)^2}\rta 0\qquad\text{as\,\,$n\rta\infty$}.
	\eqen
\end{proof}

\begin{lemma}
	W.h.p.\ $\Tinff<t_M+12\log\log n+1$.
	\label{prop:T2small}
\end{lemma}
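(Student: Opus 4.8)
The plan is to reduce everything to a bound on $\Tinf$. Since Lemma~\ref{prop:T1T2} gives $\Tinff-\Tinf<12\log\log n$ w.h.p., it suffices to show $\Tinf<t_M+1$ w.h.p. By Corollary~\ref{prop49}, w.h.p.\ either $\Tinf<t_M$, in which case there is nothing more to prove, or $|L_-(M)-\alpha|<\alpha\beta_M$. So I would work under the second alternative, and additionally assume $\Tinf\ge t_M$ (otherwise done). The goal then becomes to show $N_6(t_M+1)\ge n(\log n)^{-6}$: on the event $\{\Tinf>t_M+1\}$ this would be a contradiction, so it forces $\Tinf\le t_M+1$.

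The first step is to count the nodes that are ``informed or about to be informed'' at time $t_M$. Every node of $\cE_-(M)$ has been a level $M$ expert with counter $2K+7$ by time $t_M$, so at time $t_M$ each is either still such an expert, informed, or terminal. Re-running the computation in the proof of Lemma~\ref{prop:T1} (legitimate because $\Tinf\ge t_M$) gives $N_4(t_M)<n(\log n)^{-11}$ w.h.p., hence on Corollary~\ref{prop49}(ii) the number $k_0$ of nodes that at time $t_M$ are level $M$ experts with counter $2K+7$ or informed satisfies $k_0\ge\alpha(1-\beta_M)-n(\log n)^{-11}\ge\alpha(1-2\beta_M)$; and since $K=\lceil 6\log\log n\rceil<6\log\log n+1$ forces $\alpha=n0.5^K>\tfrac12 n(\log n)^{-6}$, we get $k_0>\tfrac12 n(\log n)^{-6}(1-2\beta_M)$. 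I would also record that, by $D_1$ and the monotonicity of $N_1$, by $D_2$ together with Remark~\ref{rmk1}, by the above bound on $N_4$, and by $\Tinf\ge t_M$, w.h.p.\ $N_3(t)\ge(1-o(1))n$ throughout $[t_M,(t_M+1)\wedge\Tinf]$; consequently a node contacted in this window by an informed node or by a level $M$ expert with counter $2K+7$ is regular with probability $1-o(1)$, while an informed node that initiates a communication becomes terminal with probability at most $(N_4+N_6)/n=o(1)$.

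The core is then a branching estimate. To each of the $k_0$ nodes above attach its \emph{informed tree}: the set of nodes it informs, directly or through chains of informed nodes, during $(t_M,(t_M+1)\wedge\Tinf]$. Because a node is informed at most once, these trees are pairwise disjoint subsets of the informed set at time $(t_M+1)\wedge\Tinf$, so $N_6(t_M+1)\ge\sum_j|\mathrm{tree}_j|$ on $\{\Tinf>t_M+1\}$. A node already informed at $t_M$ seeds a tree of size $1$ that grows at rate $\ge1-o(1)$, so its expected size at $t_M+1$ is $\ge e^{1-o(1)}>2.3$. A level $M$ expert with counter $2K+7$ becomes informed and simultaneously informs a (w.h.p.\ regular) node at its next clock ring, which occurs at time $t_M+s$ with $s\sim\mathrm{Exp}(1)$; from that moment its tree has size $\ge2-o(1)$ and grows at rate $\ge1-o(1)$, so conditioning on $s$ and integrating its expected size at $t_M+1$ is at least
\[
(2-o(1))\int_0^1 e^{-s}\,e^{(1-o(1))(1-s)}\,ds\ \ge\ e(1-e^{-2})-o(1)\ >\ 2.3
\]
for $n$ large. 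Hence $\E[N_6(t_M+1)]\ge 2.3\,k_0(1-o(1))\ge 1.15\,n(\log n)^{-6}(1-o(1))>n(\log n)^{-6}$ for $n$ large. To upgrade this to a w.h.p.\ statement I would argue concentration exactly as in the proof of Lemma~\ref{prop:T1T2}, applied up to the stopping time $(t_M+1)\wedge\Tinf$: order the clock rings of informed nodes and of the $k_0$ experts in the window, compare $\sum_j|\mathrm{tree}_j|$ plus the number of the $k_0$ nodes still at counter $2K+7$ with its compensator (which jumps by $\ge1-o(1)$ at each such ring), and apply Azuma's inequality; since there are only $O(n(\log n)^{-6})$ such rings the fluctuations are $o(n(\log n)^{-6})$ and the strict inequality survives, so $N_6(t_M+1)\ge n(\log n)^{-6}$ w.h.p., i.e.\ $\Tinf\le t_M+1$.

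The step I expect to be the main obstacle is precisely this timing estimate in the second case. The threshold in the definition of $\Tinf$ is $n(\log n)^{-6}$, whereas $\alpha$ can be as small as $\tfrac12 n(\log n)^{-6}(1+o(1))$, so the pushing has to more than double the ``informed-or-about-to-be-informed'' population within a \emph{single} unit of time; this is why the constants in the branching bound must be tracked carefully (the product $e(1-e^{-2})>2.3$ must beat the loss factor $2$ with room to spare), why it matters that $N_1=o(n)$ rather than merely $<0.1n$ at time $t_M$ (so the pushing rate is $1-o(1)$ and not a smaller constant), and why the concentration step must be executed precisely enough that the accumulated $o(1)$ corrections do not erode the $\approx 0.15\,n(\log n)^{-6}$ margin.
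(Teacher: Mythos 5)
Your skeleton coincides with the paper's: reduce via Lemma \ref{prop:T1T2} to showing $\Tinf\le t_M+1$ w.h.p., invoke the dichotomy of Corollary \ref{prop49}, and in case (ii) use the experts of $\cE_-(M)$ firing their last communication within one unit of time, with the availability of regular targets controlled by $D_2$, $D_4$ and the aspirant bound (you sharpen the paper's $0.85$ from $D_8$ to $1-o(1)$ via $D_1$ and monotonicity of $N_1$, which is legitimate). Where you genuinely diverge is the quantitative core. The paper only counts nodes informed \emph{directly} by the $\cE_-(M)$ experts (each fires before $t_M+1$ with probability $\ge 1-e^{-1}$ and informs a regular node with probability $\ge 0.85$), records this in \eqref{eq124}, and closes by comparing the ever-informed count with $N_4(t_M+1)+N_6(t_M+1)$. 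You instead run a unit-time branching estimate on the informed cascade, crediting each seed (informed node or expert at counter $2K+7$ at time $t_M$) with expected progeny $\ge e(1-e^{-2})>2.3$. This buys something real: since $\alpha=n2^{-K}$ can be as small as $\tfrac12 n(\log n)^{-6}$ while the threshold defining $\Tinf$ is $n(\log n)^{-6}$, the per-seed factor must strictly exceed $2$, and the direct-inform count alone need not supply it; you identify and resolve exactly the point that the constant stated in \eqref{eq124} glosses over.

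Two places need tightening before this is a proof. First, $N_6(t_M+1)\ge\sum_j|\mathrm{tree}_j|$ is not literally true, since tree members may have turned terminal during the window; you must count only still-informed members (the $o(1)$ kill probability you already record makes this a negligible correction, but it has to be said, or else argue the contradiction with the ever-informed count as the paper does). Second, the concentration step as sketched is the weak link: a per-ring compensator gain of $1-o(1)$ plus Azuma controls the yield \emph{per communication}, but the factor-$e$ growth in one time unit comes from the \emph{number} of rings in the window, which itself depends on the current informed count, so the argument as stated is circular. You need either a per-seed stochastic domination (each seed's progeny, truncated at say $\log n$, dominates a nearly independent variable of mean $>2.3$, then Chebyshev or a bounded-differences bound over the $k_0=\Theta(n(\log n)^{-6})$ seeds) or a supermartingale argument for $e^{-(1-o(1))t}$ times the active count, stopped at $(t_M+1)\wedge\Tinf$. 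Finally, note that you rely on $N_4(\Tinf\wedge t_M)<n(\log n)^{-11}$, which is established inside the proof of Lemma \ref{prop:T1} rather than in the stated event $D_4$ (which only gives $n(\log n)^{-6}\ge\alpha$ and would not suffice for your bound on $k_0$); that appeal is correct but should be made explicit as a separate claim.
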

\begin{proof}
	By Lemma \ref{prop:T1T2} it is sufficient to argue that $\Tinf\leq t_M+1$ w.h.p. By Corollary \ref{prop49}, w.h.p.\ at least one of the properties (i) and (ii) in Corollary \ref{prop49} is satisfied. In case (i) we are done. Assume the event in case (ii) occurs, and that the event in case (i) does not occur. For each $i\in\cE_-(M)$ let $T_i\geq 0$ be such that $i$ initiates a communication at time $T_i$ and is a level $M$ expert with counter $2K+7$ infinitesimally before time  $T_i$. By the definition of $\cE_-(M)$, and since the time interval between two clock rings has the law of a unit rate exponential random variable, it holds with probability at least $1-e^{-1}$ that $T_i<t_M+1$, independently for each $i\in\cE_-(M)$. Let $j\in[n]$ denote the node which is contacted by $i$ at time $T_i$. Assume the events $D_2$, $D_4$, and $D_8$ occur. Then the probability that $j$ is a regular node is the following for $t=(T_i)^-$ infinitesimally smaller than $T_i$
	\eqbn
	\begin{split}
		(n-1)^{-1}N_3(t) &= (n-1)^{-1}( n - N_1(t) - N_2(t) - N_4(t) - N_5(t) - N_6(t) )\\
		&\geq (n-1)^{-1}
		\Big(n- 0.1n-(M+1)\alpha(1+\beta_M)-n(\log n)^{-6}\\
		&\qquad-(M+1)\alpha(1+\beta_M)-n(\log n)^{-6}\Big) \geq 0.85.
	\end{split}
	\eqen
	On the event that $j$ is a regular node, $j$ will become an informed node. By a Chernoff bound, w.h.p.\ at least $0.8(1-e^{-1})n$ informed nodes will be created before time $t_M+1$, i.e., w.h.p.,
	\eqb
	\#\{ i\in[n]\,:\,\exists t\leq t_M+1\text{\,\,such\,\,that\,\,}\sigma_1(i,t)=6 \}\geq 0.8(1-e^{-1})n.
	\label{eq124}
	\eqe
	Assume in the remainder of the proof that the event in \eqref{eq124} occurs. Recall that if an informed node changes type it will become a terminal node, and that a terminal node never changes type. Therefore the quantity on the left side of \eqref{eq124} is bounded above by $N_4(t_M+1)+N_6(t_M+1)$. If $t_M+1<\Tinf$ then we have $N_6(t_M+1)<n(\log n)^{-6}$ by occurrence of $D_4$, so $N_4(t_M+1)>0.8(1-e^{-1})n-n(\log n)^{-6}$ by \eqref{eq124}, which contradicts $t_M+1<\Tinf$. We conclude that we must have $\Tinf\leq t_M+1$.
\end{proof}

\begin{lemma} 
	W.h.p.\ all level $M$ experts have belief bit equal to $\frk b$.
	\label{prop55}
\end{lemma}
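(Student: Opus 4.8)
The plan is to follow the proof of the synchronous analogue Lemma~\ref{prop52}: we track the fraction $\delta_m$ of level $m$ experts whose belief bit differs from $\frk b$ and show it collapses to $0$ within $M$ rounds. For $m\in[M]\cup\{0\}$ set $\ell_m=\#\{i\in\cE(m)\,:\,\text{the belief bit of }i\text{ is not }\frk b\}$ (a node's belief bit does not change while it is a level $m$ expert) and $\delta_m=\ell_m/\#\cE(m)$. We condition throughout on the w.h.p.\ events of Lemmas~\ref{prop48}, \ref{prop41}, \ref{prop50}, \ref{prop59}, \ref{prop60} and Corollary~\ref{prop49}; in particular $\#\cE(m)$ and $\#\cE^{\init}(m)$ are $n^{1-o(1)}$ for every $m$, the degenerate alternative $\Tinf<t_m$ appearing in $E_m,E_m^{\init}$ being handled as in Corollary~\ref{prop49} (it can only be relevant for $m=M$). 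For the base case, a level $0$ expert keeps its initial bit, and by von Neumann unbiasing the event $\{i\in\cE(0)\}$ is a function of the initial bits of the nodes other than $i$, hence independent of $\frk b_i$ up to an $O(1/n)$ perturbation of the empirical bit frequency; since the wrong-bit fraction is $1-p\le\tfrac12-\eps$, Hoeffding's inequality gives $\delta_0<\tfrac12-\tfrac\eps2$ w.h.p.

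The core step is the error recursion: w.h.p., as long as $\ell_{m-1}\ge n^{0.25}$,
\eqb
\delta_m\le g(\delta_{m-1})\bigl(1+\beta_m^{1/2}\bigr),\qquad g(\delta):=3\delta^2-2\delta^3.
\label{eq:deltarec2}
\eqe
To prove \eqref{eq:deltarec2}, fix $m$ and condition on the $\sigma$-algebra generated by $\cE(m-1)$ and the belief bits of its members. By Lemma~\ref{prop50} the expert types of the level $m-1$ experts are i.i.d.\ uniform on $[4]$, independent of their belief bits and of the contact variables $\frk r(\cdot,\cdot)$; hence, conditionally on a node $j$ joining $\cE^{\init}(m)$, the three bits that set its belief bit (the first bits it receives from level $m-1$ experts of types $1,2,3$) are, up to $O(1/n)$, i.i.d.\ with probability $\delta_{m-1}$ of being incorrect, the conditioning on types not biasing the belief bits. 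The binomial computation in the proof of Lemma~\ref{prop:error} then gives that $j$ is incorrect with conditional probability $g(\delta_{m-1})(1+o(1))$, and a McDiarmid bound in the variables $\frk r(i,\cdot)$ (as in that proof) concentrates the number of incorrect members of $\cE^{\init}(m)$ around $g(\delta_{m-1})\,\#\cE^{\init}(m)$. Finally, every level $m$ expert outside $\cE^{\init}(m)$ copies the belief bit of the level $m$ expert that created it, so its belief bit equals that of the unique root of its spawning tree in $\cE^{\init}(m)$; running the martingale bookkeeping of Lemma~\ref{prop59} over the expert-creating communications and recording at each step whether the new expert is incorrect (each active level $m$ expert's clock rings at rate $1$, so the next expert-creating event is triggered by a uniformly random one of them, hence by an incorrect parent with the current incorrect fraction, and collisions are negligible by Lemma~\ref{prop41}), Azuma's inequality upgrades this to \eqref{eq:deltarec2}. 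When $\ell_{m-1}$ is below the concentration threshold one uses instead the $\ell$-form of the recursion, exactly as in Lemma~\ref{prop:error}.

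The iteration then proceeds as in the proof of Lemma~\ref{prop52}. Put $f_n(\delta)=g(\delta)(1+\beta_m^{1/2})$ and define the stopping times $T(\delta)$ as there. Since $5^{M/2}(\log n)^{-\CCC/2}\to0$, the factors $1+\beta_m^{1/2}$ are harmless: $f_n$ is strictly increasing and convex, and for large $n$ both $\delta-f_n(\delta)$ is uniformly positive on $[0.1,\tfrac12-\tfrac\eps2]$ and $f_n(\delta)<\delta^{3/2}$ on $(0,0.1)$. The four estimates of Lemma~\ref{prop52} follow: $T(0.1)=O(1)<0.01\log\log n$; $T(n^{-0.25})-T(0.1)<1.98\log\log n$; $T(n^{-0.75})-T(n^{-0.25})\le3$; and $T(0)-T(n^{-0.75})\le1$, the last because once $\delta_m<n^{-0.75}$ there are fewer than $n^{0.25}$ incorrect level $m$ experts, so a union bound over the $n$ nodes shows that w.h.p.\ no node receives two incorrect bits among its three, making all of $\cE^{\init}(m+1)$ --- and hence all of $\cE(m+1)$ --- correct. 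Adding these up, $T(0)\le M=\lceil 2\log\log n\rceil$, i.e.\ $\delta_M=0$ w.h.p., which is the assertion.

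The main obstacle is the sharpness demanded of \eqref{eq:deltarec2}: as flagged after the synchronous estimation-phase analysis, the round-by-round multiplicative errors compound over $\Theta(\log\log n)$ rounds, so each must be $o(1/M)$; because in the asynchronous model the passage $\delta_{m-1}\mapsto\delta_m$ crosses an unsynchronised rumor-spreading phase rather than a fixed window of time steps, securing such control forces one to combine the expert-type independence of Lemma~\ref{prop50}, the collision estimates behind Lemma~\ref{prop41}, and the Azuma bookkeeping of Lemmas~\ref{prop59} and \ref{prop60}, rather than the single McDiarmid step that suffices synchronously.
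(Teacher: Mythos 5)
Your proposal is correct and takes essentially the same route as the paper: the paper's own proof simply asserts that the asynchronous analogue of Lemma \ref{prop:error} holds ``by a similar proof'' and that the iteration in the proof of Lemma \ref{prop52} then carries through, which is exactly the structure you follow (error recursion $\delta_m\lesssim 3\delta_{m-1}^2-2\delta_{m-1}^3$ plus the stopping-time iteration). The only difference is that you spell out the asynchronous adaptation---expert-type independence via Lemma \ref{prop50}, collision control via Lemma \ref{prop41}, and the Azuma bookkeeping of Lemmas \ref{prop59}--\ref{prop60}---which the paper leaves implicit.
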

\begin{proof}
	The analogue of Lemma \ref{prop:error} still holds in the setting of the asynchronous model, by a similar proof as before. The proof of Lemma \ref{prop52} also carries through, which implies the current lemma.
\end{proof}

\begin{lemma} 
	The protocol reaches terminal consensus in finite time w.h.p. In other words, $\tau_{\op{terminal}}<\infty$ w.h.p.
	\label{prop:correct3}
\end{lemma}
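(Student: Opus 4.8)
The plan is to follow the same strategy as the proof of the analogous synchronous statement (Lemma~\ref{prop:correct2}), with the extra care needed because in the asynchronous model the nodes settle into their final types at different (random) times. The argument splits into two parts: (a)~w.h.p.\ the protocol terminates, in the sense that every node eventually becomes a terminal node, and (b)~w.h.p.\ every terminal node that is ever created has belief bit equal to $\frk b$.

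For part~(a) I would first invoke Corollary~\ref{prop49}: w.h.p.\ either $\Tinf<t_M$, in which case at least $n(\log n)^{-6}\geq 1$ nodes become informed, or $|L_-(M)-\alpha|<\alpha\beta_M$, so that $L_-(M)\geq 1$ and hence at least one level $M$ expert with counter $2K+7$ is created; such an expert becomes informed the next time its clock rings. Thus in either case w.h.p.\ at least one informed node appears at some point. Conditioned on this event I would argue that the protocol terminates a.s., using the following observations. Each node passes through the aspirant, expert candidate, and expert types at most once (Remark~\ref{rmk2}, together with the fact that a regular node with expert indicator $\psi=1$ never again becomes an expert or an expert candidate), and each of these phases lasts only finitely many clock rings: the aspirant phase a.s.\ ends in finite time since each sampled initial bit equals $0$ and equals $1$ each with probability at least $\eps$, the expert phase lasts at most $2K+7$ clock rings, and the expert candidate phase at most $2t_M$ clock rings. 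Hence a.s.\ there is a finite time after which every node is a regular, an informed, or a terminal node. Once an informed node exists, a Borel--Cantelli argument applied to its infinitely many communications shows that a.s.\ a second informed node, and then a terminal node, must eventually appear; and once a terminal node exists (and therefore exists forever, since terminal is absorbing and the count is monotone), a further Borel--Cantelli argument---using that each regular node contacts a uniformly random node every $\lceil(\log\log n)^2\rceil$ clock rings, and each informed node at every clock ring, each attempt hitting a fixed terminal node with probability at least $1/(n-1)$---shows that a.s.\ every regular and every informed node eventually becomes terminal. Therefore a.s.\ all nodes are terminal in finite time on this event.

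For part~(b) I would trace the belief bit along the chain of transitions. Inspecting the protocol, a node changes its belief bit only when: a level $M$ expert keeps its belief bit upon becoming informed; a regular node adopts the belief bit of a level $M$ expert (with counter $2K+7$) or of an informed node that contacts it and thereby becomes informed; a regular node adopts the state of a terminal node it contacts; or an informed node keeps its belief bit upon becoming terminal. In each case the new belief bit equals that of a node appearing earlier in the chain, whose first links are always level $M$ experts; hence the belief bit of every terminal node ever created equals the belief bit of some level $M$ expert, and by Lemma~\ref{prop55} w.h.p.\ every level $M$ expert has belief bit $\frk b$. Intersecting the high-probability events from parts~(a) and~(b) yields that w.h.p.\ all nodes eventually reach a terminal state with belief bit $\frk b$, i.e.\ $\tau_{\op{terminal}}<\infty$ w.h.p. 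I expect the only genuine obstacle to be the a.s.-termination step in part~(a): one must rule out the degenerate scenario in which all $n$ nodes settle into the regular type without any informed node ever having been created (this is precisely what Corollary~\ref{prop49} excludes), and then check that the Borel--Cantelli arguments apply with the correct uniformly-positive (though $n$-dependent) per-contact success probabilities; the remaining steps are routine bookkeeping given the earlier lemmas.
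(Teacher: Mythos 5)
Your proposal is correct and follows essentially the same route as the paper's own (very terse) proof: invoke Corollary~\ref{prop49} to get at least one informed node w.h.p., observe that on this event the protocol terminates almost surely, trace every terminal node's belief bit back to a level $M$ expert, and conclude with Lemma~\ref{prop55}. The Borel--Cantelli bookkeeping you supply is exactly the detail the paper leaves implicit in the sentence ``on this event the protocol terminates a.s.''
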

\begin{proof}
	By Corollary \ref{prop49}, w.h.p.\ at least one informed node will be created. On this event the protocol terminates a.s., in the sense that all nodes eventually become terminal nodes. The belief bit of all the terminal nodes originate from a level $M$ expert. Therefore, by Lemma \ref{prop55}, w.h.p.\ all terminal nodes will have belief bit $\frk b$. Combining the above we get $\tau_{\op{terminal}}<\infty$ w.h.p.
\end{proof}

\begin{lemma} 
	There is a constant $C>0$ depending only on $\eps$ such that w.h.p.\ the number of communications until the protocol reaches terminal consensus is at most $Cn$, i.e., $N_{\op{terminal}}<Cn$ w.h.p.
\end{lemma}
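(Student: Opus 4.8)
The plan is to split $N_{\op{terminal}}$ according to which type of node initiates each communication. Expert candidates and terminal nodes never initiate a communication, so it suffices to bound the contributions of the aspirants, experts, informed nodes, and regular nodes, each by $O(n)$ w.h.p., and add. Throughout we condition on the high-probability events established above, in particular $D_2$ (Lemma~\ref{prop41}), $D_8$, $D_4$ (Lemma~\ref{prop:T1}), Lemmas~\ref{prop:T1T2} and~\ref{prop:T2small}, Corollary~\ref{prop49}, and the event $\{\tau_{\op{terminal}}<\infty\}$ (Lemma~\ref{prop:correct3}).

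\textbf{Aspirants and experts.} An aspirant initiates one communication per clock ring while in phase $\chi=1$ or $\chi=2$, and none in phase $\chi=3$. As in the analysis of the expert selection phase, the number of clock rings a node spends in phases $\chi\in\{1,2\}$ is stochastically dominated by a constant times a geometric random variable whose success probability is bounded below by a constant depending only on $\eps$, independently across nodes; summing $n$ such variables and applying Chebyshev's inequality exactly as in the proof of Lemma~\ref{prop44} shows that aspirants initiate $O(n)$ communications w.h.p. For the experts, Remark~\ref{rmk2} gives that each node is an expert at most once, so the number of expert instances is $\sum_{m=0}^{M}L(m)$, which on $D_2$ (and Lemma~\ref{prop48} for $m=0$) is at most $(M+1)\alpha(1+\beta_M)$; since an expert increments its counter $d\in[2K+7]$ at every clock ring and transforms once $d=2K+7$, it initiates at most $2K+7$ communications, so experts initiate at most $(M+1)\alpha(1+\beta_M)(2K+7)$ communications in total, which is $o(n)$ because $\alpha=n0.5^{K}\le n(\log n)^{-6}$ while $(M+1)(2K+7)=O((\log\log n)^2)$.

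\textbf{Informed nodes.} Classify each communication initiated by an informed node $i$ as type (a) if $i$ contacts a regular node (which then becomes informed), type (b) if $i$ contacts an informed or terminal node ($i$ then becomes terminal), or type (c) if $i$ contacts an aspirant, expert, or expert candidate (no state changes). A node becomes informed at most once and becomes terminal at most once, and an informed node does at most one type-(b) communication, so there are at most $n$ type-(a) and at most $n$ type-(b) communications, at most $2n$ together. For type (c), w.h.p.\ no informed node exists before time $200\eps^{-1}$ (the first informed node is a level $M$ expert, which requires a level $0$ expert, and this needs at least $\Tasp-1$ clock rings, as in the proof of $D_8$); since $N_1$ is nonincreasing, $D_8$ gives $N_1(t)<0.1n$ at every time $t$ at which an informed node communicates, and by $D_2$ and Remark~\ref{rmk1} we have $N_2(t)+N_5(t)=o(n)$, so each communication initiated by an informed node is of type (c) with probability at most $0.2$. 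Hence, listing the communications initiated by informed nodes in time order, the number of type-(c) ones between consecutive non-type-(c) ones is stochastically dominated by a geometric random variable with success probability $0.8$; as there are at most $2n$ non-type-(c) communications, the total number of type-(c) ones is dominated by a sum of $O(n)$ i.i.d.\ geometric variables, hence $O(n)$ w.h.p. (Alternatively, Lemmas~\ref{prop:T1} and~\ref{prop:T1T2} bound the informed communications before $\Tinff$, and the saturation argument below handles the rest.)

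\textbf{Regular nodes, and conclusion.} Let $T^{\star}$ be the first time at which at least $qn$ nodes are terminal, for a suitable constant $q>0$ depending only on $\eps$; once attained, this holds forever. By Lemma~\ref{prop:T2small}, $\Tinff<t_M+12\log\log n+1$ w.h.p., and at time $\Tinff$ either $N_4\ge 0.01n$ or $N_6\ge 0.01n$; since $N_4+N_6$ is nondecreasing, in either case an argument analogous to Lemma~\ref{prop46}(i) shows that a positive fraction of nodes are terminal within $O(1)$ further units of time, so $T^{\star}\le\Tinff+O(1)=O((\log\log n)^2)$ w.h.p. A regular node initiates at most one communication per $\lceil(\log\log n)^2\rceil$ clock rings, so during $[0,T^{\star}]$ node $i$ initiates at most $\#(\cP_i\cap[0,T^{\star}])/\lceil(\log\log n)^2\rceil+1$ communications; summing over $i$, using the independence of the Poisson clocks together with $T^{\star}=O((\log\log n)^2)$, and applying Chebyshev's inequality gives $O(n)$ such communications w.h.p. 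After $T^{\star}$, every communication a regular node initiates reaches a terminal node — upon which it becomes terminal and stops — with probability at least $q$, so the number it initiates after $T^{\star}$ is stochastically dominated by a geometric variable with success probability $q$, and summing $n$ independent such variables gives $O(n)$ w.h.p. Adding the four bounds yields a constant $C$ depending only on $\eps$ with $N_{\op{terminal}}<Cn$ w.h.p. The step I expect to be the main obstacle is controlling the regular-node communications accumulated during $[0,T^{\star}]$: this requires certifying that a positive fraction of nodes become terminal after only $O((\log\log n)^2)$ time, which rests on Lemma~\ref{prop:T2small} (and hence on the whole estimation-phase analysis behind Corollary~\ref{prop49}) plus the short saturation argument above, and on the bookkeeping needed to make all the relevant high-probability events ($D_2$, $D_4$, $D_8$, $E_M$, etc.) hold simultaneously.
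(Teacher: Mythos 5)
Your overall decomposition is the same as the paper's: aspirants via geometric domination, experts via (number of expert instances)$\times(2K+7)$ using $D_2$ and Remark~\ref{rmk2}, regular nodes split at a saturation time that is $O((\log\log n)^2)$ w.h.p.\ (Lemma~\ref{prop:T2small} plus the short argument that a constant fraction of nodes become terminal within $O(1)$ time after $\Tinff$, which is exactly the paper's bound $N_4(\Tinff+1)>0.001n$), and a geometric-domination bound afterwards. The one genuinely different piece is your treatment of informed nodes via the type (a)/(b)/(c) accounting, which is a nice deterministic bookkeeping (at most $n$ conversions to informed and at most $n$ conversions of informed to terminal), but its type-(c) step contains a real gap as written: you assert that w.h.p.\ no informed node exists before time $200\eps^{-1}$, "as in the proof of $D_8$." The proof of $D_8$ does not give this. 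It only shows that the \emph{per-node} probability of becoming a level-$0$ expert before time $200\eps^{-1}$ is at most about $(\log n)^{-5000}$, which does not survive a union bound over $n$ nodes (indeed, the expected number of such early experts is $n^{1-o(1)}$, so early experts do exist w.h.p.), and $D_8$ itself controls $\Tinf$ — the first time $n(\log n)^{-6}$ nodes are informed — not the first appearance of a single informed node. Consequently the claim "$N_1(t)<0.1n$ at every time an informed node communicates" is not established by the events you condition on, and the $0.2$ bound on the type-(c) probability is unjustified for the (possibly many) informed-initiated communications occurring before time $200\eps^{-1}$.

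The repair is easy and is essentially what the paper does (and what you mention parenthetically): handle the initial segment separately by invoking Lemma~\ref{prop:T1}, which bounds the number of communications initiated by informed nodes before time $\Tinf\wedge(\log n)$ by $2n$ w.h.p., and then apply your type-(c) argument only to informed-initiated communications after $\Tinf$ (on $D_8$ one has $\Tinf>200\eps^{-1}$, so $N_1(t)\le N_1(200\eps^{-1})<0.1n$ at all such times, and $N_2+N_5=o(n)$ on $D_2$ by Remark~\ref{rmk1}). With that modification your argument goes through; the remaining components (aspirants, experts, regular nodes before and after saturation) coincide with the paper's proof up to minor bookkeeping, the paper preferring to bound regular and informed nodes jointly after time $\Tinff+1$ by the probability-$0.001$ hitting argument, whereas you separate them, which costs nothing.
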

\begin{proof}
	We consider separately the contribution to the number of communications coming from nodes of the following types: aspirant, expert, regular node, and informed node. This is sufficient to complete the proof since expert candidates and terminal nodes do not initiate communications.
	
	First we consider aspirants in phase $\chi=1$. An aspirant for which $\chi=1$ repeatedly collects quadruples of bits $(b',b'',b''',b'''')$. As observed in the proof of Lemma \ref{prop48}, the number of quadruples is stochastically dominated by a geometric random variable with success probability $\ep/2$, independently for each node. Therefore the expected number of communications initiated by aspirants in phase $\chi=1$ is bounded above by $4n/(\ep/2)=8n\ep^{-1}$. Furthermore, by concentration of the sum of independent geometric random variables, we get that this number is smaller than $10n\ep^{-1}$ w.h.p. By a similar argument we get that the number of communications initiated by experts in phase $\chi=2$ is bounded above by $5n\ep^{-1}$ w.h.p.
		
	By Lemma \ref{prop41} (see also Remark \ref{rmk1}), w.h.p.\ the number of nodes which are experts at some point in time is bounded above by $(M+1)\alpha(1+\beta_M)$. Each expert initiates at most $2K+7$ communications. Therefore the total number of communications ever initiated by an expert is at most $(M+1)\alpha(1+\beta_M)(2K+7)\ll n$.
	
	It remains to bound the number of communications initiated by regular nodes and informed nodes. We consider communications before and after time $\Tinff+1$ separately. By Lemma \ref{prop:T2small} and since regular nodes communicate every $\lceil(\log\log n)^2\rceil$ clock ring, we see that the total number of communications initiated by regular nodes before time $\Tinff$ is smaller than $2n(t_M+12\log\log n+1)/\lceil(\log\log n)^2\rceil=\Theta(n)$ w.h.p. By Lemmas \ref{prop:T1} and \ref{prop:T1T2} we get that the number of communications initiated by informed nodes before time $\Tinff$ is smaller than $2n$ w.h.p. Furthermore, the total number of communications in the protocol between $\Tinff$ and $\Tinff+1$ is smaller than $2n$ w.h.p.
	
	To bound the number of communications after $\Tinff+1$, we will first argue that w.h.p.\ a positive fraction of the nodes are terminal nodes at time $\Tinff+1$. More precisely, we show
	\eqb
	\lim_{n\rta\infty} \P[N_4(\Tinff+1)>0.001n]=1.
	\label{eq125}
	\eqe 
	By the definition of $\Tinff$, at least one of the following holds 
	(i) $N_6(\Tinff)>0.1n$ or (ii) $N_4(\Tinff)>0.1n$. In case (ii) and since terminal nodes never change type, the event in \eqref{eq125} occurs. In case (i) condition on the state of all nodes at time $\Tinff$. Let $T_k$ be the $k$th time after time $\Tinff$ at which the clock of an informed node is ringing. The only way for an informed node to change type is if it contacts a node which is either informed or terminal. Therefore we have $N_4(T_k)\geq 0.1n-k$. Since each node initiates a communication during $[\Tinff,\Tinff+1]$ with probability at least $(1-e^{-1})$, w.h.p.\ we have $T_{\frk k}<\Tinff+1$ for $\frk k:=\lceil0.9\cdot(1-e^{-1})\cdot 0.1n\rceil$. It follows that for $k=1,\dots,\frk k$, conditioned on the state of all nodes infinitesimally before $T_k$, with probability at least $(0.1n-\frk k)/n>0.05$ the node which initiates a communication at time $T_k$ contacts an informed node. Every time this event happens a terminal node is created. By concentration of the sum of Bernoulli random variables, w.h.p.\ there are at least $0.9\cdot0.05\cdot \frk k>0.001n$ terminal nodes at time $T_{\frk k}$. Since $T_{\frk k}<\Tinff+1$ w.h.p., this implies \eqref{eq125}.
	
	On the event in \eqref{eq125}, every time a regular node or an informed node initiates a communication after time $\Tinff+1$ it reaches a terminal node with probability at least $0.001$. When a regular node reaches a terminal node it also becomes terminal. Therefore, on the event in \eqref{eq125}, the number of communications initiated a regular node or an informed node after time $\Tinff+1$ is stochastically dominated by an exponential random variable with parameter $0.001$ (which has expectation $1000$), independently for each node. By concentration of the sum of independent exponential random variables, we get that w.h.p.\ the total number of communications from regular nodes or informed nodes after time $\Tinff+1$ is at most $1001n$.
\end{proof}

\subsection*{Acknowledgments}
We thank Rati Gelashvili for providing useful references on related work. 
Part of this work was completed when G.F.\ and N.H.\ visited Microsoft Research in Redmond, and G.R. was at Microsoft Research full time, and they want to thank Microsoft for the hospitality. Part of this work was done when G.R.\ was visiting the Simons Institute for the Theory of Computing. N.H.\ acknowledges support from a Microsoft Research internship, Dr.\ Max R\"ossler, the Walter Haefner Foundation, and the ETH Z\"urich Foundation.
G.F. acknowledges support from the Distributed Technologies Research Foundation, NSF grant CIF-1705007, ARO grant W911NF1810332, and Input Output Hong Kong.

\bibliographystyle{hmralphaabbrv}
\bibliography{references}

\end{document}